\documentclass[11pt,letterpaper]{amsart}

\usepackage{soul}
\usepackage[utf8]{inputenc}
\usepackage[]{mdframed}
\usepackage{mathrsfs}
\usepackage[noerroretextools,backend=biber,style=alphabetic,natbib=false,giveninits=true,doi=true,isbn=false,url=false,date=year,maxbibnames=99,sorting=nty,defernumbers=true]{biblatex}
\DeclareNameAlias{default}{family-given}

\DeclareFieldFormat[article]{title}{#1}
\renewbibmacro{in:}{}
\DeclareFieldFormat[article]{pages}{#1}
\DeclareFieldFormat{journal}{#1}

\renewbibmacro*{volume+number+eid}{%
	\printfield{volume}%
	\setunit*{\addnbspace}
	\printfield{number}%
	\setunit{\addcomma\space}%
	\printfield{eid}}
\DeclareFieldFormat[article]{number}{\mkbibparens{#1}}
\DeclareFieldFormat[article]{volume}{\textbf{#1}}
\DeclareFieldFormat{year}{\mkbibparens{#1}}
\DeclareBibliographyDriver{article}{%
	\printnames{author}:%
	\newunit\newblock
	\printfield{title}%
	\newunit\newblock
	\printfield{journaltitle}
	\newunit
	\iffieldundef{number}{\printfield{volume}}{\printfield{volume}\addspace\printfield{number}}
	\addcomma\addspace\printfield{pages}\addspace
	\printfield{year}
}
\AtEveryBibitem{\clearfield{month}}
\AtEveryBibitem{\clearfield{day}}
\usepackage[english]{babel}
\usepackage[T1]{fontenc}
\usepackage{csquotes}
\usepackage{bbm}
\usepackage{amsfonts,amsthm,amsbsy,amssymb,dsfont,stmaryrd}
\usepackage[bbgreekl]{mathbbol}
\usepackage[dvipsnames]{xcolor}
\usepackage{braket}
\usepackage{subcaption}

\usepackage{enumitem}

\usepackage{eucal}

\makeatletter
\newcommand{\leqnomode}{\tagsleft@true\let\veqno\@@leqno}
\newcommand{\reqnomode}{\tagsleft@false\let\veqno\@@eqno}
\makeatother

\usepackage{mathtools}
\usepackage[makeroom]{cancel}

\numberwithin{equation}{section}

\newcommand\myshade{85}
\colorlet{mylinkcolor}{violet}
\colorlet{mycitecolor}{YellowOrange}
\colorlet{myurlcolor}{Aquamarine}

\usepackage[unicode=true,pdfusetitle,
bookmarks=true,bookmarksnumbered=false,bookmarksopen=false,
breaklinks=false,pdfborder={0 0 1},backref=false,
linkcolor  = mylinkcolor!\myshade!black,
citecolor  = mycitecolor!\myshade!black,
urlcolor   = myurlcolor!\myshade!black,
colorlinks = true,
]
{hyperref}
\usepackage[nameinlink]{cleveref}

\usepackage{tikz}
\usetikzlibrary{backgrounds}

\usepackage{ifthen}

\usepackage{graphicx}
\usepackage{slashed}
\definecolor{ct_black}{HTML}{000000}
\definecolor{ct_orange}{HTML}{ED872D}
\definecolor{ct_purple}{HTML}{7A68A6}
\definecolor{ct_blue}{HTML}{348ABD}
\definecolor{ct_turquoise}{HTML}{188487}
\definecolor{ct_red}{HTML}{E32636}
\definecolor{ct_pink}{HTML}{CF4457}
\definecolor{ct_green}{HTML}{467821}

\definecolor{ct2_green}{HTML}{9FF781}
\definecolor{ct2_green_dark}{HTML}{088A08}

\newif\ifshowchanges
\showchangesfalse
\ifshowchanges
  \newcommand{\revadd}[1]{{\color{ct_blue}#1}}
  \newcommand{\revdel}[1]{{\color{ct_red}\emph{[deleted:} #1\emph{]}}}
  \newcommand{\revnote}[1]{{\color{ct_purple}#1}}
  
\else
  \newcommand{\revadd}[1]{#1}
  \newcommand{\revdel}[1]{}
  \newcommand{\revnote}[1]{}
  
\fi

\theoremstyle{plain}
\newtheorem{thm}{\protect\theoremname}[section]
\theoremstyle{plain}
\newtheorem{lem}[thm]{\protect\lemmaname}
\theoremstyle{plain}

\theoremstyle{plain}
\newtheorem{prop}[thm]{\protect\propositionname}
\theoremstyle{plain}
\newtheorem{claim}[thm]{\protect\claimname}

\theoremstyle{remark}
\newtheorem{rem}[thm]{\protect\remarkname}

\theoremstyle{definition}
\newtheorem{defn}[thm]{\protect\definitionname}
\newtheorem{example}[thm]{\protect\examplename}
\providecommand{\assumptionname}{Assumption}
\providecommand{\claimname}{Claim}
\providecommand{\corollaryname}{Corollary}
\providecommand{\definitionname}{Definition}
\providecommand{\lemmaname}{Lemma}
\providecommand{\propositionname}{Proposition}
\providecommand{\remarkname}{Remark}
\providecommand{\theoremname}{Theorem}
\providecommand{\examplename}{Example}

\crefname{section}{Section}{Sections}
\crefname{example}{Example}{Examples}
\crefname{appendix}{Appendix}{Appendices}
\crefname{figure}{Figure}{Figures}
\crefname{assumption}{Assumption}{Assumptions}
\crefname{thm}{Theorem}{Theorems}
\crefname{lem}{Lemma}{Lemmas}
\crefformat{equation}{(#2#1#3)}
\crefname{table}{Table}{Tables}

\crefrangelabelformat{equation}{(#3#1#4--#5#2#6)}

\crefmultiformat{equation}{(#2#1#3}{, #2#1#3)}{#2#1#3}{#2#1#3}
\Crefmultiformat{equation}{(#2#1#3}{, #2#1#3)}{#2#1#3}{#2#1#3}

\newtheorem*{lem*}{\protect\lemmaname}

\newcommand{\ii}{i}

\newcommand{\ZZ}{\mathbb{Z}}
\newcommand{\TT}{\mathbb{T}}
\newcommand{\bS}{\mathbb{S}}

\newcommand{\NN}{\mathbb{N}}
\newcommand{\RR}{\mathbb{R}}

\newcommand{\CC}{\mathbb{C}}

\newcommand{\PP}{\mathbb{P}}
\newcommand{\EE}{\mathbb{E}}

\newcommand{\calA}{\mathcal{A}}
\newcommand{\calB}{\mathcal{B}}
\newcommand{\calC}{\mathcal{C}}
\newcommand{\calF}{\mathcal{F}}
\newcommand{\calN}{\mathcal{N}}
\newcommand{\calG}{\mathcal{G}}

\newcommand{\calD}{\mathcal{D}}
\newcommand{\calV}{\mathcal{V}}
\newcommand{\calU}{\mathcal{U}}
\newcommand{\calSU}{\mathcal{SU}}

\newcommand{\calH}{\mathcal{H}}

\newcommand{\calK}{\mathcal{K}}
\newcommand{\calL}{\mathcal{L}}

\newcommand{\calI}{\mathcal{I}}
\newcommand{\calJ}{\mathcal{J}}
\newcommand{\calP}{\mathcal{P}}

\newcommand{\calX}{\mathcal{X}}
\newcommand{\calY}{\mathcal{Y}}
\newcommand{\calZ}{\mathcal{Z}}

\newcommand\norm[1]{\left\lVert#1\right\rVert}
\newcommand{\ip}[2]{\langle #1, #2 \rangle}

\newcommand{\tr}{\operatorname{tr}}

\newcommand{\szpan}{\operatorname{span}}

\newcommand{\ve}{\varepsilon}
\newcommand{\vf}{\varphi}
\newcommand{\Id}{\mathds{1}}
\newcommand{\HH}{\mathbb{H}}

\newcommand{\Open}[1]{\mathrm{Open}(#1)}
\newcommand{\Closed}[1]{\mathrm{Closed}(#1)}
\newcommand{\dist}{\mathrm{dist}}

\newcommand{\sgn}{\operatorname{sgn}}
\newcommand{\findex}{\operatorname{ind}}

\newcommand{\supp}{\operatorname{supp}}

\DeclarePairedDelimiter\floor{\lfloor}{\rfloor}

\newcommand{\im}{\operatorname{im}}

\newcommand{\sigmaess}{\sigma_{\mathrm{ess}}}

\newcommand{\calAZ}{\mathcal{AZ}}

\usepackage{environ}

\NewEnviron{malign}{%
	\begin{align}\begin{split}
			\BODY
	\end{split}\end{align}
}

\newcommand{\prob}[1]{\PP\left[\Set{#1}\right]}
\newcommand{\ex}[1]{\EE\left[#1\right]}

\newcommand{\eq}[1]{\begin{align*}#1\end{align*}}

\newcommand{\eql}[1]{\begin{align}#1\end{align}}

\newcommand{\NT}{\mathrm{nt}}
\newcommand{\calQ}{\mathcal{Q}}
\newcommand{\br}[1]{\left(#1\right)}

\newcommand{\Ext}{\operatorname{Ext}}

\newcommand{\Cl}{\mathrm{Cl}}

\newcommand{\tensorid}[2]{{#1_{(#2)}}}

\newcommand{\Cstar}{$C^\ast$}

\newcommand{\Kone}{K_1}
\newcommand{\rmK}{K}
\newcommand{\rmC}{C}
\newcommand{\rmDK}{D\kern-0.1emK}
\newcommand{\rmKKO}{K\kern-0.1emK\kern-0.1emO}
\newcommand{\rmKO}{K\kern-0.1emO}

\newcommand{\Mod}[1]{\ (\mathrm{mod}\ #1)}

\newcommand{\Cli}{\mathrm{C\ell}}

\newcommand{\azAI}{\mathrm{A\kern-0.05emI}}
\newcommand{\azBDI}{\mathrm{B\kern-0.05emD\kern-0.05emI}}
\newcommand{\azD}{\mathrm{D}}
\newcommand{\azDIII}{\mathrm{D\kern-0.05emI\kern-0.05emI\kern-0.05emI}}
\newcommand{\azAII}{\mathrm{A\kern-0.05emI\kern-0.05emI}}
\newcommand{\azCII}{\mathrm{C\kern-0.05emI\kern-0.05emI}}
\newcommand{\azC}{\mathrm{C}}
\newcommand{\azCI}{\mathrm{C\kern-0.05emI}}

\title[Topological Classification of Insulators: III]%
{Topological Classification of Insulators:
III. Non-interacting Spectrally-Gapped Systems
in All Dimensions}

\author{Jui-Hui Chung}
\address{Program in Applied and Computational Mathematics, Princeton University}
\email{jc1220@math.princeton.edu}

\author{Jacob Shapiro}
\address{Department of Mathematics, Princeton University}
\email{jacobshapiro@princeton.edu}

\begin{document}
\maketitle
\begin{abstract}
We study non-interacting electrons in disordered materials which exhibit a spectral gap, in each of the ten Altland--Zirnbauer symmetry classes, in all space dimensions. We define an appropriate space of Hamiltonians and a topology on it so that the so-called strong topological invariants become \emph{complete} invariants yielding the Kitaev periodic table, but now derived as the set of path-connected components of the space of Hamiltonians, rather than as $K$-theory groups. We thus confirm the conjecture (phrased e.g. in \cite{KatsuraKoma2018}) regarding a one-to-one correspondence between topological phases of gapped non-interacting systems and the respective Abelian groups $\{0\},\ZZ,2\ZZ,\ZZ_2$ in the spectral gap regime. 

\revadd{The central conceptual point is that spherical locality and bulk non-triviality are the two structural hypotheses which make this non-stable statement true. Spherical locality provides the real-space asymptotic locality needed for the strong index pairings, while bulk non-triviality removes lower-dimensional or edge-type configurations which would otherwise create extra path-components. Once this phase space has been identified, the algebraic input is the standard $K$-theory of the associated Paschke-dual picture, and the remaining technical task is to lift that information to $\pi_0$ of symmetry-constrained projections and unitaries. These definitions of locality and bulk-non-triviality are expected to be the portable part of the argument in regimes, such as mobility gaps and interacting systems, where ordinary stabilized $K$-theory is not by itself the right formulation of the physical classification problem.}
\end{abstract}
\tableofcontents

\section{Introduction}

Topological insulators \cite{Hasan_Kane_2010} are phases of matter which are insulating in the bulk, yet may support robust conducting boundary modes. The paradigmatic example is the integer quantum Hall effect (IQHE) \cite{vKDP1980_PhysRevLett.45.494}, where quantized transport is explained by a topological mechanism; see, for example, \cite{TKNN_1982_PhysRevLett.49.405,ASS_1983_PhysRevLett.51.51,Graf07}. A decisive organizing principle was provided by Kitaev \cite{Kitaev2009}, who arranged free-Fermion topological phases into the periodic table indexed by the Altland--Zirnbauer symmetry classes \cite{AltlandZirnbauer1997} and Bott periodicity. By now there is a large mathematical literature on this subject, including vector-bundle, homotopy-theoretic, $K$-theoretic, KK-theoretic, and noncommutative-geometric approaches; see for example \cite{FreedMoore2013,DeNittisGomi2015,Kubota2016,Thiang2016,BourneCareyRennie2016,PSB_2016,GrossmannSchulz-Baldes2016,kennedy2016bott,KatsuraKoma2018,Kellendonk2019,AlldridgeMaxZirnbauer2020,BourneSchulz-Baldes2020,BourneOgata2021,AvronTurner2022,GontierMonacoSolal2022}.

Throughout this paper the phrase ``classification of insulators'' is used in the conventional tenfold-way sense: non-interacting single-particle bulk Hamiltonians with a spectral gap at the Fermi level, possibly disordered, and subject, when specified, to one of the ten internal Altland--Zirnbauer symmetry constraints. This is the standard Kitaev-table problem. Interacting phases, crystalline or spatial-symmetry-protected phases, and other extensions beyond the tenfold way are important directions, but they are outside the scope of the present theorem and do not alter the mathematical content of the conventional classification problem addressed here.

What is still missing, however, is a disorder-compatible classification of bulk phases by \emph{path-components} of the original Hamiltonian space. Existing approaches provide powerful invariants and robust index formulas, but they generally do not give an if-and-only-if criterion for when two fixed-fiber, disordered Hamiltonians lie in the same topological phase. Put differently: they do not usually compute the underlying topology of the physically relevant Hamiltonian space itself, namely its set of path-connected components.

The point of the present paper is to establish precisely such a completeness result in the spectral-gap regime. We show that, for the bulk Hamiltonian spaces considered here, the strong topological invariants are \emph{complete} invariants: two Hamiltonians lie in the same phase if and only if they are connected by a norm-continuous, symmetry-preserving path inside the same class. Equivalently, the relevant Hamiltonian spaces have exactly the path-components predicted by the strong entries of the Kitaev periodic table.

More concretely, we study the following fixed-fiber classification problem for disordered free-Fermion systems. For each space dimension $d$, each Altland--Zirnbauer symmetry class $\Sigma$, and each finite internal fiber $\CC^N$, we consider the space of bounded self-adjoint Hamiltonians on
\[
\ell^2(\ZZ^d)\otimes\CC^N
\]
which are spectrally gapped at the Fermi level, satisfy the symmetry constraints of $\Sigma$, and are allowed to deform only through norm-continuous symmetry-preserving paths inside the same class. Our main result identifies the path-connected components of the appropriate bulk Hamiltonian space with the strong entry of the Kitaev periodic table. Thus, in the setting treated here, the periodic table is realized as a classification by $\pi_0$ of Hamiltonian spaces, and not merely as a stabilized $K$-theoretic invariant.

The structural reason this problem is subtle is that one must first identify the natural Hamiltonian space and topology. In particular, one needs a real-space notion of locality strong enough to support the relevant index pairings and weak enough to accommodate disorder, and one must exclude configurations which are not genuinely bulk at infinity. The two key notions introduced below are therefore a locality condition, which we call \emph{spherical locality}, and an additional hypothesis, \emph{bulk non-triviality}. Together they single out the class of Hamiltonians for which the strong bulk classification by path-components is the right problem.

\revadd{This is the main conceptual point of the paper. Spherical locality by itself gives a natural ambient local algebra, but it contains non-bulk configurations with extra path-components. Bulk non-triviality by itself is not a locality principle. The theorem below shows that the two conditions imposed together provide the structure needed, in this model, for the strong entries of the Kitaev table to become a genuine non-stable homotopy classification rather than only a stabilized index calculation.}

We work throughout in the non-interacting, spectrally-gapped regime, restrict attention to strong phases, keep the internal fiber finite, and classify by genuine homotopies of Hamiltonians rather than stabilized equivalence classes. The mobility-gap regime and interacting systems provide important motivation, but they are not part of the main theorem proved here.

Stable $K$-theory and fixed-fiber homotopy answer complementary questions. Stable equivalence asks what remains after one is allowed to add trivial degrees of freedom. We ask the finer question of whether two Hamiltonians acting on the same microscopic Hilbert space can actually be connected by an admissible path. This distinction matters whenever one wants an operational criterion for whether a concrete device can be deformed, without changing the underlying degrees of freedom, from one protected regime to another.

The underlying mathematical problem is simple to formulate. One defines a space $\calI$ of Hamiltonians relevant to a given experiment (at fixed Fermi energy), equips it with a topology encoding the physically admissible deformations, and seeks to compute $\pi_0(\calI)$, the set of path-connected components. Stable quantization should then reflect the fact that $|\pi_0(\calI)|>1$: if two Hamiltonians lie in different path-components, one cannot continuously deform one into the other without leaving the class of admissible systems. In other words, topological phases should literally be the path-components of a physically meaningful space of Hamiltonians.

\revadd{This point is especially relevant to the quantum-information motivation for topological matter. In proposals for topological quantum computation and quantum memory, information is protected because it is encoded in global or topological degrees of freedom rather than in microscopic local details; see, for example, \cite{Kitaev2003Anyons,FreedmanKitaevLarsenWang2003,DennisKitaevLandahlPreskill2002,NayakSimonSternFreedmanDasSarma2008,LahtinenPachos2017}. For such a hardware interpretation, a particularly natural question is not only whether two models become equivalent after adding abstract trivial bands, but whether the actual Hamiltonian of the device can be deformed, through admissible Hamiltonians on the same Hilbert space, without closing the gap or breaking the relevant symmetry.}

It is also important that the classification problem be posed in a form compatible with disorder. Physically, the insulating condition may come either from a spectral gap or from a mobility gap. Under strong disorder one typically enters the Anderson-localized regime, where the spectral gap closes but the Fermi level lies in a region of localized states which do not contribute to transport \cite{Aizenman_Graf_1998,EGS_2005}. These considerations already suggest that a formulation tied too rigidly to momentum space cannot be the correct starting point for disordered bulk classification. They also motivate our insistence on real-space locality and on a direct homotopy analysis of Hamiltonians, which is the perspective most likely to extend beyond the spectral-gap regime; see also \cite{Graf_Shapiro_2018_1D_Chiral_BEC,Shapiro2019,Shapiro20,BSS23}.

Insisting on completeness, i.e.\ on computing $\pi_0$ and not merely weaker algebraic functors, is important for at least the following reasons. First, since topological phases are envisioned as resources for robust quantum information processing \cite{KempeKitaevRegev2006,Kitaev2003Anyons,FreedmanKitaevLarsenWang2003,DennisKitaevLandahlPreskill2002,NayakSimonSternFreedmanDasSarma2008}, it is important to know exactly when topological protection can and cannot break down; completeness gives genuine if-and-only-if criteria for topological phase transitions. Second, any eventual understanding of the strongly-disordered mobility-gap regime is unlikely to follow from ordinary $K$-theoretic calculations alone, so direct homotopy-theoretic and operator-theoretic tools should already be developed in the spectral-gap regime. Third, the interacting classification problem remains widely open, even at the many-body level, despite substantial recent progress on index constructions and related questions; see, for example, \cite{Ogata22,ORJ24,Bachmann_Shapiro_Tauber_2026}. \revadd{In those future settings, the lesson of the present paper is not that the same $K$-theory calculation should survive unchanged, but that one should first identify the correct analogues of locality and bulk non-triviality before asking for a homotopy classification.} A direct understanding of path-components is therefore of independent conceptual interest beyond the free-Fermion case.

\subsection{Existing literature.}

Most existing classifications of free-Fermion topological phases fall into three broad categories.

First, in the translation-invariant setting one classifies ground states over momentum-space base manifolds such as $\TT^d$ or $\bS^d$, often in terms of vector bundles or homotopy classes of maps into classifying spaces. This point of view is central to the modern subject and includes, for example, \cite{FreedMoore2013,kennedy2016bott,KennedyGuggenheim2015}. The papers \cite{kennedy2016bott,KennedyGuggenheim2015} are especially relevant conceptually because they are genuinely homotopy-theoretic. However, they concern momentum-space classifications of translation-invariant ground states, whereas we study disordered bulk Hamiltonians in real space. In particular, a classification formulated over momentum-space base manifolds is not compatible with Anderson localization, and the torus-based picture naturally includes weak invariants, whereas the problem addressed here is the disorder-compatible classification of strong bulk phases by path-components.

Second, in disordered settings one replaces momentum space by noncommutative analogues and studies index pairings in operator algebras; see for example \cite{Bellissard_1994JMP....35.5373B,Thiang2016,PSB_2016,KatsuraKoma2018,Kellendonk2019,BourneSchulz-Baldes2020,BourneOgata2021,GontierMonacoSolal2022}. This literature provides robust formulas for invariants and explains their stability under disorder, but typically at the level of $K$-theory rather than $\pi_0$. The distinction matters here: the issue is not only whether an invariant is well-defined, but whether equality of invariants forces an actual norm-continuous path inside the original fixed-fiber Hamiltonian space.

Third, there are coarse-geometric and Roe-algebra approaches \cite{EwertMeyer2019CoarseGeometryTopologicalPhases,kato2023uniformroe}. These correspond to different operator-algebraic models from the one considered here. In the uniform Roe case, the resulting $K$-theory does not reproduce the strong Kitaev table in the form relevant here \cite{kato2023uniformroe}. In the non-uniform Roe setting, one tensors each lattice site with infinitely many internal degrees of freedom \cite{EwertMeyer2019CoarseGeometryTopologicalPhases}. Our focus, by contrast, is the fixed-fiber bulk classification problem.

\revadd{It is useful to separate three issues which are sometimes conflated. The first is the $K$-theory calculation. For the algebraic model used below this calculation is standard, and can be understood through Paschke duality for the representation of $C(\bS^{d-1})$ determined by the normalized position operators \cite{paschke1981k,higson2000analytic,roe2004paschke}. The second is the passage from $K$-theory to actual path-components of unitaries or projections. This step is not automatic, but related mechanisms have already appeared in essentially-commuting models, for instance in \cite{chung2025essentially}. The third issue, which is the main point of the present paper, is conceptual: identify a real-space locality condition and a bulk non-triviality condition which together produce the non-stable strong classification of disordered fixed-fiber Hamiltonians.}

\revadd{The distinction is summarized in the following table.}

\begin{table}[h]
\begingroup

\scriptsize
\begin{center}
\begin{tabular}{p{0.22\textwidth}p{0.35\textwidth}p{0.36\textwidth}}
\noalign{\rule{\textwidth}{0.4pt}}
Framework & Output emphasized & Difference from the present paper \\
\noalign{\rule{\textwidth}{0.4pt}}
Momentum-space homotopy \cite{kennedy2016bott,KennedyGuggenheim2015} & Homotopy classes of translation-invariant ground states over $\TT^d$, $\bS^d$, or related base spaces; naturally includes torus-dependent weak data. & Conceptually homotopy-theoretic, but not a disorder-compatible, real-space classification of path-components of fixed-fiber Hamiltonian spaces. \\
\hline
Noncommutative and index-theoretic approaches \cite{Bellissard_1994JMP....35.5373B,PSB_2016,KatsuraKoma2018} & Robust formulas for strong invariants via operator algebras, Fredholm pairings, and $K$-theory. & Supplies the correct invariants, but typically does not prove an if-and-only-if $\pi_0$ statement for the original fixed-fiber Hamiltonian space. \\
\hline
Roe/coarse approaches \cite{EwertMeyer2019CoarseGeometryTopologicalPhases,kato2023uniformroe} & $K$-theory of coarse or Roe-type algebras, often in stabilized or infinite-fiber models. & Even if one regards the Roe model as natural for coarse geometry, the available results are $K$-theory calculations rather than non-stable homotopy classifications of fixed-fiber Hamiltonians. \\
\hline
This paper & Spherical locality plus bulk non-triviality; non-stable path-components of gapped Hamiltonian spaces in all dimensions and all AZ classes. & The strong Kitaev table is recovered as a genuine fixed-fiber Hamiltonian-space $\pi_0$ classification, not merely as a stabilized index table. \\
\noalign{\rule{\textwidth}{0.4pt}}
\end{tabular}
\end{center}
\caption{A schematic comparison of the problem solved here with nearby classification frameworks.}
\label{table:comparison-frameworks}
\endgroup
\end{table}
Let us now describe the result in more mathematical detail. We model a particle moving through the $d$-dimensional cubic lattice $\ZZ^d$ with a fixed number $N$ of internal degrees of freedom. Thus the relevant Hilbert space is
\[
\ell^2(\ZZ^d)\otimes\CC^N=:\calH_{d,N}.
\]
On it we consider gapped single-particle Hamiltonians, that is, bounded self-adjoint operators $H=H^\ast\in\calB(\calH_{d,N})$ which are invertible. Equivalently, we fix the Fermi energy at $E_F=0$, which is no loss of generality since any other Fermi level can be reduced to this case by shifting by a constant. Invertibility is the spectral-gap form of the insulating condition at $E_F=0$.

We now introduce the two structural notions discussed above: one captures locality, and the other singles out genuinely bulk configurations. We state the definitions here because they are needed to formulate the main theorem; their motivation, equivalent formulations, and structural consequences are developed in the subsequent sections.

\begin{defn}[spherical locality] For any $I\subseteq\mathbb{S}^{d-1}$, let \eql{\label{eq:lambda I on sphere}
\Lambda_I := \br{\sum_{x\in \ZZ^d\setminus\Set{0} : \frac{x}{\norm{x}}\in I} \delta_x\otimes\delta_x^\ast}\otimes\Id_N
}
where $\Set{\delta_x}_{x\in\ZZ^d}$ is the position orthonormal basis of $\ell^2(\ZZ^d)$.

We say that an operator $A\in\calB(\calH_{d,N})$ is \emph{spherically-local} iff for any $I,J\in\Closed{\mathbb{S}^{d-1}}$ such that $I\cap J=\varnothing$,
\eql{
\Lambda_I A \Lambda_J \in \calK(\calH_{d,N})\,,
}
i.e., it is a compact operator between the corresponding conical sectors.
\end{defn}

Next we want to distinguish systems which are genuinely bulk from those which are, in effect, trivial in an open set of directions at infinity.

\begin{defn}[bulk non-triviality] Let $P=P^\ast=P^2\in\calB(\calH_{d,N})$ be a spherically-local projection. We call $P$ \emph{bulk-non-trivial} iff for all $I\in\Open{\bS^{d-1}}\setminus\Set{\varnothing}$,
\eql{
\Lambda_I P\Lambda_I,\Lambda_I P^\perp \Lambda_I\notin\calK(\calH_{d,N})\,.
}
\end{defn}

Since the space of all spherically-local operators forms a \Cstar-algebra with respect to the operator norm, it is closed under continuous functional calculus. In particular, if $H=H^\ast\in\calB(\calH_{d,N})$ is gapped and spherically-local, then its Fermi projection $P\equiv\chi_{(-\infty,0)}(H)$ is also spherically-local. We therefore call a gapped spherically-local Hamiltonian $H$ \emph{bulk-non-trivial} iff its Fermi projection is.

Finally, recall that time-reversal and particle-hole symmetries are anti-unitary operators $\Theta,\Xi:\calH_{d,N}\to\calH_{d,N}$ which may square to $\pm\Id$ and which we assume act non-trivially only on the internal factor $\CC^N$ (equivalently, they commute with the position operators). We say that $H$ is \emph{time-reversal symmetric} iff $[H,\Theta]=0$, and \emph{particle-hole symmetric} iff $\{H,\Xi\}=0$. Setting $\Pi:=\Theta\Xi$, we say that $H$ is \emph{chiral symmetric} iff $\{H,\Pi\}=0$; this may occur even in the absence of time-reversal or particle-hole symmetry separately. In this way one obtains the ten Altland--Zirnbauer symmetry classes appearing in the \nameref{table:Kitaev}.

Our main result is the following.

\begin{thm}[Kitaev table agrees with path-connected components of non-trivial insulators]\label{thm:Kitaev table agrees with path-connected components of non-trivial insulators}
Fix an Altland--Zirnbauer symmetry class $\Sigma$, a dimension $d$ and an internal number of degrees of freedom $N$. Consider the space of all gapped, spherically-local, bulk-non-trivial Hamiltonians $H=H^\ast\in\calB(\calH_{d,N})$ respecting the symmetry class $\Sigma$, taken with the subspace topology with respect to the operator norm topology.

Then the set of path-connected components of this space agrees with the relevant entry within the \nameref{table:Kitaev}.
\end{thm}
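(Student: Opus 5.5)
First I would reduce the problem from Hamiltonians to flat band data. The map $H\mapsto \sgn(H)=\Id-2P$ is a deformation retraction, via the straight-line path $t\mapsto (1-t)H+t\,\sgn(H)$, which stays gapped, spherically-local and symmetric. So it suffices to compute $\pi_0$ of the space of spherically-local, bulk-non-trivial projections $P$ satisfying the $\Sigma$-constraints. In the chiral classes, $\{H,\Pi\}=0$ lets us write $\sgn(H)$ off-diagonally in the $\Pi$-eigenbasis, with off-diagonal block a unitary $U$. The problem then becomes $\pi_0$ of spherically-local unitaries (with the induced real structure) whose "bulk" part is nontrivial in every direction. Thus the ten classes split into a projection problem and a unitary problem, as in the usual Clifford-extension bookkeeping.

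Second, I would identify the ambient algebra. Let $\calA$ be the $C^\ast$-algebra of spherically-local operators. By definition $\calA$ is the essential commutant of the representation $f\mapsto f(\hat X)$ of $C(\bS^{d-1})$, where $\hat X=X/\norm{X}$, modulo nothing beyond compacts. This is exactly the Paschke dual $\calD$ of $C(\bS^{d-1})$ for an ample representation, up to the fiber $N$. The Calkin image $\calA/\calK$ therefore computes $K_{\ast}(\calA/\calK)\cong K^{\ast+1}(\bS^{d-1})$ (respectively $KO$ or $KR$ with the appropriate shift), with real structures in the presence of $\Theta,\Xi$. The strong invariant is the boundary/index map: a Fredholm index of $PF P$ with $F$ the Dirac phase built from $\hat X$ and Clifford generators (or of $P_+UP_+$ in the chiral case). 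Taking the reduced part of $\widetilde K$ and discarding the unit class yields exactly $\{0\},\ZZ,2\ZZ,\ZZ_2$ in the table. Surjectivity onto the table entry at fixed $N$ is done by explicit models, for instance the standard Dirac/Jackiw–Rebbi-type flat projections built from $\hat X$, tensored with trivial bulk-non-trivial pieces.

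Third, and this is the main obstacle, I would show injectivity on $\pi_0$ at fixed fiber: two bulk-non-trivial projections with equal strong invariant are homotopic inside the space. My plan is as follows.

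(a) Pass to the Calkin algebra. Bulk non-triviality says that the image $\pi(P)$ restricted to every open cone is neither $0$ nor $1$. This means that both $\pi(P)$ and $\pi(P^\perp)$ are "full and properly infinite" over every point of $C(\bS^{d-1})$ acting centrally. By a Kirchberg/Cuntz-type argument, $\pi_0$ of such projections in $\calA/\calK$ then coincides with $K_0$, with no stabilization needed. The analogous statement for unitaries in the chiral case says that $\pi_0$ of the unitary group equals $K_1$.

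(b) Lift homotopies from $\calA/\calK$ to $\calA$. A path of projections in the quotient lifts to a path of projections in $\calA$ up to a compact, finite-rank discrepancy, and that discrepancy is measured by the index. Equality of strong invariants kills the obstruction, since weak/lower-dimensional classes are excluded by bulk non-triviality. The remaining finite-rank corrections can be absorbed because $P$ and $P^\perp$ have infinite rank in every cone. The symmetry constraints are carried along by working throughout in the real $C^\ast$-algebra fixed by $\Theta,\Xi$ and averaging paths to respect them.

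I expect step (a), the non-stable cancellation in the symmetric Paschke dual, to be the hard part. I would first try to prove it by using fullness to find isometries in $\calA$ (commuting with the symmetries) absorbing $\Id_N\otimes$ trivial summands, so that $P\sim P\oplus Q$ for trivial $Q$ inside the same fiber. Standard $K$-theory then finishes the argument.
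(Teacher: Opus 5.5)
\textbf{There is a genuine gap in your third step, and it is exactly where the paper does its main work.}

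For both the chiral (unitary) problem and the projection problem you need a non-stable $K_1$-injectivity statement: a spherically-local unitary $U$ with $[U]_1=0$ lies in the identity component of $\calU(\calL_d)$, together with its symmetry-constrained analogues. A Cuntz-type argument does not supply this. For properly infinite algebras it gives $K_1$-surjectivity and Murray--von Neumann cancellation $P\sim P\oplus Q$; the latter is the paper's Voiculescu lemma, \cref{lem:projection MvN amplification}. But $K_1$-injectivity is not known for general properly infinite algebras.

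Murray--von Neumann classification also does not produce homotopies. From $[P]_0=[Q]_0$ you get $Q=UPU^\ast$. To turn this into a path you must know that $U$, after correction by corner unitaries, lies in $\calU_0$, and that is again $K_1$-injectivity.

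Your fallback ("absorb trivial summands by isometries, then standard $K$-theory finishes'') fails for the same reason. Conjugating a homotopy $U\oplus\Id_n\sim_h\Id_{n+1}$ by an isometry $V\in M_{1,n+1}(\calL_d)$ connects $V(U\oplus\Id_n)V^\ast$ to $\Id$, not $U$. Relating $U$ to $V(U\oplus\Id_n)V^\ast$ is the same problem.

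The paper resolves this in two stages:
\begin{itemize}
\item It first deforms $U$ to $W=PWP+P^\perp$ with $P$ spherically-proper (\cref{prop:local unitary reduced by geo proper proj}). This uses the cone-decoupling and annulus lemmas, the inclusion--exclusion perturbation, and a pinning operator that is block-diagonal on finite islands and homotopic to $\Id$, followed by a triangular deformation and a rotation.
\item The room off $P$ then lets one choose $V=[P+V_0,V_1,\dots,V_n]$ with $V(W\oplus\Id_n)V^\ast=W$ exactly (\cref{prop:compress unitary homotopy from matrix algebra back to original algebra}).
\end{itemize}
The projection case is reduced to this via $Q=UPU^\ast$ and $\pi_0(\calU(\calL_d))$. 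Nothing in your plan replaces this localization step.

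\textbf{Step (b) has further problems.}
\begin{itemize}
\item Lifting a path from $\calA/\calK$ gives $P\sim_h P_1$ with $P_1-Q$ compact. The remaining obstruction is the essential codimension of the pair $(P_1,Q)$. It is invisible to the strong invariant, because a rank-one projection is zero in $K_0(\calL_d)$ (use a shift along a single ray). So "equality of strong invariants kills the obstruction'' is not the mechanism. Removing it again requires a spherically-local unitary in the identity component, which is the same missing ingredient.
\item There is also no reason to pass to $\calA/\calK$, since $\calK\subset\calL_d$ and $K_\ast(\calL_d)$ is the relevant group.
\item In the real classes, ``averaging paths'' does not preserve the projection, unitary or odd self-adjoint unitary conditions. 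You need a genuine symmetrization, such as $\Psi(A)=A(-\overline{A}A)^{-1/2}$. In classes like DIII, where $\Id$ is not admissible, you also need a symmetric base point (the dimer operator $E$) and a local rank-one correction restoring the symmetry along the pinning sequence, as in the paper's DIII argument.
\end{itemize}
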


Thus, for example, in class A and $d=2$ the path-components are indexed by $\ZZ$, while in class AII and $d=3$ they are indexed by $\ZZ_2$, exactly as predicted by the strong Kitaev table. What is new is that these groups arise here as actual path-components of Hamiltonian spaces. \revadd{The theorem should therefore be read less as a new computation of the abstract groups and more as a validation of the two structural concepts used to define the Hamiltonian space.}

A useful way to summarize the proof of \Cref{thm:Kitaev table agrees with path-connected components of non-trivial insulators} is as follows. First, because we work in the spectral-gap regime, continuous functional calculus allows us to deform any gapped spherically-local Hamiltonian to a canonical representative, such as its sign representative or, equivalently, its Fermi projection; crucially, this reduction respects the Altland--Zirnbauer symmetry constraints, so the classification problem for Hamiltonians becomes a classification problem for symmetry-constrained spherically-local projections or unitaries (see \cite[Lemma 5.13]{ChungShapiro2023}). Second, spherical locality provides the ambient local $C^\ast$-algebra and hence the topology in which the homotopy problem is posed, while bulk non-triviality removes lower-dimensional or edge-type configurations that would otherwise contaminate the bulk classification. Within this framework one computes the relevant index invariants by pairing with the appropriate Dirac-type cycle. Depending on the dimension and symmetry class, these invariants take values in $\ZZ$, $2\ZZ$, $\ZZ_2$, or vanish. \revadd{These index calculations are not the main novelty; they are the expected calculations for the associated Paschke-dual picture.} The main technical step is then to lift these $K$-theoretic calculations to a statement about path-components: after restricting to bulk-non-trivial objects, the same index invariants are shown to be complete for homotopy classes through spherically-local, symmetry-preserving deformations. \revadd{The mechanism for such a lift is related to the essentially-commuting homotopy methods of \cite{chung2025essentially}; what is new here is that the lift is carried out on the Hamiltonian spaces determined by spherical locality and bulk non-triviality, in all dimensions and all AZ symmetry classes.} Concretely, this is achieved by a localization-and-compression scheme: one deforms a local representative to one that is trivial on a large spherically-proper region, uses stabilization in matrix algebras to build the required homotopies, and then compresses back to obtain genuine homotopies in the original local algebra. Finally, the real symmetry classes are handled by passing to the appropriate fixed-point and graded settings, where the resulting periodicity recovers exactly the \nameref{table:Kitaev} at the level of path-connected components.

The present paper treats the all-dimensional, spectrally-gapped, non-interacting case within a broader program initiated in \cite{ChungShapiro2023}, whose goal is to classify topological phases directly as path-components of physically meaningful spaces of Hamiltonians. Parts of this program were carried out in one dimension in \cite{ChungShapiro2023} and, in a more operator-theoretic two-dimensional form, in \cite{chung2024topological,chung2025essentially}. The new feature here is that the classification is carried out in all space dimensions and for all ten Altland--Zirnbauer classes, while remaining in fixed finite fiber, without stabilization, and without assuming translation-invariance.

This paper is organized as follows. In \Cref{sec:Locality} we motivate and introduce spherical locality, relate it to Dirac locality, and establish the basic structural properties of the corresponding \Cstar-algebra $\calL_d$, including its $K$-groups and the relevant Fredholm index pairings. In \Cref{sec:Bulk non-triviality} we formulate bulk non-triviality, record its basic consequences, and explain why it is the correct condition for excluding non-bulk path-components. In \Cref{sec:The classification of bulk-non-trivial spherically-local projections and unitaries} we prove the main complex classification results by lifting the $K$-theoretic computations to $\pi_0$, showing that the strong index is complete for path-components of $\calU(\calL_d)$ in odd dimension and of $\calP^{\NT}(\calL_d)$ in even dimension. In \Cref{sec:Real symmetries} we treat the eight real symmetry classes using van Daele's $K$-theory and thereby complete the proof of \Cref{thm:Kitaev table agrees with path-connected components of non-trivial insulators}.

	\begin{table}
		\begin{center}
			\begin{tabular}{c|ccc|cccccccc}
				\multicolumn{4}{c|}{Symmetry } & \multicolumn{8}{c}{dimension} \\
				\multicolumn{1}{c}{AZ} &$\hspace{1.5mm}\Theta\hspace{1.5mm} $ &
				$\hspace{1.5mm} \Xi\hspace{1.5mm} $ &
				$\hspace{1.5mm} \Pi\hspace{1.5mm} $ &
				$1$   &  $2$ &  $3$ &  $4$ &  $5$ & $6$ & $7$& $8$ \\
				\hline
				A & $0$ & $0$ & $0$  &$0$& $\mathbb{Z}$ &$0$& $\mathbb{Z}$ &$0$& $\mathbb{Z}$ &$0$& $\mathbb{Z}$\\
				AIII & $0$ & $0$ & $1$ & $\mathbb{Z}$ &$0$& $\mathbb{Z}$ &$0$& $\mathbb{Z}$ &$0$& $\mathbb{Z}$& $0$\\
				\hline
				AI & $1$ & $0$ & $0$  &$0$&$0$&$0$&$\mathbb{Z}$&$0$&$\mathbb{Z}_2$&$\mathbb{Z}_2$& $\mathbb{Z}$ \\
				BDI & $1$ &$1$ &$1$ & $\mathbb{Z}$ &$0$&$0$&$0$&$\mathbb{Z}$&$0$&$\mathbb{Z}_2$& $\mathbb{Z}_2$\\
				D & $0$ &$1$ &$0$ & $\mathbb{Z}_2$& $\mathbb{Z}$ &$0$&$0$&$0$&$\mathbb{Z}$&$0$&$\mathbb{Z}_2$\\
				DIII&$-1$ &$1$ &$1$ &$\mathbb{Z}_2$& $\mathbb{Z}_2$& $\mathbb{Z}$ &$0$&$0$&$0$&$\mathbb{Z}$&$0$\\
				AII & $-1$ & $0$ & $0$ &$0$&$\mathbb{Z}_2$& $\mathbb{Z}_2$& $\mathbb{Z}$ &$0$&$0$& $0$&$\mathbb{Z}$\\
				CII & $-1$ &$-1$ & $1$&$\mathbb{Z}$ & $0$&$\mathbb{Z}_2$& $\mathbb{Z}_2$& $\mathbb{Z}$ &$0$&$0$&$0$ \\
				C & $0$ & $-1$& $0$ & $0$ &$\mathbb{Z}$ &$0$&$\mathbb{Z}_2$& $\mathbb{Z}_2$& $\mathbb{Z}$ &$0$& $0$\\
				CI & $1$ & $-1$ & $1$& $0$ & $0$&$\mathbb{Z}$&$0$&$\mathbb{Z}_2$& $\mathbb{Z}_2$& $\mathbb{Z}$& $0$ \\
			\end{tabular}
		\end{center}
        \captionsetup{labelformat=empty}
		\caption[Kitaev table]{The Kitaev periodic table. The entries stand for the respective $K$-theory groups in a given dimension and symmetry class. In this paper we prove that, for the bulk Hamiltonian spaces defined above, these entries are not only $K$-theoretically correct but also $\pi_0$-correct.}
		\label{table:Kitaev}
	\end{table}

\subsection*{Acknowledgements}
JS was supported in part by NSF grant DMS-2510207. The authors wish to thank Christopher Bourne, Charles L. Fefferman and Gian Michele Graf for stimulating discussions. JHC wishes to thank Shouda Wang for helpful discussions.

\section{Locality}\label{sec:Locality}

In this section we focus our attention on the question of locality, the basic physics principle by which far away particles should have negligible interaction. We want to give context so as to eventually define and develop a relatively weak notion of locality, which we will term \emph{spherically-local operators}.

Our single-particle Hilbert space is \eql{\label{eq:single-particle hilbert space}
\ell^2(\ZZ^d)\otimes\CC^N =:\calH_d\otimes \CC^N \equiv \calH_{d,N}
} where $d$ is the space dimension and $N$ is the fixed number of degrees of freedom per lattice site. Since $\CC^N$ is internal to a lattice site, we do not discern any locality within it, and so it is only necessary to keep track of the $\calH_d$ factor when discussing locality.

While in physics usually nearest neighbor, or finite hopping locality is the default, in mathematical physics one mainly encounters
\begin{defn}[exponential locality]\label{def:exp locality} An operator $A\in\calB(\calH_d\otimes\CC^N)$ is exponentially local iff there exist $C<\infty,\mu>0$ such that \eql{\label{eq:exp locality}
    \norm{\ip{\delta_x}{A\delta_y}}\leq C\exp\br{-\mu\norm{x-y}}\qquad(x,y\in\ZZ^d)\,.
} Here $\Set{\delta_x}_{x\in\ZZ^d}$ is the (orthonormal) position basis and $\norm{\cdot}$ on the LHS is \emph{any} matrix norm on the $N\times N$ matrix $\ip{\delta_x}{A\delta_y}$ of the partial matrix elements.
\end{defn}

\begin{prop}
    The space of all exponentially local operators taken with the operator norm is a $\star$-algebra but does \emph{not} form a $C^\ast$-algebra.
\end{prop}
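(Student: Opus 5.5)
Two things need proving: (i) exponentially local operators form a $\star$-subalgebra of $\calB(\calH_{d,N})$; (ii) this subalgebra is not norm-closed, so it is not a $C^\ast$-algebra.

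For (i), use the matrix-element bound \cref{eq:exp locality}. All norms on $N\times N$ matrices are equivalent, so we may fix the operator norm on the fibers.

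\emph{Linearity.} If $A$ has constants $(C_A,\mu_A)$ and $B$ has $(C_B,\mu_B)$, then $\alpha A+\beta B$ satisfies the bound with constants $(|\alpha|C_A+|\beta|C_B,\ \min(\mu_A,\mu_B))$.

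\emph{Adjoint.} Since $\ip{\delta_x}{A^\ast\delta_y}=\ip{\delta_y}{A\delta_x}^\ast$ and $\norm{x-y}$ is symmetric, $A^\ast$ satisfies the same bound with the same constants.

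\emph{Products.} Write
\[
\ip{\delta_x}{AB\delta_y}=\sum_{z\in\ZZ^d}\ip{\delta_x}{A\delta_z}\ip{\delta_z}{B\delta_y}.
\]
Set $\mu:=\min(\mu_A,\mu_B)$ and $\mu':=\mu/2$. By the triangle inequality,
\[
e^{-\mu(\norm{x-z}+\norm{z-y})}\leq e^{-\mu'\norm{x-y}}\,e^{-\mu'(\norm{x-z}+\norm{z-y})}\leq e^{-\mu'\norm{x-y}}\,e^{-\mu'\norm{x-z}}.
\]
The sum $\sum_z e^{-\mu'\norm{x-z}}$ is a finite constant independent of $x$. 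Hence $AB$ is exponentially local with rate $\mu'$.

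\emph{Boundedness.} An exponentially local $A$ is automatically bounded, by the Schur test: $\sup_x\sum_y C e^{-\mu\norm{x-y}}<\infty$.

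For (ii), exhibit a norm-convergent sequence of exponentially local operators whose limit is not exponentially local. It suffices to work on $\calH_d$ (take $N=1$, or tensor with $\Id_N$). Let $e_1$ be a fixed unit lattice vector and $S$ the shift $S\delta_x=\delta_{x+e_1}$; it is unitary with $\norm{S^n}=1$. Define
\[
A:=\sum_{n\geq1}\frac{1}{n^2}S^n.
\]
The series converges absolutely in operator norm. Each partial sum is a finite-hopping operator, hence exponentially local. The limit has matrix elements
\[
\ip{\delta_{x+ne_1}}{A\delta_x}=\frac{1}{n^2},
\]
which decay only polynomially in $\norm{x-y}=n$. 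So no $C,\mu>0$ satisfy \cref{eq:exp locality}, and $A$ is not exponentially local.

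Thus the algebra is not norm-closed. A $C^\ast$-algebra must be complete in its norm, so the exponentially local operators with the operator norm do not form a $C^\ast$-algebra.

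The only step needing care is the product estimate in (i), where the decay rate must be halved to absorb the sum over $z$. The counterexample in (ii) is routine.
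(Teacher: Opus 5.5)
Your proof is correct and takes the same route as the paper: a translation-invariant operator with summable polynomially decaying kernel (your $\sum_{n\geq 1} n^{-2}S^n$) is the norm limit of its finite-hopping truncations, but is not itself exponentially local. Beyond the paper, you also verify the $\star$-algebra axioms, including the product estimate with the halved decay rate, which the paper takes for granted.
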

\begin{proof}
    The main failure here is operator norm closure. To see it, consider for instance a translation-invariant operator whose integral kernel has sufficiently fast polynomial decay so as to make it a bounded operator. Its finite-hopping truncations, at hopping distance $R\in\NN$, converge to it in operator norm. But of course finite-hopping operators are exponentially local.
\end{proof}

The fact above prompts one to seek the smallest \Cstar-algebra generated by exponentially-local operators. This turns out to be the pre-existing notion of the 
\begin{defn}[Uniform Roe algebra]\label{def:Uniform Roe algebra} Let 
\eq{
    \rho_{\mathrm{u}}(\ZZ^d) &:= \overline{\Set{A \in \calB(\calH_d) | A\text{ is exponentially local}}} \\
    &= \overline{\Set{A \in \calB(\calH_d) | A_{xy}=A_{xy}\chi_{[0,R]}(\norm{x-y})\exists R\in\NN}}
} where the closure is taken with respect to  the norm topology. Then $\rho_{\mathrm{u}}(\ZZ^d)$ is called the \emph{uniform Roe algebra} over $\ZZ^d$.
\end{defn}
Defining $\rho_{\mathrm{u}}(\ZZ^d)$ as the norm-closure of an algebra, it is automatically a \Cstar-algebra. \revdel{The algebra $\rho_{\mathrm{u}}(\ZZ^d)$ is probably \emph{at odds} with the \nameref{table:Kitaev}.} \revadd{The uniform Roe algebra is a natural object for coarse-geometric questions, but it answers a different question from the fixed-fiber strong-bulk classification considered here.} Indeed, in \cite{kato2023uniformroe}, at least at the level of $K$-theory the two seem to disagree. \revdel{Rather, to obtain the same classification as the table yields, weaker notions of locality exist.} \revadd{In particular, the existing comparison with the Kitaev table is presently a $K$-theory comparison, not a true homotopy classification of Hamiltonian spaces.} One possibility is the 
\begin{defn}[Non-Uniform Roe algebra] Let 
    \eq{
    \rho_{\mathrm{nu}}(\ZZ^d) := \overline{\Set{A \in \calB(\calH_d\otimes \ell^2(\NN)) |\exists R\in\NN: A_{xy} = \chi_{[0,R]}(\norm{x-y})A_{xy} \in \calK(\ell^2(\NN)) }}\,.
} Here, for each $x,y\in\ZZ^d$, the partial matrix element $A_{xy}$ is actually an operator on $\ell^2(\NN)$. So we take the norm closure of finite-hopping operators where each matrix element is a compact operator.
\end{defn}

It turns out that the classification of $\rho_{\mathrm{nu}}(\ZZ^d)$ is much closer to the \nameref{table:Kitaev}, again, at least at the level of $K$-theory \cite{EwertMeyer2019CoarseGeometryTopologicalPhases}. \revdel{However, we find it somewhat unappealing due to tensoring with an infinite-dimensional internal fiber, and moreover, we do not know how to lift these $K$-theory results to $\pi_0$-results.} \revadd{This is an important and natural stabilized/infinite-fiber model. Our problem is different: we keep the internal fiber finite and ask for path-components of the corresponding Hamiltonian space. To our knowledge, the Roe-algebraic literature in this direction has so far produced $K$-theory calculations rather than non-stable homotopy classifications of fixed-fiber Hamiltonians.}

Here, instead, we will study yet another notion of locality. In \cite{ChungShapiro2023} we identified a one-dimensional mode of locality we termed $\Lambda$-locality. 
\begin{defn}[$\Lambda$-locality]\label{def:Lambda-locality} Let $\Lambda := \chi_\NN(X)$ be the projection onto the right half space of $\calH_1$. We term an operator $A\in\calB(\calH_1)$ to be $\Lambda$-local iff $[A,\Lambda]\in\calK$. 
\end{defn}
Clearly the set of all $\Lambda$-local operators forms a $C^\ast$-algebra, and moreover, in studying the topological properties of $\Lambda$-local operators, the spatial structure of $\calH_1$ gets washed away and all that matters is that we single out some fixed projection $\Lambda$ on some separable Hilbert space $\calH$, such that $\dim \im \Lambda,\dim \ker \Lambda$ are both infinite.

It is clear that \Cref{def:exp locality} implies \Cref{def:Lambda-locality} but not vice versa. Indeed, if $A\in\calB(\ell^2(\NN))$ is \emph{any} operator then extending it trivially to $\calB(\ell^2(\ZZ))$ via $\Lambda A \Lambda$ yields a $\Lambda$-local operator which may well fail to be exponentially local. \revdel{Why then, is it legitimate to relax exponential locality to $\Lambda$-locality, one may ask? The answer is that there is nothing particularly special about exponential locality either (why were we allowed to relax finite hopping to exponential locality to begin with?) and that in condensed matter physics the philosophy is we pick whichever mathematically-easiest model we have to still describe physically-non-trivial phenomena. In this sense, working with $\Lambda$-locality, we still have \emph{some} shadow of locality (the left and right hand sides of the sample interact in a ``compact way''), but the topological indices remain well-defined and the classification proofs (as was shown in \cite[Sections 3 and 4]{ChungShapiro2023}) greatly simplify.} \revadd{The point of such a relaxation is not to choose locality arbitrarily, but to isolate exactly the part of locality needed for the bulk indices and for homotopy completeness. In the one-dimensional case, $\Lambda$-locality records that the two directions at infinity interact only compactly, while still supporting the relevant Fredholm index and the associated path-component classification \cite[Sections 3 and 4]{ChungShapiro2023}.}

To generalize $\Lambda$-locality then to higher dimensions we follow the same principle: \revdel{relax exponential locality to some vaguer notion of locality which still keeps the topological indices well-defined and yields simplified proofs.} \revadd{replace metric off-diagonal decay by an asymptotic directional locality condition which is strong enough to define the strong Dirac-type pairings and weak enough to include the disordered examples one expects from localization theory.} The work of Prodan and Schulz-Baldes \cite{PSB_2016} paves the way on how to do this.

Let $k=\floor{d/2}$. The complex Clifford algebra $\Cl_d(\CC)$ of $d$ generators admits a self-adjoint $2^k$-dimensional representation $\Gamma_1,\dots,\Gamma_d$ that satisfy $\{\Gamma_i,\Gamma_j\}=2\delta_{ij}\Id_{2^k}$. On the augmented Hilbert space 
\eql{\label{eq:augmented Hilbert space}
\calH_d\otimes \CC^N\otimes \CC^{2^k}
}
we define the flat Dirac operator as 
\eql{\label{eq:flat Dirac operator}
W_d= \sum_{j=1}^d \widehat{X}_j \otimes \Id_N \otimes \Gamma_j.
}
Here $\widehat{X}_i$ is the unit position operator defined as
\eq{
    \widehat{X}_i := \frac{X_i}{\|(X_1,\dots,X_d)\|},\quad \forall i = 1,\dots, d
}
where $X_i$ is the position operator in the $i$-th coordinate; this is a bounded self-adjoint operator whose spectrum is $[-1,1]$. Note that strictly speaking $\widehat{X}\delta_0$ is not defined, so we use the convention $\widehat{X}\delta_0:=e_1\delta_0$.

\begin{defn}[Dirac locality]\label{defn:Dirac locality}
    We say that $A\in \calB(\calH_d\otimes \CC^N)$ is Dirac-local if
    \eq{
        [A\otimes \Id_{\CC^{2^k}}, W_d] \in \calK\,.
    }
\end{defn}
\begin{rem}\label{rem:auxiliary space is not important}
    We are interested in operators that act on the  Hilbert space $\calH_d\otimes \CC^N$ where $N$ is the number of internal degrees of freedom; see \cref{eq:single-particle hilbert space}. To define the notion of locality for operators on $\calH_d\otimes \CC^N$, following the work of \cite{PSB_2016,GrossmannSchulz-Baldes2016}, we augment the Hilbert space to \cref{eq:augmented Hilbert space} and also the operators $A$ to $A\otimes \Id_{\CC^{2^k}}$. However, we emphasize that we are interested only in the Hilbert space $\calH_d\otimes \CC^N$, and the homotopy of operators acting on it. The augmented Hilbert space is only used to define locality (and eventually topological indices).
\end{rem}

For $d\in 2\NN$ and a suitable representation of $\Cl_d(\CC)$, the flat Dirac operator $W_d$ \cref{eq:flat Dirac operator} takes an off-diagonal form
\eql{\label{eq:flat Dirac operator off-diagonal form}
    W_d = \begin{bmatrix}0 & L_d^\ast \\ L_d & 0\end{bmatrix} \,.
}
where $L_d$ is called the Dirac phase.
Indeed, one can arrange that the $\Gamma_j$ are written as
\eq{
\Gamma_j = \begin{bmatrix}
    0 & \Omega_j \\
    \Omega_j & 0
\end{bmatrix},\ \forall j=1,\dots,d-1, \quad \Gamma_d = \begin{bmatrix}
    0 & -\ii \Id_{2^{k-1}} \\
    \ii \Id_{2^{k-1}} & 0
\end{bmatrix}
}
where $\Omega_1,\dots,\Omega_{d-1}$ is a self-adjoint $2^{k-1}$-dimensional irreducible representation of the complex Clifford algebra $\Cl_{d-1}(\CC)$ which satisfy $\{\Omega_i,\Omega_j\}=2\delta_{ij}\Id_{2^{k-1}}$. A straightforward calculation shows that $\Gamma_1,\dots,\Gamma_{d}$ so defined is a complex representation of $\Cl_d(\CC)$. Then, inserting the expressions into the flat Dirac operator \cref{eq:flat Dirac operator} we get \cref{eq:flat Dirac operator off-diagonal form} with
\eq{
L_{d} = \sum_{j=1}^{d-1}\widehat{X}_j\otimes \Id_N\otimes \Omega_{j} + \ii \widehat{X}_d\otimes \Id_N\otimes \Id_{2^{k-1}}
}
acting on $\calH_d\otimes \CC^N\otimes \CC^{2^{k-1}}$. 
Instead of referring to the flat Dirac operator \cref{eq:flat Dirac operator}, it is conventional to talk about the Dirac projection for $d\in 2\NN+1$, and about the Dirac phase for $d\in 2\NN$, where we collect them in the following

\begin{defn}[Dirac phase and Dirac projection]\label{defn:Dirac phase and projection}
    Let $k=\floor{d/2}$. For $d\in 2\NN$, let $\Gamma_{1},\dots,\Gamma_{d-1}$ be a self-adjoint $2^{k-1}$-dimensional irreducible representation of the complex Clifford algebra $\Cl_{d-1}(\CC)$. Define the Dirac phase as
    \eql{\label{eq:dirac phase}
    L_{d} = \sum_{j=1}^{d-1}\widehat{X}_j\otimes \Id_N\otimes \Gamma_{j} + \ii \widehat{X}_d\otimes \Id_N\otimes \Id_{2^{k-1}}
    }
    which is a unitary operator on $\calH_d\otimes \CC^N\otimes \CC^{2^{k-1}}$.
    
    For $d\in 2\NN+1$, let $\Gamma_1,\dots,\Gamma_d$ be a self-adjoint $2^k$-dimensional irreducible representation of the complex Clifford algebra $\Cl_{d}(\CC)$. Define the Dirac projection as
    \eql{\label{eq:dirac projection}
    \Lambda_{d}= \frac{1}{2}\left(\sum_{j=1}^d \widehat{X}_j\otimes \Id_N\otimes \Gamma_j + \Id\otimes\Id_N\otimes \Id_{2^k}\right)
    }
    which is an orthogonal projection on $\calH_d\otimes \CC^N\otimes \CC^{2^{k}}$.
\end{defn}

\begin{rem}
    Since the Dirac phase and projection encode the same information as the flat Dirac operator, we can define Dirac locality using the Dirac phase and projection.
    For $d\in 2\NN$, an operator $A$ on $\calH_d\otimes \CC^N$ is Dirac local if $[A\otimes \Id_{\CC^{2^{k-1}}},L_d]\in \calK$; for $d\in 2\NN+1$, an operator $A$ on $\calH_d\otimes \CC^N$ is Dirac local if $[A\otimes \Id_{\CC^{2^{k}}},\Lambda_d]\in \calK$.
\end{rem}

\begin{rem}
    One may wonder whether the particular representation used in defining the Clifford algebra $\Cl_d(\CC)$ matters when defining Dirac locality in \cref{defn:Dirac locality}, as the flat Dirac operator (and also the Dirac phase and projection) depend on the explicit representation of the Clifford algebra. As we shall see later, it does not.
\end{rem}

\subsection{Spherically-local operators}

As noted in \cref{rem:auxiliary space is not important}, the auxiliary space $\CC^{2^k}$ in \cref{eq:augmented Hilbert space} is not relevant to what happens in the system $\calH_d\otimes\CC^N$.
Indeed, one can give an equivalent definition of Dirac locality without augmenting the Hilbert space. We begin with an algebraic definition and later on show it is equivalent to the geometric definition given in the introduction; both equivalent formulations shall be useful to us in the sequel.

\begin{defn}[spherical locality]\label{def:spherically-local}
For $A\in\calB(\calH_d\otimes \CC^N)$, we say that $A$ is spherically-local iff
\eq{
    [A,\widehat{X}_j\otimes \Id_N]\in\calK,\quad \forall j=1,\dots,d\,.
}
Denote the space of spherically-local operators as $\calL_{d,N}$, and denote $\calL_d:=\calL_{d,1}$. Sometimes we use the phrase \emph{hyper-spherically-local} if $[A,\widehat{X}_j\otimes \Id_N]=0$ for $j=1,\dots,d$.
\end{defn}

\begin{example}[low dimensions] If $d=1$ then we merely have $\widehat{X}_1=\sgn(X_1)$, and having a compact commutator with this operator is equivalent to the notion of $\Lambda$-locality (\cref{def:Lambda-locality}). If $d=2$ then 
\eql{\label{eq:Laughlin flux insertion operator}
\widehat{X}_1+\ii\widehat{X}_2 = L
}
the so-called Laughlin flux insertion operator studied in \cite{chung2024topological,chung2025essentially}. Clearly an operator essentially commutes with $L$ iff it essentially commutes with both $\widehat{X}_1$ and $\widehat{X}_2$. Indeed, if $[A,L]\in\calK$, since $L$ is unitary, $[A,L^\ast]\in\calK$ too, and so adding and subtracting we get the two components.
\end{example}

The first order of business is to establish that
\begin{lem}
    An operator $A\in\calB(\calH_d\otimes \CC^N)$ is Dirac-local iff it is spherically-local.
\end{lem}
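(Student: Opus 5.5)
The plan is to compare the single commutator $[A\otimes\Id,W_d]$ with the $d$ commutators $[A,\widehat X_j\otimes\Id_N]$. Write $\widetilde A:=A\otimes\Id_{2^k}$ on the augmented space \cref{eq:augmented Hilbert space}. Since $\Gamma_j$ acts only on the auxiliary factor and $\widetilde A$ acts trivially there, we have
\[
[\widetilde A,W_d]=\sum_{j=1}^d [A,\widehat X_j\otimes\Id_N]\otimes\Gamma_j .
\]
The direction ``spherically-local $\Rightarrow$ Dirac-local'' is then immediate. Each summand is a compact operator tensored with a finite matrix, hence compact, and a finite sum of compacts is compact.

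For the converse we have to extract each $[A,\widehat X_j\otimes\Id_N]$ from the combined commutator $C:=[\widetilde A,W_d]$. We will use the Clifford relations through the trace on the auxiliary factor. For each $j$, consider $(\Id\otimes\Gamma_j)C$ and take the partial trace $\tr_{2^k}$ over $\CC^{2^k}$. This partial trace is the map $B\mapsto \sum_{a} (\Id\otimes e_a^\ast) B(\Id\otimes e_a)$. It sends compact operators to compact operators, being a finite sum of compressions of a compact operator.

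Applied to our expression, it gives
\[
\sum_i [A,\widehat X_i\otimes\Id_N]\,\tr(\Gamma_j\Gamma_i).
\]
The Clifford relations give $\tr(\Gamma_j\Gamma_i)=\tfrac12\tr\{\Gamma_j,\Gamma_i\}=2^k\delta_{ij}$. Hence the partial trace equals $2^k[A,\widehat X_j\otimes\Id_N]$, which is therefore compact for every $j$. Note that this only requires that $\Gamma_1,\dots,\Gamma_d$ be \emph{some} self-adjoint representation of $\Cl_d(\CC)$. In particular it shows that Dirac locality is independent of the chosen representation, which settles the earlier remark about representation-independence.

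The same argument handles the Dirac phase and Dirac projection formulations of \cref{defn:Dirac phase and projection}. For the projection $\Lambda_d$ the identity term commutes with $\widetilde A$, so the argument above applies verbatim. For $L_d$ we take partial traces against $\Gamma_j$ for $j\le d-1$ and against $\Id$ for $j=d$, using $\tr\Gamma_j=0$. The $\Gamma_j$ are traceless because $\Gamma_j=\tfrac12\{\Gamma_j,\Gamma_i\}\Gamma_i$ is a commutator-type expression $\Gamma_j\Gamma_i\Gamma_i$ with $\Gamma_i\Gamma_j=-\Gamma_j\Gamma_i$ for $i\neq j$; when $d=1$ the representation is scalar and there is nothing to check.

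The only step needing any care is justifying that the partial trace preserves compactness. This is clear once we write it as a finite sum of compressions $(\Id\otimes e_a^\ast)\,\cdot\,(\Id\otimes e_a)$ by bounded maps. Beyond that, the proof is just these algebraic identities.
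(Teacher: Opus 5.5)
Your proof is correct and is essentially the paper's: both expand $[A\otimes\Id_{2^k},W_d]=\sum_j[A,\widehat X_j\otimes\Id_N]\otimes\Gamma_j$ and use the Clifford relations to isolate each summand. The only difference is how you extract it. The paper anticommutes with $\Id\otimes\Gamma_k$ to get $2[A,\widehat X_k\otimes\Id_N]\otimes\Id_{2^k}$, leaving implicit that $K\otimes\Id_{2^k}\in\calK$ forces $K\in\calK$. Your partial trace against $\Gamma_k$ lands directly on $\calH_d\otimes\CC^N$.

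Your side remark on the Dirac phase is not needed for the lemma, but it has a flaw. Tracelessness of $\Gamma_j$ needs a second generator $\Gamma_i$ with $i\neq j$, via $\tr\Gamma_j=\tr(\Gamma_j\Gamma_i\Gamma_i)=\tr(\Gamma_i\Gamma_j\Gamma_i)=-\tr\Gamma_j$. (The identity $\Gamma_j=\tfrac12\{\Gamma_j,\Gamma_i\}\Gamma_i$ as you wrote it is false for $i\neq j$.) The generators in $L_d$ come from $\Cl_{d-1}(\CC)$, so the exceptional case is $d=2$, not $d=1$. There $\Gamma_1=\pm1$, so $[A,L_2]=\pm[A,\widehat X_1\otimes\Id_N]+\ii[A,\widehat X_2\otimes\Id_N]$, and no partial trace can separate the two summands. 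You must instead use that $L_2$ is unitary: then $[A,L_2^\ast]=-L_2^\ast[A,L_2]L_2^\ast$ is compact too, and adding and subtracting gives each summand, as in the paper's $d=2$ example with the Laughlin operator. For even $d\geq4$ your argument goes through.
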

\begin{proof}
If $A$ is spherically-local, i.e., $[A,\widehat{X}_j\otimes \Id_N]\in\calK$ for all $j$, then
\eq{
[A\otimes \Id_{\CC^{2^k}},W_d] = [A\otimes \Id_{\CC^{2^k}},\sum_{j=1}^d \widehat{X}_j\otimes \Id_N\otimes \Gamma_j] = \sum_{j=1}^d [A,\widehat{X}_j\otimes \Id_N] \otimes \Gamma_j \in \calK\,.
}
Conversely, suppose $A$ is Dirac-local.
Since the matrices $\Gamma_1,\dots,\Gamma_d$ satisfy the $\{\Gamma_j,\Gamma_k\}=2\delta_{jk}\Id_{2^k}$ for all $j,k=1,\dots, d$, it follows that
\eq{
\left\{\sum_{j=1}^d [A,\widehat{X}_j\otimes \Id_N]\otimes \Gamma_j,  \Id_{\calH_d}\otimes \Id_N\otimes \Gamma_k \right\} &= \sum_{j=1}^d [A,\widehat{X}_j\otimes \Id_N]\otimes \{\Gamma_j,\Gamma_k\}\\ &= 2[A,\widehat{X}_k\otimes \Id_N]\otimes  \Id_{2^k} \\&\in \calK\,.
}
Thus $[A,\widehat{X}_k\otimes \Id_N]\in\calK$ for all $k=1,\dots,d$.
\end{proof}

\begin{lem}[Exponential locality implies spherical-locality] If $A$ is exponentially local as in \cref{eq:exp locality} then $A$ is spherically-local. The converse is false.
\end{lem}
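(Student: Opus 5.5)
The plan is to prove the forward implication by showing that each commutator $[A,\widehat{X}_j\otimes\Id_N]$ is a norm limit of finite-rank operators, hence compact. For the converse we give an explicit counterexample.

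For the forward direction, the commutator has partial matrix elements
\eq{
\ip{\delta_x}{[A,\widehat{X}_j]\delta_y} = A_{xy}\br{\widehat{x}_j-\widehat{y}_j},
}
where $\widehat{x}:=x/\norm{x}$ (with the stated convention at $0$). The key geometric estimate is that for $x,y\neq 0$,
\eq{
\norm{\widehat{x}-\widehat{y}}\leq \frac{2\norm{x-y}}{\max(\norm{x},\norm{y})}.
}
This follows from $\norm{x/\norm{x}-y/\norm{y}}\le \norm{x-y}/\norm{x}+\norm{y}\,\big|1/\norm{x}-1/\norm{y}\big|$ together with symmetrization in $x$ and $y$.

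Now decompose $A=A_{\le R}+A_{>R}$ according to whether $\norm{x-y}\le R$. By exponential decay and a Schur test,
\eq{
\norm{A_{>R}}\le C\sup_x\sum_{\norm{z}>R}e^{-\mu\norm{z}}\to0,
}
so $\norm{[A_{>R},\widehat{X}_j]}\le 2\norm{A_{>R}}\to0$. It therefore suffices to show that $[A_{\le R},\widehat{X}_j]$ is compact for each fixed $R$.

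Its kernel is supported in the strip $\norm{x-y}\le R$, where it is bounded by $2CR/\max(\norm{x},\norm{y})$. Next cut off with $\chi_{B_M}(X)$, the projection onto $\norm{x}\le M$. The piece $\chi_{B_M}[A_{\le R},\widehat{X}_j]$ and its companion with the cutoff on the right are finite rank. The remainder has kernel supported in the strip with $\norm{x}>M-R$. By the Schur test its norm is at most
\eq{
\#\Set{z:\norm{z}\le R}\cdot \frac{2CR}{M-R}\to0 \quad (M\to\infty).
}
So $[A_{\le R},\widehat{X}_j]$ is a norm limit of finite-rank operators, hence compact. The internal factor $\CC^N$ plays no role: fix any matrix norm, all of which are equivalent. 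This is the main step; the only care needed is in the Schur-test bookkeeping and the point $x=0$, which contributes only finite rank.

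For the converse, take $d=1$ and $A=\Lambda A_0\Lambda$, where $A_0$ is any bounded operator on $\ell^2(\NN)$ that is not exponentially local, for instance a polynomially decaying translation-invariant kernel restricted to the half-line. Then $[A,\sgn(X)]$ vanishes up to the rank-one correction at the origin coming from the convention $\widehat{X}\delta_0$, so $A$ is spherically-local by the example for $d=1$. In general $d$, the same works with $A=f(\widehat{X}_1)$ for $f$ continuous, or with any operator commuting exactly with all $\widehat{X}_j$ whose kernel decays slowly, e.g. a sum of slow hoppings between lattice points of the same direction. Such an operator is hyper-spherically-local but not exponentially local.
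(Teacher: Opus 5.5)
Your proof is correct, but the forward direction takes a different route from the paper. The paper proves the stronger, quantitative statement that $[\widehat{X}_i,A]$ lies in the Schatten class $\mathfrak{S}_{d+1}$. It combines the diagonal-decomposition bound $\norm{A}_p\leq\sum_{k\in\ZZ^d}\br{\sum_{x}\norm{A_{x+k,x}}^p}^{1/p}$ with the inequality $\norm{\hat x-\hat y}^2\leq 2\norm{x-y}^2/(1+\norm{x}\norm{y})$. Along the $k$-th diagonal the commutator entries then decay like $\norm{k}e^{-\mu\norm{k}}/\norm{x}$ for large $\norm{x}$, which is $\ell^{d+1}$-summable over $x\in\ZZ^d$, and the exponential factor makes the sum over diagonals converge.

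You prove only compactness, but by elementary means: Schur tests, the pointwise bound $\norm{\hat x-\hat y}\leq 2\norm{x-y}/\max(\norm{x},\norm{y})$, and norm approximation by finite-rank operators. The paper's route gives summability, which is the kind of condition used in Prodan--Schulz-Baldes-type index formulas and in the $\mathfrak{S}_{d+1}$ criterion of \cref{conj:spherical fractional moment fermi projection}. Your route gives generality. Your treatment of $A_{\le R}$ works verbatim for any bounded finite-propagation operator, and compactness of commutators survives norm limits. So your argument actually shows $\rho_{\mathrm{u}}(\ZZ^d)\otimes M_N(\CC)\subseteq\calL_{d,N}$. The Schatten route does not give this directly, since general elements of the uniform Roe algebra need not have Schatten-class commutators with $\widehat{X}_j$.

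One correction in the converse. $f(\widehat{X}_1)$ is a multiplication operator in the position basis, so its kernel is diagonal and it is trivially exponentially local; it is not a counterexample and should be dropped. Your other examples are valid:
\begin{itemize}
\item The half-line operator $\Lambda A_0\Lambda$ in $d=1$ is essentially the paper's truncation example. Choosing $A_0$ explicitly, as you do, makes it clear that the slow decay survives the truncation.
\item Slow hoppings along a single ray commute exactly with every $\widehat{X}_j$ while failing exponential decay.
\end{itemize}
Also, the commutator entries are $A_{xy}(\hat y_j-\hat x_j)$ rather than $A_{xy}(\hat x_j-\hat y_j)$; the sign is immaterial.
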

\begin{proof}
    Two identities will be useful. The first: for any $A\in\calB(\calH_d\otimes\CC^N)$ and the position basis $\Set{\delta_x}_{x\in\ZZ^d}$, we have 
    \eql{
        \norm{A}_p \leq \sum_{k\in\ZZ^d}\br{\sum_{x\in\ZZ^d}\norm{A_{x+k,x}}^p}^{1/p}\qquad(p\geq1)
    } where $\norm{A}_p\equiv\br{\tr\br{|A|^p}}^{1/p}$ is the Schatten-$p$ norm. We shall use the fact that if $\norm{A}_p<\infty$ for some $p<\infty$ then $A$ is compact. 

    The second identity is
    \eql{
        \norm{\hat{x}-\hat{y}}^2\leq 2\frac{\norm{x-y}^2}{1+\norm{x}\norm{y}}\qquad(x,y\in\ZZ^d)\,.
    } Together with the exponential locality of $A$, \cref{eq:exp locality}, these identities imply that the commutator $[\hat{X}_i,A]$ is Schatten-$(d+1)$ and as such compact indeed.

    That the converse is false is clear from the $d=1$ case, where we could take an operator that fails to be exponentially-local $A$, then truncate it to 
    \eq{
    A_{\mathrm{trun.}} := \Lambda A \Lambda + \Lambda^\perp A \Lambda^\perp
    } and it would still fail to be exponentially local, but is obviously trivially $\Lambda$-local.
\end{proof}

We now digress momentarily to discuss Anderson localization and mobility-gapped Hamiltonians. The basic idea is that the absence of states around $E_F$, i.e., a spectral gap, is not the only way a system can be insulating. Another possibility is that the states with energy around $E_F$ are \emph{Anderson localized} due to disorder. This usually emerges as dense pure point spectrum surrounding $E_F$, together with exponential spatial decay of the eigenstates associated to that pure point spectrum \cite{Aizenman_Graf_1998}. The natural setting to derive such estimates is with an ensemble of random operators. However, it cannot be that it is disorder-averaging or the randomness itself which is responsible for topological properties of the system. Indeed, the random structure is merely a mathematical tool to study typical behaviors of non-periodic systems. Since we are interested in topological properties which are universal, we should not rely on the randomness and disorder-averaging, and thus, following the philosophy of \cite{EGS_2005}, we model Anderson localized Hamiltonians via almost-sure consequences on random operators. This is the approach taken in \cite{Shapiro2019,Shapiro20,BSS23}. Roughly speaking, one consequence of a Hamiltonian being mobility-gapped at $E_F$ is that its Fermi projection $P$ exhibits non-uniform off-diagonal decay and the eigenvalues around $E_F$ are simple. This non-uniform off-diagonal decay was termed ``weakly-local'':
\begin{defn}[weakly-local operator] An operator $A\in\calB(\calH_d\otimes\CC^N)$ is weakly-local iff there exists some $\nu\in\NN$ such that for all $\mu\in\NN$ sufficiently large, there exists some $C_\mu<\infty$ with which
\eql{
\norm{A_{xy}}\leq C_\mu \br{1+\norm{x-y}}^{-\mu}\br{1+\norm{x}}^{+\nu}\qquad(x,y\in\ZZ^d)\,.
}   
\end{defn}

Thus, if a Hamiltonian $H$ is mobility-gapped, its associated Fermi-projection $P$ is weakly-local almost surely \cite{Aizenman_Graf_1998,EGS_2005}.

The typical scenario is that the Hamiltonian itself is exponentially-local, or even nearest-neighbor, but the associated Fermi-projection is merely weakly-local. It is thus natural to ask whether weakly-local Fermi-projections are also spherically-local. 
\begin{claim}\label{claim:weakly-local implies spherically local}
    If an operator $A\in\calB(\calH_d\otimes\CC^N)$ is weakly-local then it is spherically-local. The converse is false.
\end{claim}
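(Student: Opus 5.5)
We have to show that the commutator $C_j:=[A,\widehat X_j\otimes\Id_N]$ is compact for each $j$. Its matrix elements are
\[
(C_j)_{xy}=A_{xy}\,(\hat y_j-\hat x_j),
\]
so the factor $\hat x-\hat y$ must compete with the polynomial growth $(1+\norm{x})^{\nu}$ allowed in weak locality. The approach is to split $A$ into a near-diagonal part and a far-off-diagonal part, using a cutoff radius that grows with $\norm{x}$.

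\textbf{Step 1: the decomposition.} Fix $\alpha\in(0,1)$, say $\alpha=1/2$. Set
\[
A^{\mathrm{near}}_{xy}:=A_{xy}\,\chi\br{\norm{x-y}\le (1+\norm{x})^\alpha},\qquad A^{\mathrm{far}}:=A-A^{\mathrm{near}}.
\]
We will show that $A^{\mathrm{far}}$ is compact and that $[A^{\mathrm{near}},\widehat X_j\otimes\Id_N]$ is compact. Together these give compactness of $C_j$, since $[A^{\mathrm{far}},\widehat X_j\otimes\Id_N]$ is then also compact.

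\textbf{Step 2: the far part.} On the support of $A^{\mathrm{far}}$ we have $1+\norm{x-y}\ge(1+\norm{x})^\alpha$. Therefore
\[
\norm{A_{xy}}\le C_\mu(1+\norm{x-y})^{-\mu/2}(1+\norm{x})^{\nu-\alpha\mu/2}.
\]
Choosing $\mu$ large makes this summable over $(x,y)\in\ZZ^{2d}$. Then $A^{\mathrm{far}}$ is Hilbert--Schmidt, hence compact.

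\textbf{Step 3: the near part.} Here the second identity from the previous lemma's proof is used,
\[
\norm{\hat x-\hat y}^2\le 2\norm{x-y}^2/(1+\norm{x}\norm{y}).
\]
On the support $\norm{x-y}\le(1+\norm x)^\alpha$ with $\alpha<1$, we have $\norm y\gtrsim\norm x$ for large $x$. This gives
\[
\norm{\hat x-\hat y}\lesssim \norm{x-y}/(1+\norm x).
\]
Combined with $\norm{A_{xy}}\le C_\mu(1+\norm{x-y})^{-\mu}(1+\norm x)^\nu$, this alone only yields a factor $(1+\norm x)^{\nu-1}$, which is insufficient when $\nu\ge1$. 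This is the main obstacle: pointwise kernel bounds do not suffice.

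To get around it, I would exploit that $\widehat X_j$ is a smooth homogeneous-degree-$0$ function of position. On the near region, which varies slowly, we can insert a partition of unity $\{\phi_n\}$ on dyadic shells $\norm{x}\sim 2^n$. We then write
\[
[A^{\mathrm{near}},\widehat X_j]
\]
as a sum of terms localized in finitely overlapping shells. Within shell $n$ the commutator has operator norm at most $\sup\norm{\hat x-\hat y}\cdot\norm{A|_{\text{shell}}}$. The Schur bound with the near cutoff gives a factor $2^{n(\alpha-1)}$. The factor $(1+\norm x)^\nu$ must be controlled by the operator norm $\norm{A}$ rather than by the pointwise estimate: one interpolates, taking $\norm{A_{xy}}\le\min\br{\norm A,\;C_\mu(1+\norm{x-y})^{-\mu}(1+\norm x)^\nu}$, and uses the boundedness of $A$ on the diagonal band.

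Concretely, I would further split the near band at a fixed hopping distance $R$:
\begin{itemize}
\item For $\norm{x-y}\le R$, the piece is a restriction of the bounded $A$ to a finite band. It is bounded by $\norm A$ up to a constant depending on $R$, and its commutator entries carry a factor $R/(1+\norm x)$. This tends to zero at infinity, so the piece is compact (a norm limit of finite-rank truncations).
\item For $R<\norm{x-y}\le(1+\norm x)^\alpha$, the weak-locality tail should be made small in norm as $R\to\infty$, uniformly. I expect this uniformity to be the delicate point. If it fails, I would choose the cutoff $\alpha$ smaller and use the Schatten-type estimate from the previous lemma to absorb the polynomial weight.
\end{itemize}
Compactness then follows as a norm limit.

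\textbf{Step 4: the converse fails.} For $d=1$ I would take $A=\Lambda B\Lambda+\Lambda^\perp B\Lambda^\perp$ with $B$ a unitary shift by $\pm n$ on sites $\pm 2^n$, or with matrix elements $A_{0,y}$ decaying slowly. This $A$ is spherically local but violates weak locality, because it has $O(1)$ matrix elements at distances growing faster than any polynomial weight permits along a fixed row near the origin.
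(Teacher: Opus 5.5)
\textbf{Verdict: genuine gap in Step~3.} Your Steps 1--2 are sound: the far part is Hilbert--Schmidt once $\alpha\mu>2\nu+d$ and $\mu>d$. The fixed band $\|x-y\|\le R$ is also fine. But Step~3 does not close, and the two estimates it leans on are not correct as stated.

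First, the shell bound ``operator norm at most $\sup\|\hat x-\hat y\|\cdot\|A\|$'' is unjustified. Multiplying the entries of $A$ by $\hat y_j-\hat x_j$ is a Schur multiplication, and its norm is not bounded by the sup of the multiplier. What you can prove is weaker: writing the shell piece as a sum of its $\sim 2^{n\alpha d}$ diagonals $D_kS_k$ with $\|D_k\|\le\|A\|$ gives $\lesssim 2^{n\alpha d}\,2^{n(\alpha-1)}\|A\|$, which does not decay for $\alpha=1/2$.

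Second, the piece of $A$ on $R<\|x-y\|\le(1+\|x\|)^\alpha$ need not become small as $R\to\infty$. The permutation unitary on $\ell^2(\ZZ)$ swapping $2^n\leftrightarrow 2^n+n$ is weakly local with $\nu=1$, since $(1+\log_2|x|)^\mu\le C_\mu(1+|x|)$. Yet that piece has norm $1$ for every $R$. Only its commutator can be small, and proving that is exactly the counting you skipped.

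\textbf{How to repair your route.} Your fallback of shrinking $\alpha$ is the right instinct, but the mechanism is counting, not absorbing the polynomial weight. Take $\alpha<1/(d+1)$. Use only $\|A_{xy}\|\le\|A\|$ and $\|\hat x-\hat y\|\le 2\|x-y\|/\|x\|$; then the entries of the near-part commutator $T$ are at most $2\|A\|\,\|x-y\|/\|x\|$. On the near support $\|y\|\ge\|x\|/2$ for large $\|x\|$. Hence the Schur row and column sums of $\Lambda_{B_r^c}T=T-\Lambda_{B_r}T$ are $\lesssim\|A\|(1+r)^{\alpha(d+1)-1}\to0$. Since $\Lambda_{B_r}T$ has finite rank, $T$ is compact. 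The far part stays Hilbert--Schmidt for $\mu>(2\nu+d)/\alpha$. Your min-interpolation would also work if carried out: splitting at $\|x-y\|\approx(1+\|x\|)^{\nu/\mu}$ gives row sums $\lesssim(1+\|x\|)^{\nu(d+1)/\mu-1}$, which decay once $\mu>\nu(d+1)$.

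\textbf{Comparison with the paper.} The paper's proof never meets this near-diagonal difficulty. It uses the cone criterion of \cref{thm:geometric characterization of locality}. For disjoint closed $I,J\subset\bS^{d-1}$ one has $\|x-y\|\ge c(\|x\|+\|y\|)$ on $C_I\times C_J$. Weak locality then gives $\|\Lambda_IA\Lambda_J\|_{\mathrm{HS}}^2\lesssim\sum_{x,y}(1+\|x\|+\|y\|)^{2(\nu-\mu)}<\infty$ for $\mu>\nu+d$. The slowly-varying-$\hat x$ analysis you were attempting is absorbed into the proof of that theorem. Your repaired route works directly with the commutators and is self-contained, at the price of the counting argument above.

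\textbf{Step~4 (the converse).} Your first example fails. Hops of length $n$ at sites $\pm2^n$ have length $\sim\log|x|$, which weak locality permits for every $\nu>0$, so your stated reason is backwards. Your second suggestion is the right one. Most simply, every compact operator is spherically local, while e.g.\ $(\delta_0\otimes\phi^\ast)\otimes\Id_N$ with $\phi_y=(1+\|y\|)^{-d}$ is not weakly local. This is because weak locality forces the row at $x=0$ to decay faster than every polynomial.
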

\begin{proof}
    Let $I,J$ be two closed disjoint subsets of $\mathbb{S}^{d-1}$ and $\Lambda_I,\Lambda_J$ the projections to the associated cones in $\ell^2(\ZZ^d)$, which are disjoint by hypothesis. We will show $\Lambda_I A \Lambda_J$ is Hilbert-Schmidt, and hence compact.

    Since $I,J$ are disjoint, there exists some constant $c>0$ such that \eq{
    \norm{x-y}^2 \geq c \br{\norm{x}+\norm{y}}^2\,.
    } Moreover, $\br{1+\norm{x}}^\nu\leq\br{1+\norm{x}+\norm{y}}^\nu$. Then pick $\mu> \nu+d$ on the estimate
    \eq{
        \norm{\Lambda_I A \Lambda_J}_{\mathrm{HS}}^2 &\leq \sum_{x\in C_I, y\in C_J}\norm{A_{xy}}^2\\
        &\leq \sum_{x,y\in\ZZ^d}C_\mu^2\br{1+\norm{x}+\norm{y}}^{2\br{\nu-\mu}} \\
        &< \infty\,.
    }

    As above, it is clear we could come up with operators which are spherically-local but not weakly-local.
\end{proof}

We next phrase what might replace the fractional-moment condition of Anderson localization in this setting, to yield the necessary spherical-locality of the Fermi projection to define indices:
\begin{prop}[Spherical localization criterion for Fermi projections]\label{conj:spherical fractional moment fermi projection}
Let \((\Omega,\PP)\) be a probability space, and let
\[
H_\omega=H_\omega^\ast\in\calB(\calH_{d,N})
\]
be a measurable family of bounded self-adjoint operators. Assume that
\eql{\label{eq:random Hamiltonian spherical locality}
\prob{H_\omega\in\calL_{d,N}}=1\,.
}
Let \(E_F\in\RR\), and define the Fermi projection
\eq{
P_\omega:=\chi_{(-\infty,E_F)}(H_\omega).
}
Assume that there exists an open interval \(\Delta\subset\RR\), with
\(E_F\in\Delta\), such that, for every \(j=1,\dots,d\),
\eql{\label{eq:spherical fractional moment condition}
\sup_{\substack{\norm{g}_\infty\leq 1\\ \supp(g)\subset\Delta}}
\ex{
\left\|
[\widehat X_j\otimes\Id_N,g(H_\omega)]
\right\|_{\mathfrak S_{d+1}}^{d+1}
}
<\infty,
}
where the supremum is taken over bounded Borel functions
\(g:\RR\to\CC\) supported in \(\Delta\).

Then
\eql{\label{eq:fermi projection spherically local almost surely}
\prob{P_\omega\in\calL_{d,N}}=1\,.
}
\end{prop}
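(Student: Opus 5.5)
Fix a smooth cutoff $g_0:\RR\to[0,1]$ with $\supp g_0\subset\Delta$ and $g_0\equiv1$ on a smaller open interval $\Delta'\ni E_F$. Then split the Fermi projection as
\eq{
P_\omega = g_0(H_\omega)P_\omega + (1-g_0)(H_\omega)P_\omega = g(H_\omega) + f(H_\omega),
}
where $g:=g_0\chi_{(-\infty,E_F)}$ is a bounded Borel function supported in $\Delta$ with $\norm{g}_\infty\le1$, and $f:=(1-g_0)\chi_{(-\infty,E_F)}$. The function $f$ vanishes on $\Delta'$, so it is \emph{continuous} on $\RR$. On the interval $(-\infty,E_F)$ it equals $1-g_0$, which is smooth. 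On $[E_F,\infty)$ it is identically zero. Near $E_F$ both pieces vanish, because there $g_0\equiv1$. Since a continuous function only matters on the compact set $[-\norm{H_\omega},\norm{H_\omega}]$, we may replace $f$ by a continuous function on that interval.

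The proof then has two steps.

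\textbf{Step 1 (the continuous part).} On the full-measure event $\Set{H_\omega\in\calL_{d,N}}$ from \cref{eq:random Hamiltonian spherical locality}, $f(H_\omega)\in\calL_{d,N}$. The reason is that $\calL_{d,N}$ is a unital \Cstar-algebra: the set of operators with compact commutators with each $\widehat X_j\otimes\Id_N$ is a norm-closed $\ast$-subalgebra, since $\calK$ is a closed ideal. Such an algebra is closed under continuous functional calculus. Concretely, approximate $f$ uniformly on the spectrum by polynomials; the commutators of polynomials in $H_\omega$ are compact, and compactness passes to the norm limit.

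\textbf{Step 2 (the singular part near $E_F$).} Apply \cref{eq:spherical fractional moment condition} with the single function $g$ above. This gives, for each $j$,
\eq{
\ex{\norm{[\widehat X_j\otimes\Id_N,g(H_\omega)]}_{\mathfrak S_{d+1}}^{d+1}}<\infty .
}
A nonnegative random variable with finite expectation is finite almost surely. So for each $j$, almost surely $[\widehat X_j\otimes\Id_N,g(H_\omega)]\in\mathfrak S_{d+1}\subset\calK$. Intersecting these $d$ events with the event from Step 1, which is a finite intersection of full-measure events, shows that almost surely every commutator $[\widehat X_j\otimes\Id_N,P_\omega]$ is compact. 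This is exactly \cref{eq:fermi projection spherically local almost surely}, by \cref{def:spherically-local}.

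The main obstacle is measurability. We need the maps $\omega\mapsto\norm{[\widehat X_j,g(H_\omega)]}_{\mathfrak S_{d+1}}$, and the event $\Set{P_\omega\in\calL_{d,N}}$, to be measurable so that the expectation and the probability make sense. I would handle this first. Weak measurability of $H_\omega$ gives weak measurability of $g(H_\omega)$ for bounded Borel $g$, via the spectral theorem and monotone class arguments. The Schatten norm can then be written as a supremum over countably many finite-rank test operators built from the position basis, which makes it measurable. If the event in \cref{eq:fermi projection spherically local almost surely} is not obviously measurable, it suffices to show it contains a measurable full-measure set, which the argument above provides; with that, the probability statement holds after completing $\PP$. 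Nothing else is delicate: Steps 1 and 2 are simply the decomposition "continuous away from $E_F$ plus localized near $E_F$."
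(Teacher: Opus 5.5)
Your proposal is correct and follows essentially the same route as the paper. Both write $P_\omega=f(H_\omega)+g(H_\omega)$, where $f$ is continuous, so that $f(H_\omega)\in\calL_{d,N}$ almost surely by continuous functional calculus, and $g$ is a bounded Borel function with $\norm{g}_\infty\leq 1$ supported in $\Delta$, so that the moment hypothesis makes the commutator almost surely Schatten-class and hence compact. The differences are cosmetic. The paper takes a continuous $f$ equal to $1$ below some $a$ with $[a,E_F]\subset\Delta$ and equal to $0$ above $E_F$, then sets $g=\chi_{(-\infty,E_F)}-f$, whereas you build $g$ from a smooth bump. Your measurability remarks address a point the paper leaves implicit.
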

\begin{proof}
Choose \(a<E_F\) such that
\eq{
[a,E_F]\subset\Delta .
}
Let \(f\in C_b(\RR)\) satisfy
\eq{
0\leq f\leq 1,
\qquad
f(t)=1\quad(t\leq a),
\qquad
f(t)=0\quad(t\geq E_F).
}
Define
\eq{
g_0:=\chi_{(-\infty,E_F)}-f .
}
Then \(g_0\) is a bounded Borel function satisfying
\eq{
\norm{g_0}_\infty\leq 1,
\qquad
\supp(g_0)\subset [a,E_F]\subset\Delta .
}
Therefore \(g_0\) is one of the functions allowed in
\cref{eq:spherical fractional moment condition}. Hence, for each
\(j=1,\dots,d\),
\eq{
\ex{
\left\|
[\widehat X_j\otimes\Id_N,g_0(H_\omega)]
\right\|_{\mathfrak S_{d+1}}^{d+1}
}
<\infty .
}
Since the random variable inside the expectation is non-negative, it follows that
\eq{
\left\|
[\widehat X_j\otimes\Id_N,g_0(H_\omega)]
\right\|_{\mathfrak S_{d+1}}
<\infty
}
for almost every \(\omega\). Equivalently,
\eq{
[\widehat X_j\otimes\Id_N,g_0(H_\omega)]
\in\mathfrak S_{d+1}
\subset\calK(\calH_{d,N})
}
for almost every \(\omega\). Taking the intersection of these finitely many
full-measure events, we obtain
\eql{\label{eq:g0 compact commutators almost surely}
\prob{
[\widehat X_j\otimes\Id_N,g_0(H_\omega)]
\in\calK(\calH_{d,N})
\text{ for all }j=1,\dots,d
}=1 .
}

On the other hand, by \cref{eq:random Hamiltonian spherical locality},
\eq{
\prob{H_\omega\in\calL_{d,N}}=1 .
}
Since \(\calL_{d,N}\) is a \Cstar-algebra, it is closed under continuous
functional calculus. Thus
\eql{\label{eq:fH spherically local almost surely}
\prob{f(H_\omega)\in\calL_{d,N}}=1 .
}
Equivalently,
\eq{
[\widehat X_j\otimes\Id_N,f(H_\omega)]
\in\calK(\calH_{d,N}),
\qquad j=1,\dots,d,
}
almost surely.

Now
\eq{
P_\omega
=
\chi_{(-\infty,E_F)}(H_\omega)
=
f(H_\omega)+g_0(H_\omega).
}
Therefore, on the full-measure event on which both
\cref{eq:g0 compact commutators almost surely} and
\cref{eq:fH spherically local almost surely} hold, we have, for every
\(j=1,\dots,d\),
\eq{
[\widehat X_j\otimes\Id_N,P_\omega]
&=
[\widehat X_j\otimes\Id_N,f(H_\omega)]
+
[\widehat X_j\otimes\Id_N,g_0(H_\omega)]
\\
&\in
\calK(\calH_{d,N})+\calK(\calH_{d,N})
\\
&\subset
\calK(\calH_{d,N}) .
}
Hence
\eq{
P_\omega\in\calL_{d,N}
}
almost surely.
\end{proof}

Since Fermi projections of exponentially local mobility-gapped Hamiltonians are weakly local \cite{EGS_2005}, \cref{claim:weakly-local implies spherically local} allows to conclude that such Fermi projections are  spherically-local, and hence a classification of these is tantamount to the classification of ground states of mobility-gapped Hamiltonians. However, unlike in the spectral gap situation, we do not have a deformation retraction argument that sets up a topological isomorphism between the space of spherically-local mobility-gapped Hamiltonians and spherically-local projections.

We postpone to future studies the investigation on the topological connection between ground states of mobility-gapped Hamiltonians and the Hamiltonians themselves.

\revadd{This is one reason we emphasize the concepts rather than only the spectral-gap theorem. In the mobility-gap setting there may be no useful deformation retraction from Hamiltonians to projections inside the same class of Hamiltonians, and in interacting systems the single-particle $K$-theoretic object disappears altogether. What should survive, if a direct classification is possible, is the need for an appropriate locality condition and an appropriate bulk non-triviality condition. The present spectral-gap theorem is evidence that spherical locality and bulk non-triviality are the correct such conditions in the non-interacting model.}

\begin{rem}[Other modes of locality driven by indices] One may consider different, inequivalent modes of locality, driven by the index formulas. For instance, consider the integer quantum Hall effect, $d=2$. Then, the formula for the Hall conductivity is given by 
\eq{
    \sigma_{\mathrm{Hall}}(P) = 2\pi \tr\br{P [[\Lambda_1,P],[\Lambda_2,P]]}
} where $\Lambda_i \equiv \chi_{\NN}(X_i)$ for $i=1,2$. As such, in $d=2$, one could imagine to define $P$ as being local iff $[\Lambda_1,P][\Lambda_2,P]$ is trace-class, or if $[\Lambda_1,P] K_2$ and $K_1 [\Lambda_2,P]$ are, for any two operators $K_1,K_2$ which have finite support window in the $i=1,2$ axis. This leads to the non-commutative Sobolev spaces of Bellissard et al \cite{Bellissard_1994JMP....35.5373B}.

Yet another possibility comes from Kitaev's formula for the index associated with the Chern number \cite{FSSWY22}. For a Fermi projection $P$ and $\Lambda_1,\Lambda_2$ as above, we define $P$ to be local iff 
\eq{
[\Lambda_1,\exp\br{-2\pi\ii \Lambda_2 P \Lambda_2}]
} is compact.

It is \emph{not true} that these two alternatives are equivalent to spherical-locality, indeed, they appear as weaker notions. They are also very difficult to work with and seem less natural.
\end{rem}

\paragraph{Further abstraction.} It would be important later to study the commutative
$C^\ast$-algebra generated by $\widehat{X}_1,\dots,\widehat{X}_d$ which we denote as
\eql{\label{eq:C star algebra generated by unit position operators}
\calX_d:=C^*(\widehat{X}_1,\dots,\widehat{X}_d)\,.
}

\begin{lem}\label{lem:maximal ideal space algebra generated by Xis}
There is a $*$-isometric isomorphism 
\eql{\label{eq:isomorphism continuous functions on sphere to algebra generated by Xis}
\rho:C(\bS^{d-1})\to \calX_d
}
such that
\eql{\label{eq:some properties of isomorphism from sphere}
\rho(x\mapsto x_k) = \widehat{X}_k\qquad(k=1,\cdots,d),\quad \rho(x\mapsto 1) = \Id\,.
}
Furthermore, the representation $\rho:C(\bS^{d-1})\to \calB(\calH_d)$ (obtained by the inclusion $\calX_d\hookrightarrow \calB(\calH_d)$) is faithful and $\im \rho$ contains no non-zero compact operators.
\end{lem}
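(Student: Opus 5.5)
The plan is to use the fact that the $\widehat{X}_k$ are all multiplication operators in the position basis. For $f\in C(\bS^{d-1})$ define $\rho(f)$ to be the diagonal operator
\[
\rho(f)\delta_x := f\br{\tfrac{x}{\norm{x}}}\delta_x \quad (x\neq 0),\qquad \rho(f)\delta_0 := f(e_1)\delta_0 ,
\]
which is consistent with the convention $\widehat{X}\delta_0=e_1\delta_0$. This $\rho$ is visibly a unital $*$-homomorphism into $\calB(\calH_d)$ with $\norm{\rho(f)}=\sup_{x}|f(\hat x)|$, and it satisfies $\rho(x\mapsto x_k)=\widehat{X}_k$ by construction.

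To obtain isometry and faithfulness, I will show that the set of directions $D:=\Set{x/\norm{x} \mid x\in\ZZ^d\setminus\{0\}}$ is dense in $\bS^{d-1}$. Indeed, every point of $\bS^{d-1}$ is a limit of $q/\norm{q}$ with $q\in\mathbb{Q}^d\setminus\{0\}$, and after clearing denominators each such $q/\norm{q}$ equals $x/\norm{x}$ for some integer vector $x$. Consequently $\norm{\rho(f)}=\sup_D|f|=\norm{f}_\infty$ by continuity of $f$, so $\rho$ is isometric and in particular faithful.

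It remains to identify the image. Since $\rho$ is isometric, $\im\rho$ is closed. It contains $\Id$ and the $\widehat{X}_k$, and it is contained in $\calX_d$, because by Stone--Weierstrass polynomials in the coordinate functions are dense in $C(\bS^{d-1})$ and $\rho$ carries them into the $*$-algebra generated by the $\widehat{X}_k$ and $\Id$. Since $\calX_d$ is the \Cstar-algebra generated by these elements, the reverse inclusion also holds, so $\rho$ is an isomorphism onto $\calX_d$. Here $\calX_d$ should be read as the unital \Cstar-algebra generated by the $\widehat{X}_k$, which is harmless since $\sum_k\widehat{X}_k^2=\Id$ already lies in the non-unital one.

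The only step needing a real argument is the claim that $\im\rho$ contains no non-zero compacts. Suppose $\rho(f)$ is compact with $f\neq 0$. Pick $\omega\in\bS^{d-1}$ with $|f(\omega)|=c>0$, and by continuity a neighborhood $U$ of $\omega$ on which $|f|>c/2$. By the density argument there is $x_0\in\ZZ^d$ with $x_0/\norm{x_0}\in U$, and then the whole ray $\Set{nx_0}_{n\in\NN}$ has the same direction. Hence the diagonal operator $\rho(f)$ has the eigenvalues $f(x_0/\norm{x_0})$, of modulus greater than $c/2$, on the infinite orthonormal family $\Set{\delta_{nx_0}}_n$. A compact operator cannot satisfy this, since $\rho(f)\delta_{nx_0}$ would have to tend to $0$ because $\delta_{nx_0}\rightharpoonup 0$ weakly. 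Therefore $f=0$, and the statement is unchanged after tensoring with $\Id_N$.
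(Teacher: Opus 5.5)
Your proof is correct, but it takes a more concrete route than the paper.

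The paper argues abstractly. It applies Gelfand--Naimark to the commutative algebra $\calX_d$ and maps the maximal ideal space into $\RR^d$ via $\varphi\mapsto(\varphi(\widehat{X}_1),\dots,\varphi(\widehat{X}_d))$. It uses $\sum_j\widehat{X}_j^2=\Id$ to land in $\bS^{d-1}$, and uses evaluation functionals at lattice points to fill out the sphere. Then $\rho$ is the inverse Gelfand transform, and the no-compacts claim is dispatched in one line.

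You instead write $\rho(f)$ down explicitly as the diagonal operator $\delta_x\mapsto f(\hat x)\delta_x$. You get isometry from the density of $\{x/\norm{x} : x\in\ZZ^d\setminus\{0\}\}$ in $\bS^{d-1}$, and you identify the image via Stone--Weierstrass, so no maximal ideal space appears.

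Your route buys several things:
\begin{itemize}
\item It makes explicit the density of rational directions, which the paper's surjectivity step silently needs. Evaluation at lattice points only produces rational directions; one must add that these are dense and that the image of the compact space $\Delta_d$ is closed.
\item Your ray argument with $\{\delta_{nx_0}\}_{n}$ is the precise content behind the paper's terse ``$\im f$ only accumulates at zero'': every rational direction is an eigenvalue of infinite multiplicity.
\item Your diagonal formula extends verbatim to bounded Borel $f$. It therefore makes the later identification of the spectral measure of $\rho$ with the cone projections $\Lambda_F$ (up to the origin convention) immediate, whereas the paper argues this separately.
\end{itemize}

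The paper's Gelfand-theoretic route has its own advantage: it uses only the joint spectrum of the generators. It therefore transfers directly to the abstract spherical algebras $\calY_d$ of the subsequent subsection, where no explicit diagonalization is available.
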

\begin{proof}
Since $\calX_d$ is an Abelian $C^\ast$-algebra, by the Gelfand-Naimark theorem \cite[Theorem 4.29]{Douglas1998}, it is $\ast$-isomorphic to the space of continuous functions $C(\Delta_d)$ where $\Delta_d$ is the maximal ideal space of $\calX_d$, defined as the space all multiplicative linear functionals on $\calX_d$ equipped with weak-$\ast$ topology. Consider the map
\eql{\label{eq:homeomorphism from maximal ideal space to sphere}
    \Delta_d \ni \varphi\mapsto (\varphi(\widehat{X}_1),\dots,\varphi(\widehat{X}_d)) \in \RR^d\,.
} Note that it indeed takes values in $\RR^d$ thanks to the self-adjointness of $\widehat{X}_j$ and the $\ast$-homomorphism property of $\vf$. The map is a homeomorphism onto its image. The map is continuous by definition of weak-$\ast$ topology. If $\vf(\hat{X}_j)=\psi(\hat{X}_j)$ for all $j$, then $\vf(Y)=\psi(Y)$ whenever $Y$ is a polynomial in $X_1,\dots,X_d$, and since these polynomials are dense in $\calX_d$, it follows that $\vf=\psi$ and hence the map \cref{eq:homeomorphism from maximal ideal space to sphere} is injective. 
We show that the image is $\bS^{d-1}$, and hence $\Delta_d\cong \bS^{d-1}$. To that end, let $\varphi\in \Delta_d$. Since $\widehat{X}^2_1+\dots+\widehat{X}_d^2=\Id$, it follows that
\eq{
    \varphi(\widehat{X}_1)^2 + \dots+\varphi(\widehat{X}_d)^2 = 1\,. 
}
Therefore, the image must lie in $\bS^{d-1}$. To show that each point on $\bS^{d-1}$ corresponds to some functional in $\Delta_d$, we consider the evaluation functional: for each $x\in \ZZ^d$, define $\varphi_x(\widehat{X}_i)=x_i/\|(x_1,\dots,x_d)\|$. Thus, we have shown that
\eq{
    \calX_d \cong C(\bS^{d-1})\,.
}

Now we verify \cref{eq:some properties of isomorphism from sphere}. The Gelfand-Naimark theorem gives an isomorphism, the Gelfand transform \cite[Definition 2.24]{Douglas1998}, that maps $T\in\calX_d$ onto $\varphi\mapsto \varphi(T)$ for $\varphi\in \Delta_d$. Fix $k\in \{1,\dots,d\}$. The map $x\mapsto x_k$ in $C(\bS^{d-1})$ corresponds via \cref{eq:homeomorphism from maximal ideal space to sphere} to the map $\varphi\mapsto \varphi(\widehat{X}_k)$ in $C(\Delta_d)$. On the other hand, the Gelfand transform maps $\widehat{X}_k$ to $\varphi\mapsto \varphi(\widehat{X}_k)$. Thus, we have established $\rho(x\mapsto x_k)=\widehat{X}_k$. Finally, the identity $\rho(x\mapsto 1)=\Id$ follows from the fact that all the isomorphisms constructed are unital.

Suppose $\rho(f)$ is compact for $f\in C(\bS^{d-1})$, then $\im f$ only accumulates at zero, and hence by continuity, $f=0$. Thus the only compact element in $C(\bS^{d-1})$, and hence in $\calX_d$, is the zero element.
\end{proof}

Using \cref{lem:maximal ideal space algebra generated by Xis} we can describe $\calL_d$ as
\eql{\label{eq:dual characterization of local algebra}
    \calL_d = \Set{A\in\calB(\calH_d)| [A,\rho(f)]\in\calK(\calH_d),\, \forall f\in C(\bS^{d-1})}=:\calD_\rho(C(\bS^{d-1}))
}
where $\rho$ is the isomorphism \cref{eq:isomorphism continuous functions on sphere to algebra generated by Xis}. The form \cref{eq:dual characterization of local algebra} will have important consequences in calculating the $K$-groups of this algebra.

\revadd{In this language $\calL_d$ is the concrete Paschke dual algebra associated with the representation $\rho:C(\bS^{d-1})\to\calB(\calH_d)$, i.e. the algebra of operators which commute with $\rho(C(\bS^{d-1}))$ modulo the compact operators; compare \cite{paschke1981k,higson2000analytic,roe2004paschke}. This observation is useful context and makes the $K$-theory of the ambient algebra unsurprising. What is not contained in Paschke duality alone is the physical phase space singled out below: the bulk-non-trivial spherically-local projections and Hamiltonians, with fixed finite fiber and Altland--Zirnbauer symmetry constraints, considered up to genuine norm-continuous homotopy.}


\subsection{Re-dimerization}


It suffices to consider $\calL_d$ without the internal degrees of freedom since the spaces $\calL_d$ and $\calL_{d,N}$ are unitarily equivalent. To that end, we build a re-dimerization operator. See also \cref{subsec:abstract spherical locality} for an alternative abstract argument.

In the special case when $N=m^d$ for some $m\in \NN$, we consider the ``re-dimerization'' of perfect hypercubes: for all $x\in\ZZ^d$, decompose it as
\eq{
x_i = m q_i + r_i
}
for $q_i\in\ZZ$ and $0\leq r_i<m$, and define the unitary operator $U$ as
\eq{
U:\calH_d\ni\delta_x\mapsto \delta_q\otimes e_r \in \calH_d\otimes \CC^{m^d}
}
where $e_r$ is the standard basis for $\CC^{m^d}$. 
The operator $U$ ``re-dimerizes`` a perfect hypercube into a single lattice point but with $N=m^d$ degrees of freedom. 
We have
\eq{
K_i:=\widehat{X}_i - U^\ast (\widehat{X}_i\otimes \Id_{m^d}) U \in \calK(\calH_d),\quad \forall i=1,\dots,d \,.
}
Indeed, we have $U^\ast (\widehat{X}_i\otimes \Id_{m^d}) U \delta_x = \hat{q}_i \delta_x$, and $\widehat{X}_i\delta_x = \hat{x}_i\delta_x$ which are both diagonal operators, and it is straightforward to show that $\|\hat{x}-\hat{q}\|\to 0$ as $\|x\|\to \infty$. Therefore, the operator $K_i$ is compact.
The map 
\eq{
\calL_d \ni A \mapsto UAU^\ast \in\calL_{d,N}
}
provides the isomorphism from spherically-local operators on $\calH_d$ to $\calH_d\otimes \CC^{m^d}$. To that end, suppose $A\in\calB(\calH_d)$ satisfies $[A,\widehat{X}_j]\in\calK$ for all $j=1,\dots,d$. Then
\eq{
\left[UAU^\ast, \widehat{X}_j\otimes\Id_{m^d}\right] = U \left[A, U^\ast(\widehat{X}_j\otimes \Id_{m^d})U \right] U^\ast \,.
}
Now, replace $U^\ast(\widehat{X}_i\otimes \Id_{m^d})U$ with $\widehat{X}_i-K_i$, we see that the expression is compact.

In general, we have the following whose proof is similar to the one in the previous paragraph.
\begin{prop}[Re-dimerization]\label{prop:re-dimerization}
    Let $N\in\NN$ be arbitrary and let $M\in M_d(\ZZ)$ be an integer-valued matrix with $|\det M|=N$ and let $R\subset \ZZ^d$ be the set of coset representatives of $\ZZ^d/M\ZZ^d$ (with $|R|=N$). For each lattice point $x\in \ZZ^d$, there is a unique decomposition
    \eq{
    x = Mq+r
    }
    where $q\in\ZZ^d$ and $r\in R$. Define the unitary operator on standard basis as
    \eq{
    \calH_d\ni \delta_x\mapsto \delta_{q}\otimes e_r \in \calH_d \otimes \CC^N .
    }
    Then
    \eq{
    \calL_d \ni A \mapsto UAU^\ast \in \calL_{d,N}
    }
    is a unitary equivalence.
\end{prop}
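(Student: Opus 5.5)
The plan follows the special case $N=m^d$ treated just before the statement, with the hypercube decomposition replaced by the coset decomposition $x=Mq+r$.

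\emph{Step 1: $U$ is unitary.} Since every $x\in\ZZ^d$ decomposes uniquely as $Mq+r$ with $q\in\ZZ^d$ and $r\in R$, the map $x\mapsto(q,r)$ is a bijection $\ZZ^d\to\ZZ^d\times R$. Hence $U$ sends an orthonormal basis bijectively onto the orthonormal basis $\Set{\delta_q\otimes e_r}$ of $\calH_d\otimes\CC^N$, so $U$ is unitary. Consequently $A\mapsto UAU^\ast$ is a $\ast$-isomorphism $\calB(\calH_d)\to\calB(\calH_{d,N})$ that preserves compactness, and its inverse is $B\mapsto U^\ast BU$.

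\emph{Step 2: the key compactness statement.} The operator $V_j:=U^\ast(\widehat X_j\otimes\Id_N)U$ is diagonal in the position basis, with $V_j\delta_x=\hat q_j\delta_x$ where $q=q(x)$. The comparison with $\widehat X_j$ is therefore not $\hat x$ versus $\hat q$ but
\[
\hat x \quad\text{versus}\quad \widehat{q}=\frac{M^{-1}(x-r)}{\norm{M^{-1}(x-r)}}.
\]
As $\norm{x}\to\infty$, $\widehat{q}$ is close to $\widehat{M^{-1}x}$, not to $\hat x$. So unless $M$ is a multiple of an orthogonal matrix, $K_j:=\widehat X_j-V_j$ is \emph{not} compact. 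This is the main obstacle.

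\emph{Step 3: removing the linear distortion.} Let $\phi:\bS^{d-1}\to\bS^{d-1}$ be the homeomorphism $\phi(y)=M^{-1}y/\norm{M^{-1}y}$. Because $r$ ranges over the finite set $R$, we have $\norm{\widehat{q}-\phi(\hat x)}\to0$ as $\norm{x}\to\infty$. Therefore
\[
V_j-\rho(\phi_j)\in\calK(\calH_d),
\]
where $\phi_j$ is the $j$-th coordinate of $\phi$ and $\rho$ is the isomorphism of \cref{lem:maximal ideal space algebra generated by Xis}. Compactness follows because the difference is a diagonal operator whose entries tend to zero.

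The $C^\ast$-algebra generated by $\rho(\phi_1),\dots,\rho(\phi_d)$ is $\rho(C(\bS^{d-1}))=\calX_d$. This holds because $\phi$ is a homeomorphism, so by Stone--Weierstrass the coordinates $\phi_j$ generate $C(\bS^{d-1})$.

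Now take $A\in\calL_d$. By \cref{eq:dual characterization of local algebra}, $[A,\rho(f)]\in\calK$ for every $f\in C(\bS^{d-1})$; in particular $[A,\rho(\phi_j)]\in\calK$. Combined with the display above, this gives $[A,V_j]\in\calK$, and therefore
\[
[UAU^\ast,\widehat X_j\otimes\Id_N]=U[A,V_j]U^\ast\in\calK.
\]

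\emph{Step 4: the converse inclusion.} Let $B\in\calL_{d,N}$ and set $A:=U^\ast BU$. Then $[A,V_j]\in\calK$ for every $j$. Modulo compacts, $V_j$ equals $\rho(\phi_j)$, so $A$ essentially commutes with each $\rho(\phi_j)$. The set of operators essentially commuting with $A$ is a $C^\ast$-algebra, so $A$ essentially commutes with all of $C^\ast(\rho(\phi_1),\dots,\rho(\phi_d))=\calX_d$, and in particular with each $\widehat X_j$. Hence $A\in\calL_d$.

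Together, Steps 3 and 4 show that $A\mapsto UAU^\ast$ maps $\calL_d$ onto $\calL_{d,N}$, and it is implemented by the unitary $U$ of Step 1. This is the claimed unitary equivalence.
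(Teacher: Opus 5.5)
Your proposal is correct and is essentially the paper's argument run on the other side of $U$. The paper also compensates for the linear distortion with the sphere homeomorphism $u\mapsto Mu/\norm{Mu}$, but it works on $\calH_{d,N}$: it shows that $U^\ast(\widehat{Y}_j\otimes\Id_N)U-\widehat{X}_j$ is compact for $\widehat{Y}_j=(M\widehat{X})_j/\norm{M\widehat{X}}$, and then identifies the resulting ``elliptically-local'' algebra $\calY_{d,N}$ with $\calL_{d,N}$, whereas you compare $U^\ast(\widehat{X}_j\otimes\Id_N)U$ with $\rho(\phi_j)$ on $\calH_d$.

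One small correction in Step 4. For a general $A$, the set of operators essentially commuting with $A$ is a norm-closed unital algebra but need not be $\ast$-closed. This does no harm here: the $\rho(\phi_j)$ are commuting self-adjoint operators, so the closed algebra they generate is already $\calX_d$. Alternatively, run the argument for $A$ and $A^\ast=U^\ast B^\ast U$ simultaneously.
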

\begin{proof}
    Define the operators
    \eq{
    \widehat{Y}_j := \frac{(M\widehat{X})_j}{\|M\widehat{X}\|}
    }
    acting on $\calH_d$. Define the \Cstar-algebra
    \eq{
    \calY_{d,N}:= \Set{A\in\calB(\calH_d\otimes \CC^N) | [A,\widehat{Y}_j\otimes \Id_N]\in\calK,\ \forall j=1,\dots,d}.
    }
    We first show that $\Set{UAU^\ast | A\in\calL_d} = \calY_{d,N}$.  
    The operator $U^* (\widehat{Y}_j\otimes \Id_N) U$ is diagonal 
    \eq{
    U^* (\widehat{Y}_j\otimes \Id_N) U \delta_x = U^* (\widehat{Y}_j\otimes \Id_N) \delta_q\otimes e_r = \frac{(Mq)_j}{\|Mq\|}\delta_x
    }
    with eigenvalues $\lambda_{x}=(Mq)_j/\|Mq\|$. Let us analyze the difference $D_j = \widehat{X}_j - U^\ast \widehat{Y}_j\otimes\Id_N U\in\calB(\calH_d)$.
    We verify that the operator $D_j$ is compact. Since $D_j$ is diagonal, it is compact if and only if its eigenvalues decay to zero as the index goes to infinity ($\|x\| \to \infty$). Using the vector inequality $\|u/\|u\| - v/\|v\|\|\leq 2\|u-v\|/\|v\|$, we have
    \eq{
    \left\|x - \frac{Mq}{\|Mq\|}\right\| = \left\|\frac{Mq+r}{\|Mq+r\|} - \frac{Mq}{\|Mq\|}\right\| \leq \frac{2\|r\|}{\|Mq\|}. 
    }
    Since $R$ is a finite set, we have $2\|r\|\leq 2\max_{r\in R}\|r\|$, and since $M$ is non-singular, it has a smallest singular value $\sigma_{\min} > 0$, and hence $\|Mq\| \ge \sigma_{\min} \|q\|$. Thus $2\|r\|/\|Mq\|\to 0$ as $\|q\|\to \infty$ and hence as $\|x\|\to\infty$. This shows that $D_j\in\calK(\calH_d)$. 
    Suppose $A\in\calL_d$. Then 
    \eq{
    [UAU^\ast,\widehat{Y}_j\otimes \Id_N] = U[A,U^\ast (\widehat{Y}_j\otimes \Id_N)U ]U^\ast = U[A,\widehat{X}_j-D_j]U^\ast \in\calK(\calH_d\otimes \CC^N).
    }
    This shows that $\Set{UAU^\ast | A\in\calL_d} \subset \calY_{d,N}$. The other direction is argued in the same way.
    
    Next we show that
    \eq{
    \calY_{d,N} = \calL_{d,N}.
    }
    Intuitively, the algebra $\calY_{d,N}$ are those ``elliptically''-local operators, i.e., those operators that essentially commute with $\widehat{Y}_j\otimes \Id_N$. 
    Consider the smooth map $\varphi:\bS^{d-1}\to\bS^{d-1}$ defined as $\varphi(u) = Mu/\|Mu\|$ whose inverse is smooth as well. Using continuous functional calculus, we have $\widehat{Y}_j\otimes\Id_N = \vf(\widehat{X}_j\otimes\Id_N) $ and $\widehat{X}_j\otimes\Id_N = \vf^{-1}(\widehat{Y}_j\otimes\Id_N)$. If $A\in\calB(\calH_d\otimes \CC^N)$ satisfies $[A,\widehat{X}_j\otimes\Id_N]\in\calK$, then $[A,\widehat{Y}_j\otimes\Id_N]=[A,\vf(\widehat{X}_j\otimes\Id_N)]\in\calK$ which shows that $\calL_{d,N}\subset \calY_{d,N}$. This other direction is analogous.
\end{proof}


The matrix $M \in M_d(\mathbb{Z})$ in \cref{prop:re-dimerization}  defines the geometry of the ``re-dimerization hypercube`` or ``coarse block`` on the lattice $\ZZ^d$. The columns of $M$ are the basis vectors for the re-dimerized lattice. Instead of moving by the standard unit vectors $e_1, \dots, e_d$, the re-dimerized hypercube moves by the columns of $M$.

\begin{example}
    Consider re-dimerizing $\ZZ^2$ lattice. We can tile the plane with  dominoes using 
    \eq{
    M=\begin{bmatrix}
    2 & 0 \\ 0 & 1
    \end{bmatrix},\quad R=\Set{(0,0),(1,0)}.
    }
    We can also tile the plane with L-shape using 
    \eq{
    M=\begin{bmatrix}
    2 & -1 \\ 1 & 2
    \end{bmatrix},\quad R=\Set{(0,0),(0,1),(1,0),(1,1),(0,2)}.
    }
    They provide a re-dimerization of $\calH_2\to \calH_2\otimes \CC^2$ and $\calH_2\to \calH_2\otimes \CC^5$ respectively.
\end{example}

\subsection{Abstract spherical-locality}\label{subsec:abstract spherical locality}


Let $\calY_d$ be a commutative $C^\ast$-algebra with identity generated by elements $Y_1,\dots,Y_d\in\calY_d$. We define the joint spectrum of $Y_1,\dots,Y_d$ in $\calY_d$ as
\eql{\label{eq:joint spectrum}
\sigma(Y_1,\dots,Y_d) := \Set{(\vf(Y_1),\dots,\vf(Y_d)) | \vf\in \Delta_d} \subset \CC^d
}
where $\Delta_d$ is the maximal ideal space of $\calY_d$. It is known that $\Delta_d$ is homeomorphic to $\sigma(Y_1,\dots,Y_d)$. See \cite[Definition 2.3.2, Lemma 2.3.3]{kaniuth2009course} for more detail. 

\begin{defn}[Abstract spherical locality]\label{defn:abstract spherical locality}
    Let $\calH$ be a separable Hilbert space and $\calY_d\subset \calB(\calH)$ be commutative $C^\ast$-algebra with identity generated by operators $Y_1,\dots,Y_d\in \calB(\calH)$. 
    Suppose $\sigma(Y_1,\dots,Y_d)=\bS^{d-1}$. Then an operator $A\in\calB(\calH)$ is called $\calY_d$-spherically-local iff $[A,Y_j]\in\calK$ for all $j$. We denote the space of  $\calY_d$-spherically-local operators as $\calL (\calY_d)$. We call $\calY_d$ the spherical algebra (generated by $Y_1,\dots,Y_d$ acting on $\calH$.)
\end{defn}

\begin{example}
    Consider the Hilbert space $\calH_d=\ell^2(\ZZ^d)$. In $d=1$, take $Y_1=\widehat{X}_1=\Lambda-\Lambda^\perp$. Then $\sigma(Y_1)=\bS^{0}$ and $\rmC^*(Y_1)=\rmC^*(\Lambda)$ and $\calL(\rmC^*(Y_1)) = \calL_1$ (defined in \cref{def:spherically-local}). In $d=2$, take $Y_1=\widehat{X}_1$ and $Y_2=\widehat{X}_2$. Then $\sigma(Y_1,Y_2)=\bS^1$ and $\rmC^\ast(Y_1,Y_2)=\rmC^\ast(L)$ where $L$ is the Laughlin flux insertion operator \cref{eq:Laughlin flux insertion operator}, and we have $\calL(\rmC^\ast(L)) = \calL_2$.
\end{example}


\begin{lem}\label{lem:abstract spherical locality isomorpihc}
    Let $\calY_d$ and $\calY'_d$ be spherical algebras defined in \cref{defn:abstract spherical locality}.
    Then $\calL(\calY_d)\cong \calL(\calY'_d)$. 
\end{lem}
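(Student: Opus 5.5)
The plan is to realize both algebras as Paschke dual algebras of representations of $C(\bS^{d-1})$, and then to use Voiculescu's theorem to find a unitary that intertwines the two representations modulo compacts. Conjugation by that unitary will carry one dual algebra onto the other.

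\emph{Step 1: Paschke-dual description.} Let $\calY_d=C^\ast(Y_1,\dots,Y_d)\subset\calB(\calH)$. The argument of \cref{lem:maximal ideal space algebra generated by Xis} applies verbatim, using the homeomorphism between $\Delta_d$ and $\sigma(Y_1,\dots,Y_d)=\bS^{d-1}$ from \cite{kaniuth2009course}. It yields a unital $\ast$-isomorphism $\rho:C(\bS^{d-1})\to\calY_d$ with $\rho(x\mapsto x_j)=Y_j$, and hence a faithful unital representation $\rho:C(\bS^{d-1})\to\calB(\calH)$. The coordinate functions generate $C(\bS^{d-1})$, and the set of $f$ with $[A,\rho(f)]\in\calK$ is a norm-closed $\ast$-subalgebra. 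Therefore
\[
\calL(\calY_d)=\calD_\rho(C(\bS^{d-1})),
\]
exactly as in \cref{eq:dual characterization of local algebra}. The same construction on $\calH'$ gives $\rho'$ with $\calL(\calY'_d)=\calD_{\rho'}(C(\bS^{d-1}))$.

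\emph{Step 2: Voiculescu.} Next I would check that $\rho$ and $\rho'$ are \emph{ample}, meaning $\rho(f)\in\calK$ implies $f=0$; equivalently, $\pi\circ\rho$ is injective, where $\pi$ is the Calkin map. For the concrete models this is the last assertion of \cref{lem:maximal ideal space algebra generated by Xis}: a continuous function whose range accumulates only at $0$ must vanish, since $\bS^{d-1}$ has no isolated points when $d\geq2$. The case $d=1$ is checked directly from $\dim\im\Lambda=\dim\ker\Lambda=\infty$. In the abstract setting ampleness is exactly the condition implicitly needed, so I would impose it (or verify it) as part of the hypothesis that $\calY_d$ is a spherical algebra. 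Once ampleness holds, Voiculescu's theorem applies (e.g.\ \cite{higson2000analytic}; both Hilbert spaces are separable and infinite dimensional). It gives a unitary $U:\calH\to\calH'$ with
\[
U\rho(f)U^\ast-\rho'(f)\in\calK(\calH')\qquad\forall f\in C(\bS^{d-1}).
\]

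\emph{Step 3: transport.} For $A\in\calL(\calY_d)$ and $f\in C(\bS^{d-1})$, write $\rho'(f)=U\rho(f)U^\ast-K_f$ with $K_f\in\calK(\calH')$. Then
\[
[UAU^\ast,\rho'(f)]=U[A,\rho(f)]U^\ast-[UAU^\ast,K_f]\in\calK(\calH'),
\]
so $\mathrm{Ad}\,U$ maps $\calL(\calY_d)$ into $\calL(\calY'_d)$. The same computation with $U^\ast$ gives the reverse inclusion. Hence $\mathrm{Ad}\,U$ is a spatial $\ast$-isomorphism, and it also preserves compacts and projections, which is what the later $\pi_0$ arguments need. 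This mirrors the compact-perturbation argument already used in \cref{prop:re-dimerization}.

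\emph{Main obstacle.} The substantive input is Step 2: producing a unitary that intertwines the two representations modulo compacts \emph{uniformly in $f$}, rather than for the finitely many generators one at a time. This is precisely Voiculescu's theorem, so the only genuine check is its ampleness hypothesis. Without ampleness the two dual algebras need not even be isomorphic, because a compact summand in $\rho$ changes $\calD_\rho$. I would therefore make sure that every spherical algebra used later, including $\calX_d$ on $\calH_{d,N}$ and its re-dimerized versions, satisfies it.
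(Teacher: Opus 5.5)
Your proposal is correct and follows the paper's proof essentially step for step: Gelfand--Naimark identifies $\calL(\calY_d)$ with $\calD_\rho(C(\bS^{d-1}))$, Voiculescu's theorem (via \cite[Theorem 3.4.6]{higson2000analytic}) supplies a unitary intertwining $\rho$ and $\rho'$ modulo compacts, and conjugation by that unitary transports one algebra to the other; you also spell out the reverse inclusion, which the paper leaves implicit. Your caution about ampleness is well placed. The paper simply asserts that $\calY_d$ contains no nonzero compacts, and this does follow from the hypotheses when $d\geq 2$ because $\bS^{d-1}$ has no isolated points. It can fail for $d=1$, however (e.g.\ $Y_1=1\oplus(-\Id)$ on $\CC\oplus\calH$, which gives $\calL(\calY_1)=\calB(\CC\oplus\calH)$), so in that case it has to be added as a hypothesis.
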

\begin{proof}
    It follows from \cref{defn:abstract spherical locality} that we can describe $\calL(\calY_d)$ as
    \eq{
    \calL(\calY_d) = \Set{A\in\calB(\calH) | [A,Y]\in \calK,\, \forall Y\in \calY_d}.
    }
    We can analogously describe the algebra $\calL(\calY'_d)$. 
    Since $\calY_d$ is generated by some operators $Y_1,\dots,Y_d$ whose joint spectrum (as defined in \cref{eq:joint spectrum}) is $\bS^{d-1}$, it follows by Gelfand-Naimark theorem that $\calY_d\cong C(\bS^{d-1})$. Analogously, we have $\calY'_d\cong C(\bS^{d-1})$. Let $\rho:C(\bS^{d-1})\to\calY_d$ and $\rho':C(\bS^{d-1})\to\calY'_d$ be isomorphic representations given by the Gelfand-Naimark theorem. Then we have
    \eq{
    \calL(\calY_d) = \Set{A\in\calB(\calH) | [A,\rho(f)]\in \calK,\, \forall f\in C(\bS^{d-1})}\,.        
    }
    We can analogously describe the algebra $\calL(\calY'_d)$. 
    Let $\calY_d\subset \calB(\calH)$ and $\calY'_d\subset\calB(\calH')$ be algebras acting on the Hilbert spaces $\calH,\calH'$. 
    Crucially, the algebras $\calY_d,\calY'_d$ of operators do not contain compact ones except zero (similarly shown in \cref{lem:maximal ideal space algebra generated by Xis}). In this setting, we can apply \cite[Theorem 3.4.6]{higson2000analytic} (which is itself a consequence of the Voiculescu's Theorem) to get a unitary isomorphism $U:\calH\to\calH'$ such that
    \eql{\label{eq:tmp unitary isomorphism in showing isomorphic spherically-local}
    U\rho(f)U^\ast - \rho'(f)\in\calK(\calH),\quad \forall f\in C(\bS^{d-1})\,.
    }
    We claim that $\calL(\calY_d)\ni A\mapsto UAU^\ast \in\calL(\calY'_d)$ is well-defined and gives the sought-after isomorphism. Indeed, let $A\in \calL(\calY_d)$ and $f\in C(\bS^{d-1})$ be arbitrary, then
    \eq{
    [UAU^\ast,\rho'(f)] = U[A,U^\ast \rho'(f)U]U^\ast \in \calK
    }
    where we apply \cref{eq:tmp unitary isomorphism in showing isomorphic spherically-local} and the assumption that $A\in\calL(\calY_d)$ in the last step. Thus $UAU^\ast\in\calL(\calY'_d)$.
\end{proof}

Let us connect the spherical locality in \cref{def:spherically-local} to the abstract one in \cref{defn:abstract spherical locality}. Let $\calX_{d,N}\subset \calB(\calH_d\otimes \CC^N)$ be the commutative $C^\ast$-algebra with identity generated by $\widehat{X}_1\otimes \Id_N,\dots,\widehat{X}_d\otimes \Id_N$.
Then the space of spherically-local operators on $\calH_d\otimes \CC^N$ is
\eq{
    \calL_{d,N}= \Set{A\in\calB(\calH_d\otimes \CC^N) | [A,X]\in\calK,\, \forall X\in\calX_{d,N}}\,.
}
Following \cref{lem:maximal ideal space algebra generated by Xis}, it is clear that $\calX_{d,N}$ is a spherical algebra, i.e., the joint spectrum of $\widehat{X}_1\otimes \Id_N,\dots,\widehat{X}_d\otimes \Id_N$ is $\bS^{d-1}$, and hence the spherically-local operators $\calL_{d,N}$ coincide with the $\calX_{d,N}$-spherically-local operators $\calL(\calX_{d,N})$.

On the other hand, according to \cref{lem:maximal ideal space algebra generated by Xis}, the algebra $\calX_d\subset \calB(\calH_d)$ defined in \cref{eq:C star algebra generated by unit position operators} is also a spherical algebra. Thus, using \cref{lem:abstract spherical locality isomorpihc}, we have 
\eq{
\calL_{d,N}\cong \calL(\calX_{d,N})\cong \calL(\calX_d) \cong \calL_d
}
which provides, alternative to \cref{prop:re-dimerization}, an abstract identification of $\calL_{d,N}$ and $\calL_d$ for all $N\in\NN$. 

\begin{rem}
    It may be tempting to characterize abstractly the spherically-local algebras using a set of commuting self-adjoint operators $Z_1,\dots,Z_d$ on some separable Hilbert space $\calH$ such that the spectrum is $\sigma(Z_j)=[-1,1]$ for all $j$, and $\sum_{j=1}^d Z_j^2=\Id$.
    However, the commutative $C^\ast$-algebra $\calZ_d$ generated by $Z_1,\dots,Z_d$ may not necessarily be isomorphic to $C(\bS^{d-1})$.
    For starters, it fails for $d=1$. Indeed, the assumption $\sigma(Z_1)=[-1,1]$ contradicts another assumption $Z_1^2=\Id$.
    For $d=2$, consider the configuration $\Omega = \ZZ^2\setminus \Set{x\in\ZZ^2| x_1>0,x_2>0}$. Let $\widehat{Z}_1,\widehat{Z}_2$ be the unit position operators on $\ell^2(\Omega)$. Clearly $\sigma(Z_1)=\sigma(Z_2)=[-1,1]$ and $Z_1^2+Z_2^2=\Id$.
    The $C^\ast$-algebra $\calZ$ is generated by $Z_1+\ii Z_2$ and $Z_1-\ii Z_2$, and hence $\calZ$ is isomorphic to $C(\sigma(Z_1+\ii Z_2))$. However, due to the geometry of $\Omega$, we have $\sigma(Z_1+\ii Z_2)=\bS^{1}\setminus\Set{0< \vf< \pi/2}$.
\end{rem}

\subsection{Geometric characterization of spherically-local operators}

Let $F\subset \bS^{d-1}$, we denote $\Lambda_F\in\calB(\calH_d)$ to be the projection
\eql{\label{eq:Lambda_F projection definition}
    \Lambda_F:=\sum_{x\in \ZZ^d\setminus\Set{0}\,:\, \hat{x}\in F} \delta_x\otimes \delta_x^\ast
}
where $\hat{x}_i=x_i/\|(x_1,\dots,x_d)\|$.

\begin{thm}\label{thm:geometric characterization of locality}
    An operator $A\in\calB(\calH_d)$ is spherically-local in the sense of \Cref{def:spherically-local} iff $\Lambda_F A\Lambda_G\in\calK(\calH_d)$ for every disjoint pair of closed subsets $F,G$ of $\bS^{d-1}$.
\end{thm}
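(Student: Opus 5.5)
Both directions go through the isomorphism $\rho:C(\bS^{d-1})\to\calX_d$ of \cref{lem:maximal ideal space algebra generated by Xis} and the description \cref{eq:dual characterization of local algebra}: $A\in\calL_d$ iff $[A,\rho(f)]\in\calK$ for all $f\in C(\bS^{d-1})$. Concretely, $\rho(f)$ is the multiplication operator $\delta_x\mapsto f(\hat x)\delta_x$ for $x\neq0$. At the origin the paper's convention $\widehat X\delta_0=e_1\delta_0$ gives $f(e_1)\delta_0$. This is a finite-rank ambiguity and can be ignored throughout. Likewise $\Lambda_F$ is multiplication by $\chi_F(\hat x)$.

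\emph{($\Rightarrow$)} Let $F,G$ be closed and disjoint. By Urysohn's lemma on the compact metric space $\bS^{d-1}$, pick $f\in C(\bS^{d-1})$ with $0\le f\le1$, $f|_F=1$ and $f|_G=0$. Then $\Lambda_F\rho(f)=\Lambda_F$ and $\rho(f)\Lambda_G=0$, up to the finite-rank correction at $x=0$. Hence
\[
\Lambda_F A\Lambda_G=\Lambda_F\rho(f)A\Lambda_G-\Lambda_F A\rho(f)\Lambda_G=\Lambda_F[\rho(f),A]\Lambda_G\in\calK .
\]

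\emph{($\Leftarrow$)} This is the main step. By \cref{eq:dual characterization of local algebra} it suffices to show $[A,\rho(f)]\in\calK$ for every $f\in C(\bS^{d-1})$. The set of such $f$ is a norm-closed subalgebra, so it is enough to approximate $\rho(f)$ in norm by operators whose commutator with $A$ is compact. Fix $\ve>0$ and use uniform continuity of $f$ to partition $\bS^{d-1}$ into finitely many Borel pieces $E_1,\dots,E_n$ of small diameter $\delta$, on each of which $f$ varies by less than $\ve$. One can take small ``cubes'' from a fine grid intersected with the sphere. Set $S=\sum_i f(s_i)\Lambda_{E_i}$ for chosen points $s_i\in E_i$. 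Then $\norm{\rho(f)-S}\le\ve$, and
\[
[A,S]=\sum_{i\neq j}(f(s_j)-f(s_i))\,\Lambda_{E_i}A\Lambda_{E_j}.
\]
Pairs $(i,j)$ with $\overline{E_i}\cap\overline{E_j}=\varnothing$ contribute compact operators by hypothesis. This uses $\Lambda_{E_i}\le\Lambda_{\overline{E_i}}$, so $\Lambda_{E_i}A\Lambda_{E_j}=\Lambda_{E_i}(\Lambda_{\overline{E_i}}A\Lambda_{\overline{E_j}})\Lambda_{E_j}$.

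The obstacle is the adjacent pairs, whose closures touch. Their coefficients are $O(\ve)$, but it is not clear that the sum $\sum_{\text{adjacent}}\Lambda_{E_i}A\Lambda_{E_j}$ is small in norm.

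To handle adjacent pairs I would use a two-colour (or $2^d$-colour) checkerboard decomposition. I would first treat the special case where $f$ takes only finitely many values on well-separated closed sets. Then I would instead work with a continuous partition of unity $\{\phi_i\}$ subordinate to a cover by small open caps $U_i$, with bounded overlap multiplicity $c_d$ depending only on $d$. Write $\rho(f)\approx\sum_i f(s_i)\rho(\phi_i)$. It then suffices to show $[A,\rho(\phi)]\in\calK$ for each continuous $\phi$ supported in a small cap $U$, with $\norm{[A,\rho(\phi)]+\calK}$ controlled.

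For this, decompose $A=\Lambda_{V}A\Lambda_V+\Lambda_VA\Lambda_{V^c}+\Lambda_{V^c}A\Lambda_V+\Lambda_{V^c}A\Lambda_{V^c}$, where $V\supset\overline U$ is a slightly larger closed cap. Then:
\begin{itemize}
\item $\Lambda_{V^c}A\Lambda_{V^c}$ commutes with $\rho(\phi)$ exactly, since $\rho(\phi)=0$ on the range of $\Lambda_{V^c}$.
\item The cross terms are compact after multiplication by $\rho(\phi)$. For example, $\rho(\phi)\Lambda_VA\Lambda_{V^c}$ involves $\Lambda_{\overline U}A\Lambda_{\overline{V^c}}$, which is compact by hypothesis when the closures are disjoint; this requires choosing $V$ with $\overline U\subset\operatorname{int}V$.
\end{itemize}
This reduces everything to $[\Lambda_VA\Lambda_V,\rho(\phi)]$, which is the same problem localized to the cap $V$ with a function $\phi$ of oscillation at most $\norm{\phi}_\infty$.

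I would then run the argument iteratively in the essential norm. Define $\omega(f)=\norm{[A,\rho(f)]}_{\calB/\calK}$. Show $\omega(f)\le C_d\norm{A}\,\mathrm{osc}_\delta(f)$ for every $\delta>0$, where $\mathrm{osc}_\delta(f)$ is the maximal oscillation of $f$ on sets of diameter $\delta$ and $C_d$ depends only on the overlap multiplicity. Letting $\delta\to0$ gives $\omega(f)=0$, i.e.\ $[A,\rho(f)]\in\calK$. I expect the careful bookkeeping of the bounded-overlap estimate to be the main technical point.
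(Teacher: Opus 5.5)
Your ($\Rightarrow$) direction is correct and is the paper's argument; identifying $\rho(f)$ directly with multiplication by $f(\hat x)$ is a fine substitute for the paper's spectral-measure step.

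\textbf{The gap is in ($\Leftarrow$), at exactly the step that carries the content.} Your first attempt is precisely the paper's proof: approximate $\rho(f)$ by $S=\sum_i f(s_i)\Lambda_{E_i}$ over grid cells, then split $[A,S]$ into non-adjacent (compact) and adjacent pairs. But you abandon it at the one step that needs an argument.

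The cap-localization detour does not replace that step. Its individual claims are right: the $\Lambda_{V^c}A\Lambda_{V^c}$ block commutes with $\rho(\phi)$, and the cross terms are compact once $\overline U\subset\operatorname{int}V$. But what is left, $[\Lambda_VA\Lambda_V,\rho(\phi)]$, is the original problem on a smaller cap. Nothing in it is small, since $\phi$ climbs from $0$ to $\norm{\phi}_\infty$ inside $V$. No quantity decreases under iteration. Subadditivity of $\omega$ only gives $\omega(\sum_i f(s_i)\phi_i)\le\sum_i|f(s_i)|\,\omega(\phi_i)$, which does not tend to zero as the caps shrink and multiply. The inequality $\omega(f)\le C_d\norm{A}\,\mathrm{osc}_\delta(f)$ you state at the end is the entire content of this direction, and it is asserted rather than proved.

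Note also that $[A,S]$ is in general not compact. So the opening reduction to ``operators whose commutator with $A$ is compact'' is not one you can use. What you actually need is $\dist([A,\rho(f)],\calK)\le C\ve$.

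\textbf{The missing idea is a short almost-orthogonality estimate}, which makes the adjacent part small directly. The $\Lambda_{E_i}$ are mutually orthogonal, and each cell is adjacent (closures meet) to at most $M=3^d-1$ others. Hence for coefficients with $|c_{ij}|\le\eta$ and any $\xi$,
\[
\norm{\sum_i\Lambda_{E_i}A\sum_{j\sim i}c_{ij}\Lambda_{E_j}\xi}^2=\sum_i\norm{\Lambda_{E_i}A\sum_{j\sim i}c_{ij}\Lambda_{E_j}\xi}^2\le\norm{A}^2\sum_i\sum_{j\sim i}|c_{ij}|^2\norm{\Lambda_{E_j}\xi}^2\le M\eta^2\norm{A}^2\norm{\xi}^2 .
\]
Take $c_{ij}=f(s_j)-f(s_i)$ and $\eta=2\ve$; this is valid because $f$ varies by at most $\ve$ on each closed cell and adjacent closures share a point. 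Then $[A,\rho(f)]$ lies within $(2+2\sqrt M)\ve\norm{A}$ of the finite sum of non-adjacent terms, which is compact. Since $\ve$ is arbitrary, $[A,\rho(f)]\in\calK$.

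This is the paper's $3^d$ bounded-overlap bound, and it needs no partition of unity or caps. For real $f$ you can even slice by level sets $E_k=f^{-1}([k\ve,(k+1)\ve))$. Pieces with $|k-l|\ge2$ have disjoint closures, so each piece has only two neighbours. For complex $f$, treat the real and imaginary parts separately.
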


    \cref{thm:geometric characterization of locality} generalizes \cite[Theorem 2.5]{chung2024topological} to higher dimensions. The proof below is adapted from \cite[Theorem 2.1]{gilfeather1983commutants}.

\begin{proof}[Proof of \Cref{thm:geometric characterization of locality}]
    Let $\rho:C(\bS^{d-1})\to \calB(\calH_d)$ be the representation induced by the composition of the isomorphism $C(\bS^{d-1})\to \calX_d$ in \cref{eq:isomorphism continuous functions on sphere to algebra generated by Xis} and the inclusion $\calX_d\hookrightarrow \calB(\calH_d)$. By \cite[Theorem IX.1.14]{conway2007course}, there exists a spectral measure $E$ on the Borel sets of $\bS^{d-1}$ such that
    \eql{\label{eq:spectral measure on Xd}
        \rho(f) = \int f dE,\quad \forall f\in C(\bS^{d-1})
    }
    and $\rho$ extends to a representation on $B(\bS^{d-1})$, the space of bounded Borel-measurable functions on $\bS^{d-1}$.
    
    Let $F\subset \bS^{d-1}$ be any Borel set. We argue that
    \eql{\label{eq:lambda is spectral measure}
        \Lambda_F = E(F)\,.
    }
    Fix a point $z\in \bS^{d-1}$. Let $f\in C(\bS^{d-1})$ be the continuous function defined by $f_z(x) = \sum_{i=1}^d (x_i-z_i)^2$.
    Using \cref{eq:some properties of isomorphism from sphere}, $f_z$ corresponds to the operator $\rho(f_z)=\sum_{i=1}^d (\widehat{X}_i-z_i\Id)^2$ in $\calX_d$ whose kernel is $\im \Lambda_{\{z\}}$. Using \cite[Theorem 12.28]{rudin1991functional}, it follows that $\im \Lambda_{\{z\}} = \ker \rho(f_z) = \im E(\{z\})$.    
    Since there are countably many points in $\ZZ^d$, there are only countably many points $z$ on $\bS^{d-1}$ for which $\im \Lambda_{\{z\}}$ is non-vanishing. Thus, using the countable additivity of the spectral measure $E$, it follows that \cref{eq:lambda is spectral measure} holds for all Borel sets in $\bS^{d-1}$.

    Using the representation $\rho:B(\bS^{d-1})\to \calB(\calH_d)$ and the characterization \cref{eq:lambda is spectral measure} of the spectral measures, we are ready to prove the result. Suppose $A\in\calL_d$. Let $F,G\subset\bS^{d-1}$ be a disjoint pair of closed subsets. Then, there exists a continuous function $f\in C(\bS^{d-1})$ such that $F\subset f^{-1}(1)$ and $G\subset f^{-1}(0)$. By \cref{eq:dual characterization of local algebra}, we have $\rho(f)A-A\rho(f)\in\calK(\calH_d)$, and hence
    \eq{
        \Lambda_F (\rho(f)A-A\rho(f))\Lambda_G \in \calK(\calH_d)\,.
    }
    Using \cref{eq:lambda is spectral measure}, it follows that
    \eq{
        \rho(\chi_F f) A \Lambda_G - \Lambda_F A \rho(f\chi_G) \in \calK(\calH_d)\,.
    }
    By construction of $f$, we have $\chi_F f=\chi_F$ and $f\chi_G = 0$. Thus, we arrived at $\Lambda_FA\Lambda_G\in\calK(\calH_d)$.

    We now prove the necessity of the statement. Let $f\in C(\bS^{d-1})$ be arbitrary. Fix $\ve>0$. There exists a partition of $\RR^d$ into cubes fine enough such that if we let $\{F_k\}_{k=1}^{n}$ be the collection of the sets of intersection of cube and $\bS^{d-1}$, then we have 
    \eql{\label{eq:function evaluate at cubes difference small}
    \sup_{x,y\in F_k} |f(x)-f(y)| < \ve,\quad \forall k\in\{1,\dots,n\}\,.
    }
    Pick any $x_k\in F_k$. It follows from \cite[Proposition IX.1.10]{conway2007course} that
    \eql{\label{eq:approximate spectral integral with cubes}
        \|\rho(f)-\sum_{k=1}^n f(x_k) E(F_k)\| < \ve\,.
    }
    Since $\bS^{d-1} = \cup_{k=1}^n F_k$ is a disjoint union, it follows that $\Id = \sum_{k=1}^n E(F_k)$ and, using \cref{eq:lambda is spectral measure}, we have
    \eq{
        A\rho(f)-\rho(f) A &= \sum_{j=1}^n\sum_{k=1}^n \Lambda_{F_j} (A\rho(f)-\rho(f)A)  \Lambda_{F_k} \\
        &= \sum_{j=1}^n \left(\sum_{k\sim j}+\sum_{k\not\sim j}\right) (\Lambda_{F_j} A\Lambda_{F_k}\rho(f) - \rho(f)\Lambda_{F_j} A\Lambda_{F_k})
    }
    where we have used the notation $k\sim j$ to denote those $k$ such that the cube $F_k$ is adjacent to $F_j$, and $k\not\sim j$ to those $k$ where $F_k$ is not adjacent to $F_j$. In particular, if $F_k$ and $F_j$ are not adjacent, then there are disjoint closed sets that separate them, and hence $\Lambda_{F_j} A\Lambda_{F_k}\in\calK(\calH_d)$ by assumption. Therefore, the operator relating to $\sum_{j=1}^n\sum_{k\not\sim j}$ is compact, and we denote it by $K_\ve$. Now we break up the terms
    \eq{
    A\rho(f)-\rho(f)A - K_\ve  = & \sum_{j=1}^n\sum_{k\sim j} \Lambda_{F_j}A\Lambda_{F_k}\left(\rho(f)-\sum_{l=1}^n f(x_l)\Lambda_{F_l}\right) \\
    + &  \sum_{j=1}^n\sum_{k\sim j} \Lambda_{F_j}A\Lambda_{F_k}\sum_{l=1}^n f(x_l)\Lambda_{F_l} \\
    - & \sum_{j=1}^n\sum_{k\sim j} \left(\rho(f)-\sum_{l=1}^n f(x_l)\Lambda_{F_l}\right) \Lambda_{F_j}A\Lambda_{F_k} \\
    - & \sum_{j=1}^n\sum_{k\sim j} \left(\sum_{l=1}^n f(x_l)\Lambda_{F_l}\right) \Lambda_{F_j}A\Lambda_{F_k}
    }
    and proceed to bound the operator norms of the first, second, third and the fourth terms individually. For the first term, let $T:= A(\rho(f)-\sum_{l=1}^n f(x_l)\Lambda_{F_l})$ for convenience, and let $\xi\in\calH_d$ be arbitrary, and we have
    \eq{
    \left\|\sum_{j=1}^n\sum_{k\sim j} \Lambda_{F_j}T\Lambda_{F_k}\xi \right\|^2 = \sum_{j=1}^n\left\| \Lambda_{F_j}T\sum_{k\sim j}\Lambda_{F_k}\xi \right\|^2 \leq \max_j \|\Lambda_{F_j} T\|^2 \sum_{j=1}^n \sum_{k\sim j} \|\Lambda_{F_k}\xi\|^2\,.
    }
    Observe that for each cube in $\RR^d$, there are $3^{d}-1$ cubes that surrounds the center cube. It follows that
    \eql{\label{eq:adjacent cubes estimate}
        \sum_{j=1}^n \sum_{k\sim j} \|\Lambda_{F_k}\xi\|^2 \leq 3^d \|\xi\|^2\,.
    }
    Thus, using the above \cref{eq:adjacent cubes estimate} and \cref{eq:approximate spectral integral with cubes}, the operator norm of the first term is bounded above by $\ve 3^{d/2}\|A\|$, and similarly, so is the operator norm of the third term. Now, consider the second and the fourth terms, let $\xi\in\calH_d$ be arbitrary, we have
    \eq{
    \left\|\sum_{j=1}^n \sum_{k\sim j} (f(x_k)-f(x_j)) \Lambda_{F_j}A\Lambda_{F_k} \xi\right\|^2 = & \sum_{j=1}^n \left\|\Lambda_{F_j} A \sum_{k\sim j}(f(x_k)-f(x_j))\Lambda_{F_k}\xi \right\|^2 \\
    \leq & \max_{j}\|\Lambda_{F_j}A\|^2 \sum_{j=1}^n |f(x_k)-f(x_j)|^2\|\Lambda_{F_k}\xi\|^2\,.
    }
    Using \cref{eq:function evaluate at cubes difference small} and \cref{eq:adjacent cubes estimate}, the operator norm of the second and fourth terms combined is bounded by $\ve 3^{d/2}\|A\|$.
    Thus
    \eq{
    \|A\rho(f)-\rho(f)A - K_\ve\| \leq \ve 3^{d/2-1} \|A\|\,.
    }
    We can construct a sequence of compact operators $K_\epsilon$ that converges to $A\rho(f)-\rho(f)A$ with $\epsilon\to 0$. Thus $A\rho(f)-\rho(f)A\in\calK(\calH_d)$.
\end{proof}

\begin{rem}\label{rem:dyadic cubes}
In \cref{thm:geometric characterization of locality}, we work with all closed subsets of $\bS^{d-1}$. In fact, a smaller collection of subsets suffices. To be concrete, let us work with the closed dyadic cubes, namely with those that have the form
\eql{\label{eq:closed dyadic cubes}
    \Set{(x_1,\dots,x_d)\in \RR^{d}:j_i2^{-k}\leq x_i \leq (j_i+1)2^{-k} \text{ for } i=1,\dots,d} \cap \bS^{d-1} 
}
for some integers $j_1,\dots,j_d$ and some positive integer $k$. We will refer to them as the closed dyadic cubes (in $\bS^{d-1}$). The collection of all closed dyadic cubes in $\bS^{d-1}$ is countable. 
For each positive integer $k$, we consider the collection $D_k$ of closed dyadic cubes of the $k$-th generation $D_k$ to be of the form \cref{eq:closed dyadic cubes} for some integers $j_1,\dots,j_d$. Then $D_k$ is a finite covering of $\bS^{d-1}$ (consisting of cubes whose interiors are disjoint), and if $k_1<k_2$, then each cube in $D_{k_2}$ is included in some cube in $D_{k_1}$.
\end{rem}

\begin{thm}\label{thm:dyadic cubes characterization of locality}
    An operator $A\in\calB(\calH_d)$ is spherically-local iff $\Lambda_F A\Lambda_G\in\calK(\calH_d)$ for every disjoint pair of closed dyadic cubes $F,G$ in $\bS^{d-1}$.    
\end{thm}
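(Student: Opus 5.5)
The forward implication is immediate from \cref{thm:geometric characterization of locality}: closed dyadic cubes are closed subsets of $\bS^{d-1}$, so if $A$ is spherically-local then $\Lambda_F A\Lambda_G\in\calK(\calH_d)$ for every disjoint pair of them. The plan for the converse is to reduce to the criterion of \cref{thm:geometric characterization of locality}. Take arbitrary disjoint closed sets $F,G\subset\bS^{d-1}$ and show $\Lambda_F A\Lambda_G\in\calK(\calH_d)$ using only the dyadic hypothesis.

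Since $F,G$ are compact and disjoint, $\delta:=\dist(F,G)>0$. Choose a generation $k$ so large that the diameter of every cube in $D_k$ (at most $\sqrt{d}\,2^{-k}$) is less than $\delta/3$. Let $\calF\subset D_k$ be the cubes meeting $F$ and $\calG\subset D_k$ the cubes meeting $G$. Both are finite, since $D_k$ is finite by \cref{rem:dyadic cubes}. For $Q\in\calF$ and $Q'\in\calG$, the triangle inequality gives $\dist(Q,Q')\ge\delta-2\delta/3>0$, so $Q\cap Q'=\varnothing$. The hypothesis then yields $\Lambda_QA\Lambda_{Q'}\in\calK(\calH_d)$.

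It remains to assemble these pieces, and this bookkeeping is where care is needed. The cubes in $D_k$ overlap on their boundaries, so $\Lambda_Q$ for $Q\in\calF$ are not mutually orthogonal. We can still use that the cubes of $\calF$ cover $F$. Each $x$ with $\hat x\in F$ lies in some $\Lambda_Q$ with $Q\in\calF$. Since all these projections are diagonal in the position basis, the operator $S_F:=\sum_{Q\in\calF}\Lambda_Q$ is diagonal with integer entries between $1$ and $|\calF|$ on $\im\Lambda_F$. Hence $\Lambda_F=\Lambda_F S_F^{-1}S_F$, where $S_F^{-1}$ denotes the bounded diagonal inverse restricted to $\im\Lambda_F$ and all these operators commute. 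So we may write
\[
\Lambda_F=\sum_{Q\in\calF}D_F\Lambda_Q,
\]
where $D_F:=\Lambda_F S_F^{-1}$ is a bounded diagonal operator. Similarly,
\[
\Lambda_G=\sum_{Q'\in\calG}\Lambda_{Q'}D_G.
\]
(Alternatively, one could avoid overlaps by refining to half-open disjoint pieces $Q\setminus\bigcup_{\text{earlier}}Q$. But those pieces are not closed dyadic cubes, so the multiplier trick is cleaner.)

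Combining the two decompositions,
\[
\Lambda_FA\Lambda_G=\sum_{Q\in\calF}\sum_{Q'\in\calG}D_F\,(\Lambda_QA\Lambda_{Q'})\,D_G .
\]
This is a finite sum of compact operators multiplied on both sides by bounded operators, hence compact. \cref{thm:geometric characterization of locality} then gives that $A$ is spherically-local.

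The only step requiring real attention is this overlap issue on cube boundaries. I would verify explicitly that $S_F\ge\Lambda_F$ pointwise, which ensures that the diagonal inverse is bounded by $1$ on $\im\Lambda_F$.
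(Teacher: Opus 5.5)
Your argument is correct, but it takes a different route from the paper. The paper only says the proof is ``the same as'' that of \cref{thm:geometric characterization of locality}. That is, one reruns the commutator argument: approximate $\rho(f)$ by $\sum_k f(x_k)\Lambda_{F_k}$ over a fine partition of $\bS^{d-1}$ into cube pieces, control the adjacent pairs in norm by the oscillation of $f$, and observe that compactness is invoked only for non-adjacent pieces. Those pieces lie inside disjoint closed dyadic cubes, so the dyadic hypothesis already suffices.

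You instead use \cref{thm:geometric characterization of locality} as a black box and prove a purely geometric fact: the dyadic hypothesis implies the hypothesis for all disjoint closed $F,G$. You do this with a finite cover by generation-$k$ cubes of diameter below $\dist(F,G)/3$, and a diagonal multiplier $\Lambda_F S_F^{-1}$ that absorbs the boundary overlaps.

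Your route is modular. It avoids re-inspecting the earlier proof, where the partition pieces are really half-open cubes rather than closed dyadic cubes, so one must silently pass to closures. It also shows the argument works for any family of closed sets that gives finite covers of $\bS^{d-1}$ by members of arbitrarily small diameter. The paper's route is shorter on the page and goes straight to the commutator criterion, but it leaves to the reader the check that the earlier proof uses only cube-shaped test sets.

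The only point you leave implicit is the trivial case where $F$ or $G$ is empty. There $\dist(F,G)$ is undefined, but $\Lambda_F A\Lambda_G=0$.
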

\begin{proof}
    The proof is the same as \cref{thm:geometric characterization of locality}.
\end{proof}

\subsection{Index formulae}

Let us compute the $K$-groups of the spherically-local algebra $\calL_d$. For those with nontrivial $K$-groups, we discuss their index formulae.

\begin{prop}\label{prop:K-groups calculation of local algebra}
The $K$-groups of the algebra $\calL_d$ are
\eq{
    K_0(\calL_d)
    =\begin{cases}
        0 & \text{$d$ is odd} \\
        \ZZ & \text{d is even}
    \end{cases},\quad
    K_1(\calL_d) =\begin{cases}
        \ZZ & \text{$d$ is odd} \\
        0 & \text{d is even}
    \end{cases}\,.
}
Every element in $K_0(\calL_d)$ is the class $[P]_0$ of a projection $P$ in $\calP(\calL_d)$ and every element in $K_1(\calL_d)$ is the class $[U]_1$ of a unitary $U$  in $\calU(\calL_d)$.
\end{prop}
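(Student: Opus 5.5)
The plan is to use Paschke duality through the characterization \cref{eq:dual characterization of local algebra}, $\calL_d=\calD_\rho(C(\bS^{d-1}))$. By \cref{lem:maximal ideal space algebra generated by Xis}, $\rho$ is faithful and $\im\rho$ contains no non-zero compacts, so $\rho$ is an ample representation. Paschke's theorem then gives isomorphisms $K_j(\calD_\rho(C(X)))\cong K^{j+1}(X)=KK^{j+1}(C(X),\CC)$ for the compact metrizable space $X=\bS^{d-1}$ \cite{paschke1981k,higson2000analytic}. Stated concretely, $K_j(\calL_d)\cong \Ext_{j+1}(C(\bS^{d-1}))$, the $K$-homology of the sphere in degree $j+1$. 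For the sphere one has $K^0(\bS^{d-1})\cong K^1(\bS^{d-1})\cong\ZZ$ only after removing the point class. So I must be careful: Paschke duality for the unital algebra $C(X)$ with a unital representation yields \emph{reduced} $K$-homology, and $\tilde K_\ast(\bS^{d-1})$ is $\ZZ$ in degree $d-1$ and $0$ otherwise.

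I would also verify that bookkeeping by a sanity check in $d=1$ and $d=2$. For $d=1$ the algebra $\calL_1$ is the $\Lambda$-local algebra, with $K_1=\ZZ$ given by the index of $\Lambda U\Lambda+\Lambda^\perp$ and $K_0=0$ \cite{ChungShapiro2023}. For $d=2$ the algebra $\calL_2$ is the essential commutant of $L$, with $K_0=\ZZ$ and $K_1=0$ \cite{chung2024topological}. These cases fix the parity shift. In general, $K_j(\calL_d)\cong\ZZ$ iff $j+1\equiv d-1 \pmod 2$, i.e.\ $j\equiv d\pmod 2$, which matches the statement.

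As an independent route, and to make the generators explicit, I would use the Calkin-algebra description. Let $\pi:\calB\to\calQ$ be the quotient map. Then $\calL_d/\calK=\pi(\calL_d)$ is the relative commutant of $\pi(\rho(C(\bS^{d-1})))$ in $\calQ$. Since $K_\ast(\calK)\to K_\ast(\calL_d)$ is zero (the map factors through the properly infinite algebra $\calB$), the six-term sequence reduces $K_\ast(\calL_d)$ to a kernel inside $K_\ast$ of the commutant in the Calkin algebra. I regard this as a secondary check rather than the main proof.

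The generators come from the index pairings with the Dirac projection and Dirac phase of \cref{defn:Dirac phase and projection}. For $d$ even, the map $P\mapsto\findex(P L_d P+P^\perp)$ on the augmented space should realize the isomorphism $K_0(\calL_d)\to\ZZ$, and a projection with index one can be written down explicitly, for instance via a Bott/Chern-type symbol. For $d$ odd, $U\mapsto\findex(\Lambda_d U\Lambda_d+\Lambda_d^\perp)$ should similarly realize $K_1(\calL_d)\to\ZZ$.

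For the last sentence of the statement (every class is represented by an honest projection or unitary in $\calL_d$ rather than in matrices over it), I would use that $\calL_d\cong\calL_{d,N}\cong M_N(\calL_d)$ by \cref{prop:re-dimerization}. That isomorphism is implemented by a unitary, so any element of $\calP(M_N(\calL_d))$ or $\calU(M_N(\calL_d))$ transports back to $\calL_d$. I must check that this transport preserves $K$-classes. I would argue via the fact that $\calL_d$ contains isometries $S$ with $SS^\ast\in\calL_d$, for instance conical shifts coming from re-dimerization, which implement stabilization inside $\calL_d$. The formal differences $[P]-[Q]$ then become single projections, since $\calL_d$ is properly infinite: $\Id=[\Id]_0$ is zero or absorbing and $K_0$ classes of full projections exhaust $K_0$. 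For $K_1$, properly infinite unital algebras satisfy $K_1(\calA)=\calU(\calA)/\calU_0(\calA)$ by Cuntz.

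The main obstacle is this proper-infiniteness step. It amounts to showing that $\Id\in\calL_d$ is properly infinite, i.e.\ that there are two isometries in $\calL_d$ with orthogonal ranges summing to at most $\Id$. The re-dimerization unitary with $N=2$ provides exactly such $S_1,S_2$ once I check that $\delta_x\mapsto \delta_q\otimes e_r$ composed with the coordinate injections is spherically local. I expect this to follow from the compactness of $D_j$ shown in the proof of \cref{prop:re-dimerization}.
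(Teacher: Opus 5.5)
Your computation of the groups follows the paper's route. The paper also writes $\calL_d=\calD_\rho(C(\bS^{d-1}))$ and applies Paschke duality. It handles the unital bookkeeping by passing to $C_0(\RR^{d-1})$, whose unitization is $C(\bS^{d-1})$, and then uses Bott periodicity; this is exactly your reduced $K$-homology of $\bS^{d-1}$.

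For the representability statement you depart from the paper. The paper uses \cref{lem:projection MvN amplification} (Voiculescu's theorem) to get $\Id_n\sim_0\Id$ and $[\Id]_0=0$. It then compresses $P\oplus Q^\perp$, respectively $U$, by some $V\in M_{1,n}(\calL_d)$ with $V^\ast V=\Id_n$ and $VV^\ast=\Id$, and checks $[VUV^\ast]_1=[U]_1$ by an explicit unitary conjugation. Proper infiniteness of $\Id$ in $\calL_d$ plus Cuntz's theorems is a legitimate substitute. However, the witness you propose for proper infiniteness fails for $d\ge 2$.

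\textbf{The witness fails.} For $d\ge2$ no integer matrix with $|\det M|=2$ is a positive multiple of $\Id$, so $\varphi(u)=Mu/\|Mu\|$ moves some $u\in\bS^{d-1}$. Your isometries $S_r:=U^\ast(\,\cdot\,\otimes e_r)$ act by $S_r\delta_q=\delta_{Mq+r}$. They send far-out lattice points with direction near $u$ to points with direction near $\varphi(u)$. Take small disjoint closed caps $I\ni u$ and $J\ni\varphi(u)$. Then $\Lambda_J S_r\Lambda_I$ has infinite rank, so $S_r\notin\calL_d$ by \cref{thm:geometric characterization of locality}.

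The compactness of $D_j$ cannot rescue this. $D_j$ compares $\widehat X_j$ with $U^\ast(\widehat Y_j\otimes\Id_N)U$, where $\widehat Y=\varphi(\widehat X)$. \Cref{prop:re-dimerization} gives $U\calL_dU^\ast=\calL_{d,N}$ only because $C^\ast(\widehat Y)=C^\ast(\widehat X)$. It does not say that $U$ intertwines $\widehat X_j$ with $\widehat X_j\otimes\Id_N$ modulo compacts. This is also why transport along the abstract isomorphism $\calL_d\cong M_N(\calL_d)$ is not automatically class-preserving, as you suspected.

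\textbf{The repair.} Use the perfect-hypercube re-dimerization $M=2\cdot\Id$, $N=2^d$. For it the paper shows $K_j=\widehat X_j-U^\ast(\widehat X_j\otimes\Id)U\in\calK$. Hence $[S_r,\widehat X_j]=-K_jS_r\in\calK$. So the $S_r$, $r\in\{0,1\}^d$, are isometries in $\calL_d$ with $\sum_r S_rS_r^\ast=\Id$. This gives proper infiniteness. It also gives $(2^d-1)[\Id]_0=0$, hence $[\Id]_0=0$ once $K_0(\calL_d)$ is known to be torsion-free. Alternatively, split $\ZZ^d$ into two spherically-proper sets and use \cref{prop:spherically-proper projections equivalent to identity}.

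Once repaired, your route trades Voiculescu's theorem for explicit (even real) lattice reshufflings plus Cuntz's general results. The paper's lemma buys something else: $P\oplus\Id_m\sim_0P$ for every $P$ with $P\rho(a)P$ non-compact. It reuses this later for bulk-non-trivial projections.

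Finally, your side remark that $K_\ast(\calK)\to K_\ast(\calL_d)$ vanishes because it ``factors through $\calB$'' has the factorization backwards. Since $\calK\subset\calL_d\subset\calB$, it is the map into $K_\ast(\calB)$ that factors through $K_\ast(\calL_d)$. The vanishing is true, but it needs an isometry in $\calL_d$ with corange of rank one. For example, the shift along one lattice ray starting at the origin commutes with every $\widehat X_j$. Since you use this only as a check, it does not affect your main line.
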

\begin{proof}
From \cref{eq:dual characterization of local algebra}, we have $\calL_d=\calD_\rho(C(\bS^{d-1}))$.
Then
\eq{
    K_1(\calD_\rho(C(\bS^{d-1}))) \cong K^0(C_0(\RR^{d-1})) \cong K^0(S^{d-1}\CC) =\begin{cases}
        \ZZ & \text{$d$ is odd} \\
        0 & \text{d is even}
    \end{cases}
}
where in the first isomorphism, we use the definition of $K$-homology groups (\cite[Definition 5.2.1]{higson2000analytic}) and the fact that the $C(\bS^{d-1})$ is the unitization of $C_0(\RR^{d-1})$; in the second isomorphism, we use that $C_0(\RR^{d-1})\cong S^{d-1}\CC$; and in the third isomorphism, we repeatedly use the Bott periodicity (with $S$ being the suspension) to reduce the calculations to the cases of $K^1(\CC)$ or $K^0(\CC)$, which are given by
\eq{
K^1(\CC)=0,\quad K^0(\CC)=\ZZ\,.
}
Similarly, we also have
\eq{
    K_0(\calD(C(\bS^{d-1}))) \cong K^1(C_0(\RR^{d-1})) = K^1(S^{d-1}\CC)=\begin{cases}
        0 & \text{$d$ is odd} \\
        \ZZ & \text{d is even}
    \end{cases}\,.
}

We proceed to the second part of the claim. The argument follows from \cite[Proposition 5.1.4 and Remark 5.1.5]{higson2000analytic}. Let $[P]_0-[Q]_0$ be an element in $K_0(\calL_d)$, where $P\in\calP_n(\calL_d)$ and $Q\in\calP_m(\calL_d)$. Using $[\Id_m]_0=0$ from \cref{lem:projection MvN amplification}, it follows that $-[Q]_0 = [Q^\perp]_0$. Thus $[P]_0-[Q]_0 = [P\oplus Q^\perp]_0$. 
Using $\Id_{n+m}\sim_0 \Id$ from \cref{lem:projection MvN amplification}, there exists $V\in M_{1,n+m}(\calL_d)$ such that $\Id_{n+m}=V^\ast V$ and $\Id = VV^\ast$. Let $T:= V(P\oplus Q^\perp)V^\ast$. Then $T\in \calP(\calL_d)$ and $[T]_0=[P\oplus Q^\perp]_0=[P]_0-[Q]_0$.

 Suppose an element $[U]_1\in K_1(\calL_d)$ is represented by $U\in\calU_n(\calL_d)$ for some $n\in\NN$. We need to find $T\in\calU(\calL_d)$ such that $[T]_1=[U]_1$. The proof technique follows \cite[Remark 5.1.5]{higson2000analytic} and \cite[Lemma 2.11]{lee2007note}. Using \cref{lem:projection MvN amplification}, we have $\Id_n\sim_0\Id$ and hence there exists an element $V\in M_{1,n}(\calL_d)$ such that $\Id_n = V^\ast V$ and $\Id = VV^\ast$. Consider the elements
    \eq{
    W= \begin{bmatrix}
        1 \\
        0 \\
        \vdots\\
        0
    \end{bmatrix} V\in M_n(\calL_d),\quad S = \begin{bmatrix}
        W & \Id_n-WW^\ast \\
        0  & W^\ast
    \end{bmatrix}\in M_{2n}(\calL_d)\,.
    }
    It is straightforward to check that $S\in \calU_{2n}(\calL_d)$ and
    \eq{
    WUW^\ast + \Id_n-WW^\ast &= \begin{bmatrix}
        VUV^\ast & 0 \\
        0 & \Id_{n-1}
    \end{bmatrix}\in \calU_n(\calL_d) \\ S\begin{bmatrix}
        U & 0 \\
        0 & \Id_n
    \end{bmatrix} S^\ast &= \begin{bmatrix}
        WUW^\ast + \Id_n-WW^\ast & 0 \\
        0 & \Id_n
    \end{bmatrix}\in \calU_{2n}(\calL_d)\,.
    }
    Thus we have
    \eq{
    [VUV^\ast]_1 = [WUW^\ast + \Id_n-WW^\ast]_1 = [S (U\oplus \Id_n)S^\ast]_1 = [S]_1 + [U]_1 + [S^\ast]_1 = [U]_1
    }
    and $VUV^\ast \in \calU(\calL_d)$ is the sought-after operator.
\end{proof}

For an operator $A\in\calB(\calH_d)$ and a positive integer $m\in\NN$, let us for convenience write 
\eql{\label{eq:notation tensor identity}
A_{(m)}:=A\otimes \Id_m
}
which is an operator on $\calB(\calH_d\otimes \CC^m)\cong \calB(\calH_d)\otimes \CC^m \cong M_m(\calB(\calH_d))$.

There are natural index pairings of $K_0(\calX_d)\times K_1(\calL_d)\to \ZZ$ and of $K_1(\calX_d)\times K_0(\calL_d)\to \ZZ$. The pairings are bilinear maps given by
\eql{\label{eq:0 bilinear index map}
K_0(\calX_d)\times K_1(\calL_d)\ni ([P]_0,[U]_1)\mapsto \findex(PU_{(m)}P+P^\perp) \in \ZZ
}
for all $P\in\calP_m(\calX_d)$, $m\in\NN$ and $ U\in\calU(\calL_d)$, and
\eql{\label{eq:1 bilinear index map}
K_1(\calX_d)\times K_0(\calL_d)\ni ([U]_1,[P]_0)\mapsto \findex(P_{(m)}U P_{(m)} +P_{(m)}^\perp) \in \ZZ
}
for all $U\in\calU_m(\calX_d)$, $m\in\NN$ and $ P\in\calP(\calL_d)$.
The pairings come from the treatments of $K$-groups of $\calL_d$ as the $K$-homology groups $C(\bS^{d-1})\cong\calX_d$. See \cite[Chapter 7.2]{higson2000analytic} for more detail. 

\begin{rem}
    In the index pairing of $K_0(\calX_d)$ with $K_1(\calL_d)$, we only consider element in $K_1(\calL_d)$ of the form $[U]_1$ where $U\in\calU(\calL_d)$. This suffices by \cref{prop:K-groups calculation of local algebra}. Similarly, in the index pairing of $K_1(\calX_d)$ with $K_0(\calL_d)$, we only consider element in $K_0(\calL_d)$ of the form $[P]_0$ where $P\in\calP(\calL_d)$.
    The Fredholm index formulae make sense by construction.
    Indeed, for the first pairing above, we have $[\tensorid{U}{m},P]\in\calK$, and $P\tensorid{U^\ast}{m}P+P^\perp$ is the parametrix for $PU_{(m)}P+P^\perp$, which is therefore Fredholm.
    Analogously, for the second pairing, the operator $P_{(m)}U P_{(m)} +P_{(m)}^\perp$ is Fredholm.
    \end{rem}

Let $k=\floor{d/2}$. Let us first discuss the case when $d\in 2\NN$. Consider the Dirac phase $L_d$ \cref{eq:dirac phase} for which the class $[L_d]_1$ generates $K_1(\calX_d)\cong \ZZ$, see \cite[Proposition 1]{SCHULZBALDES2023125519}. The index pairing induces the index homomorphism
\eql{\label{eq:K0 Ld index homomorphism}
K_0(\calL_d)\ni [P]_0 \mapsto \findex( P_{(2^{k-1})} L_d P_{(2^{k-1})}+ P_{(2^{k-1})}^\perp) \in \ZZ \,.
}
For $d\in 2\NN+1$, we consider the Dirac projection $\Lambda_d$ \cref{eq:dirac projection} for which the class $[P]_0$ generates the non-trivial summand in $K_0(\calX_d)\cong \ZZ\oplus \ZZ$. Then the index pairing induces the index homomorphism
\eql{\label{eq:K1 Ld index homomorphism}
K_1(\calL_d)\ni [U]_1 \mapsto \findex(\Lambda_d U_{(2^k)}\Lambda_d+\Lambda_d^\perp) \in \ZZ\,.
}
In fact, more is true.

\begin{prop}\label{prop:index and K-groups}
    The homomorphisms \cref{eq:K0 Ld index homomorphism} and \cref{eq:K1 Ld index homomorphism} are isomorphisms.
\end{prop}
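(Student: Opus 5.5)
The plan is to reduce the statement to two facts. First, the index homomorphisms are the Kasparov/Paschke pairings with a generator of the relevant $K$-group of $\calX_d\cong C(\bS^{d-1})$. Second, under the Paschke-duality identification $K_*(\calL_d)\cong K^{*+1}(C(\bS^{d-1}))$ used in \cref{prop:K-groups calculation of local algebra}, pairing with a generator is an isomorphism onto $\ZZ$. Since both source and target are $\ZZ$ by \cref{prop:K-groups calculation of local algebra}, it suffices to show two things: that each map is a group homomorphism, and that it sends some element to $\pm1$. Indeed, a surjective homomorphism $\ZZ\to\ZZ$ is automatically an isomorphism.

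The homomorphism property comes from additivity of the Fredholm index. For $P\oplus Q$ we get a direct sum of Fredholm operators. For the $K_1$ case we use
\[
\Lambda_d U V\Lambda_d+\Lambda_d^\perp=(\Lambda_d U\Lambda_d+\Lambda_d^\perp)(\Lambda_d V\Lambda_d+\Lambda_d^\perp)+\text{compact},
\]
where the compact term comes from $[\Lambda_d,U_{(2^k)}]\in\calK$ (Dirac locality equals spherical locality). The maps are well defined on classes by homotopy invariance of the index, together with compatibility with amplification $P\mapsto P\oplus 0$, $U\mapsto U\oplus\Id$. Via \cref{lem:abstract spherical locality isomorpihc} and the remark that the choice of Clifford representation is immaterial, we may compute in any convenient model of the spherical algebra.

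Surjectivity is the heart of the proof, and I expect it to be the main obstacle. One clean route is purely $K$-homological. Under Paschke duality $K_{*}(\calD_\rho(C(\bS^{d-1})))\cong K^{*+1}(C(\bS^{d-1}))$, and the displayed pairings are exactly the index pairings between $K$-homology and $K$-theory of $C(\bS^{d-1})$ (\cite[Chapter 7.2]{higson2000analytic}). The reduced $K$-homology of $\bS^{d-1}$ in the nontrivial degree is $\ZZ$, and its pairing with the Bott generator ($[L_d]_1$ for $d$ even, the Dirac projection $\Lambda_d$ modulo the trivial summand for $d$ odd, cf.\ \cite{SCHULZBALDES2023125519}) is perfect, since it is detected by the fundamental class.

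To avoid relying purely on abstract duality, I would back this up with an explicit element of index one. In $d$ even, take $P:=$ the spectral projection of a suitable Dirac-type operator built from a dual Clifford representation. In the simplest case $d=2$ this is a rank-one modification of $\chi_{\NN}$ on radial shells. The more robust approach is to use the lemma of Prodan--Schulz-Baldes \cite{PSB_2016}: a Bott-type projection $P=\tfrac12(\Id+\sum_j \widehat X_j\otimes\tilde\Gamma_j)$ pairs with $L_d$ to give index $\pm1$, and it can then be compressed into $\calP(\calL_d)$ via the $V$-trick of \cref{prop:K-groups calculation of local algebra} (re-dimerization, \cref{prop:re-dimerization}, absorbs the fiber $2^{k-1}$). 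For $d$ odd, similarly take $U:=\exp$-type unitary, e.g.\ the Cayley-transform unitary $\Lambda_d^\perp+ \widehat{L}\Lambda_d$ constructed from the dual Clifford generators; its index against $\Lambda_d$ is computed by the $\bS^{d-1}$ degree (equivalently the Chern character) to be $\pm1$.

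The delicate points are twofold. The first is checking that the candidate is in $\calL_d$: this holds because all its ingredients lie in $\calX_d\otimes M_n$, which commutes with $\widehat X_j$. The second is computing the index exactly $\pm1$ rather than merely nonzero. For the latter I would fall back on the known result that the pairing of the fundamental $K$-homology class of $\bS^{d-1}$ with the Bott element is one.
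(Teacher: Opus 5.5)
The explicit generators you propose for surjectivity do not work, and the reason is precisely the property you invoke to place them in $\calL_d$. Your Bott-type projection $P=\tfrac12(\Id+\sum_j\widehat X_j\otimes\tilde\Gamma_j)$ and your Cayley-type unitary both lie in $M_n(\calX_d)$. Since $\calX_d$ is commutative, and the Clifford matrices of $L_d$ and $\Lambda_d$ act on a separate tensor factor, the amplified candidate commutes \emph{exactly} with $L_d$, resp.\ $\Lambda_d$. Hence $P_{(2^{k-1})}L_dP_{(2^{k-1})}+P_{(2^{k-1})}^\perp=P_{(2^{k-1})}L_d+P_{(2^{k-1})}^\perp$ is unitary, and so is $\Lambda_dU_{(2^k)}\Lambda_d+\Lambda_d^\perp$. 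Both indices therefore vanish. Compressing with the $V$-trick does not change the $K$-class, so it does not change the index either. This is the same mechanism as in the paper's $d=2$ example with $\Lambda_{I_\pm}$.

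Your candidates are also in the wrong degree. For $d$ even, $\bS^{d-1}$ is an odd sphere, so $[P]_0$ is a multiple of $[\Id]_0$, which vanishes in $K_0(\calL_d)$. For $d$ odd, $K_1(C(\bS^{d-1}))=0$.

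The underlying confusion is between two different objects. The $K$-theory of $C(\bS^{d-1})$ consists of the Bott classes $[L_d]_1$, $[\Lambda_d]_0$, which one pairs \emph{against}. By contrast, $K_*(\calL_d)$ is, under Paschke duality, the $K$-homology of the sphere. For the same reason, `falling back' on the pairing of the fundamental class with the Bott element computes the index of a $K$-homology class, not of your candidate.

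A genuine generator must have non-zero compact commutators with the $\widehat X_j$. For $d=1$, the bilateral shift $S$ gives $\findex(\Lambda_1S\Lambda_1+\Lambda_1^\perp)=\pm1$. In general, take the Fermi projection (even $d$) or the flattened chiral block (odd $d$) of a finite-hopping lattice Dirac model built from the translations $S_j$ with a suitable mass. It is exponentially local, hence spherically local, and it pairs to $\pm1$ by the Prodan--Schulz-Baldes index theorem. So the momentum-space Bott element is the right explicit generator, not the position-space one.

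Once the explicit construction is discarded, what remains is your abstract route, and that is essentially the paper's proof. Paschke duality identifies $K_1(\calL_d)\cong K^0(\calA_d)$ and $K_0(\calL_d)\cong K^1(\calA_d)$ with $\calA_d=C_0(\RR^{d-1})$, and the two maps become pairings with the Dirac generators. The paper then obtains perfectness from the universal coefficient theorem. Since $K_*(\calA_d)$ is free, the $\Ext$ term vanishes, so the index map $K^j(\calA_d)\to\mathrm{Hom}(K_j(\calA_d),\ZZ)$ is an isomorphism; evaluating on $[L_d]_1$, resp.\ $[\Lambda_d]_0$, finishes.

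Your justification via the fundamental class of $\bS^{d-1}$ is an acceptable substitute for the UCT step: some $K$-homology class pairs to $\pm1$ with the Bott generator, and a surjective endomorphism of $\ZZ$ is bijective. But this step, not an explicit element, is what must carry surjectivity. Your homomorphism and well-definedness checks are correct, but they are not where the difficulty lies.
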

\begin{proof}
    Denote $\calA_d:= C_0(\RR^{d-1})$ for convenience.
    The index homomorphisms \cref{eq:K0 Ld index homomorphism} and \cref{eq:K1 Ld index homomorphism} originate from the pairing between $K$-theory group $K_j(\calA_d)$ and the $K$-homology $K^j(\calA_d)$ that lead to the bilinear maps \cref{eq:1 bilinear index map} and \cref{eq:0 bilinear index map} respectively.
    Indeed, these follow from noting $K_1(\calL_d)=K_1(\calD_\rho(\widetilde{\calA}_d))=K^0(\calA_d)$ and $K_0(\calL_d)=K_0(\calD_\rho(\widetilde{\calA}_d))=K^1(\calA_d)$; and that $K_1(C(\bS^{d-1}))\cong K_1(\calA_d)$, and $K_0(\calA_d)\cong \widetilde{K}_0(C(\bS^{d-1}))$. 
    To show that the index homomorphisms are in fact isomorphisms, we invoke the universal coefficient theorem \cite[Theorem 7.6.1]{higson2000analytic}: for each $j\in\{0,1\}$ there is a natural short exact sequence
\[
0\longrightarrow \Ext^{1}_{\ZZ}\bigl(K_{j-1}(\calA_d),\ZZ\bigr)
\longrightarrow
K^{j}(\calA_d)
\longrightarrow
\mathrm{Hom}_{\ZZ}\bigl(K_{j}(\calA_d),\ZZ\bigr)
\longrightarrow 0,
\]
where the right-hand map is the index homomorphism. The $K$-groups of $\calA_d$ are
\[
K_{0}(\calA_d)=
\begin{cases}
\ZZ & d-1\text{ odd}\\
0 & d-1\text{ even}
\end{cases},
\qquad
K_{1}(\calA_d)=
\begin{cases}
0 & d-1\text{ odd}\\
\ZZ & d-1\text{ even}
\end{cases}.
\]
In particular, the non-zero group $K_{j}(\calA_d)$ is free abelian of rank $1$, hence
$\Ext^{1}_{\ZZ}(K_{j-1}(A),\ZZ)=0$ in the relevant parity.
Therefore, we have the index isomorphisms 
\eq{
K^j(\calA_d)\cong \mathrm{Hom}_{\ZZ}(K_{j}(A),\ZZ).
}
Evaluating at the generators for $K_j(\calA_d)$, which are exactly the Dirac phase and projection when $d$ is even and odd respectively (see \cite[Proposition 1]{SCHULZBALDES2023125519}), give the desired result.
\end{proof}

\begin{lem}\label{lem:projection MvN amplification}
    Let $P\in\calP_n(\calL_d)$ be a spherically-local projection such that $P\rho^n(a)P$ is never compact unless $a=0$. Then
    \eq{
    P\oplus \Id_m \sim_0 P,\quad \forall m\in\NN
    }
    i.e., there exists an element $V\in M_{n,n+m}(\calL_d)$ such that $P\oplus \Id_m = V^\ast V$ and $P=VV^\ast$. In particular, we have $\Id_n\sim_0 \Id_m$ for all $n,m\in\NN$, and $[\Id]_0=0$ in $K_0(\calL_d)$.
\end{lem}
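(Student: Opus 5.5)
The plan is to prove $P\oplus\Id_m\sim_0 P$ by building the partial isometry $V$ from an explicit decomposition of $\calH_d$ into infinitely many spherically-local pieces. Here $\rho^n:=\rho\otimes\Id_n$ acts on $\calH_d\otimes\CC^n$. The idea is that the hypothesis ``$P\rho^n(a)P$ is never compact unless $a=0$'' says that $\im P$ is infinite-dimensional in every direction at infinity. So one can carve out of $\im P$, in a way that essentially commutes with $\rho$, a copy of $\calH_d\otimes\CC^m$ carrying the representation $\rho\otimes\Id_m$ up to compacts. This is a Voiculescu-type absorption statement: compare \cite[Theorem 3.4.6]{higson2000analytic}, already used in \cref{lem:abstract spherical locality isomorpihc}.

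First, consider the compressed representation $\sigma:C(\bS^{d-1})\to\calB(\im P)$ given by $\sigma(a):=P\rho^n(a)P|_{\im P}$. Because $P\in\calP_n(\calL_d)$, $\sigma$ is multiplicative modulo compacts. Hence $\pi\circ\sigma$ is a unital $\ast$-homomorphism into the Calkin algebra of $\im P$, and by hypothesis it is injective. By Voiculescu's theorem in its Calkin-algebra (Busby) form, $\sigma$ is ample: we may add to it any unital representation $\tau$ of $C(\bS^{d-1})$ whose image contains no nonzero compacts. Concretely, there is a unitary $W:\im P\oplus(\calH_d\otimes\CC^m)\to\im P$ such that
\[
W(\sigma(a)\oplus\rho^m(a))W^\ast-\sigma(a)\in\calK\qquad(a\in C(\bS^{d-1})).
\]
The summand $\rho^m$ is allowed here because, by \cref{lem:maximal ideal space algebra generated by Xis}, it is faithful and contains no nonzero compacts. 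Extend $W$ by zero to obtain $V\in M_{n,n+m}(\calB(\calH_d))$ with $V^\ast V=P\oplus\Id_m$ and $VV^\ast=P$.

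Second, check that $V\in M_{n,n+m}(\calL_d)$, meaning $V\rho^{n+m}(a)-\rho^n(a)V\in\calK$ for all $a$. Write $\rho^{n+m}(a)(P\oplus\Id_m)=(\sigma(a)\oplus\rho^m(a))+\text{(off-diagonal)}$. The off-diagonal part is $P^\perp\rho^n(a)P$ and its adjoint, both of which are compact because $P$ is spherically-local. The intertwining relation above then gives the claim, using $V=V(P\oplus\Id_m)$ and $V=PV$. Only $a=x_k$ needs checking, since these generate $C(\bS^{d-1})$.

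Finally, derive the consequences. Taking $P=\Id_n$, the hypothesis holds by \cref{lem:maximal ideal space algebra generated by Xis}, so $\Id_n\sim_0\Id_{n+m}$ for all $n,m$. Transitivity then gives $\Id_n\sim_0\Id_m$. In $K_0$, $[\Id]_0+[\Id]_0=[\Id_2]_0=[\Id]_0$, so $[\Id]_0=0$.

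The main obstacle is the first step: justifying the absorption unitary when the compressed map $\sigma$ is only a representation modulo compacts rather than an honest $\ast$-representation. I would first try to lift $\pi\circ\sigma$ to a genuine representation $\tilde\sigma$ on $\im P$ with $\tilde\sigma-\sigma$ compact. Since $C(\bS^{d-1})$ is nuclear, Choi--Effros gives a completely positive lift, and Stinespring dilation combined with Voiculescu's theorem for extensions (the Busby invariant $\pi\circ\sigma$ is injective and unital) should yield such a $\tilde\sigma$ up to unitary equivalence. The standard Voiculescu statement \cite[Theorem 3.4.6]{higson2000analytic} then applies to $\tilde\sigma$ and $\tilde\sigma\oplus\rho^m$, since both are ample.
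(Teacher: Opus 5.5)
Your steps one to three follow the paper's proof. The paper also forms the Busby map $\varphi_P\colon a\mapsto\pi(P\rho^n(a)P)$ and uses the hypothesis to see that it is a unital injective extension. It then applies Voiculescu's theorem to absorb the trivial extension coming from $\rho^m$, and cites \cite[Lemma 5.1.2]{higson2000analytic} to turn the unitary equivalence into a Murray--von Neumann equivalence. Your second step is a correct hands-on verification of that lemma, and your derivation of $\Id_n\sim_0\Id_m$ and $[\Id]_0=0$ fills in what the paper leaves implicit.

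The problem is your last paragraph: the repair you propose would fail, and no repair is needed.

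\textbf{Why the lift fails.} A unital $\ast$-representation $\tilde\sigma$ of $C(\bS^{d-1})$ on $\im P$ itself with $\tilde\sigma(a)-\sigma(a)$ compact exists precisely when the extension $\varphi_P$ is trivial. That fails exactly in the cases the paper cares about. For $d=2$, $\tilde\sigma(z)$ would be a unitary on $\im P$ congruent modulo compacts to $PLP|_{\im P}$, which forces $\findex(PLP+P^\perp)=0$. So no projection with nonzero Chern index admits such a lift.

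\textbf{Why Choi--Effros and Stinespring do not rescue it.} Together they only give a representation on a strictly larger space $\im P\oplus\calH'$. Combined with Voiculescu, this yields triviality of $\varphi_P\oplus\varphi'$ for some other extension $\varphi'$, i.e.\ invertibility in $\Ext$. It does not give triviality of $\varphi_P$ itself.

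\textbf{Why no lift is needed.} Voiculescu's theorem in extension form applies directly to unital injective Busby maps. To see it, let $B:=\pi^{-1}(\varphi_P(C(\bS^{d-1})))$. This is a separable unital algebra containing $\calK(\im P)$, acting identically on $\im P$. Apply the general Voiculescu theorem to $B$ together with the representation $\rho^m\circ\varphi_P^{-1}\circ\pi$, which annihilates $\calK(\im P)$; injectivity of $\varphi_P$ is what makes $\varphi_P^{-1}$ available. This produces exactly the unitary $W$ of your first step. Keep that step as stated and drop the lifting detour.
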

\begin{proof}
    Let $P\in\calP_n(\calL_d)$ be spherically-local and $P\rho^n(a)P$ be never compact unless $a=0$. Since $[P,\rho^n(a)]\in\calK(\calH_d)$ for all $a\in C(\bS^{d-1})$ by construction, we may consider the $\ast$-homomorphism 
    \eq{
    \varphi_P:C(\bS^{d-1})\to \calQ(\im P)
    } 
    defined by $a\mapsto \pi(P\rho^n(a)P)$, see \cite[Definition 2.7.7]{higson2000analytic}. Since $P\rho^n(a)P$ is never compact unless $a=0$, it follows that $\varphi_P$ is injective, and hence is an extension of the compact operators by $C(\bS^{d-1})$, see \cite{arveson1977notes}. Since $\varphi_{\Id_m}$ is a trivial extension, it follows from Voiculescu's Theorem that $\varphi_P\oplus\varphi_{\Id_m}$ is unitarily equivalent to $\varphi_P$. By \cite[Lemma 5.1.2]{higson2000analytic}, the projections $P\oplus \Id_m$ and $P$ are Murray-von Neumann equivalent.
\end{proof}

\section{Bulk non-triviality}\label{sec:Bulk non-triviality}
In this section we describe a novel constraint which, roughly speaking, corresponds to the system being a non-trivial insulator ``in all space dimensions'', and hence, the system is a ``bulk'' non-trivial system. This term is to be contrasted with edge systems, where part of space is trivial. Usually one considers edge systems corresponding to a division into two half-spaces (say the upper and lower half planes in $d=2$). However, it turns out, that for the purposes of topological classification, a more fine-grained notion of infinitely-extending to all directions of space is necessary. We have first introduced this concept in the context of one-dimensional classification in \cite{ChungShapiro2023}. Below we extend it to all higher dimensions in a compatible way.

In zero dimensions, the idea is as follows. We are interested in insulators, and WLOG we assume Hamiltonians are gapped at $E_F=0$, i.e., invertible operators. Then clearly we cannot deform Hamiltonians which only have spectrum above $0$ to those that only have spectrum below $0$ without closing the gap or breaking self-adjointness. Moreover, this remains true if there is only \emph{essential} spectrum above $0$ or only essential spectrum below $0$. We call such insulators \emph{non-trivial}. Thus, in higher space dimensions (as we shall describe momentarily) we employ this idea in all space directions.


In \cite{ChungShapiro2023} we identified that such a condition should go hand in hand with locality, in the following sense. If our locality condition in $d=1$, using the operator $\Lambda = \chi_\NN(X_1)$ (which is equal to $\Lambda_{\{1\}}$ in the notation of \cref{eq:Lambda_F projection definition} since $\bS^0=\{-1,1\}$), meant that the connection pieces between left and right halves of the system should be compact, then it turned out that to have honestly bulk systems, we should ask the system to be a non-trivial insulator, at least \emph{essentially}, in each half of the system separately. This led us to
\begin{defn}[bulk non-triviality in $d=1$]\label{def:Lambda non triviality} A $\Lambda$-local self-adjoint projection $P$ is called bulk-non-trivial with respect to $\Lambda$ iff $\sigma_{\mathrm{ess}}(\Lambda P \Lambda:\im\Lambda\to\im\Lambda) = \Set{0,1}$ and $\sigma_{\mathrm{ess}}(\Lambda^\perp P \Lambda^\perp:\im\Lambda^\perp\to\im\Lambda^\perp) = \Set{0,1}$.
\end{defn}
It is important in the above definition that the operator $\Lambda P\Lambda$ is viewed in the subspace $\im\Lambda$ (and similarly $\Lambda^\perp P\Lambda^\perp$ be viewed in $\im\Lambda^\perp$).

We also established the chiral systems, those whose Fermi projection is of the form \eql{\label{eq:chiral projections}
P = \frac12\br{\begin{bmatrix}
    0 & U^\ast \\ U & 0
\end{bmatrix}+\Id}
} for some unitary $U$, are automatically bulk-non-trivial with respect to $\Lambda\oplus \Lambda$ iff $[U,\Lambda]\in\calK$. 

What, then, is the appropriate notion of bulk non-triviality in higher dimensions? As we discussed above, using the Dirac projection for $\Lambda$ leads to unconvincing results.

Instead, we follow spherical locality to formulate 
\begin{defn}[bulk non-triviality]\label{defn:bulk non triviality}
    Let $P\in\calP(\calL_d)$ be a spherically-local projection. We call $P$ bulk-non-trivial iff for all non-empty open subsets $I$ of $\bS^{d-1}$, both the operators $\Lambda_I P\Lambda_I$ and $\Lambda_I P^\perp \Lambda_I$ are \emph{not} compact. We denote the set of bulk-non-trivial projections to be $\calP^\NT(\calL_d)$.
\end{defn}

\revadd{The condition is deliberately formulated direction-by-direction at infinity. It rules out projections which are non-trivial only along a lower-dimensional subset of directions, or which become essentially empty or essentially full in some open solid angle. The examples below show that these non-bulk configurations create path-components not predicted by the strong Kitaev table; hence bulk non-triviality is not a cosmetic assumption but part of the definition of the correct bulk phase space.}

\begin{prop}
    If $d=1$, then \Cref{defn:bulk non triviality} and \Cref{def:Lambda non triviality} are equivalent.
\end{prop}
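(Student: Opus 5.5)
In $d=1$ we have $\bS^0=\Set{-1,1}$ with the discrete topology, so the non-empty open subsets are $\Set{1}$, $\Set{-1}$ and $\Set{-1,1}$. By \cref{eq:Lambda_F projection definition}, $\Lambda_{\Set{1}}=\chi_{\NN\setminus\Set{0}}(X_1)$, which differs from $\Lambda=\chi_\NN(X_1)$ by the rank-one projection $\delta_0\otimes\delta_0^\ast$. Likewise $\Lambda_{\Set{-1}}$ differs from $\Lambda^\perp$ by a finite-rank operator, and $\Lambda_{\Set{-1,1}}$ differs from $\Id$ by one. Compactness conditions are insensitive to finite-rank perturbations, so I would first replace $\Lambda_{\Set{1}}$ by $\Lambda$ and $\Lambda_{\Set{-1}}$ by $\Lambda^\perp$ throughout. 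Spherical locality with respect to $\widehat{X}_1$ coincides with $\Lambda$-locality by the low-dimensional example after \cref{def:spherically-local}. The case $I=\Set{-1,1}$ is then implied by the two one-point cases: if $\Lambda P\Lambda$ is non-compact, so is $P$, because $\Lambda P\Lambda$ is a compression of $P$. Hence \Cref{defn:bulk non triviality} reduces to the statement that $\Lambda P\Lambda$, $\Lambda P^\perp\Lambda$, $\Lambda^\perp P\Lambda^\perp$ and $\Lambda^\perp P^\perp\Lambda^\perp$ are all non-compact.

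The key step is the following. Let $A:=\Lambda P\Lambda$, viewed on $\im\Lambda$. Since $[P,\Lambda]\in\calK$,
\eq{
A^2-A=\Lambda P\Lambda P\Lambda-\Lambda P\Lambda=\Lambda P[\Lambda,P]\Lambda\in\calK.
}
So $A$ is a self-adjoint essential projection, i.e.\ its image in the Calkin algebra of $\im\Lambda$ is a projection. Therefore $\sigmaess(A)\subseteq\Set{0,1}$.

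Next I would relate each essential-spectrum point to a compactness condition. We have $1\in\sigmaess(A)$ iff the Calkin image $\pi(A)\neq0$, i.e.\ iff $A$ is non-compact. Similarly, $0\in\sigmaess(A)$ iff $\pi(\Id_{\im\Lambda}-A)\neq0$. Here $\Id_{\im\Lambda}-A=\Lambda P^\perp\Lambda$ restricted to $\im\Lambda$, so this holds iff $\Lambda P^\perp\Lambda$ is non-compact. One should note that the essential spectrum of a non-zero essential projection equals the spectrum of its Calkin image, which is exactly the set of points $\Set{0,1}$ where the corresponding spectral projections are non-zero. Compactness of an operator on $\im\Lambda$ is the same as compactness of its extension by zero to $\calH_1$.

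Thus $\sigmaess(\Lambda P\Lambda|_{\im\Lambda})=\Set{0,1}$ iff both $\Lambda P\Lambda$ and $\Lambda P^\perp\Lambda$ are non-compact. The same argument applied with $\Lambda^\perp$ handles the other half-space, and the two definitions coincide. There is no real obstacle; the only points needing care are the finite-rank discrepancy at the origin and the correct reading of the essential spectrum inside $\im\Lambda$, which is why the restriction to the subspace matters.
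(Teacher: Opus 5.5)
Your proposal is correct and follows essentially the same route as the paper. Both arguments use $[P,\Lambda]\in\calK$ to see that $\Lambda P\Lambda$ is an essential projection on $\im\Lambda$, and then match $1\in\sigmaess$ with non-compactness of $\Lambda P\Lambda$ and $0\in\sigmaess$ with non-compactness of $\Lambda P^\perp\Lambda$. Your treatment of the finite-rank discrepancy at the origin and of the open set $I=\bS^0$ supplies details the paper passes over silently.
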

\begin{proof}
Let $P\in\calP(\calL_1)$. Suppose $P$ is bulk-non-trivial in the sense of \cref{defn:bulk non triviality}. Recall the notation $\Lambda=\chi_\NN(X_1)=\Lambda_{\{1\}}$, and $\Lambda^\perp = \Lambda_{\{-1\}}$. Then the operators 
\eql{\label{eq:1d projections to half line}
\Lambda P\Lambda,\Lambda^\perp P\Lambda^\perp,\Lambda P^\perp\Lambda,\Lambda^\perp P^\perp \Lambda^\perp
}
are not compact. 
In one-dimension, since $[\Lambda,P]\in\calK(\calH_1)$, it follows that the operators \cref{eq:1d projections to half line} are essential projections, i.e., they are orthogonal projections in the Calkin algebra. Indeed, take the operator $\Lambda P\Lambda$ for instance, we have
\eq{
\pi(\Lambda P\Lambda)^2 = \pi(\Lambda P\Lambda P\Lambda) = \pi(\Lambda [P,\Lambda] P\Lambda + \Lambda P\Lambda) = \pi(\Lambda P\Lambda),\quad \pi(\Lambda P\Lambda)^\ast = \pi(\Lambda P\Lambda)
}
where $\pi$ is the quotient map to the Calkin algebra.
In particular, $\sigmaess(\Lambda P\Lambda:\im\Lambda\to\im\Lambda)$ must be one of $\{0\},\{1\}$ or $\{0,1\}$. 
Since $\Lambda P\Lambda$ is not compact, we rule out the case $\{0\}$. Suppose it is $\{1\}$. Then $\sigmaess(\Lambda-\Lambda P\Lambda:\im\Lambda\to\im\Lambda)=\{0\}$. This leads to the contradiction that $\Lambda P^\perp\Lambda$ is compact.
Thus it must be the case that $\sigmaess(\Lambda P\Lambda:\im\Lambda\to\im\Lambda)=\{0,1\}$.
Analogously, one can verify $\sigmaess(\Lambda^\perp P\Lambda^\perp:\im\Lambda^\perp\to\im\Lambda^\perp)=\{0,1\}$.

Conversely, suppose $P$ is bulk-non-trivial in the sense of \cref{def:Lambda non triviality}. It follows from definition that $\Lambda P\Lambda$ and $\Lambda^\perp P\Lambda^\perp$ are not compact. Suppose $\Lambda P^\perp \Lambda$ is compact. Then $\sigmaess(\Lambda P^\perp \Lambda:\im\Lambda\to\im\Lambda)=\{0\}$ and hence 
\eq{
\sigmaess(\Lambda P\Lambda:\im\Lambda\to\im\Lambda)=
\sigmaess(\Lambda-\Lambda P^\perp \Lambda:\im\Lambda\to\im\Lambda)=\{1\}\,.
}
This contradicts that we assume $\sigmaess(\Lambda P\Lambda:\im\Lambda\to\im\Lambda)=\{0,1\}$. Analogously, one verify that $\Lambda^\perp P^\perp \Lambda^\perp$ is not compact, or else $\sigmaess(\Lambda^\perp P\Lambda^\perp:\im\Lambda^\perp\to\im\Lambda^\perp)=\{1\}$, leading to a contradiction.
\end{proof}

\begin{prop}[Chiral systems are automatically bulk-non-trivial] If $P$ is a projection of the form \cref{eq:chiral projections} for some unitary $U$ which is spherically-local, then $P$ obeys \Cref{defn:bulk non triviality}.
\end{prop}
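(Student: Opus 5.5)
The plan is a direct computation: the diagonal blocks of $P$ and $P^\perp$ are the scalar $\tfrac12$, and this survives compression to any open cone. Throughout, $P$ acts on $(\calH_d\otimes\CC^N)\oplus(\calH_d\otimes\CC^N)$, and the sector projections in \Cref{defn:bulk non triviality} are understood as $\Lambda_I\oplus\Lambda_I$, in line with the one-dimensional convention $\Lambda\oplus\Lambda$ recalled before \cref{eq:chiral projections}.

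First I check that $P\in\calP(\calL_d)$. Since $U$ is unitary, $P$ is a self-adjoint projection. For spherical locality, write $\widehat{X}_j\otimes\Id_N\otimes\Id_2$ in block form as $\operatorname{diag}(\widehat{X}_j\otimes\Id_N,\widehat{X}_j\otimes\Id_N)$. The commutator of $P$ with it is then the block matrix
\[
\tfrac12\begin{bmatrix}0 & [U^\ast,\widehat{X}_j\otimes\Id_N]\\ [U,\widehat{X}_j\otimes\Id_N] & 0\end{bmatrix}.
\]
Both nonzero entries are compact, because $U$ is spherically local and $\calL_{d,N}$ is a $\ast$-algebra, so $U^\ast$ is spherically local too.

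Next, fix a non-empty open $I\subseteq\bS^{d-1}$ and compute
\[
(\Lambda_I\oplus\Lambda_I)P(\Lambda_I\oplus\Lambda_I)=\tfrac12\begin{bmatrix}\Lambda_I & \Lambda_I U^\ast\Lambda_I\\ \Lambda_I U\Lambda_I & \Lambda_I\end{bmatrix}.
\]
If this were compact, compressing it by the diagonal projection $\Id\oplus 0$ would keep it compact. That compression is $\tfrac12\Lambda_I\oplus 0$, so $\Lambda_I$ itself would be compact. For $P^\perp=\tfrac12\big(\Id-\begin{bmatrix}0&U^\ast\\U&0\end{bmatrix}\big)$ the compression has the same diagonal $\tfrac12\Lambda_I$, with only the off-diagonal signs flipped, so the same argument applies.

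It therefore remains to show that $\Lambda_I$, a projection, is not compact, i.e.\ has infinite rank. The only real point here, mildly the main obstacle, is that the open cone $C_I=\{x\in\ZZ^d\setminus\{0\}:\hat x\in I\}$ contains infinitely many lattice points. To see this, pick $z\in I$ and $\varepsilon>0$ such that the spherical cap of radius $\varepsilon$ around $z$ lies in $I$. For $R$ large, the Euclidean ball of radius $\varepsilon R/2$ centred at $Rz$ lies in the real cone over that cap. Once $\varepsilon R/2>\sqrt d$, this ball contains a lattice point. Letting $R\to\infty$ along a sequence with disjoint balls produces infinitely many distinct lattice points in $C_I$. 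Hence $\operatorname{rank}\Lambda_I=\infty$, and both $\Lambda_I P\Lambda_I$ and $\Lambda_I P^\perp\Lambda_I$ are non-compact, so $P$ is bulk-non-trivial in the sense of \Cref{defn:bulk non triviality}. Spherical locality of $U$ is used only to place $P$ in $\calP(\calL_d)$.
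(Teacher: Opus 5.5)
Your proof is correct and follows the paper's argument essentially step by step. You use the same block-commutator computation to place $P$ in the spherically-local algebra, and the same compression by $\Id\oplus 0$ to reduce non-compactness of $\Lambda_I P\Lambda_I$ and $\Lambda_I P^\perp\Lambda_I$ to the infinite rank of $\Lambda_I$; your explicit lattice-point argument for $|C_I|=\infty$ just fills in a step that the paper asserts without proof.
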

\begin{proof}
Let $\calH:=\calH_d\otimes \CC^N$ and view the chiral projection $P$ in
\cref{eq:chiral projections} as acting on the graded Hilbert space
$\calH\oplus \calH\cong \calH\otimes \CC^2$.
We will write $\Lambda_I$ also for the projection $\Lambda_I\oplus \Lambda_I$
(on $\calH\oplus\calH$); this is consistent with the convention that $\Lambda_I$
acts only on the $\ell^2(\ZZ^d)$ factor and as the identity on internal degrees.

First we establish the locality of $P$. Fix $j\in\{1,\dots,d\}$ and write $\widehat{X}_j$ for $\widehat{X}_j\otimes \Id_N$
(on $\calH$) and also for $\widehat{X}_j\oplus \widehat{X}_j$ (on $\calH\oplus\calH$).
Since $U$ is spherically-local, $[U,\widehat{X}_j]\in\calK(\calH)$ and hence also
$[U^\ast,\widehat{X}_j]\in\calK(\calH)$. Using \cref{eq:chiral projections},
\[
\Bigl[P,\widehat{X}_j\Bigr]
=\frac12
\begin{bmatrix}
0 & [U^\ast,\widehat{X}_j]\\
[U,\widehat{X}_j] & 0
\end{bmatrix}\in \calK(\calH\oplus\calH).
\]
Thus $P\in\calP(\calL_d)$ (and similarly $P^\perp\in\calP(\calL_d)$).

Next, we show that $\Lambda_I$ has infinite rank for every non-empty open $I\subset\bS^{d-1}$. Indeed, there are infinitely many $x\in \ZZ^d$ for which $\hat{x}:=x/\|x\|\in I$ if $I$ is non-empty and open subset of $\bS^{d-1}$.

Finally, we show that $\Lambda_I P\Lambda_I$ and $\Lambda_I P^\perp\Lambda_I$ are not compact.
Using \cref{eq:chiral projections} and the fact that $\Lambda_I$ acts diagonally on
$\calH\oplus\calH$,
\[
\Lambda_I P\Lambda_I
=\frac12
\begin{bmatrix}
\Lambda_I & \Lambda_I U^\ast \Lambda_I\\
\Lambda_I U \Lambda_I & \Lambda_I
\end{bmatrix},
\qquad
\Lambda_I P^\perp\Lambda_I
=\frac12
\begin{bmatrix}
\Lambda_I & -\Lambda_I U^\ast \Lambda_I\\
-\Lambda_I U \Lambda_I & \Lambda_I
\end{bmatrix}.
\]
Let $E_{11}:=\Id_\calH\oplus 0$ be the projection onto the
first chiral block. We have
\[
E_{11}(\Lambda_I P\Lambda_I)E_{11}=\frac12\,\Lambda_I,\qquad
E_{11}(\Lambda_I P^\perp\Lambda_I)E_{11}=\frac12\,\Lambda_I.
\]
Since $\Lambda_I$ is not compact by the above, neither $\Lambda_I P\Lambda_I$ nor
$\Lambda_I P^\perp\Lambda_I$ can be compact.

\end{proof}

\begin{lem}\label{lem:alternative bulk non triviality}
    Let $P$ be a spherically-local projection. Then $P$ is bulk-non-trivial iff $P\rho(a)P$ and $P^\perp \rho(a)P^\perp$ are never compact unless $a=0$, where $\rho$ is the isomorphism from \Cref{lem:maximal ideal space algebra generated by Xis}.
\end{lem}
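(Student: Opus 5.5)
The plan is to prove both directions by working in the Calkin algebra, where $\pi(P)$ commutes with $\pi(\rho(a))$ for every $a\in C(\bS^{d-1})$ because $P\in\calL_d$. The link between the open sets $I$ in \Cref{defn:bulk non triviality} and the continuous functions $a$ comes from the spectral measure identity $\Lambda_F=E(F)$ established in the proof of \Cref{thm:geometric characterization of locality}. Throughout we use that for a positive operator $T$, $T$ is compact iff $T^{1/2}$ is compact, and that $\pi$ is a $\ast$-homomorphism.

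\emph{($\Leftarrow$).} Assume $P\rho(a)P$ and $P^\perp\rho(a)P^\perp$ are non-compact for all $a\neq0$, and fix a non-empty open $I\subseteq\bS^{d-1}$. By Urysohn, choose $a\in C(\bS^{d-1})$ with $0\le a\le 1$, $a\neq 0$ and $\supp a\subseteq I$. Then $a\le\chi_I$, hence $\rho(a)\le\Lambda_I$ as operators. Conjugating by $P$ gives
\[
0\le P\rho(a)P\le P\Lambda_I P=(\Lambda_I P)^\ast(\Lambda_I P).
\]
If $\Lambda_I P\Lambda_I=(\Lambda_IP)(\Lambda_IP)^\ast$ were compact, then $\Lambda_IP$ would be compact, so $P\Lambda_IP$ would be compact. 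The operator inequality $0\le P\rho(a)P\le P\Lambda_IP$ then makes $P\rho(a)P$ compact, since in the Calkin algebra $0\le\pi(P\rho(a)P)\le 0$. This contradicts the hypothesis. The same argument applied to $P^\perp$ shows $\Lambda_IP^\perp\Lambda_I$ is non-compact.

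\emph{($\Rightarrow$).} Assume $P$ is bulk-non-trivial and suppose, for contradiction, that $P\rho(a)P$ is compact for some $a\neq0$.
\begin{enumerate}
\item Replace $a$ by $|a|^2$. Since $\pi(P)$ commutes with $\pi(\rho(a))$, we have $\pi(P\rho(|a|^2)P)=\pi(P\rho(a)P)^\ast\pi(P\rho(a)P)=0$. So we may assume $a\ge0$.
\item Choose a non-empty open $I$ and $c>0$ with $a\ge c$ on $\overline I$. Take $b\in C(\bS^{d-1})$ with $0\le b\le1$, $b=1$ on $\overline I$, and $b\le a/c$. Then $0\le P\rho(b)P\le c^{-1}P\rho(a)P$, so $P\rho(b)P$ is compact.
\item Since $\chi_I\le b$, we get $\Lambda_I\le\rho(b)$, hence $P\Lambda_IP\le P\rho(b)P$ is compact. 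Therefore $\Lambda_IP$ is compact, and so $\Lambda_IP\Lambda_I$ is compact, contradicting bulk non-triviality.
\end{enumerate}
The case of $P^\perp$ is identical.

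The only delicate point is the first step of ($\Rightarrow$): passing from a possibly non-positive $a$ to a positive one. This relies on essential commutation, $P\rho(a)^\ast\rho(a)P\equiv(P\rho(a)P)^\ast(P\rho(a)P)$ modulo compacts, which holds precisely because $P$ is spherically-local. The rest consists of monotonicity arguments using $\rho(\chi_I)=\Lambda_I$ and positivity of $\rho$.
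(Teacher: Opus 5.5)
Your argument is correct, but it follows a different route from the paper's.

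\textbf{The paper's route.} The paper argues algebraically in both directions. If $\Lambda_IP\Lambda_I\in\calK$, it takes $f\neq0$ vanishing off $I$ and uses $[P,\rho(f)]\in\calK$, $\Lambda_{I^c}\rho(f)=0$ and $[\Lambda_I,\rho(f)]=0$ to get $P\rho(f)P\equiv\Lambda_IP\Lambda_I\rho(f)$ modulo compacts. Conversely, if $P\rho(f)P\in\calK$, it picks $g$ with $fg=1$ on an open $I$ where $|f|\ge c>0$, and writes $\Lambda_IP\Lambda_I=\Lambda_IP\rho(f)\rho(g)\Lambda_I$.

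\textbf{Your route.} You use the operator order instead, together with the fact that $T$, $T^\ast T$ and $TT^\ast$ are compact simultaneously, applied to $T=\Lambda_IP$. This buys two things.
\begin{enumerate}
\item Your ($\Leftarrow$) direction needs no spherical locality at all.
\item In ($\Rightarrow$), locality enters exactly once, in the reduction $a\mapsto|a|^2$. That step cannot be dropped. Take $d=1$, $a=\pm1$ at $\pm1$, and $P$ the projection onto the closed span of $\{\delta_n+\delta_{-n}\}_{n\ge1}$. Then $P\rho(a)P=0$, although $\Lambda_IP\Lambda_I,\Lambda_IP^\perp\Lambda_I\notin\calK$ for every non-empty open $I$.
\end{enumerate}
The paper also needs locality at the corresponding point, but uses it tacitly. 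It passes from $P\rho(f)P\in\calK$ to ``$P\rho(f)$ compact'', which requires $P\rho(f)P^\perp=P[\rho(f),P^\perp]\in\calK$; your argument makes this dependence explicit. In return, the paper's division trick handles complex $f$ without squaring.

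\textbf{Two minor remarks.}
\begin{enumerate}
\item With the convention $\widehat{X}\delta_0=e_1\delta_0$, the operators $\rho(\chi_I)$ and $\Lambda_I$ differ by the projection onto $\delta_0$ when $e_1\in I$. So your inequality $\rho(a)\le\Lambda_I$ in ($\Leftarrow$) holds only up to a rank-one term. This is harmless, since you conclude in the Calkin algebra.
\item Your Step 2 is superfluous: once $a\ge0$, the bound $\chi_I\le a/c$ already gives $\Lambda_I\le c^{-1}\rho(a)$.
\end{enumerate}
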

\begin{proof}
    Suppose $P$ is not bulk-non-trivial. There exists an open subset $I$ of $\bS^{d-1}$ such that $\Lambda_I P\Lambda_I\in\calK(\calH_d)$. There exists a non-zero continuous function $f\in C(\bS^{d-1})$ such that $f(z)=0$ for $z\in I^c$; e.g., we can apply Urysohn's lemma to the sets $\{z_0\}$ and $I^c$ for some $z_0\in I$. Consider the decomposition
    \eq{
    P\rho(f)P=(\Lambda_I+\Lambda_{I^c})P\rho(f)P(\Lambda_I+\Lambda_{I^c})\,.
    }
    Recall that $[P,\rho(f)]\in\calK(\calH_d)$ since $P$ is spherically-local, and hence we may interchange $P$ and $\rho(f)$ modulo a compact operator. Also $[\Lambda_I,\rho(f)]=0$ using \cref{eq:lambda is spectral measure}. 
    Thus $\Lambda_IP\rho(f)P\Lambda_I = \Lambda_IP\Lambda_I \rho(f)+K$ holds for some $K\in\calK(\calH_d)$ and is compact by assumption. And the terms involving $\Lambda_{I^c}$ vanish since $\Lambda_{I^c}\rho(f)=0$.

    Suppose $P\rho(f)P\in\calK(\calH_d)$ for some non-zero $f\in C(\bS^{d-1})$. Then, there exists a point $z\in \bS^{d-1}$ such that $f(z)\neq 0$; furthermore, there exists an open neighborhood $I\subset \bS^{d-1}$ of $z$ such that $|f(z)|\geq c>0$ for some positive number $c$.
    We can construct a continuous function $g$ such that $g(z)=1/f(z)$ for $z\in I$. Then
    \eq{
    \Lambda_IP\rho(f)\rho(g)\Lambda_I =  \Lambda_IP\rho(fg)\Lambda_I=\Lambda_IP\Lambda_I
    }
    is compact since $P\rho(f)$ is compact by assumption. This leads to a contradiction.
\end{proof}

\begin{example}[Systems that fail bulk non-triviality in $d=1$ violate the \nameref{table:Kitaev}] First we present a one-dimensional example on $\ell^2(\ZZ)$:
Consider the two Hamiltonians \eq{
    H_1 &:= \Lambda - \Lambda^\perp \\
    H_2 &:= -\Lambda + \Lambda^\perp\,.
} The corresponding Fermi projections are given by
\eq{
    P_1 &= \Lambda^\perp \\
    P_2 &= \Lambda\,.
} Writing them in the grading $\ell^2(\ZZ)\cong\im(\Lambda^\perp)\oplus\im(\Lambda)$ yields 
\eq{
    P_1 &= \Id\oplus 0 \\
    P_2 &= 0\oplus\Id
} whence it is clear that neither is bulk-non-trivial according to \cref{def:Lambda non triviality}. Moreover, it is impossible to interpolate between $P_1$ and $P_2$ in a path that essentially commutes with $\Lambda$ (i.e., a local path), as we showed in \cite[Example 5.6]{ChungShapiro2023}. This stands in contradiction to the top left cell of the \nameref{table:Kitaev}, which stipulates that class A $d=1$ systems are path-connected.
\end{example}

\begin{example}[Systems that fail bulk non-triviality in $d=2$ violate the \nameref{table:Kitaev}] 
Let
\eq{
I_{+}&:=\{(\cos\theta,\sin\theta)\in\bS^{1}:\ -\tfrac{\pi}{2}<\theta<\tfrac{\pi}{2}\}
=\{ \omega\in\bS^1:\ \omega_1>0\},\\
I_-&:=\bS^1\setminus \overline{I_+}=\{\omega\in\bS^1:\ \omega_1<0\}.
}
Define two projections on $\ell^2(\ZZ^2)$ by
\[
P_1:=\Lambda_{I_+},\qquad P_2:=\Lambda_{I_-}.
\]
These are self-adjoint projections and are spherically-local (in fact $[P_i,\widehat X_j]=0$ for
$j=1,2$) since they are multiplication operators in the position basis depending only on the
direction $\hat x=x/\|x\|$.


Let $L := \exp(\ii \arg(X_1+\ii X_2))$ be the Laughlin flux insertion unitary operator. Since $[P_i,L]=0$ for $i=1,2$, it follows that $\findex (\PP_1 L)=0=\findex(\PP_2 L)$, i.e., the two systems have the same Chern numbers. Thus the index invariant alone would not distinguish them in the space of spherically-local projections. Suppose such homotopy exists, then, by \cite[Proposition 2.2.6]{rordam2000introduction}, there exists $U\in\calU(\calL_d)$ such that $P_1=U^\ast P_2 U$. Take any non-empty closed arc $J\subset I_+$ for which $\Lambda_J$ has infinite rank, and let $K:=\overline{I_-}$. Since $U$ is spherically-local and that $J$ and $K$ are disjoint closed subsets of $\bS^{1}$, using \cref{thm:geometric characterization of locality}, we have $\Lambda_{K}U\Lambda_J\in\calK$. Then 
\eq{
\Lambda_J=P_1\Lambda_J = U^\ast P_2 U \Lambda_J = U^\ast P_2 \Lambda_K U\Lambda_J\in\calK
}
which is a contradiction since $\Lambda_J$ is not compact.

Both $P_1$ and $P_2$ fail bulk non-triviality in the sense of
\Cref{defn:bulk non triviality}. Indeed, take any non-empty open arc $J\subset I_+$.
Then $\Lambda_J\le \Lambda_{I_+}=P_1$ and $\Lambda_J\perp \Lambda_{I_-}=P_2$, so
\[
\Lambda_J P_1^\perp \Lambda_J = 0 \in \calK,
\qquad
\Lambda_J P_2 \Lambda_J = 0 \in \calK.
\]
Thus $P_1,P_2\notin\calP^{\NT}(\calL_2)$ even though their index invariant agrees. Consequently,
if one attempted to classify \emph{all} local class~A $d=2$ projections using only the Chern/Fredholm
index, these “half-space-at-infinity’’ projections would contribute extra path-components not
accounted for by the \nameref{table:Kitaev}; this is precisely why the bulk non-triviality
constraint is necessary. 
\end{example}



\section{The classification of bulk-non-trivial spherically-local projections and unitaries}\label{sec:The classification of bulk-non-trivial spherically-local projections and unitaries}


In this section, we shall prove the following two theorems about classification of unitaries and projections. 

For readability, we separate the proof into two layers.
In \cref{subsec:proofs of complex classification theorems} we show how the classification theorems follow formally from three key
functional-analytic propositions.
The proofs of these propositions, and the decoupling lemmas they depend on, are deferred to \cref{subsec:technical lemmas}.
This allows the reader to first see the logical skeleton of the proof and then consult the
geometric/analytic input as needed.

Recall the notations $\tensorid{A}{m}=A\otimes \Id_m$ defined in \cref{eq:notation tensor identity} and let $k=\floor{d/2}$.

\begin{thm}\label{thm:path-connected components of local unitaries}
    The set of path-connected components of $\calU(\calL_d)$ is given, via bijection, as
    \eq{
    \pi_0(\calU(\calL_d)) \cong \begin{cases}
        \ZZ & d\in2\NN+1 \\
        \Set{0} & d\in2\NN
    \end{cases}\,. 
    }
    If $d\in2\NN+1$, the index is given by 
    \eq{
    \calN:\calU(\calL_d)\ni U \mapsto \findex(\Lambda_d U_{(2^k)}\Lambda_d+\Lambda_d^\perp) \in \ZZ
    }
    where $\Lambda_{d}$ is the Dirac projection defined in \cref{eq:dirac projection}.
\end{thm}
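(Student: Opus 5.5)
The plan is to show that the map $\calU(\calL_d)\ni U\mapsto [U]_1\in K_1(\calL_d)$ induces a bijection on $\pi_0$. By \cref{prop:K-groups calculation of local algebra} and \cref{prop:index and K-groups}, $K_1(\calL_d)\cong\ZZ$ via $\calN$ for $d$ odd and $K_1(\calL_d)=0$ for $d$ even, so this suffices. The map is well defined and constant on path-components. Surjectivity is the second half of \cref{prop:K-groups calculation of local algebra}: every class in $K_1(\calL_d)$ is represented by a unitary in $\calU(\calL_d)$ itself, with no matrix amplification. The real content is therefore injectivity. We must show that if $U\in\calU(\calL_d)$ has $[U]_1=0$ (equivalently $\calN(U)=0$ when $d$ is odd, and automatically when $d$ is even), then $U$ is homotopic to $\Id$ inside $\calU(\calL_d)$, not merely inside some $\calU_n(\calL_d)$.

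First, from $[U]_1=0$ we obtain $n$ and a norm-continuous path in $\calU_{n+1}(\calL_d)$ from $U\oplus\Id_n$ to $\Id_{n+1}$. Second, we compress this path back to a single copy of $\calH_d$. To do so we use the isometry $V\in M_{1,n+1}(\calL_d)$ with $V^\ast V=\Id_{n+1}$ and $VV^\ast=\Id$, which exists by \cref{lem:projection MvN amplification}. Conjugating, $t\mapsto V W_t V^\ast$ is a path in $\calU(\calL_d)$ from $V(U\oplus\Id_n)V^\ast$ to $\Id$. It then remains to connect $U$ to $V(U\oplus\Id_n)V^\ast$ within $\calU(\calL_d)$, which is the genuinely non-stable step.

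For this step I would use a localization-and-compression argument. We choose $V$ adapted to $U$: decompose $\calH_d=\im Q\oplus\im Q^\perp$, where $Q$ is a hyper-spherically-local projection (a diagonal subset of sites, infinite in every open cone, with infinite complement in every cone). Using spherical locality and \cref{thm:geometric characterization of locality}, we first deform $U$ by a path in $\calU(\calL_d)$ to a unitary of the form $U'\oplus\Id$ with respect to this decomposition, modulo compacts. Concretely, we transport $U$ into $\im Q^\perp$ through a spherically-local partial isometry that preserves directions at infinity. Such a partial isometry exists because both subspaces carry a representation of $C(\bS^{d-1})$ without compacts, so Voiculescu's theorem applies as in \cref{lem:abstract spherical locality isomorpihc}. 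We then rotate with the standard $2\times2$ rotation homotopy $\begin{bmatrix}\cos t&-\sin t\\ \sin t&\cos t\end{bmatrix}$, whose entries are spherically-local. Once $U$ is supported (up to the identity) on a proper part $\im Q^\perp$, the complement $\im Q$ provides room to absorb $\Id_n$: the amplification $U\oplus\Id_n$ is unitarily equivalent, via a spherically-local unitary homotopic to $\Id$, to this trivially extended form. The final compact discrepancy is handled by noting that a unitary in $\Id+\calK$ with Fredholm index data zero is connected to $\Id$ through $\Id+\calK\subset\calL_d$, because $\calU(\Id+\calK)$ is path-connected.

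The main obstacle I expect is this compression step: showing that the conjugating unitary relating $U$ to $V(U\oplus\Id_n)V^\ast$ is itself in the identity component of $\calU(\calL_d)$. Since $\calU(\calL_d)$ has nontrivial $\pi_0$ for odd $d$, this cannot be automatic. My first attempt would be to build $V$ explicitly from direction-preserving injections of sites, for example a re-dimerization as in \cref{prop:re-dimerization} followed by inclusion into the first coordinate. Such an injection commutes exactly with each $\widehat{X}_j$ up to compacts, and its associated unitaries are hyper-spherically-local, hence of zero index. I would then check that they lie in the identity component by contracting them within the commutant of $\calX_d$, which is a von Neumann algebra and so has connected unitary group.
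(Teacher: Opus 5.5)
The outer reduction is the same as the paper's: surjectivity via \cref{prop:K-groups calculation of local algebra}, the index via \cref{prop:index and K-groups}, and compressing a stabilized path $W_t\in\calU_{n+1}(\calL_d)$ by an isometry $V\in M_{1,n+1}(\calL_d)$. But the step you flag as the main obstacle is a genuine gap, and none of your mechanisms closes it. What is needed is to move an \emph{arbitrary} $U\in\calU(\calL_d)$, inside $\calU(\calL_d)$, to a unitary that is the identity on the range of a spherically-proper projection.

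Your ``transport and rotate'' step does not achieve this. An internal rotation as in \cref{lem:whitehead between P and P perp} moves a unitary from $\im P$ to an orthogonal, equivalent $\im Q$ only if it \emph{already} equals the identity off $\im P$. A general $U$ acts non-trivially on all of $\calH_d$, so there is no room to rotate into.

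Nor is $U$ block-diagonal modulo compacts for a suitable diagonal $Q$. \cref{thm:geometric characterization of locality} only controls interactions between cones with disjoint closed sets of directions, whereas $\im Q$ and $\im Q^\perp$ both meet every cone infinitely often. Already in $d=1$ the bilateral shift $S\delta_x=\delta_{x+1}$ lies in $\calL_1$, but $[S,\Lambda_F]$ is compact only if $\chi_F$ is eventually constant at $\pm\infty$. No spherically-proper $F$ has this property.

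Your last idea also fails as stated. In general there is \emph{no} unitary conjugating $U$ to $V(U\oplus\Id_n)V^\ast$: the latter has $1$ as an eigenvalue of infinite multiplicity, while e.g.\ $U=-\Id$ does not. You are right that hyper-spherically-local unitaries lie in the identity component, since the commutant $\calX_d'$ is a von Neumann algebra contained in $\calL_d$. However, this only yields stabilized statements, such as $U\oplus\Id\sim_h(SUS^\ast+Q)\oplus\Id$ in $\calU_2(\calL_d)$ for a hyper-spherically-local isometry $S$ with $SS^\ast=Q^\perp$. Removing the $\oplus\Id$ is exactly the non-stable injectivity you are trying to prove. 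No known general fact about properly infinite algebras, such as $\Id\sim\Id\oplus\Id$ from \cref{lem:projection MvN amplification}, supplies it.

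The missing ingredient is \cref{prop:local unitary reduced by geo proper proj}, which is where the real analytic work lies. \cref{lem:localized centers} is built from the cone-decoupling lemma, the annulus lemma and the inclusion--exclusion lemma \cref{lem:turn off elements}. It perturbs $U$ by a small compact to an invertible $G$ and produces a spherically-proper sequence $\{x_k\}$ such that each $G\delta_{x_k}$ is supported on a finite island $R_k$. These islands are pairwise disjoint and have the cone-separation property.

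Next, take an invertible $V$ acting block-diagonally on the islands with $VG\delta_{x_k}=\delta_{x_k}$. It is spherically-local and homotopic to $\Id$ island by island, so $U\sim_h VG$ in $\calG(\calL_d)$, and $VG$ fixes $\im P$ for $P=\Lambda_{\{x_k\}}$. An upper-triangular deformation, the polar part, and the rotation of \cref{lem:whitehead between P and P perp} (using $P\sim P^\perp$ from \cref{prop:spherically-proper projections equivalent to identity}) then give $U\sim_h W=PWP+P^\perp$.

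Once $U$ is pinned in this way, the compression is immediate. Splitting $P^\perp$ into $n+1$ spherically-proper pieces, as in \cref{prop:compress unitary homotopy from matrix algebra back to original algebra}, yields an isometry $V$ with $V(W\oplus\Id_n)V^\ast=W$ \emph{exactly}. So $VW_tV^\ast$ is already the required path in $\calU(\calL_d)$, and no conjugating unitary is needed.
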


\begin{thm}\label{thm:path-connected components of local bulk-non-trivial projections}
    The set of path-connected components of $\calP^\NT(\calL_d)$ is given, via bijection, as
    \eq{
    \pi_0\br{\calP^\NT(\calL_d)} \cong \begin{cases}
        \Set{0} & d\in2\NN+1 \\
        \ZZ & d\in2\NN
    \end{cases}\,. 
    }
    If $d\in2\NN$, the index is given by
    \eq{
    \calN:\calP^\NT(\calL_d)\ni P \mapsto \findex( P_{(2^{k-1})} L_d P_{(2^{k-1})}+ P_{(2^{k-1})}^\perp) \in \ZZ 
    }
    where $L_d$ is the Dirac phase defined in \cref{eq:dirac phase}.
\end{thm}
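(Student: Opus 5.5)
The plan is to reduce both theorems to the $K$-theory computation of \cref{prop:K-groups calculation of local algebra} and \cref{prop:index and K-groups}, and then lift from $K$-theory to $\pi_0$ by a non-stable cancellation argument tailored to $\calL_d$. The index maps $\calN$ are norm-continuous, since the Fredholm index is locally constant. They are surjective because every $K$-class is realized by a single projection or unitary in $\calL_d$ (\cref{prop:K-groups calculation of local algebra}) and \cref{prop:index and K-groups}. So $\calN$ descends to a surjection from $\pi_0$ onto $K_\ast(\calL_d)$. For projections we must in addition check that a representative can be chosen bulk-non-trivial. Replacing $P$ by $V(P\oplus Q)V^\ast$, with $Q$ a fixed bulk-non-trivial index-zero projection and $V$ the isometry of \cref{lem:projection MvN amplification}, should do this without changing the class. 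The real content is injectivity: if $[U]_1=[U']_1$ (resp.\ $[P]_0=[P']_0$ with both bulk-non-trivial), then $U\simeq U'$ in $\calU(\calL_d)$ (resp.\ $P\simeq P'$ in $\calP^\NT(\calL_d)$).

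For unitaries, I would reduce to showing that $U$ with $[U]_1=0$ is homotopic to $\Id$; this suffices since $U'^\ast U$ has trivial class. Triviality of the class gives $U\oplus\Id_n\simeq\Id_{n+1}$ in $\calU_{n+1}(\calL_d)$, and the task is to compress this path back to one copy. Here I would use the isometry $V\in M_{1,n+1}(\calL_d)$ with $V^\ast V=\Id_{n+1}$ and $VV^\ast=\Id$ from \cref{lem:projection MvN amplification}. Conjugating gives a path in $\calU(\calL_d)$ from $V(U\oplus\Id_n)V^\ast$ to $\Id$. It then remains to show $V(U\oplus\Id_n)V^\ast\simeq U$. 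For this I would choose $V$ adapted to $U$, as in the proof of \cref{prop:K-groups calculation of local algebra}: the first component of $V$ should act as the identity on a large ``support region'' of $U$. Concretely, I would first deform $U$, via the localization scheme, so that $U-\Id$ is essentially supported away from an infinite-rank, spherically-proper subspace $\calH'$ on which $U$ is exactly the identity. The extra copies $\Id_n$ can then be absorbed into $\calH'$ by an isometry which is spherically-local because $\calH'$ meets every cone $\Lambda_I$ in infinite rank.

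For projections, the same scheme applies to the unitaries implementing Murray--von Neumann equivalence. Equal $K_0$-classes give $P\oplus\Id_m\oplus 0\sim_0 P'\oplus\Id_m\oplus0$. Bulk-non-triviality of $P,P',P^\perp,P'^\perp$, via \cref{lem:alternative bulk non triviality}, means the extensions $\varphi_P,\varphi_{P^\perp}$ are injective, so Voiculescu's theorem, as in \cref{lem:projection MvN amplification}, absorbs the $\Id_m$ and $0$ summands. We thus get $P\sim_0P'$ and $P^\perp\sim_0P'^\perp$ inside $\calL_d$. Combining the two partial isometries yields a unitary $W\in\calU(\calL_d)$ with $WPW^\ast=P'$. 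The index of $W$ may be nonzero in odd $d$, where it could be corrected by modifying $W$ on $\im P$; in even $d$ it is trivially zero. By the unitary theorem, $W\simeq\Id$, and then $t\mapsto W_tPW_t^\ast$ is a path of projections. Each point of this path is unitarily conjugate to $P$ by a spherically-local unitary, so each is bulk-non-trivial, and the path stays in $\calP^\NT(\calL_d)$.

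The main obstacle is the compression step: showing $V(U\oplus\Id_n)V^\ast\simeq U$ inside $\calL_d$ itself, or equivalently that $\calU(\calL_d)\to K_1$ has connected fibres without stabilization. My first attempt would be the localization lemma: deform $U$ within $\calU(\calL_d)$ so that it equals $\Id$ on $\Lambda_S$, where $S$ is a sparse but direction-dense set of lattice points. Then a spherically-local shift isometry can make room for $\Id_n$ inside $\im\Lambda_S$, and a rotation trick then gives the homotopy.
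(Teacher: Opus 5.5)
Your proposal is correct and follows the paper's proof. Surjectivity comes from padding with a bulk-non-trivial projection of trivial class (the paper uses a spherically-proper $S$) and compressing back with a spherically-local isometry. Injectivity comes from absorbing the stabilization via \cref{lem:projection MvN amplification} applied to $P$ and to $P^\perp$, assembling a conjugating unitary, contracting it using \cref{thm:path-connected components of local unitaries}, and observing that spherically-local unitary conjugation preserves bulk non-triviality. The only cosmetic difference is in odd $d$. The paper fixes a spherically-proper basepoint $P$ and swaps the conjugating unitary for $VUV^\ast+P^\perp$. Your correction on $\operatorname{im}P$ works as $W(VZV^\ast+P^\perp)$, once you note that $P\sim\Id$ for every bulk-non-trivial $P$ when $d$ is odd; this follows from $K_0(\calL_d)=0$ together with \cref{lem:projection MvN amplification}.
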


We define a countable set $\bS^{d-1}_\ZZ\subset \bS^{d-1}$ to be all points in $\bS^{d-1}$ that is equal to some $\hat{x}$ for $x\in \ZZ^{d}$. 

Here, we define some subsets of $\ZZ^d$. For any subset $J$ in $\bS^{d-1}$, we define $C_J\subset \ZZ^d$ to be all points $x\in \ZZ^d$ such that $\hat{x}\in J$. For $r>0$ a positive number, we define $B_r\subset \ZZ^d$ as those points $x\in \ZZ^d$ such that $\|x\|< r$. For $\varphi\in\calH_d$, we define $\supp(\varphi)\subset \ZZ^d$ to be those points $x\in \ZZ^d$ such that $\varphi_x\neq 0$. 

If $S\subset \ZZ^d$, we denote by $\Lambda_S$ the projection operator $\Lambda_S = \sum_{x\in S}\delta_{x}\otimes \delta_x^\ast$.

Let $P$ be a projection and $A$ a bounded operator. We say that $\im P$ \emph{reduces} $A$ if both $\im P$ and $\im P^\perp$ are invariant under $A$.

The general strategy to prove these theorems is a generalization of \cite{chung2025essentially}. For $U\in\calU(\calL_d)$, we want to show that if $\calN (U)=0$ has index 0, or equivalently, its $\Kone$-group $[U]_1=0$ is trivial, then $U\sim_h\Id$ in $\calU(\calL_d)$. The proof consists of two steps. The first one is purely functional analytic involving only the structure of spherical locality. For any $U\in\calU(\calL_d)$, we show that there exists a ``spherically-proper'' projection $\Lambda_E$ where $E\subset \ZZ^d$ such that
\eq{
U\sim_h W=\Lambda_E W \Lambda_E+\Lambda_E^\perp \quad \mathrm{in}\ \calU(\calL_d).
}
In other words, we may deform $U$ to some spherically-local unitary operator $W$ that acts as identity on the vector subspace $\im\Lambda_E^\perp = \szpan \Set{\delta_x | x\in E^{c}}$ with respect to a large, infinite set of lattice points $\Set{x\in\ZZ^d | x\in E^c}$. The infinite subspace $\im\Lambda_E^\perp$ provides us with a leeway to compress the homotopy occurring inside the matrix algebra over $\calL_d$. This brings us to the second step of the proof. If $[U]_1=0$ and hence $[W]_1=0$, then, by construction of $\Kone$-group, there exists $n\in\NN$ such that
\eq{
W\oplus \Id_n \sim_h \Id_{n+1} \quad \mathrm{in}\ \calU_{n+1}(\calL_d).
}
The homotopy $\gamma_t$ occurs in the matrix algebra $\calU_{n+1}(\calL_d)$ and we would like it to be completely inside the original space $\calU(\calL_d)$. To that end, we construct a spherically-local ``re-dimerization'' operator $V\in M_{1,n+1}(\calL_d)$ such that
\eq{
V (W\oplus \Id_n)V^\ast =W,\quad VV^\ast = \Id,\quad V\gamma_t V^\ast\in \calU(\calL_d)
}
which completes the proof. 

The first step is detailed in \cref{prop:local unitary reduced by geo proper proj} while the second step in \cref{prop:compress unitary homotopy from matrix algebra back to original algebra}.

Let us introduce formally the spherically-proper projection.



\begin{defn}[spherically-proper]\label{defn:spherically-proper}
    We call a subset $F\subset \ZZ^d$ spherically-proper iff for every non-empty open subset $I\subset \bS^{d-1}$, the cone $C_I:=\Set{x\in \ZZ^d | x/\|x\|\in I}$ satisfies:
    \eq{
    |F\cap C_I|=\infty,\quad |F^c\cap C_I|=\infty.
    }
    We call a projection $P$ spherically-proper if $P=\Lambda_F$ for some spherically-proper set $F\subset \ZZ^d$.
\end{defn}

\begin{rem}[bulk-non-trivial and spherically-proper]\label{rem:bulk-non-trivial and spherically-proper}
    Let $P=\Lambda_F$ for some $F\subset\ZZ^d$. Then $P$ is spherically-proper iff it is  bulk-non-trivial as in \cref{defn:bulk non triviality}. 
    Indeed, if $\Lambda_F$ is spherically-proper, for $I\subset\bS^{d-1}$ open, then we have $\Lambda_I\Lambda_F\Lambda_I = \Lambda_{C_I\cap F}$ and $\Lambda_I\Lambda_F^\perp\Lambda_I = \Lambda_{C_I\cap F^c}$, both of which have infinite-dimensional range and hence are non-compact. The converse is similar.
\end{rem}

\begin{example}[spherically-proper sets in $d=1$ and $d=2$]\label{ex:shperically-proper}
    For $d=1$, let $F\subset \ZZ$ be the set of all even integers. Then $F$ is spherically-proper. For $d=2$, we pick a single point in each rational ray $C_{\{z\}}$ for $z\in \bS^1_{\ZZ}$, and let $F$ be the set of all these points. Then $F$ is spherically-proper. 
\end{example}

\subsection{The proofs of \cref{thm:path-connected components of local unitaries} and \cref{thm:path-connected components of local bulk-non-trivial projections}}\label{subsec:proofs of complex classification theorems}

We first show how \cref{thm:path-connected components of local unitaries} and \cref{thm:path-connected components of local bulk-non-trivial projections} follow largely from \cref{prop:spherically-proper projections equivalent to identity}, \cref{prop:local unitary reduced by geo proper proj} and \cref{prop:compress unitary homotopy from matrix algebra back to original algebra}.
At this stage these propositions should be viewed as the functional-analytic tools
that implement pinning to a spherically-proper region and compression of stabilized homotopies.
Their proofs are postponed to \cref{subsec:technical lemmas}, where we develop the necessary decoupling lemmas.

\begin{prop}\label{prop:spherically-proper projections equivalent to identity}
    Let $P$ be a spherically-proper projection. Then $P\sim P^\perp \sim \Id$ in $\calL_d$.
\end{prop}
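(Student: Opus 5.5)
The plan is to construct the Murray--von Neumann partial isometries explicitly as spherically-local bijection operators. Here $P\sim\Id$ means there is $V\in\calL_d$ with $V^\ast V=\Id$ and $VV^\ast=P$. Write $P=\Lambda_F$ with $F$ spherically-proper. Since $F^c$ is also spherically-proper by \cref{defn:spherically-proper}, it suffices to prove $\Lambda_F\sim\Id$ for every spherically-proper $F$; then $\Lambda_{F^c}=P^\perp\sim\Id$ as well, and transitivity gives $P\sim P^\perp$. So the goal is a bijection $\sigma:\ZZ^d\to F$ which moves points only by a vanishing angle at infinity, i.e.
\eq{
\norm{\widehat{\sigma(x)}-\hat{x}}\to 0\qquad(\norm{x}\to\infty).
}
Given such a bijection, set $V\delta_x:=\delta_{\sigma(x)}$. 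Then $V^\ast V=\Id$ and $VV^\ast=\Lambda_F$. Moreover,
\eq{
[V,\widehat{X}_j]\delta_x=\br{\hat{x}_j-\widehat{\sigma(x)}_j}\delta_{\sigma(x)},
}
which is $V$ composed with a diagonal operator whose entries tend to zero. This operator is compact, so $V\in\calL_d$ by \cref{def:spherically-local}.

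To build $\sigma$, I will use the dyadic cubes of \cref{rem:dyadic cubes}, or any sequence of finite partitions $\mathcal{Q}_k$ of $\bS^{d-1}$ into Borel pieces of diameter at most $2^{-k}$, each refining the previous one. I choose radii $R_1<R_2<\cdots$ tending to infinity and partition $\ZZ^d$ into the blocks
\eq{
\Set{x: R_k\le\norm{x}<R_{k+1},\ \hat{x}\in Q},\qquad Q\in\mathcal{Q}_k.
}
Inside each block, I match $\ZZ^d$-points to $F$-points in the same block. The constraint is that these matchings must assemble into a global bijection. The difficulty is that the blocks have finitely many points, and the counts of points of $\ZZ^d$ and of $F$ in a block differ. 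I will therefore use a back-and-forth (Hilbert-hotel) enumeration instead of exact block matching. Enumerate $\ZZ^d$ by increasing norm and process the points alternately. At a forward step, take the next unmatched $x$ and assign it an unused $y\in F$ with $\norm{y}\ge\norm{x}$ and $\norm{\hat{y}-\hat{x}}\le\ve(\norm{x})$. At a backward step, take the next unmatched $y\in F$ and assign it an unused $x$ with $\norm{x}\ge\norm{y}$ and $\norm{\hat{x}-\hat{y}}\le\ve(\norm{y})$. Here $\ve(r)\downarrow 0$ is chosen slowly enough that suitable partners always exist.

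Existence of partners follows from spherical properness combined with compactness of $\bS^{d-1}$. For each $k$, cover the sphere by finitely many open balls of radius $2^{-k}$. Each cone over such a ball contains infinitely many points of $F$, and trivially infinitely many points of $\ZZ^d$. Hence at every stage only finitely many points have been used, and unused candidates within angle $2^{-k}$ remain at arbitrarily large norm. I define $\ve$ to equal $2\cdot 2^{-k}$ on the range where scale $k$ is used, switching to scale $k+1$ once all points of norm up to $R_k$ have been processed. Each step then moves the angle by at most the current $\ve$. Because every point is eventually processed in both directions, $\sigma$ is a bijection, and the angular displacement tends to zero.

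The main obstacle is the bookkeeping that makes the displacement of $x$ controlled by its own norm $\norm{x}$, not by the time at which $x$ is processed. A point $x$ matched at a backward step is chosen with $\norm{x}\ge\norm{y}$, so it is large only when the current tolerance is already small. A point matched at a forward step also has a tolerance determined by its norm. I expect that requiring in both steps that the chosen partner have norm at least the current threshold handles this. The result is that every pair $(x,\sigma(x))$ has both norms at least comparable to the scale at which it was matched, which gives the needed $o(1)$ angular bound.
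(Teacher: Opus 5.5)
Your proposal is correct and is essentially the paper's argument. The paper also builds a bijection $\ZZ^d\to F$ with asymptotically vanishing angular displacement: it sends the $k$-th lattice point $y_k$ to a norm-minimal unused point of $F$ in the cone over the $1/k$-cap around $\hat y_k$, and that minimal-norm choice forces surjectivity in place of your back-and-forth. It then takes $V$ to be the induced isometry, checking locality through \cref{thm:geometric characterization of locality} rather than your diagonal-commutator computation.

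The step you flag as the main obstacle is not one. In both your forward and backward steps, the pair $(x,\sigma(x))$ has angular displacement at most $\ve(\min(\norm{x},\norm{\sigma(x)}))$. Since $\sigma$ is a bijection and balls contain finitely many lattice points, $\min(\norm{x},\norm{\sigma(x)})\to\infty$ as $\norm{x}\to\infty$, so the displacement tends to zero. Moreover, any positive tolerance admits partners directly by spherical properness, so the dyadic blocks, the compactness cover and the ``slowly enough'' choice of $\ve$ are unnecessary.
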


\begin{prop}\label{prop:local unitary reduced by geo proper proj}
    Let $U\in\calU(\calL_d)$ be a spherically-local unitary operator. Then there exists a spherically-proper projection $P$ such that 
    \eq{
        U\sim_h W= PWP + P^\perp
    }
    in $\calU(\calL_d)$ for some $W\in\calU(\calL_d)$.
\end{prop}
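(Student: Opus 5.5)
Given $U\in\calU(\calL_d)$, the plan is to build a spherically-proper set $E\subset\ZZ^d$ on which $U$ is essentially trivial, deform $U$ there to the identity, and then take $P=\Lambda_{E^c}$. I would construct $E$ as a sparse union of finite blocks going off to infinity and do the deformation in two stages: first make $U$ block-diagonal, then untwist the diagonal block.

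\textbf{Step 1 (choice of $E$).} Enumerate the countably many closed dyadic cubes of \Cref{rem:dyadic cubes} as $D_1,D_2,\dots$, with each cube repeated infinitely often. Inductively I would choose finite sets $S_n\subset C_{D_n}$ far from the origin, for instance single points $x_n$ or small clusters, together with radii $r_n\to\infty$. The goal is to make $\Lambda_{S_n}U\Lambda_{S_m}$ and $\Lambda_{S_n}U\Lambda_{B_{r_{n-1}}}$ small in norm, say of order $2^{-n}$. This is possible because $U$ is bounded and $\Lambda_{B_r}\to\Id$ strongly: given what has been chosen already, $U\Lambda_{\bigcup_{m<n}S_m}$ is finite rank, so taking $x_n$ far enough out makes $\Lambda_{\{x_n\}}U\Lambda_{\bigcup_{m<n}S_m}$ and its adjoint small. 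Set $E=\bigcup_n S_n$. Because every open $I$ contains some $D_n$, and $D_n$ recurs infinitely often, $|E\cap C_I|=\infty$. Sparseness forces $|E^c\cap C_I|=\infty$. So $E$ is spherically-proper, and $\Lambda_E U\Lambda_{E}$ is block-diagonal up to a compact, norm-small perturbation.

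\textbf{Step 2 (making $U$ almost block-diagonal and untwisting).} Modify the choice so that each $S_n$ is a single point $x_n$ at which $U\delta_{x_n}$ is close to $\lambda_n\delta_{x_n}+(\text{far away})$. This is not automatic, so I would instead use the freedom to conjugate: cut $U$ to $U'=\Lambda_{E^c}U\Lambda_{E^c}+\Lambda_E U\Lambda_E$. The cross terms $\Lambda_E U\Lambda_{E^c}$ are compact, since $E$ is sparse and each $\Lambda_{\{x_n\}}U$ has rank one with small tail. Choosing the decay rates makes them summable in norm, so the cross terms are compact and $U'$ is Fredholm. The straight-line path from $U$ to $U'$ stays in Fredholm-plus-local elements and stays invertible once the cross terms have norm $<1$. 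Polar decomposition then gives a unitary path in $\calU(\calL_d)$, with locality preserved because $\Lambda_E$ commutes with $\widehat X_j$. The block $\Lambda_EU'\Lambda_E$ is a direct sum of finite unitaries up to small error, so each block can be rotated to the identity by $\exp(-t\log)$ block by block. This path is local, because blocks at large radius lie in shrinking angular regions, and it yields $W=\Lambda_{E^c}W\Lambda_{E^c}+\Lambda_E$. Taking $P=\Lambda_{E^c}$, which is also spherically-proper, gives the claim.

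\textbf{Main obstacle.} The delicate point is making both $\Lambda_EU\Lambda_{E^c}$ and its adjoint small in operator norm, not merely compact. One cannot choose points where $U\delta_x$ is localized near $x$, since spherical locality gives no metric decay. I would first try to handle this with Step 1's inductive far-choice. If that fails, I would instead absorb the compact cross term with a Kuiper/Voiculescu-type argument, using that $\Lambda_E$ commutes exactly with $\calX_d$ and that compact perturbations of local unitaries can be homotoped within $\calU(\calL_d)$ after enlarging $E$.
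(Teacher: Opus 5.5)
There is a genuine gap in Step~2. The cross term $\Lambda_E U\Lambda_{E^c}$ is in general neither small nor compact, so the straight-line path from $U$ to $U'$ does not stay invertible. Your Kuiper/Voiculescu fallback presupposes the same compactness, so it does not rescue the argument. Since $U$ is unitary, $\|\Lambda_{\{x_n\}}U\|=\|U^\ast\delta_{x_n}\|=1$. Your far-choice in Step~1 only controls the part of $U^\ast\delta_{x_n}$ near the previously chosen points. It says nothing about the part next to $x_n$, and for a sparse $E$ that part lies in $E^c$.

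Concretely, the translation $U\delta_x=\delta_{x+e_1}$ is spherically local: $[U,\widehat X_j]$ is a weighted shift with weights tending to $0$. For any sparse $E$ one has $x_n-e_1\notin E$, hence $\Lambda_EU\Lambda_{E^c}\delta_{x_n-e_1}=\delta_{x_n}$ for all $n$. So the cross term is an isometry on an infinite-dimensional subspace: it has norm $1$ and is not compact. Meanwhile $\Lambda_EU\Lambda_E=0$, so there are no near-unitary diagonal blocks to rotate to $\Id$. Finite clusters do not help, because each cluster contains a point $x$ with $x+e_1$ outside it, and then $\Lambda_{E^c}U\Lambda_E$ fails to be compact instead. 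In short, a small deformation cannot make a general $U$ nearly reduce a sparse diagonal projection; the homotopy to $W$ has to be a large one.

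The missing idea is that you only need one-sided control, on the columns $U\delta_{x_k}$, and that these columns are then straightened by a large but homotopically trivial deformation. The paper proceeds in four steps.
\begin{enumerate}
\item It compactly perturbs $U$ to an invertible $G\in\calL_d$ whose columns $G\delta_{x_k}$, along a spherically-proper sequence, have finite, pairwise disjoint supports $R_k$ with the cone-separation property (\cref{lem:localized centers}). This uses cone decoupling, the annulus lemma and \cref{lem:turn off elements}, and it is where spherical locality of $U$ genuinely enters; your Step~1 never uses it.
\item It left-multiplies by a block-diagonal invertible $V$ on $\bigoplus_k\im\Lambda_{R_k}$ sending $G\delta_{x_k}\mapsto\delta_{x_k}$. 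Cone separation makes $V$, and its blockwise contraction to $\Id$, spherically local.
\item Now $VGP=P$ for $P=\Lambda_{\{x_k\}}$, so $VG=(P+P^\perp VGP^\perp)(\Id+PVGP^\perp)$. The path $t\mapsto \Id+tPVGP^\perp$, with inverse $\Id-tPVGP^\perp$, removes the off-diagonal block however large it is.
\item Polar decomposition, followed by swapping $P$ and $P^\perp$, finishes the proof. The swap can also be done by simply relabeling, as you do with $\Lambda_{E^c}$.
\end{enumerate}
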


\begin{prop}\label{prop:compress unitary homotopy from matrix algebra back to original algebra}
    Let $U\in\calU(\calL_d)$. Suppose $U$ takes the form $PUP+P^\perp$ for some spherically-proper projection $P$, and there exists $n\in\NN$ such that $U\oplus \Id_n\sim_h \Id_{n+1}$, then $U\sim_h \Id$ in $\calU(\calL_d)$.
\end{prop}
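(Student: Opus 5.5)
Write $P=\Lambda_F$ with $F$ spherically-proper, and $\gamma_t$ for a homotopy in $\calU_{n+1}(\calL_d)$ from $U\oplus\Id_n$ to $\Id_{n+1}$. The idea is to compress $\gamma_t$ back into $\calU(\calL_d)$ by an isometry that fixes $\im P$ and spreads the $n+1$ copies of $\im P^\perp$ over $\im P^\perp$, using the room that $P^\perp$ provides.

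First I would split $\Lambda_{F^c}$ into $n+1$ spherically-proper pieces: pick disjoint $F_0,\dots,F_n\subset F^c$ with $\bigcup_j F_j=F^c$, each $F_j$ spherically-proper. This is possible because every cone $C_I$ meets $F^c$ in infinitely many points. Enumerate a countable base of open sets $I_1,I_2,\dots$ (e.g.\ dyadic cubes, \cref{rem:dyadic cubes}) and distribute points of $F^c\cap C_{I_m}$ round-robin among the $F_j$'s, then assign the leftover points arbitrarily. The cones $C_{I_m}$ only need infinitely many points in each $F_j$.

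Next, \cref{prop:spherically-proper projections equivalent to identity} in the relative form needed here gives spherically-local partial isometries $S_j\in\calL_d$ with $S_j^\ast S_j=\Id$ and $S_jS_j^\ast=\Lambda_{F_j}$. Concretely, $\Lambda_{F^c}\sim\Id\sim\Lambda_{F_j}$ by that proposition applied to the spherically-proper sets $F^c$ and $F_j$; from the composite equivalences I would extract isometries whose ranges are $\Lambda_{F_j}$. Define the row isometry
\[
V=\begin{bmatrix} P+S_0P^\perp & S_1 & \cdots & S_n\end{bmatrix}\in M_{1,n+1}(\calL_d).
\]
Here I need $S_0P^\perp$ to map $\im P^\perp$ onto $\Lambda_{F_0}$ while the $S_j$ for $j\geq1$ map all of $\calH_d$ onto $\Lambda_{F_j}$. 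This can be arranged: choose $S_0$ as a partial isometry from $\Lambda_{F^c}$ onto $\Lambda_{F_0}$ (both spherically-proper), and $S_j$ as isometries onto $\Lambda_{F_j}$. Then $VV^\ast=P+\sum_j\Lambda_{F_j}=\Id$ and $V^\ast V=\Id_{n+1}-(\text{nothing})$, so $V$ is unitary from $\calH_d^{n+1}$ to $\calH_d$. The map $\gamma_t\mapsto V\gamma_tV^\ast$ is then a norm-continuous path in $\calU(\calL_d)$.

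Finally I check the endpoints. $V\Id_{n+1}V^\ast=\Id$. For the other endpoint,
\[
V(U\oplus\Id_n)V^\ast=(P+S_0P^\perp)U(P+P^\perp S_0^\ast)+\sum_{j\ge1}\Lambda_{F_j}.
\]
Using $U=PUP+P^\perp$, this equals $PUP+S_0P^\perp S_0^\ast+\sum_{j\ge1}\Lambda_{F_j}=PUP+P^\perp=U$. So $U\sim_h\Id$. The main obstacle is the construction of the spherically-local partial isometries with exactly prescribed initial and final projections, which is the relative version of \cref{prop:spherically-proper projections equivalent to identity}.

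If that proposition only yields Murray–von Neumann equivalence in $\calL_d$, that is enough, since it produces partial isometries in $\calL_d$ with exactly these source and range projections. I expect its proof to use a bijection $F_j\to F^c$ that moves points only slightly in angle, so that the induced permutation operator commutes with $\widehat X_i$ modulo compacts. The delicate point is to choose that bijection so the angular displacement $\to0$ at infinity; the round-robin distribution along a dyadic base should permit this.
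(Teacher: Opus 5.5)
Your proposal is correct and is essentially the paper's proof. You split $P^\perp$ into $n+1$ spherically-proper pieces (the paper iterates \cref{lem:split spherically-proper projections}), use \cref{prop:spherically-proper projections equivalent to identity} to obtain spherically-local partial isometries from $P^\perp$ onto the first piece and from $\Id$ onto the others, and conjugate the stabilized homotopy by the row unitary $\begin{bmatrix} P+V_0 & V_1 & \cdots & V_n\end{bmatrix}$, with the same endpoint check $V(U\oplus\Id_n)V^\ast=PUP+P^\perp=U$.
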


\begin{proof}[Proof of \Cref{thm:path-connected components of local unitaries}]
    We establish the set bijection
    \eql{\label{eq:pi0 unitary bijective to K1}
    \pi_0(\calU(\calL_d)) \to \rmK_1(\calL_d)
    }
    that maps unitary in path-connected component to its $K$-class.
    First we show it is injective.
    It suffices to show that if $U\in\calU(\calL_d)$ satisfies $[U]_1=0$, then $U\sim_h \Id$ in $\calU(\calL_d)$. Indeed, for $U,V\in\calU(\calL_d)$ with $[U]_1=[V]_1$, we have $[UV^\ast]_1=0$, and the homotopy $UV^\ast\sim_h\Id$ induces the homotopy $U\sim_h V$ in $\calU(\calL_d)$. Suppose $U\in\calU(\calL_d)$ and $[U]_1=0$. Using \cref{prop:local unitary reduced by geo proper proj}, we may assume $U$ takes the form of $P_0UP_0+P_0^\perp$ for some spherically-proper projection $P_0$. 
    Since $[U]_1=0$, it follows that $U\oplus \Id_n\sim_h \Id_{n+1}$ in $\calU_{n+1}(\calL_d)$ for some $n\in\NN$. Then, with the assumptions in \cref{prop:compress unitary homotopy from matrix algebra back to original algebra} satisfied, we have $U\sim_h \Id$. This completes the proof for \cref{eq:pi0 unitary bijective to K1}. 
    
    Let us show that \cref{eq:pi0 unitary bijective to K1} is surjective. Let $[U]_1\in K_1(\calL_d)$ be a $K$-class with $U\in\calU_n(\calL_d)$ for some $n\in\NN$. We'd like to find $W\in\calU(\calL_d)$ such that $[W]_1=[U]_1$. 
    Let $\ZZ^d=F_1\cup \dots\cup F_n$ be a partition of $\ZZ^d$ into spherically-proper subsets. Let $P_j=\Lambda_{F_j}$. Using \cref{prop:spherically-proper projections equivalent to identity}, there exists partial isometry $V_j\in\calL_d$ such that $V_j^\ast V_j = \Id$ and $V_j V_j^\ast = P_j$. Let $V=\begin{bmatrix}V_1 &\dots & V_n\end{bmatrix}\in M_{1,n}(\calL_d)$, and consider $W:=VUV^\ast\in\calU(\calL_d)$. It is straightforward to show that
    \eq{
    \begin{bmatrix}
        \cos t \Id_n & -\sin t V^\ast \\ \sin t V & \cos t \Id
    \end{bmatrix}
    \begin{bmatrix}
        U & 0 \\ 0 & \Id
    \end{bmatrix}
    \begin{bmatrix}
        \cos t \Id_n & \sin t V^\ast \\ -\sin t V & \cos t \Id        
    \end{bmatrix}
    }
    for $t\in[0,\pi/2]$ provides the homotopy $U\oplus \Id\sim_h \Id_n \oplus W$. Therefore $[U]_1=[W]_1$.
    
    Using \cref{eq:pi0 unitary bijective to K1} and \cref{prop:K-groups calculation of local algebra}, we establish that $\pi_0(\calU(\calL_d))\cong \ZZ$ if $d\in 2\NN+1$ and $\pi_0(\calU(\calL_d))\cong \Set{0}$ if $d\in 2\NN$.

    Next, we show that for $d\in 2\NN+1$, the index is given by $\calN:\calU(\calL_d)\to \ZZ$, which amounts to showing that $\calN$ is continuous and bijective. It is clear that $\calN$ is continuous.
    If $U,V\in\calU(\calL_d)$ have the same index $\calN(U)=\calN(V)$, using \cref{prop:index and K-groups}, we have $[U]_1=[V]_1$, and hence $U,V$ belong to the same path-connected component by \cref{eq:pi0 unitary bijective to K1}. 
    For any $m\in\ZZ$, using \cref{prop:index and K-groups}, there exists $[U]_1\in \rmK_1(\calL_d)$ for some $U\in\calU(\calL_d)$ such that $\calN(U)=m$. This completes the proof.
\end{proof}

\begin{rem}
    The injectivity proof in \cref{thm:path-connected components of local unitaries} bears the flavor of contractibility proof in \cite{Kuiper1965} based on the Eilenberg–Mazur swindle argument, which is elementary and purely functional analytic. We explore this possibility in the future.
\end{rem}

\begin{proof}[Proof of \Cref{thm:path-connected components of local bulk-non-trivial projections}]
    We establish the set bijection
    \eql{\label{eq:pi0 projection same as K0}
    \pi_0(\calP^\NT(\calL_d))\to \rmK_0(\calL_d)
    }
    that maps projection in each path-connected component to its $K$-class.
    First we show that it is injective.
    Consider the case when $d$ is even. Suppose $P,Q\in\calP^\NT(\calL_d)$ belong to the same class $[P]_0=[Q]_0$. Then $P\oplus \Id_n \sim Q\oplus \Id_n$ for some $n\in\NN$. Using \cref{lem:projection MvN amplification}, it follows that $P\sim_0 P\oplus \Id_n\sim Q\oplus \Id_n\sim_0 Q$ and hence $P\sim Q$. 
    Since $[\Id]_0=0$ by \cref{lem:projection MvN amplification}, we have $[P^\perp]_0=[\Id]_0-[P]_0=-[P]_0=-[Q]_0=[\Id]_0-[Q]_0= [Q^\perp]_0$. Thus $P^\perp \sim Q^\perp$ based on the same argument (in showing $P\sim Q$). Using \cite[Proposition 2.2.2]{rordam2000introduction}, we have $P\sim_u Q$, i.e., there exists $U\in\calU(\calL_d)$ such that $Q=UPU^\ast$. Since by \cref{thm:path-connected components of local unitaries} we have that $\pi_0(\calU(\calL_d))=\{0\}$ when $d$ is even, it follows that $U\sim_h \Id$ in $\calU(\calL_d)$. Thus $P\sim_h Q$ in $\calP(\calL_d)$, which is in fact in $\calP^\NT(\calL_d)$ by \cref{lem:non bulk projection path lifts to bulk path}. 

    Now consider the case when $d$ is odd. Let $P$ be a spherically-proper projection, e.g., that from \cref{ex:shperically-proper}. Let $Q\in\calP^\NT(\calL_d)$ be arbitrary. We show that $Q\sim_h P$, establishing that $\pi_0(\calP^\NT(\calL_d))=\Set{0}$.
    Since $\rmK_0(\calL_d)=\Set{0}$ from \cref{prop:K-groups calculation of local algebra} and hence $[Q]_0=[P]_0$, analogous to the injectivity proof in the case when $d$ is even, it follows that $Q=UPU^\ast$ for some $U\in\calU(\calL_d)$.
    Since $P$ is spherically-proper, using \cref{prop:spherically-proper projections equivalent to identity}, we have $P\sim \Id$, and hence there exists spherically-local partial isometry $V$ such that $P=VV^\ast$ and $\Id=V^\ast V$. Consider the spherically-local unitary $W=VUV^\ast + P^\perp\in\calU(\calL_d)$. We claim that $U\sim_h W$ in $\calU(\calL_d)$. To that end, we consider 
    \eq{
    R:=\begin{bmatrix}
        V & P^\perp \\ 0 & V^\ast
    \end{bmatrix}\in \calU_2(\calL_d)
    }
    Then we have $R (U\oplus \Id) R^\ast = W\oplus \Id$, and hence $[U]_1 = [W]_1$. Using \cref{eq:pi0 unitary bijective to K1}, we have $U\sim_h W$ in $\calU(\calL_d)$. In particular, this provides the desired homotopy
    \eq{
    Q = UPU^\ast \sim_h WPW^\ast = VUV^\ast P VU^\ast V^\ast = P
    }
    in $\calP(\calL_d)$, which is in fact in $\calP^\NT(\calL_d)$ by \cref{lem:non bulk projection path lifts to bulk path}.

    We now show that \cref{eq:pi0 projection same as K0} is surjective. Using \cref{prop:K-groups calculation of local algebra}, an element in $K_0(\calL_d)$ is of the form $[Q]_0$ for $Q\in\calP(\calL_d)$. However, $Q$ may not be bulk-non-trivial; therefore, we seek $P\in\calP^\NT(\calL_d)$ such that $[P]_0=[Q]_0$. To that end, let $S$ be a spherically-proper projection, which is also bulk-non-trivial, see \cref{rem:bulk-non-trivial and spherically-proper}. Consider $R:=Q\oplus S\in\calP_2(\calL_d)$. In fact $R$ is bulk-non-trivial since $S$ is. (Although \cref{defn:bulk non triviality} is defined only in $\calP(\calL_d)$, it naturally extends to $\calP(M_2(\calL_d))\equiv \calP_2(\calL_d)$.) 
    Moreover, we have $[R]_0=[Q]_0+[S]_0 = [Q]_0$ where we use $[S]_0=[\Id]_0=0$ with the help of \cref{prop:spherically-proper projections equivalent to identity} and \cref{lem:projection MvN amplification}.
    Since $S$ and $S^\perp$ are spherically-proper, using \cref{prop:spherically-proper projections equivalent to identity}, there exist partial isometry $V_1,V_2\in\calL_d$ such that $V_1V_1^\ast = S$, $V_2V_2^\ast = S^\perp$ and $V_1^\ast V_1=V_2^\ast V_2=\Id$. Let $V:=\begin{bmatrix}
        V_1 & V_2
    \end{bmatrix}\in M_{1,2}(\calL_d)$. Then $VV^\ast=\Id$ and $V^\ast V=\Id_2$. Consider $P:=VRV^\ast\in\calP(\calL_d)$. We have $[P]_0=[R]_0=[Q]_0$. Moreover, $P\in\calP^\NT(\calL_d)$ is bulk-non-trivial. Indeed, we use \cref{lem:alternative bulk non triviality} and suppose that $P\rho(a)P$ is compact. Consider $V^\ast P\rho(a)PV = RV^\ast \rho(a) V R$. Now $\rho(a)V- V\rho_2(a)$ is compact since $V$ is spherically-local and $\rho_2(a):=\rho(a)\oplus \rho(a)$. Thus $R\rho_2(a)R$ is compact. Since $R$ is bulk-non-trivial, it follows that $a=0$ and we conclude that $P$ is bulk-non-trivial.

    Next, we show that for $d\in 2\NN$, the index is given by $\calN:\calP^\NT(\calL_d)\to \ZZ$, which amounts to showing that $\calN$ is continuous and bijective. It is clear that $\calN$ is continuous. That $\calN$ is bijective follows from \cref{prop:index and K-groups} and \cref{eq:pi0 projection same as K0}.
    
\end{proof}

\begin{rem}
    The set of bulk-non-trivial spherically-local projections $\calP^\NT(\calL_d)$ is a maximal component within $\calP(\calL_d)$, in the sense that if $P\in \calP^\NT(\calL_d)$, $Q\in \calP(\calL_d)$ and $P,Q$ are in the same path-connected component of $\calP(\calL_d)$, then actually $Q\in \calP^\NT(\calL_d)$. This follows from \cref{lem:non bulk projection path lifts to bulk path}.
\end{rem}

\subsection{Technical lemmas}\label{subsec:technical lemmas}

One of the main techniques is perturbing spherically-local operators that contribute negligible interaction. Geometrically, we seek regions in $\ZZ^d$ where interactions are weakly coupled.
\begin{defn}[$\ve$-weak coupling]
Let $E, F \subset \mathbb{Z}^d$ be two subsets of the lattice and let $A\in\calB(\calH_d)$. We say that $E$ is $\ve$-weakly coupled to $F$ via $A$ if $\|\Lambda_E A\Lambda_F\| \leq \ve$. We say that $E$ is decoupled from $F$ via $A$ if $\Lambda_E A \Lambda_F=0$.
\end{defn}
Equivalently, $E$ is $\ve$-weakly coupled to $F$ via $A$ if and only if $|\braket{\vf,A\psi}|\leq \ve$ for all normalized states $\vf\in \im\Lambda_E$ and $\psi\in\im\Lambda_F$. In other words, the interaction energy, or transition amplitude, between any specific configuration in $F$ and any specific configuration in $E$ is uniformly small.

\begin{lem}[compactness implies small coupling outside finite set]\label{lem:approx compact by finite projection}
Let $K\in \calK(\calH_d)$ be compact. For any $\ve>0$, there exists a finite subset $F\subset \ZZ^d$ such that $\ZZ^d\setminus  F$ is $\ve$-weakly coupled to $\ZZ^d$ via $K$.
\end{lem}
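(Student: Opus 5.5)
We will use the fact that a compact operator is the norm limit of finite-rank operators, and that finite-rank operators are in turn approximated by operators whose range lies in a finitely supported subspace of the position basis. Concretely, fix $\ve>0$. By the finite-rank approximation of compacts, choose a finite-rank operator $R$ with $\norm{K-R}\leq\ve/2$. It is convenient to work with the adjoint: the statement is $\norm{\Lambda_{\ZZ^d\setminus F}K}\leq\ve$, and $\norm{\Lambda_{F^c}K}=\norm{K^\ast\Lambda_{F^c}}$. We will control the range side of $K$ directly.

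First step: since $\im R$ is finite-dimensional, pick an orthonormal basis $\psi_1,\dots,\psi_r$ of $\im R$. Each $\psi_i\in\calH_d=\ell^2(\ZZ^d)$, so for any $\eta>0$ there is a finite set $F_i\subset\ZZ^d$ with $\norm{\Lambda_{F_i^c}\psi_i}\leq\eta$. We set $F:=F_1\cup\dots\cup F_r$, which is still finite. Then $\norm{\Lambda_{F^c}\psi_i}\leq\eta$ for each $i$.

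Second step: bound $\norm{\Lambda_{F^c}R}$. Writing $R=\sum_i \psi_i\otimes\psi_i^\ast R$, we get $\Lambda_{F^c}R=\sum_i(\Lambda_{F^c}\psi_i)\otimes(\psi_i^\ast R)$. Hence $\norm{\Lambda_{F^c}R}\leq r\eta\norm{R}$. Choosing $\eta$ with $r\eta\norm{R}\leq\ve/2$ (the case $R=0$ being trivial), we obtain
\[
\norm{\Lambda_{F^c}K}\leq\norm{\Lambda_{F^c}(K-R)}+\norm{\Lambda_{F^c}R}\leq\ve/2+\ve/2=\ve,
\]
which is precisely the statement that $\ZZ^d\setminus F$ is $\ve$-weakly coupled to $\ZZ^d$ via $K$.

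The only point needing care is the order of choices: $\eta$ must be fixed after $R$, since it depends on $r$ and $\norm{R}$, and $F$ is then determined by $\eta$. There is no genuine obstacle. An alternative route that avoids the basis expansion is to note that $\Lambda_{B_n^c}\to0$ strongly as $n\to\infty$, and that strong convergence composed with a compact operator on the right gives norm convergence, $\norm{\Lambda_{B_n^c}K}\to0$; one can then take $F=B_n$ for $n$ large.
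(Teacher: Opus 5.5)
Your proof is correct, and the alternative you sketch at the end is exactly the paper's argument: exhaust $\ZZ^d$ by finite balls $F_k$, observe that $\Lambda_{F_k}\to\Id$ strongly, and use compactness of $K$ to upgrade this to $\norm{\Lambda_{F_k}K-K}\to 0$ in norm. Your main argument, via a finite-rank approximant $R$ and the tails of an orthonormal basis of $\im R$, is a hands-on proof of that same strong-to-norm upgrade, so the two routes are the same in substance.
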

\begin{proof}
Consider an increasing sequence of finite subsets $F_1\subset F_2\cdots$ such that $\cup_{k\in\NN}F_k=\ZZ^d$. Indeed, we can construct $F_k$ to be $B_r \cap \ZZ^d$ for larger and larger $r>0$. Then $\Lambda_{F_k}$ converges to $\Id$ in the strong operator topology. Since $K$ is compact, it follows that $\Lambda_{F_k}K$ converges in norm to $K$. Therefore, for $k$ large enough, we have $\|\Lambda_{F_k}K-K\|\leq \ve$.
\end{proof}

\begin{figure}[t]
    \centering

    \begin{minipage}{\textwidth}
        \centering
        \begin{tikzpicture}[scale=0.7, >=latex]
            \def\R{3.6}          
            \def\Jcenter{60}     
            \def\Jhalfwidth{25}  
            
            \pgfmathsetmacro{\Jstart}{\Jcenter - \Jhalfwidth}
            \pgfmathsetmacro{\Jend}{\Jcenter + \Jhalfwidth}

            \clip (-\R-0.5,-\R-0.5) rectangle (\R+0.5,\R+0.5);

            \fill[gray!30] (0,0) -- (\Jstart:\R) arc (\Jstart:\Jend:\R) -- cycle;
            
            \def\logfactor{0.25} 
            \def\logk{12}        

            \fill[orange!30] (0,0) -- (\Jend:\R) 
                -- plot[domain=\R:0, samples=100, smooth] 
                   ({ \x*cos(\Jend) - \logfactor*ln(1+\logk*\x)*sin(\Jend) }, 
                    { \x*sin(\Jend) + \logfactor*ln(1+\logk*\x)*cos(\Jend) })
                -- cycle;

            \fill[orange!30] (0,0) -- (\Jstart:\R) 
                -- plot[domain=\R:0, samples=100, smooth] 
                   ({ \x*cos(\Jstart) + \logfactor*ln(1+\logk*\x)*sin(\Jstart) }, 
                    { \x*sin(\Jstart) - \logfactor*ln(1+\logk*\x)*cos(\Jstart) })
                -- cycle;

            \begin{scope}[on background layer]
                \fill[cyan!10] (0,0) circle (\R);
            \end{scope}

            \draw[->, thick] (-\R-0.3,0) -- (\R+0.3,0);
            \draw[->, thick] (0,-\R-0.3) -- (0,\R+0.3);

            \draw[thick, gray!90] (0,0) -- (\Jstart:\R);
            \draw[thick, gray!90] (0,0) -- (\Jend:\R);
            
            \draw[thick, orange!80] plot[domain=0:\R, samples=100, smooth] 
                ({ \x*cos(\Jend) - \logfactor*ln(1+\logk*\x)*sin(\Jend) }, 
                 { \x*sin(\Jend) + \logfactor*ln(1+\logk*\x)*cos(\Jend) });
                 
            \draw[thick, orange!80] plot[domain=0:\R, samples=100, smooth] 
                ({ \x*cos(\Jstart) + \logfactor*ln(1+\logk*\x)*sin(\Jstart) }, 
                 { \x*sin(\Jstart) - \logfactor*ln(1+\logk*\x)*cos(\Jstart) });

            \node[font=\bfseries] at (\Jcenter:2.0) {$C_J$};

            \pgfmathsetmacro{\FlabelAngLower}{\Jstart - 10}
            \node[font=\bfseries, orange!90!black] at (\FlabelAngLower:2.2) {$F$};

            \node[font=\bfseries, blue!60!black] at (\Jcenter+180:2.0) {$E$};
        \end{tikzpicture}

        \caption{Illustration for \cref{lem:contained in cone} (cone decoupling). Given a closed subset \(J\subset S^{d-1}\), the complement cone \(C_{J^{c}}\) is partitioned into \(E\sqcup F\) so that the ``bulk'' part \(E\) has uniformly small coupling to \(C_J\) (i.e.\ \(\|\Lambda_E A \Lambda_J\|\) is small), while the remainder \(F\) is directionally thin: for any closed cone \(C_I\) disjoint from \(C_J\), the intersection \(F\cap C_I\) is finite.}
    \end{minipage}

    \vspace{1.0em} 

    \begin{minipage}{\textwidth}
        \centering
        \begin{tikzpicture}[scale=0.7, >=latex]
            \def\rzero{1.2}
            \def\rone{2.7}  
            \def\rtwo{3.6}  

            \fill[gray!20] (0,0) circle (\rzero);
            \draw[ultra thick, gray!80] (0,0) circle (\rzero);
            
            \fill[blue!20, even odd rule] (0,0) circle (\rone) (0,0) circle (\rzero);
            \draw[ultra thick, blue!50] (0,0) circle (\rone);

            \fill[gray!20, even odd rule] (0,0) circle (\rtwo) (0,0) circle (\rone);

            \draw[->, thick] (-4.2,0) -- (4.2,0);
            \draw[->, thick] (0,-4.2) -- (0,4.2);

            \fill[red] (0.4, 0.3) circle (2pt) node[right, black] {$x_1$};
            \fill[red] (-1.5, 0.8) circle (2pt) node[left, black] {$x_2$};
            \fill[red] (2.2, -1.9) circle (2pt) node[right, black] {$x_3$};

            \node[font=\bfseries,blue!90!black] at (55:1.85) {$B_{r_2} \!\setminus\! B_{r_1}$};
        \end{tikzpicture}

        \caption{Schematic for \cref{lem:contained in annulus}. Points \(x_k\) are chosen in disjoint annuli \(B_{r_k}\setminus B_{r_{k-1}}\) so that the operator \(A\) couples \(\delta_{x_k}\) essentially only inside its own annulus: the complement \(\mathbb{Z}^d\setminus(B_{r_k}\setminus B_{r_{k-1}})\) is \(\varepsilon_k\)-weakly coupled to \(\{x_k\}\).}
    \end{minipage}
\end{figure}

\begin{lem}[cone-decoupling partition]\label{lem:contained in cone}
Let $A$ be a spherically-local operator and $\ve>0$ be arbitrary. Let $J$ be any proper, non-empty closed subset of $\bS^{d-1}$. Then there exist two disjoint subsets $E$ and $F$ in $\ZZ^d$ partitioning $C_{J^c}$ 
such that $E$ is $\ve$-weakly coupled to $C_J$ via $A$, the operator $\Lambda_E A\Lambda_J$ is compact, and for any closed set $I\subset \bS^{d-1}$ disjoint from $J$, we have $|F\cap C_I|<\infty$.
\end{lem}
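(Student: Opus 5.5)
Exhaust $J^c$ by an increasing sequence of closed sets shrinking away from $J$, decouple each annular piece of the corresponding cone from $C_J$ using a compact remainder, and glue the pieces together so that the total coupling is summable. Concretely, let $I_n:=\Set{z\in\bS^{d-1} : \dist(z,J)\geq 1/n}$. Each $I_n$ is closed and disjoint from $J$, the sets increase, and $\bigcup_n I_n=J^c$.

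By \cref{thm:geometric characterization of locality}, each $K_n:=\Lambda_{I_n}A\Lambda_J$ is compact. Applying \cref{lem:approx compact by finite projection} to $K_n$ (in the form $\|\Lambda_{B_r^c}K_n\|\to 0$ as $r\to\infty$, which is what its proof gives), we choose radii $r_1<r_2<\cdots\to\infty$ such that
\[
\|\Lambda_{B_{r_n}^c}\Lambda_{I_n}A\Lambda_J\|\leq \ve\, 2^{-n}.
\]
Then we set
\[
E:=\bigcup_{n\geq1}\br{C_{I_n}\cap (B_{r_{n+1}}\setminus B_{r_n})},\qquad F:=C_{J^c}\setminus E.
\]
Informally, a point at radius between $r_n$ and $r_{n+1}$ belongs to $E$ exactly when its direction lies at distance at least $1/n$ from $J$. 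The set $F$ thus consists of $B_{r_1}$ (a finite set) together with points whose angular distance to $J$ is less than $1/n$ at radius at least $r_n$.

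For the directional thinness of $F$, take a closed $I$ with $I\cap J=\varnothing$. By compactness, $\dist(I,J)\geq\delta>0$, so $I\subset I_n$ whenever $1/n\leq\delta$. Hence every point of $C_I$ with $\|x\|\geq r_{n_0}$, where $n_0=\lceil 1/\delta\rceil$, lies in $E$. Consequently $F\cap C_I\subset B_{r_{n_0}}$, which is finite.

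For the coupling estimate, write $E_n:=C_{I_n}\cap(B_{r_{n+1}}\setminus B_{r_n})$. Then $\Lambda_{E_n}=\Lambda_{E_n}\Lambda_{B_{r_n}^c}\Lambda_{I_n}$, so $\|\Lambda_{E_n}A\Lambda_J\|\leq \ve 2^{-n}$. Because the $E_n$ are pairwise disjoint, $\Lambda_EA\Lambda_J=\sum_n\Lambda_{E_n}A\Lambda_J$, and the triangle inequality gives
\[
\|\Lambda_E A\Lambda_J\|\leq\sum_n \ve 2^{-n}=\ve.
\]
Compactness follows as well. Each $\Lambda_{E_n}A\Lambda_J=\Lambda_{E_n}K_n$ is compact, since it is a compact operator multiplied by a bounded one. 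The partial sums of the series are therefore compact, and they converge in norm because the tail after $N$ terms has norm at most $\ve 2^{-N}$. The norm limit of compact operators is compact.

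The main subtlety is the interplay in the choice of the $r_n$. They must grow fast enough that $\Lambda_{B_{r_n}^c}K_n$ is small, and this is exactly where spherical locality enters, through the compactness of $K_n$. At the same time the construction has to ensure that every fixed closed $I$ disjoint from $J$ is eventually entirely captured by $E$, which the nested choice of $I_n$ handles. The one point to check carefully is that \cref{lem:approx compact by finite projection} is applied with the finite set placed on the correct side. Since $K_n^\ast$ is also compact, approximating by finite sets works on either side of the operator, so this causes no difficulty.
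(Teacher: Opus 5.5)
Your proof is correct and follows essentially the same route as the paper. It exhausts $J^c$ by the closed sets $\{z:\dist(z,J)\geq 1/n\}$, gets compactness of $\Lambda_{I_n}A\Lambda_J$ from \cref{thm:geometric characterization of locality} combined with \cref{lem:approx compact by finite projection} at scale $\ve 2^{-n}$, and sums the norms. The one difference is cosmetic: you make the pieces of $E$ disjoint by slicing into annuli $B_{r_{n+1}}\setminus B_{r_n}$, whereas the paper takes the union of the co-finite pieces $E_k\subset C_{N_k^c}$ and disjointifies them afterwards.
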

\begin{proof}
    Let $J$ be a closed subset of $\bS^{d-1}$. Define $N_k$ by
    \eq{
    N_k = \Set{x\in \bS^{d-1} | \|x-y\|<1/k \text{ for some $y$ in $J$}} \supset J.
    }
    It is clear that $N_k$ is open in $\bS^{d-1}$, decreasing ($N_{k}\supset N_{k+1}$ holds for each $k$), and $J = \cap_k N_k$. In particular, $N_k^c$ is closed and $J^c = \cup_k N_k^c$. Since $N_k^c$ and $J$ are disjoint and closed, by \cref{thm:geometric characterization of locality}, it follows that $\Lambda_{N^c_k}A\Lambda_J \in \calK$. 
    Using \cref{lem:approx compact by finite projection}, we can decompose $C_{N^c_k}$ into two disjoint subsets
    \eq{
    	C_{N^c_k} = E_k \cup F_k
    }
    such that $|F_k|<\infty$ and $\| \Lambda_{E_k} A \Lambda_J\|\leq \ve/2^k$. 
    Let
    \eq{
    	E = \cup_{k\in\NN} E_k \,.
    }
    We can rewrite $E$ as $\cup_{k\in\NN} G_k$ where $G_k = E_k\setminus (\cup_{i=1}^{k-1}E_i)$ are disjoint. Then $\Lambda_E A\Lambda_J = (\sum_k \Lambda_{G_k}) A \Lambda_J$. In fact the series $\sum_k \Lambda_{G_k}A\Lambda_J$ converge in operator norm. Indeed, we have
    \eq{
    	 \sum_{k=1}^\infty \| \Lambda_{G_k} A \Lambda_J\| \leq \sum_{k=1}^\infty \| \Lambda_{E_k} A \Lambda_J\| \leq \ve\,.
    }
    Since each $\Lambda_{G_k}A\Lambda_J$ is compact, it follows that $\Lambda_EA\Lambda_J$ is compact. Moreover, $\|\Lambda_EA\Lambda_J\|\leq \ve$.
    Define $F$ by
    \eq{
    	F = C_{J^c}\setminus E\,.
    }
    We argue that $|F\cap C_{N_k^c}|<\infty$ for all $k\in\NN$. Indeed, we have
    \eq{
        F\cap C_{N_k^c}=F\cap(E_k\cup F_k) = F\cap F_k\subset F_k
    }
    and we have $|F\cap C_{N_k^c}|\leq |F_k|<\infty$.
    Let $I$ be any closed subset of $\bS^{d-1}$ disjoint from $J$. Let $\delta = \dist(I,J)>0$. If we pick $k\in\NN$ large enough so that $1/k < \delta$, then we have $I\subset N_{k}^c$. Thus, $|F\cap C_I|\leq |F\cap C_{N_k^c}|<\infty$.
\end{proof}

\begin{lem}[points in annuli with weak coupling]\label{lem:contained in annulus}
Let $A\in\calB(\calH_d)$ and $\{\ve_i\}_{i\in\NN}$ be sequence of positive numbers. There exists a spherically-proper sequence $\{x_i\}_{i\in\NN}$ of points in $\ZZ^d$, and a sequence $\{r_i\}_{i\in\NN}$ of strictly increasing radii $0=:r_0<r_1<\dots$, such that $x_i$ lies in the annulus $B_{r_i}\setminus B_{r_{i-1}}$, and $\ZZ^d\setminus (B_{r_i}\setminus B_{r_{i-1}})$ is $\ve_i$-weakly coupled to $\{x_i\}$ via $A$ for each $i\in\NN$.
\end{lem}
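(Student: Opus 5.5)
The plan is an inductive construction, one annulus at a time. At step $i$ we choose $x_i$ so far out that the already fixed inner ball barely sees it, and then choose $r_i$ so large that the outer region barely sees it. No locality of $A$ is needed; we only use that $A$ is bounded and that finite-rank projections are compact. Since $\Lambda_{\{x_i\}}=\delta_{x_i}\otimes\delta_{x_i}^\ast$, the required coupling condition reads
\[
\norm{\Lambda_{\ZZ^d\setminus(B_{r_i}\setminus B_{r_{i-1}})}A\delta_{x_i}}\leq\ve_i .
\]
We split the left-hand side into the contributions of $\Lambda_{B_{r_{i-1}}}$ and $\Lambda_{\ZZ^d\setminus B_{r_i}}$, and make each at most $\ve_i/2$.

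\emph{Directions.} First fix a sequence $(w_i)_{i\in\NN}$ in the countable dense set $\bS^{d-1}_\ZZ$ in which every point of $\bS^{d-1}_\ZZ$ appears infinitely often, for instance by a diagonal enumeration. For each $w_i$ fix a primitive lattice vector $v_i\in\ZZ^d$ with $\hat v_i=w_i$. The points $x_i$ will be taken of the form $x_i=m_iv_i$ with $m_i\in\NN$, so that $x_i$ lies on the ray in direction $w_i$.

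\emph{Inductive step.} Suppose $r_{i-1}$ is already fixed, with $r_0=0$. The operator $\Lambda_{B_{r_{i-1}}}A$ has finite rank, and $\delta_{m v_i}\to0$ weakly as $m\to\infty$. Hence $\norm{\Lambda_{B_{r_{i-1}}}A\delta_{mv_i}}\to0$. Choose $m_i$ so large that $\norm{x_i}\geq r_{i-1}$ and
\[
\norm{\Lambda_{B_{r_{i-1}}}A\delta_{x_i}}\leq\ve_i/2 .
\]
Next, $A\delta_{x_i}$ is a fixed vector in $\calH_d$, so $\norm{\Lambda_{\ZZ^d\setminus B_r}A\delta_{x_i}}\to0$ as $r\to\infty$. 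Choose $r_i>2\norm{x_i}+1$ such that
\[
\norm{\Lambda_{\ZZ^d\setminus B_{r_i}}A\delta_{x_i}}\leq\ve_i/2 .
\]
By construction $x_i\in B_{r_i}\setminus B_{r_{i-1}}$, the radii are strictly increasing, and the triangle inequality gives the required $\ve_i$-weak coupling.

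\emph{Spherical properness.} Let $I\subset\bS^{d-1}$ be non-empty and open. Since $\bS^{d-1}_\ZZ$ is dense, $I$ contains some $z\in\bS^{d-1}_\ZZ$. The direction $z$ equals $w_i$ for infinitely many $i$, and the corresponding $x_i$ are distinct because they lie in disjoint annuli. Hence $C_I$ contains infinitely many of the $x_i$. For the complement, observe that $2x_i$ lies in the same cone as $x_i$. Moreover $2x_i$ lies in $B_{r_i}\setminus B_{r_{i-1}}$, because $r_i>2\norm{x_i}+1$. Each annulus contains exactly one chosen point, namely $x_i\neq 2x_i$, so $2x_i$ is not a chosen point. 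Therefore $C_I$ also contains infinitely many points outside $\{x_i\}$, and the sequence is spherically-proper.

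There is no real obstacle. The only points needing care are the weak-convergence argument for the inner ball, which uses compactness of $\Lambda_{B_{r_{i-1}}}A$, and taking $r_i$ large enough, beyond $2\norm{x_i}$, to secure the ``infinitely many non-chosen points'' half of spherical properness.
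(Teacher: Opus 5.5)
Your proof is correct and follows essentially the same inductive construction as the paper. You choose $x_i$ far enough out that the already-fixed ball $B_{r_{i-1}}$ barely couples to it, then choose $r_i$ large enough that the exterior barely couples to it. The paper instead gets the inner bound uniformly through an auxiliary radius $t_{i-1}$ (using compactness of $A^\ast\Lambda_{B_{r_{i-1}}}$) and picks $x_i$ in the cones of a countable basis, whereas you use $\Lambda_{B_{r_{i-1}}}A\delta_{mv_i}\to0$ and rational rays repeated infinitely often. Your extra requirement $r_i>2\norm{x_i}+1$ also explicitly secures the half of spherical properness asking for infinitely many non-chosen points in each cone, which the paper leaves implicit.
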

\begin{proof}
Let $\{I_i\}_{i\in\NN}$ be a countable basis for $\bS^{d-1}$. Note in $d=1$ we simply alternate between two points of $\bS^0$.
We do the construction iteratively.
Let us pick an element $x_1\in C_{I_1}\subset \ZZ^d$. There exists $r_1>\|x_1\|$ such that 
\eq{
\|\Lambda_{B_{r_1}^c} A \Lambda_{\{x_1\}}\|\leq \ve_1 \,.
}
Indeed, this follows from \cref{lem:approx compact by finite projection} with the fact that $A \Lambda_{\{x_1\}}$ is compact and $\Lambda_{B_{r}}$ converges to $\Id$ strongly as $r\to\infty$. 
Since $|B_{r_1}|$ is finite, which implies $A^\ast \Lambda_{B_{r_1}}$ is compact, it follows from \cref{lem:approx compact by finite projection} again that there exists $t_1>r_1$ such that $\|\Lambda_{B_{t_1}^c}A^\ast\Lambda_{B_{r_1}}\|\leq \ve_2/2$, or
\eq{
    \| \Lambda_{B_{r_1}} A \Lambda_{B_{t_1}^c}\|\leq \ve_2/2 \,.
}
Pick $x_2\in C_{I_2}\subset \ZZ^d$ with $\|x_2\|> t_1$. There exists $r_2>\|x_2\|$ such that
\eq{
    \|\Lambda_{B_{r_2}^c} A  \Lambda_{\{x_2\}}\| \leq \ve_2/2
}
which follows again from \cref{lem:approx compact by finite projection}. Therefore, we have
\eq{
    \|\Lambda_{B_{r_2}^c\cup B_{r_1}} A \Lambda_{\{x_2\}}\| \leq \|\Lambda_{B_{r_2}^c} A \Lambda_{\{x_2\}}\| + \|\Lambda_{B_{r_1}} A \Lambda_{B_{t_1}^c}\| \leq \ve_2
}
where we used $\Lambda_{\{x_2\}}\leq \Lambda_{B_{t_1}^c}$ in the first inequality. 

We iterate the procedure: we pick $t_2>r_2$ such that $\|\Lambda_{B_{t_2}^c} A^\ast \Lambda_{B_{r_2}}\|\leq \ve_3/2$; pick $x_3\in C_{I_3}\subset \ZZ^d$ with $\|x_3\|>t_2$; and pick $r_3>\|x_3\|$ such that $\|\Lambda_{B_{r_3}^c} A (\delta_{x_3}\otimes \delta_{x_3}^\ast)\| \leq \ve_3/2$; and we get $\|\Lambda_{B_{r_3}^c\cup B_{r_2}} A \Lambda_{\{x_3\}}\|\leq \ve_3$; and so on.

\end{proof}

\begin{figure}[t]
    \centering
        \begin{tikzpicture}[scale=0.7, >=latex]

            \pgfmathsetseed{42}

            \def\maxShells{22} 
            
            \newcommand{\getRadius}[1]{0.164*#1}

            \pgfmathsetmacro{\maxRadius}{\getRadius{\maxShells}}
            \pgfmathsetmacro{\bgRadius}{\maxRadius + 0.2}

            \newcommand{\drawSector}[4]{
                \pgfmathsetmacro{\innerR}{#3 + 0.01} 
                \pgfmathsetmacro{\outerR}{#4 - 0.01}
                
                \fill[orange!30] (#1:\innerR) arc (#1:#2:\innerR) -- (#2:\outerR) arc (#2:#1:\outerR) -- cycle;
                \draw[thick, orange!80] (#1:\innerR) arc (#1:#2:\innerR) -- (#2:\outerR) arc (#2:#1:\outerR) -- cycle;
            }

            
            \fill[green!20, opacity=0.3] (0,0) -- (10:\bgRadius) arc (10:45:\bgRadius) -- cycle;
            \draw[green!60!black, thick] (0,0) -- (10:\bgRadius);
            \draw[thick,green!60!black] (0,0) -- (45:\bgRadius);
            \node[font=\bfseries, green!40!black] at (20:{\bgRadius-0.4}) {$C_I$};

            \fill[red!20, opacity=0.3] (0,0) -- (52:\bgRadius) arc (52:80:\bgRadius) -- cycle;
            \draw[red!80!black, thick] (0,0) -- (52:\bgRadius);
            \draw[red!80!black, thick] (0,0) -- (80:\bgRadius);
            \node[font=\bfseries, red!40!black] at (67.5:{\bgRadius-0.4}) {$C_J$};

            \foreach \k in {3,...,\maxShells} {
                \pgfmathsetmacro{\prevK}{\k-1}
                \pgfmathsetmacro{\innerR}{\getRadius{\prevK}}
                \pgfmathsetmacro{\outerR}{\getRadius{\k}}
                
                \pgfmathsetmacro{\startAngle}{mod(\k * 83, 360)}
                \pgfmathsetmacro{\arcLen}{20 + 60*rnd}
                \pgfmathsetmacro{\endAngle}{\startAngle + \arcLen}
                
                \drawSector{\startAngle}{\endAngle}{\innerR}{\outerR}

                \ifnum\k=10
                    \node[orange!80!black] at ({(\startAngle+\endAngle)/2}:{(\innerR+\outerR)/2+0.6}) {$R_k$};
                \fi
            }

            \draw[->, thick] (-\bgRadius-0.2,0) -- (\bgRadius+0.2,0);
            \draw[->, thick] (0,-\bgRadius-0.2) -- (0,\bgRadius+0.2);

        \end{tikzpicture}
        \caption{Illustration for \cref{lem:localized centers}. The finite sets \(R_k\) (support islands for \(B\delta_{x_k}\)) lie in disjoint shells and become asymptotically localized in a single direction on \(S^{d-1}\). Consequently, for disjoint dyadic cones \(C_I\) and \(C_J\), only finitely many islands can intersect both cones.}
\end{figure}

\begin{lem}\label{lem:localized centers}
Let $A$ be a spherically-local operator and $\ve>0$. 
Then there exists a spherically-proper sequence $\{x_k\}_{k\in\NN}$ of points on $\ZZ^d$, and a spherically-local operator $B$ with $\|A-B\|\leq \ve$ and $A-B\in\calK(\calH_d)$ such that if we define
\eql{\label{eq:interaction range of B}
    R_k := \supp(B \delta_{x_k}) \cup \supp(\delta_{x_k}) \subset \ZZ^d
 }
then $|R_k|<\infty$ for all $k$; the sets $R_k$ are pairwise disjoint; and satisfies the cone-separation property: for any closed disjoint dyadic cubes $I$ and $J$, there are finitely many $k$ such that $R_k\cap C_I\neq \varnothing$ and $R_k\cap C_J\neq \varnothing$.
\end{lem}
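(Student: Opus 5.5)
The plan is to build $B$ from $A$ by deleting, for a well-chosen spherically-proper sequence $\{x_k\}$, the part of each column $A\delta_{x_k}$ that lies outside a finite, directionally concentrated island $R_k$. Concretely, I will set
\[
B:=A-D,\qquad D:=\sum_{k\in\NN}\Lambda_{R_k^c}\,A\,\Lambda_{\{x_k\}},
\]
so that $B\delta_{x_k}=\Lambda_{R_k}A\delta_{x_k}$, and hence $\supp(B\delta_{x_k})\cup\{x_k\}\subseteq R_k$. The requirements will be arranged so that $\|\Lambda_{R_k^c}A\delta_{x_k}\|\le \ve 2^{-k}$. Since the summands of $D$ have orthogonal one-dimensional initial spaces $\szpan\{\delta_{x_k}\}$, $D$ is a norm-convergent series of rank-one operators. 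Thus $D$ is compact with $\|D\|\le\ve$, and $B$ is spherically-local because $\calL_d$ contains $\calK$.

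\textbf{Choice of points, radii and islands.} I redo the iterative construction in the proof of \cref{lem:contained in annulus}, with $\ve_k:=\ve 2^{-k-1}$ and with an extra directional requirement imposed at each step. Fix a countable basis $\{I_k\}$ of $\bS^{d-1}$ and a sequence of dyadic generations $g_k\to\infty$. At step $k$, let $Q_k\subseteq\bS^{d-1}$ be the closed union of all generation-$g_k$ dyadic cubes meeting the closed star of $I_k$'s chosen point; any concrete choice works as long as $\operatorname{diam}Q_k\to0$. Let $Q_k'$ be a slightly larger closed neighbourhood of $Q_k$, still with $\operatorname{diam} Q_k'\to 0$. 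The sets $Q_k$ and $\overline{\bS^{d-1}\setminus Q_k'}$ are disjoint and closed, so \cref{thm:geometric characterization of locality} gives
\[
\Lambda_{\bS^{d-1}\setminus Q_k'}\,A\,\Lambda_{Q_k}\in\calK .
\]
A compact operator sends the weakly null family $\{\delta_x\}_{x\in C_{Q_k}}$, as $\|x\|\to\infty$, to a norm-null family. Hence I can pick $x_k\in C_{I_k}\cap C_{Q_k}$ with $\|x_k\|$ larger than the previous threshold $t_{k-1}$ and with $\|\Lambda_{C_{Q_k'}^c}A\delta_{x_k}\|\le\ve_k$. Then I choose $r_k$ as in \cref{lem:contained in annulus} so that the part of $A\delta_{x_k}$ outside the annulus $B_{r_k}\setminus B_{r_{k-1}}$ has norm $\le\ve_k$. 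The $t_k$-step there ensures that earlier shells do not couple to later points.

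Next I take $R_k$ to be a finite subset of $(B_{r_k}\setminus B_{r_{k-1}})\cap C_{Q_k'}$ containing $x_k$ and capturing all but $\ve_k$ of the remaining $\ell^2$-mass of $A\delta_{x_k}$. This is possible because a single vector in $\ell^2$ has small tails. With these choices, $\|\Lambda_{R_k^c}A\delta_{x_k}\|\le 2\ve_k=\ve2^{-k}$. The sets $R_k$ are finite, and they are pairwise disjoint because they lie in disjoint annuli. The sequence $\{x_k\}$ is spherically-proper because $x_k\in C_{I_k}$ with $\|x_k\|\to\infty$ for a basis $\{I_k\}$; to get both $F$ and $F^c$ infinite in every cone, I can list each basis set infinitely often.

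\textbf{Cone separation.} Let $I,J$ be disjoint closed dyadic cubes, and let $\delta:=\dist(I,J)>0$. Since $R_k\subseteq C_{Q_k'}$ and $\operatorname{diam}Q_k'\to0$, for all $k$ with $\operatorname{diam}Q_k'<\delta$ the set $Q_k'$ cannot meet both $I$ and $J$. Consequently $R_k$ cannot meet both $C_I$ and $C_J$, leaving only finitely many exceptional $k$. The point $0$ causes no issue, since it lies in no cone.

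The main obstacle I expect is the bookkeeping of the interleaved choices. Each $x_k$ must simultaneously: lie in the prescribed basis cone $C_{I_k}$ (for spherical properness); lie deep enough in a shrinking directional window $C_{Q_k}$ that compactness of the off-window block makes the escaping mass small; and lie beyond the threshold $t_{k-1}$ that decouples it from the earlier shells. All three are "sufficiently large $\|x\|$ within a fixed infinite cone" conditions, which is why the construction should go through. The one point to verify carefully is that $C_{I_k}\cap C_{Q_k}$ is infinite. I will ensure this by choosing $Q_k$ to contain an open subset of $I_k$, namely by centering $Q_k$ at a point of $I_k$.
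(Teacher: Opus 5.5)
Your proposal is correct, and it takes a leaner route than the paper's proof. The paper builds $B$ from two global decoupling families. For every closed dyadic cube $G$ it takes the cone-decoupling partition $C_{G^c}=E\sqcup F$ from \cref{lem:contained in cone}. It adds the annulus decoupling from \cref{lem:contained in annulus}. It then removes all blocks $\Lambda_{E_k}A\Lambda_{G_k}$ and $\Lambda_{\ZZ^d\setminus(B_{r_k}\setminus B_{r_{k-1}})}A\Lambda_{\{x_k\}}$ at once, using the inclusion--exclusion \cref{lem:turn off elements}. Cone separation then follows from a subsequence-and-contradiction argument based on $|F\cap C_I|<\infty$.

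You keep the annulus mechanism but change only the countably many columns $A\delta_{x_k}$. These columns sit on distinct basis vectors, so the corrections $\Lambda_{R_k^c}A\Lambda_{\{x_k\}}$ never overlap. A summable series of rank-one operators therefore suffices, and no inclusion--exclusion bookkeeping is needed. You get directional concentration by placing $x_k$ far out in a shrinking window $C_{Q_k}$, rather than by cutting cone-crossing blocks everywhere. There, compactness of $\Lambda_{\bS^{d-1}\setminus Q_k'}A\Lambda_{Q_k}$ forces the column into $C_{Q_k'}$. Cone separation then reduces to a one-line diameter argument. It holds for any pair of disjoint closed sets, not just dyadic cubes.

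The paper's construction gives $B$ extra global structure: each dyadic cone is exactly decoupled from most of its complement, which the statement does not require. Your construction gives brevity and a stronger localization of the islands, since their angular diameter tends to $0$.

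Two small remarks. First, $(B_{r_k}\setminus B_{r_{k-1}})\cap C_{Q_k'}$ is already a finite set, so you can take it as $R_k$; the extra tail-capturing step is unnecessary. Second, listing each basis set infinitely often gives $|F\cap C_I|=\infty$ but does not by itself give $|F^c\cap C_I|=\infty$. Add a sparseness condition, for instance $\|x_k\|>2\|x_{k-1}\|$. Then along any ray $\{nx\}_{n\geq 1}\subseteq C_I$ only logarithmically many points lie in $F$.
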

\begin{proof}
    Let $A$ be a spherically-local operator.
    Consider the collection $\{G_k\}$ of all closed dyadic cubes in $\bS^{d-1}$. For each cube $G_k$, we can use \cref{lem:contained in cone} to partition $C_{G_k^c}$ into two disjoint subsets $E_k$ and $F_k$ such that $\Lambda_{E_k}A\Lambda_{G_k}$ has small norm and is compact, and $|F_k\cap C_I|<\infty$ for all closed set $I\subset \bS^{d-1}$ disjoint from $F_k$. 
    On the other hand, using \cref{lem:contained in annulus}, there exists a spherically-proper sequence of points $\{x_k\}$ and a sequence of increasing balls $\{B_{r_k}\}$ (with $0=:r_0<r_1<r_2<\dots$) such that $x_k\in B_{r_k}\setminus B_{r_{k-1}}$ and $\Lambda_{\ZZ^d\setminus(B_{r_k}\setminus B_{r_{k-1}})} A \Lambda_{\{x_k\}}$ has small norm and is compact. 
    Now, apply \cref{lem:turn off elements} to get a spherically-local operator $B$ such that $\|A-B\|\leq \ve$ and $A-B\in\calK(\calH_d)$ and
    \eql{
    \Lambda_{\ZZ^d\setminus(B_{r_k}\setminus B_{r_{k-1}})} B \Lambda_{\{x_k\}} &= 0 \label{eq:B annulus zero}\\ 
    \Lambda_{E_k} B\Lambda_{G_k} & =  0 \label{eq:B cone finite}
    }
    for all $k\in\NN$. In other words, $\ZZ^d\setminus(B_{r_k}\setminus B_{r_{k-1}})$ is decoupled from $\{x_k\}$ via $B$; and the complement of $C_{G_k}$ is almost decoupled from $C_{G_k}$ except for a set $F_k$.
    Since each $x_k$ lies in disjoint annulus $B_{r_k}\setminus B_{r_{k-1}}$ and the support of $B\delta_{x_k}$ is also contained in the same annulus via \cref{eq:B annulus zero}, then
    \eq{
    R_k\subset B_{r_k}\setminus B_{r_{k-1}}
    }
    and it follows that $R_k$ as defined in \cref{eq:interaction range of B} have the properties that $|R_k|<\infty$ for all $k$, and the sets $R_k$ are pairwise disjoint.

    Let $I$ and $J$ be arbitrary disjoint closed dyadic cubes. By way of contradiction, suppose there are infinitely many $k$ such that $R_{k}\cap C_I\neq \varnothing$ and $R_{k}\cap C_J\neq \varnothing$ for all $i\in\NN$; we denote the subsequence as $\{k_1\}$. 
    Let $\mathcal{D}_l$ be the collection of all closed dyadic cubes of the $l$-th generation as defined in \cref{rem:dyadic cubes} where $l$ is large enough such that no cubes in $\mathcal{D}_l$ intersects both $I$ and $J$. Indeed, this is possible using the fact that $I$ and $J$ are disjoint closed dyadic cubes, and the construction of closed dyadic cubes in \cref{rem:dyadic cubes}.
    Since $\mathcal{D}_l$ is a finite covering of $\bS^{d-1}$ but $\{x_{k_1}\}$ contains infinitely many points of $\ZZ^d$, there exists a cube $Q\in \mathcal{D}_l$ such that $\hat{x}_{k_1}\in Q$ for infinitely many $k_1$; we denote the sub-subsequence as $\{k_2\}$. Without loss of generality, suppose $Q$ is disjoint from $I$.
    
    Since $R_{k_2}\cap C_I\neq \varnothing$ by assumption, we pick a point $y_{k_2}$ in $R_k\cap C_I$ for each $k_2$. Since $\{x_{k_2}\}\subset C_Q$ and $C_Q$ is disjoint from $C_I\supset R_k\cap C_I$, we cannot have $y_{k_2}$ being the same as $x_{k_2}$. 
    Therefore 
    \eq{
    y_{k_2}\in \supp(B\delta_{x_{k_2}})\cap C_I \subset C_{Q^c}
    }
    for each $k_2$. Denote the particular partition, as performed in the beginning, of $C_{Q^c}$ as $E$ and $F$, where $E$ is decoupled from $C_Q$ via $B$ (with \cref{eq:B cone finite}), and for all closed subsets $K\subset \bS^{d_1}$ disjoint from $Q$, we have $|F\cap C_K|<\infty$ (as promised by \cref{lem:contained in cone}).  
    Thus $y_{k_2}\subset E\cup F$. However, $y_{k_2}$ cannot be in $E$; otherwise, we have
    \eq{
        \braket{\delta_{y_{k_2}},B\delta_{x_{k_2}}} = \braket{\Lambda_{E}\delta_{y_{k_2}},B\Lambda_{Q}\delta_{x_{k_2}}} = \braket{\delta_{y_{k_2}},\Lambda_{E}B\Lambda_{Q}\delta_{x_{k_2}}}=0
    }
    where the last equality follows from \cref{eq:B cone finite}, and we would have $y_{k_2}\notin \supp(B\delta_{x_{k_2}})$, a contradiction. Thus $y_{k_2}\in F\cap C_I$ for all $k_2$. This contradicts the fact that $|F\cap C_I|<\infty$.
\end{proof}




\begin{proof}[Proof of \cref{prop:spherically-proper projections equivalent to identity}]
    We show that $P\sim \Id$. The proof for $P^\perp\sim \Id$ is similar since $P$ is spherically-proper iff $P^\perp$ is. Let $\Set{y_k}_{k=1}^\infty=\ZZ^d$ be an enumeration of the lattice. Let $F$ be the spherically-proper set that correspond to $P=\Lambda_F$. Define subsets $N_k$ by
    \eq{
    N_k=\Set{x\in\bS^{d-1} | \|x-\hat{y}_k\|<1/k}\,.
    }
    Iteratively, for each $k\in \NN$, pick $x_k\in F$ such that $x_k$ minimizes
    \eq{
        \Set{ \norm{x} | x\in C_{N_k}\cap F\setminus \{x_1,\dots,x_{k-1}\}}\,.
    }
    This is possible since $|C_{N_k}\cap F|=\infty$ for all $k$ by definition of spherical properness.
    Consider the mapping $\ZZ^d \ni y_k\mapsto x_k \in F$. It is clear that the map is bijective.
    The map is injective since at each step we exclude previously chosen points from $F$. Assume by way of contradiction that the map is not surjective. Choose any $z^\ast \in F\setminus \Set{x_k}_{k=1}^\infty$ that is not picked by the algorithm. Let $K\subset\NN$ be all indices where $\hat{y}_k=\hat{z}^\ast$ for $k\in K$. The index set $K$ is infinite as there are infinitely points in $\ZZ^d$ in the same direction of $z^\ast$. Moreover we have $z^\ast \in C_{N_k} \cap F\setminus \Set{x_1,\dots,x_{k-1}}$ for all $k\in K$, i.e., $z^\ast$ is a valid candidate for every step $k\in K$ but never get picked. The algorithm picks $x_k$ that minimizes the norm among candidates. Since $z^\ast$ is a candidate, the chosen $x_k$ must satisfy $\|x_k\|\leq \|z^\ast\|$. Thus for every step $k\in K$, the algorithm selects distinct $x_k\in F$ with norm $\|x_k\|\leq \|z^\ast\|$. Since $K$ is an infinite set, it implies that there are infinitely many distinct elements in $F\subset \ZZ^d$ with norm less than or equal to $\|z^\ast\|$, which is impossible.
    
    Let $V$ maps $\delta_{y_k}$ to $\delta_{x_k}$ and extend linearly to an operator from $\calH_d$ to $\im P$. Then $V^\ast V=\Id$ and $VV^\ast = P$.
    We now show that $V$ is spherically-local. Let $I$ and $J$ be a pair of disjoint closed subsets of $\bS^{d-1}$, and we consider the operator $\Lambda_J V\Lambda_I$. The distance between the set $\dist(I,J)=\delta > 0$ is strictly positive. Let $k_0$ be large enough so that $1/k_0 < \delta$. Then, for all $k\geq k_0$ and $y_k\in C_I$, we have $C_J\cap C_{N_k}=\varnothing$. Thus $\Lambda_J V\Lambda_I = \Lambda_J V \Lambda_{\{x_k\}_{k<k_0}}$ is finite-rank.
\end{proof}

\begin{rem}
    The partial isometries $V$ constructed in \cref{prop:spherically-proper projections equivalent to identity} are real, i.e. $\calC V\calC=V$ where $\calC$ is the complex conjugation. Indeed, they merely reshuffles the position bases. This fact will be used in the real symmetry cases. 
\end{rem}

\begin{proof}[Proof of \cref{prop:local unitary reduced by geo proper proj}]
    Let $U\in\calU(\calL_d)$ be a spherically-local unitary operator. Let $\ve>0$ be some small number to be determined. Using \cref{lem:localized centers}, there exists a spherically-local operator $G\in\calL_d$ such that $\|U-G\|\leq \ve$, and a spherically-proper sequence of points $\{x_k\}_{k\in \NN}$ such that the subsets $R_k\subset\ZZ^d$ as defined in \cref{eq:interaction range of B} has those properties specified in the lemma. In particular, we can choose $\ve$ small enough so that $G\in\calG(\calL_d)$ is invertible. Denote $P$ the spherically-proper projection $\Lambda_{\{x_k\}_{k\in\NN}}$. 
    
    Let us define a spherically-local invertible operator $V$ such that $VG$ acts as identity on $\im P$. To that end, we define $V$ on $\im \Lambda_{R_k}$ as any invertible operator that maps $G\delta_{x_k}\mapsto \delta_{x_k}$. Define $V$ to be identity on $(\oplus_{k}\im \Lambda_{R_k})^\perp$. The operator $V$ is well-defined since $\{R_k\}_{k\in\NN}$ consists of mutually disjoint subsets. We argue that $V$ is spherically-local. 
    To that end, we show that for any pair of disjoint closed dyadic cubes $I$ and $J$ in $\bS^{d-1}$, we have that $\Lambda_J V \Lambda_I$ is finite-rank. We can decompose $\ZZ^d$ into four disjoint subsets
    \eq{\textstyle
        C_{I^c},\ C_I\cap X,\ C_I\cap Y,\ C_I\cap Z
    }
    where $X$ is the $\ZZ^d \setminus \cup_{k} R_k$; and $Y$ is the union of all $R_k$ such that $R_k\cap C_I\neq \varnothing$ and $R_k\cap C_J\neq \varnothing$; and $Z$ is the union of all $R_k$ such that $R_k\cap C_I=\varnothing$ or  $R_k\cap C_J=\varnothing$. With the decompositions, we write
    \eq{
        \Lambda_J V\Lambda_I = \Lambda_J V \Lambda_I (\Lambda_{I^c} + \Lambda_{C_I\cap X} + \Lambda_{C_I\cap Y} + \Lambda_{C_I\cap Z})
    }
    and study each term. It is clear that $\Lambda_JV\Lambda_I\Lambda_{I^c}=0$. Since $V$ is identity on $(\oplus_{k\in\NN} \im \Lambda_{R_k})^\perp $, it follows that $\Lambda_J V\Lambda_I\Lambda_{C_I\cap X} = \Lambda_J\Lambda_{C_I\cap X} = 0$. 
    Suppose $y\in C_I\cap Z$, then $y\in C_I$; and either $y\in C_J\cap R_k$ for some $k\in\NN$ such that $R_k\cap C_I=\varnothing$; or $y\in C_I\cap R_k$ for some $k\in\NN$ such that $R_k\cap C_J=\varnothing$. The former case is vacuous. In the latter case, we have $\Lambda_JV\Lambda_I\delta_y=\Lambda_JV\delta_y$; and since $V\delta_y\in \im \Lambda_{R_k}$ and $R_k\cap C_J=\varnothing$, it follows that $\Lambda_JV\delta_y=0$. Thus $\Lambda_J V\Lambda_{I}\Lambda_{C_I\cap U}=0$. 
    Finally, using \cref{lem:localized centers}, the set $C_I\cap Y$ is finite and hence $\Lambda_JV\Lambda_I=\Lambda_J V\Lambda_I \Lambda_{C_I\cap Y}$ is finite-rank. 

    The argument in the previous paragraph works for all operators $A\in\calB(\calH_d)$ such that $V\im\Lambda_{R_k}\subset \im\Lambda_{R_k}$ for all $k\in \NN$, and are identity on $(\oplus_{k\in\NN} \im \Lambda_{R_k})^\perp $. 
    Therefore, we can construct $V\sim_h \Id$ by deforming each invertible matrices of $V$ on each $\im \Lambda_{Y_k}$ to the identity matrices. So far, we have made the deformations $U\sim_h G\sim_h VG$ in $\calG(\calL_d)$.
    
    By construction $VG$ take the form
    \eq{
        VG =P + PVGP^\perp + P^\perp VG P^\perp = (P + P^\perp VG P^\perp)(\Id + PVGP^\perp)\,.
    }
    Since $VG\in\calL_d$ and $[P,L]=0$, it is clear that $P + P^\perp VG P^\perp$ and $\Id + PVGP^\perp$ are both spherically-local. Observe that $(\Id + t PVGP^\perp)$ is always in $\calG(\calL_d)$ for $t\in[0,1]$, with inverse $\Id - t PVGP^\perp$. Therefore, we have $VG\sim_h P + P^\perp VG P^\perp$ in $\calG(\calL_d)$. Let $S=P+P^\perp SP^\perp\in\calU(\calL_d)$ be the polar part of $P + P^\perp VG P^\perp$. It follows from \cite[Proposition 2.1.8]{rordam2000introduction} that we have $P + P^\perp VG P^\perp\sim_h S$ in $\calG(\calL_L)$.

    Since $P$ is spherically-proper, using \cref{prop:spherically-proper projections equivalent to identity}, we have $P\sim P^\perp$. Using \cref{lem:whitehead between P and P perp}, we have $S=P+P^\perp SP^\perp \sim_h W = PWP+P^\perp$ in $\calU(\calL_d)$ for some $W\in\calU(\calL_d)$.
    
    We have constructed $U\sim_h W=PWP+P^\perp$ in $\calG(\calL_d)$. Using \cite[Proposition 2.1.8]{rordam2000introduction}, we obtain the homotopy $U\sim_h W$ in $\calU(\calL_L)$ and concludes the proof.
\end{proof}

\begin{proof}[Proof of \cref{prop:compress unitary homotopy from matrix algebra back to original algebra}]
    Let $W_t\in\calU_{n+1}(\calL_d)$ be the homotopy implementing $U\oplus \Id_n\sim_h \Id_{n+1}$.
    
    Let $P_0:=P$. Since $P_0$ is spherically-proper, so is $P_0^\perp$. Using \cref{lem:split spherically-proper projections}, we may decompose $P_0^\perp$ into $n+1$ spherically-proper projections
    \eq{
        P_0^\perp = Q_0+ P_1 + \dots + P_{n-1} + P_n.
    }

    Using \cref{prop:spherically-proper projections equivalent to identity}, we have $Q_0\sim P_0^\perp$ and $P_k\sim \Id$ for all $k=1,\dots,n$. Let $V_k\in\calL_d$ be the spherically-local partial isometry such that $P_k=V_kV_k^\ast$ and $\Id = V_k^\ast V_k$ for $k=1,\dots,n$. Since $Q_0\sim P_0^\perp$, there exists a spherically-local partial isometry $V_0\in\calL_d$ such that $P_0^\perp = V_0^\ast V_0$ and $Q_0=V_0V_0^\ast$. Then 
    \eq{
    \begin{bmatrix}
        P_0+V_0 & V_1 & \dots & V_n
    \end{bmatrix}
    W_t
    \begin{bmatrix}
        (P_0+V_0)^\ast \\
        V_1^\ast \\
        \vdots \\
        V_n^\ast
    \end{bmatrix}
    }
    gives a homotopy in $\calU(\calL_d)$ that deforms $U$ to $\Id$. 
\end{proof}

\begin{lem}\label{lem:whitehead between P and P perp}
    Let $P$ and $Q$ be spherically-local projections such that $P\perp Q$ and $P\sim Q$. Suppose $U\in\calU(\calL_d)$ takes the form $PUP+P^\perp$, then there exists $W\in\calU(\calL_d)$ of the form $W=QWQ+Q^\perp$ such that $U\sim_h W$ in $\calU(\calL_d)$.
\end{lem}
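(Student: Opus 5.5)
This is a rotation (Whitehead-type) argument. Since $P\sim Q$, I will take a spherically-local partial isometry $V\in\calL_d$ with $V^\ast V=P$ and $VV^\ast=Q$. From $P\perp Q$ it follows that $QV=V=VP$, that $V^2=0$, and that $V^\ast$ maps $\im Q$ back to $\im P$.

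Consider the self-adjoint involution
\eq{
S := V + V^\ast + (\Id-P-Q),
}
which swaps $\im P$ and $\im Q$ and fixes the remaining subspace. It is spherically-local, satisfies $S^2=\Id$, and obeys $SPS=Q$. I will build a rotation path
\eq{
R_t := \cos t\,(P+Q) + \sin t\,(V - V^\ast) + (\Id - P - Q),\qquad t\in[0,\pi/2].
}
Using $V^\ast V=P$, $VV^\ast=Q$, $V^2=0$ and $(P+Q)V=V$, one checks that $R_t$ is unitary and spherically-local, with $R_0=\Id$. At $t=\pi/2$ we get $R_{\pi/2}=V-V^\ast+(\Id-P-Q)$, which satisfies $R_{\pi/2}PR_{\pi/2}^\ast=VPV^\ast=Q$.

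Now define
\eq{
W:=R_{\pi/2}UR_{\pi/2}^\ast .
}
Because $U=PUP+P^\perp$ and $R_{\pi/2}$ maps $\im P$ onto $\im Q$ and $\im P^\perp$ onto $\im Q^\perp$, we get $W=QWQ+Q^\perp$.

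The remaining step is to show $U\sim_h W$ in $\calU(\calL_d)$. The naive path $t\mapsto R_tUR_t^\ast$ is a continuous path of spherically-local unitaries from $U$ to $W$, so it works directly. No $K$-theory is needed, since $R_t$ itself stays in $\calU(\calL_d)$ for every $t$.

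The only thing requiring care is the algebra showing that $R_t$ is unitary. Write $R_t=\Id-P-Q+T_t$, where $T_t$ acts on $\im(P+Q)\cong\im P\oplus\im P$ as the matrix
\eq{
\begin{bmatrix}\cos t & -\sin t\\ \sin t&\cos t\end{bmatrix}
}
through the identification given by $V$. In that form unitarity is immediate, and I expect no real obstacle in this step.
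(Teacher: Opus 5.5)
Your proposal is correct and is essentially the paper's proof. The paper uses the same rotation $R_t$, written as a $3\times 3$ block matrix on $\im P\oplus\im Q\oplus\im(\Id-P-Q)$, and conjugates $U=PUP+Q+(\Id-P-Q)$ by it for $t\in[0,\pi/2]$, arriving at $W=VUV^\ast+Q^\perp$; your auxiliary involution $S$ is not needed.
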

\begin{proof}
    Let $V:\im P\to\im Q$ be the partial isomtery such that $P=V^\ast V$ and $Q=VV^\ast$. Let $R=\Id-P-Q$. In the decomposition of $\calH_d = \im P \oplus \im Q \oplus \im R$, consider the rotation
    \eql{\label{eq:rotation between P perp Q}
    R_t:= \begin{bmatrix}
        (\cos t)P & -(\sin t)V^\ast & 0 \\
        (\sin t) V & (\cos t) Q & 0 \\
        0 & 0 & R
    \end{bmatrix}.
    }
    Then the homotopy
    \eq{
    R_t
    \begin{bmatrix}
        PUP & 0 & 0 \\
        0 & Q & 0 \\
        0 & 0 & R
    \end{bmatrix}
    R_t^\ast
    }
    deforms $PUP+P^\perp$ to $QVUV^\ast Q + Q^\perp=:W$ in $\calU(\calL_d)$ for $t\in [0,\pi/2]$. 

\end{proof}

\begin{lem}\label{lem:non bulk projection path lifts to bulk path}
    Let $P$ and $Q$ be spherically-local projections. If $P$ is bulk-non-trivial and $P\sim_h Q$ in $\calP(\calL_d)$, then $Q$ is also bulk-non-trivial, and hence the homotopy itself lies in $\calP^\NT(\calL_d)$.
\end{lem}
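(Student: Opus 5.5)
The plan is to show that bulk non-triviality is invariant under unitary conjugation by a spherically-local unitary, and that homotopic projections in $\calP(\calL_d)$ are unitarily conjugate through such unitaries. Since any point $Q_t$ on the homotopy is itself homotopic to $P$, once the first claim is shown for the endpoint $Q$ it automatically applies to every $Q_t$. This gives the ``hence'' part of the statement.

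First, I would invoke \cite[Proposition 2.2.6]{rordam2000introduction}, as was already done in the $d=2$ example of \cref{sec:Bulk non-triviality}. It says that if $P\sim_h Q$ in $\calP(\calL_d)$, then there exists $U\in\calU(\calL_d)$ with $Q=UPU^\ast$. (If one prefers to avoid the reference, one can subdivide the path into pieces with $\|Q_{t_i}-Q_{t_{i+1}}\|<1$ and use the standard unitary $Z=Q_{t_{i+1}}Q_{t_i}+Q_{t_{i+1}}^\perp Q_{t_i}^\perp$, which lies in $\calL_d$ and is close to $\Id$, then take its polar part.) Then $Q^\perp=UP^\perp U^\ast$ as well.

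Second, I would use the characterization from \cref{lem:alternative bulk non triviality}: $Q$ is bulk-non-trivial iff $Q\rho(a)Q$ and $Q^\perp\rho(a)Q^\perp$ are non-compact for every $a\neq0$. Suppose $Q\rho(a)Q\in\calK$. Conjugating gives
\[
U^\ast Q\rho(a)QU=PU^\ast\rho(a)UP .
\]
Since $U\in\calL_d$, we have $[U,\rho(a)]\in\calK$ by \cref{eq:dual characterization of local algebra}, so $U^\ast\rho(a)U-\rho(a)\in\calK$. Hence $P\rho(a)P\in\calK$, and bulk non-triviality of $P$ forces $a=0$. The identical argument with $P^\perp$ and $Q^\perp$ handles the complementary condition. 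Therefore $Q\in\calP^\NT(\calL_d)$.

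Finally, applying this to each $Q_t$ shows the whole path lies in $\calP^\NT(\calL_d)$. The only potentially delicate step is the existence of a conjugating unitary inside $\calL_d$ rather than merely in $\calB(\calH_d)$. This holds because $\calL_d$ is a unital $C^\ast$-algebra, so Rørdam's construction stays within it. Everything else is the compact-perturbation bookkeeping above.
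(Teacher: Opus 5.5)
Your proposal is correct and follows essentially the same route as the paper: both use \cite[Proposition 2.2.6]{rordam2000introduction} to obtain a conjugating unitary in $\calU(\calL_d)$, then apply \cref{lem:alternative bulk non triviality} together with $[U,\rho(a)]\in\calK$ to transfer non-compactness of $P\rho(a)P$ and $P^\perp\rho(a)P^\perp$ to $Q$. Your explicit remark that each intermediate $Q_t$ is itself homotopic to $P$ spells out the ``hence'' clause, which the paper leaves implicit.
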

\begin{proof}
    Here, we will mainly use the alternative characterization of bulk non-triviality described by \cref{lem:alternative bulk non triviality}.

    Let $P,Q\in\calP(\calL_d)$ and suppose $P$ is bulk-non-trivial and $P\sim_h Q$. It follows that $P\sim_u Q$, i.e., there exists $U\in\calU(\calL_d)$ such that $P=U^\ast QU$. Let $a\in C(\bS^{d-1})$ be such that $Q\rho(a)Q\in\calK(\calH_d)$. Then
    \eq{
    P\rho(a)P = U^\ast QU\rho(a)U^\ast QU = U^\ast Q[U,\rho(a)]U^\ast QU + U^\ast Q\rho(A)Q U\in\calK(\calH_d)
    }
    where we use $[U,\rho(a)]\in\calK(\calH_d)$. Since $P\in\calP^\NT(\calL_d)$, it follows that $a=0$. Now $P\sim_h Q$ implies $P^\perp\sim_h Q^\perp$. Follow the same argument as before, we have that $Q^\perp\rho(a)Q^\perp$ is not compact unless $a=0$. Thus $Q\in\calP^\NT(\calL_d)$.
\end{proof}

\begin{lem}[splitting a spherically-proper set]\label{lem:split spherically-proper projections}
Let $F\subset\ZZ^d$ be a spherically-proper sets. Then there exists spherically-proper disjoint subsets $F_1$ and $F_2$ of $F$ that partition $F=F_1\cup F_2$.
\end{lem}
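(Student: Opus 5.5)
To split $F$ I would use a diagonal enumeration argument against a countable base of the sphere. Fix a countable base $\Set{I_j}_{j\in\NN}$ of non-empty open subsets of $\bS^{d-1}$. Enumerate the pairs $(j,m)\in\NN\times\NN$ as a single sequence $(j_n,m_n)_{n\in\NN}$ in which each index $j$ occurs infinitely often; any bijection $\NN\to\NN\times\NN$ works. For $d=1$ the base is just $\Set{\{1\}},\Set{\{-1\}}$, and the same argument applies.

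I will build $F_1$ and $F_2$ iteratively. At stage $n$, with $j=j_n$, choose two distinct points $a_n,b_n\in F\cap C_{I_j}$ that have not been used at any earlier stage. This is possible because $|F\cap C_{I_j}|=\infty$ by spherical properness of $F$, while only finitely many points have been used so far. Put $a_n$ into $F_1$ and $b_n$ into $F_2$.

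After all stages, assign every remaining point of $F$ to $F_1$, say. Then $F_1$ and $F_2$ are disjoint and $F=F_1\cup F_2$.

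Next I verify that $F_1$ is spherically proper; the argument for $F_2$ is symmetric. Let $I\subset\bS^{d-1}$ be non-empty and open. It contains some basis element $I_j$, and $C_{I_j}\subset C_I$. The index $j$ appears at infinitely many stages $n$, and each such stage places a new, distinct point of $C_{I_j}$ into $F_1$ and another into $F_2$. Hence
\[
|F_1\cap C_I|=\infty\quad\text{and}\quad|F_2\cap C_I|=\infty .
\]
For the complements, note that $F_1^c\supset F^c$ and $F_2^c\supset F^c$. Spherical properness of $F$ gives $|F^c\cap C_I|=\infty$, so $|F_1^c\cap C_I|=|F_2^c\cap C_I|=\infty$ as well. (Alternatively, $F_2\subset F_1^c$ and $F_1\subset F_2^c$ already give this.)

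There is no real obstacle here. The one point needing care is that each basis index must recur infinitely often in the enumeration, so that each piece receives infinitely many points in every cone rather than just one. Iterating the lemma $n$ times gives the $(n+1)$-fold splitting used in the proof of \cref{prop:compress unitary homotopy from matrix algebra back to original algebra}.
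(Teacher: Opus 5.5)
Your proposal is correct and follows essentially the same route as the paper. Both run a diagonal enumeration of $\NN\times\NN$ against a countable base of $\bS^{d-1}$, pick two fresh points of $F\cap C_{I_j}$ at each stage (one for each piece), send the leftover points to $F_1$, and get the complement condition from $F^c\subset F_i^c$.
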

\begin{proof}
    Let $F\subset\ZZ^d$ be spherically-proper. Let $\Set{I_n}$ be a countable basis for $\bS^{d-1}$. It suffices to show spherical properness with respect to the collection of (overlapping) cones $\Set{C_{I_n}}$. Let $A_n:= F\cap C_{I_n}$. The goal is to distribute points in $A_n$ to two disjoint sets $F_1$ and $F_2$ where each $A_n$ contributes infinitely many points to both sides.
    To that end, consider $\Set{(n_i,k_i)}_{i=1}^\infty$ an enumeration of $\NN\times \NN$. For $i=1$, pick $x_1\in A_{n_1}$ for $F_1$, and pick $y_1\in A_{n_1}\setminus\Set{x_1}$ for $F_2$. For $i=2$, pick $x_2\in A_{n_2}\setminus\Set{x_1,y_1}$ for $F_1$ and $y_2\in A_{n_2}\setminus \Set{x_1,y_1,x_2}$ for $F_2$, and so on. Since $A_n$ contains infinitely many points by spherical properness of $F$, at each step $i$, the set $A_{n_i}$ is nonempty after excluding finitely many points, and hence the algorithm is well-defined. Since we exclude previously chosen points, the sets $F_1$ and $F_2$ are disjoint. Moreover, for each $n\in\NN$, points in $A_n$ are chosen infinitely many times due to the enumeration of $\NN\times \NN$, i.e., the set $\Set{(n_i,k_i) | i\in \NN,\ n_i=n}$ is an infinite set. Therefore, $|F_1\cap A_n|$ and $|F_2\cap A_n|$ are both infinite for all $n\in\NN$. Finally, we toss all the remaining points in $F$ that are not selected in any steps to $F_1$ to make $F_1$ and $F_2$ a partition of $F$.

    Since $|F^c\cap C_I|=\infty$ for all non-empty open subsets $I\subset\bS^{d-1}$ and $F_1,F_2$ are both subsets of $F$, it follows that $|F_i^c\cap C_I|=\infty$ as well for $i\in\Set{1,2}$.
\end{proof}

\begin{lem}[countable decoupling via inclusion–exclusion]\label{lem:turn off elements}
Let $\calH$ be a separable Hilbert space. Let $A\in\calB(\calH)$ and $\ve>0$. Let $\{P_k\}_{k\in\NN},\{Q_k\}_{k\in\NN}$ be projections in $\calB(\calH)$ such that: the projections in $\{P_k\}_{k\in\NN}$ (resp. in $\{Q_k\}_{k\in\NN}$) are pairwise commutative; the operator $P_kAQ_k$ is compact for each $k\in\NN$; and 
\eql{\label{eq:norm bound of Pk A Qk}
    \norm{P_k A Q_k}\leq \frac{\ve}{2^{2k-1}},~ \forall k\in\NN .
}
Then there exists $B\in\calB(\calH)$ such that $\|A-B\|\leq \ve$ and $A-B$ is compact, and $P_kBQ_k=0$ for all $k\in\NN$.
\end{lem}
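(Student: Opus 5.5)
The plan is to turn off the blocks $P_kAQ_k$ one at a time, using a correction that does not disturb the blocks already turned off. For an operator $X$ and index $k$, define
\[
T_k(X):=X-P_kXQ_k .
\]
Then $P_kT_k(X)Q_k=P_kXQ_k-P_kXQ_k=0$.

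The key observation is that $T_j$ preserves blocks that are already zero. Because the $P$'s commute among themselves and the $Q$'s commute among themselves, for any $j,k$ we have
\[
P_kT_j(X)Q_k=P_kXQ_k-P_j\,(P_kXQ_k)\,Q_j .
\]
Hence if $P_kXQ_k=0$, then also $P_kT_j(X)Q_k=0$. Now set $X_0:=A$ and $X_n:=T_n(X_{n-1})$. By induction, $P_kX_nQ_k=0$ for every $k\le n$.

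Next we control the increments $X_{n-1}-X_n=P_nX_{n-1}Q_n$. Expanding the composition of the $T$'s (inclusion--exclusion) gives
\[
X_{n-1}=\sum_{S\subseteq\{1,\dots,n-1\}}(-1)^{|S|}P_SAQ_S ,
\]
where $P_S:=\prod_{i\in S}P_i$ and $Q_S:=\prod_{i\in S}Q_i$. The order of the factors in these products is irrelevant by commutativity. Commuting $P_n$ and $Q_n$ inward then yields
\[
P_nX_{n-1}Q_n=\sum_{S\subseteq\{1,\dots,n-1\}}(-1)^{|S|}P_S\,(P_nAQ_n)\,Q_S .
\]
Each summand is compact, since $P_nAQ_n$ is compact, and has norm at most $\norm{P_nAQ_n}$. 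There are $2^{n-1}$ summands, so by \cref{eq:norm bound of Pk A Qk},
\[
\norm{X_{n-1}-X_n}\le 2^{n-1}\cdot\frac{\ve}{2^{2n-1}}=\frac{\ve}{2^{n}} .
\]
This bound is the one step that needs care: without the inclusion--exclusion expansion we would lose control of the norm. It also explains the exponent $2k-1$ in the hypothesis, which is designed to absorb exactly this factor $2^{n-1}$.

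To conclude, the series $\sum_n(X_{n-1}-X_n)$ converges absolutely in norm, with total norm at most $\sum_n\ve 2^{-n}=\ve$. Therefore $X_n\to B$ in norm, with $\norm{A-B}\le\ve$. Moreover, $A-B$ is a norm limit of finite sums of compact operators, hence compact. Finally, for fixed $k$ we have $P_kX_nQ_k=0$ for all $n\ge k$, and passing to the norm limit gives $P_kBQ_k=0$.
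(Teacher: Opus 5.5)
Your proof is correct and is essentially the paper's argument. Your iterates satisfy $X_n=A-S_n$, where $S_n$ is the paper's inclusion--exclusion sum, and your update $X_n=X_{n-1}-P_nX_{n-1}Q_n$ is exactly the paper's recursion $S_{n+1}=S_n+P_{n+1}AQ_{n+1}-P_{n+1}S_nQ_{n+1}$, with the same count of $2^{n-1}$ terms absorbed by the exponent $2k-1$ (your $T_k$ formulation just makes the preservation of already-vanishing blocks a one-line check).
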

\begin{proof}
We would have liked to define $B$ as \eq{
    A - \sum_{i} P_i A Q_i\,.
} However, this formula may fail to represent the operator we want since the range of the projections $P_i,P_j$ or $Q_i,Q_j$ may overlap, which would mean we over-delete elements. To remedy this problem, inspired by the inclusion–exclusion formula, we define
\eql{
	S_n&:=\sum_{i=1}^n P_i AQ_i - \sum_{1\leq i<j\leq n}P_iP_j A Q_iQ_j \notag \\
	&\quad + \sum_{1\leq i<j<k\leq n}P_iP_jP_k A Q_iQ_jQ_k - \dots + (-1)^{n-1}P_1\dots P_n A Q_1\dots Q_n \notag \\
	&= \sum_{i=1} (-1)^{i+1}\left(\sum_{1\leq k_1<\dots<k_i\leq n} P_{k_1}\dots P_{k_i} A Q_{k_1}\dots Q_{k_i}\right) \label{eq:inclusion exclusion}
} which corrects all the over-counting. More precisely, let $\ve_k:=\|P_k AQ_k\|$. Then
\eql{\label{eq:norm bound on S_n}
	\|S_n\|\leq \sum_{k=1}^n 2^{k-1}\ve_k\,.
}
For example, we have $\|S_1\|=\|P_1 AQ_1\|=\ve_1$, and 
\eq{
\|S_2\|=\|P_1AQ_1+P_2 AQ_2-P_1P_2AQ_1Q_2\| \leq \ve_1+2\ve_2
}
where we used $\|P_1P_2AQ_1Q_2\|\leq \|P_2AQ_2\|=\ve_2$. Fix $l$, we count the number of terms $P_{k_1}\dots P_{k_l}AQ_{k_1}\dots Q_{k_l}$ in $S_n$ with $k_1<\dots <k_l$ having $k_l=m$. Then use the fact that
\eq{
	\|P_{k_1}\dots P_{k_l}AQ_{k_1}\dots Q_{k_l}\|\leq \|P_{k_l}AQ_{k_l}\|=\ve_m \,.
}
There are $2^{m-1}$ number of terms of that form. Let $S=\lim_{n\to\infty} S_n$. We need to show that the limit exists. To that end, for $m>n$, consider $S_m-S_n$. Using the formula \cref{eq:inclusion exclusion} and idea leading to upper bound \cref{eq:norm bound on S_n}, all the terms $P_{k_1}\dots P_{k_l}AQ_{k_1}\dots Q_{k_l}$ in $S_m-S_n$ will have $k_l\geq n+1$. Using \cref{eq:norm bound of Pk A Qk}, we have
\eq{
	\|S_m-S_n\| &\leq 2^n\ve_{n+1}+2^{n+1}\ve_{n+2}+\dots + 2^{m-1}\ve_m \\
	&\leq \frac{\ve}{2^{{n+1}}} +\dots + \frac{\ve}{2^m} \\
	&\leq \ve \sum_{k=n+1}^\infty \frac{1}{2^k}
}
which converges to $0$ as $n\to\infty$ (independent of $m$). 

We have $\|S\|\leq \ve$. Indeed, from \cref{eq:norm bound on S_n} and \cref{eq:norm bound of Pk A Qk}, we have
\eq{
	\|S\|\leq \sum_{k=1}^\infty 2^{k-1}\frac{1}{2^{k-1}}\frac{\ve}{2^k} \leq \ve \,.
}
Define
\eq{
    B = A-S \,.
}
Let us now show that
\eq{
	P_k B Q_k = 0,\ \forall k\in\NN
}
to prove that $B$ is indeed unaffected by the interactions originally present in $P_kAQ_k$. We show that $P_kS_nQ_k=P_kAQ_k$ for all $n\geq k$ by induction. From \cref{eq:inclusion exclusion}, we have the recursion relation
\eql{
	S_{n+1} = S_n + P_{n+1}AQ_{n+1} - P_{n+1}S_nQ_{n+1} \label{eq:recursion relation}
}
where $P_{n+1}AQ_{n+1}$ is from the first sum in \cref{eq:inclusion exclusion}, and $P_{n+1}S_nQ_{n+1}$ is from the rest of sums. Let $k\geq 1$ be arbitrary. It holds that $P_1S_1Q_1=P_1AQ_1$. Take $n=k$ in \cref{eq:recursion relation} and consider
\eq{
	P_{k+1} S_{k+1} Q_{k+1} &= P_{k+1} S_kQ_{k+1} + P_{k+1}P_{k+1}AQ_{k+1}Q_{k+1} - P_{k+1}P_{k+1}S_kQ_{k+1}Q_{k+1} \\
	&= P_{k+1}AQ_{k+1} \,.
}
Thus $P_kS_kQ_k=P_kAQ_k$ holds for all $k\geq 1$. Suppose $P_kS_{n}Q_k=P_kAQ_k$ holds. Using \cref{eq:inclusion exclusion}, we have
\eq{
	P_kS_{n+1}Q_k &= P_kS_nQ_k + P_kP_{n+1}AQ_{n+1}Q_k - P_kP_{n+1}S_nQ_{n+1}Q_k \\
	&=P_kS_nQ_k + P_{n+1}P_kAQ_kQ_{n+1} - P_{n+1}P_kS_nQ_kQ_{n+1} \\
	&= P_kAQ_k
}
where in the last equality we used the induction assumption.

Since $P_kAQ_k$ is compact, it follows that $S_n$ in \cref{eq:inclusion exclusion} is compact, and that its norm limit $S$ is compact. Therefore $A-B=S$ is compact.

\end{proof}

\section{The real symmetry classes}\label{sec:Real symmetries}
In this section, we treat the lower eight rows of the \nameref{table:Kitaev}, namely, those symmetry classes which involve an anti-$\CC$-linear symmetry operator (either time-reversal or particle-hole). 

We consider that space of spherically-local, self-adjoint unitary operators $\calSU(\calL_{d,N})$ that satisfies certain symmetry constraint. Here $\calL_{d,N}$ is the \Cstar-algebra of spherically-local operators acting on $\calH_d\otimes \CC^N\equiv \calH_{d,N}$; see \cref{def:spherically-local}.

\begin{table}
	\begin{center}
		\begin{tabular}{c|c|c|c}
			Class ($\Sigma$) & Structures & Algebraic Properties & Constraints on Systems \\\hline
			A & --- & --- & --- \\
			AIII & $\Pi$ & --- & $\{H, \Pi\} = 0$ \rule[-0.9ex]{0pt}{0pt}\\\hline
			AI & $\Theta$ & $\Theta^2 = +\Id_N$ & $[H, \Theta] = 0$ \rule{0pt}{2.6ex}\\ 
			BDI & $\Pi, \Theta$ & $\Theta^2 = +\Id_N, \, [\Theta, \Pi] = 0$ & $\{H, \Pi\} = 0 , \, [H, \Theta] = 0$ \\
			D & $\Xi$ & $\Xi^2 = +\Id_N$ & $\{H, \Xi\} = 0$ \\
			DIII & $\Pi, \Theta$ & $\Theta^2 = -\Id_N, \, \{\Theta, \Pi\} = 0$ & $\{H, \Pi\} = 0 , \, [H, \Theta] = 0$ \\
			AII & $\Theta$ & $\Theta^2 = -\Id_N$ & $[H, \Theta] = 0$ \\
			CII & $\Pi, \Theta$ & $\Theta^2 = -\Id_N, \, [\Theta, \Pi] = 0$ & $\{H, \Pi\} = 0 , \, [H, \Theta] = 0$ \\
			C & $\Xi$ & $\Xi^2 = -\Id_N$ & $\{H, \Xi\} = 0$ \\
			CI & $\Pi, \Theta$ & $\Theta^2 = +\Id_N, \, \{\Theta, \Pi\} = 0$ & $\{H, \Pi\} = 0 , \, [H, \Theta] = 0$ \\
		\end{tabular}
        \caption{The algebraic properties of symmetry operators and the constraint on the systems. In the structures, the symmetry operators act only on the internal degrees of freedom $\CC^N$. In the constraints above, the operators act on the full tensor product space as $\calC\otimes\Theta,\, \calC\otimes\Xi,\, \Id\otimes\Pi$.}
        \label{table:AZ constraint}
	\end{center}
\end{table}

\begin{defn}[The Altland--Zirnbauer symmetry classes]\label{defn:AZ symmetry classes explicit}
    Let $\Sigma$ be a symmetry class in \cref{table:AZ constraint}. Let the internal degrees of freedom $N\in\NN$ by arbitrary for non-chiral symmetric classes, and let $N=2W$ be even for chiral symmetric classes. We fix the chiral symmetry operator to take the form
    \eql{\label{eq:chiral symmetry operator}
    \Pi=\begin{bmatrix}
        \Id & 0 \\ 0 & -\Id
    \end{bmatrix}
    }
    with respect to the decomposition $\calH_d\otimes \CC^{2W} = \calH_d\otimes \CC^W \oplus \calH_d\otimes \CC^W$ of positive and negative chiral subspaces.
    We fix the symmetry operators, specified in the structure column, such that 
    \eql{\label{eq:symmetry operators hyper spherically local}
    [\Theta,\widehat{X}_j\otimes \Id_N] = 0, \quad [\Xi,\widehat{X}_j\otimes \Id_N] = 0,\quad j=1,\dots,d,
    }
    that is, the symmetry operators are hyper-spherically-local (as opposed to strictly acting within $\CC^{N}$).
    The chosen symmetry operators satisfy the algebraic properties specified in \cref{table:AZ constraint}. We define $\calAZ_{d,N}^{\Sigma}$ to be the space of operators $H\in\calSU(\calL_{d,N})$ such that $H$ satisfies the constraints in \cref{table:AZ constraint}. 
    We define $\calAZ_{d,N}^{\Sigma,\NT}$ as the space of operators $H\in\calAZ_{d,N}^\Sigma$ such that the projections $(H+\Id)/2$ is bulk-non-trivial as in \cref{defn:bulk non triviality}.
\end{defn}

The assumptions made in \cref{defn:AZ symmetry classes explicit} are non-vacuous, and each space $\calAZ_{d,N}^{\Sigma}$ is non-empty, as we will verify later.

As shown in the next result, the symmetry spaces $\calAZ_{d,N}^{\Sigma}$ are well-defined irrespective of particular choices of symmetry operators, and hence we do not include them in the notation. Denote the three Pauli spin matrices to be
\eq{
\sigma_x = \begin{bmatrix}
    0 & 1 \\ 1 & 0
\end{bmatrix},\quad \sigma_y=\begin{bmatrix}
    0 & -\ii \\ \ii & 0
\end{bmatrix},\quad \sigma_z=\begin{bmatrix}
    1 & 0 \\ 0 & -1
\end{bmatrix}.
}
Let $\calC$ be the complex conjugation operator which is the usual complex conjugation on $\CC^N$, and on $\calH_d$ is defined as $\calC (\sum\alpha \delta_x) = \sum \bar{\alpha} \delta_x$.

\begin{lem}\label{lem:symmtery spaces irrespective of symmetry operators}
    For each class $\Sigma$, the spaces $\calAZ_{d,N}^{\Sigma}$ defined via different symmetry operators are unitarily equivalent with a unitary that is hyper-spherically-local. In particular, we have
    \begin{enumerate}
    \item Let $\Sigma\in\Set{\mathrm{AI},\mathrm{D}}$ and $S\in\Set{\Theta,\Xi}$ be the respective symmetry operators. There exists a hyper-spherically-local unitary $U$ on $\calH_{d,N}$ such that $U^\ast SU=\calC$.
    \item Let $\Sigma\in\Set{\mathrm{AII},\mathrm{C}}$ and $S\in\Set{\Theta,\Xi}$ be the respective symmetry operators. 
    Let $M\in 2\NN$. Then there exists a hyper-spherically-local unitary $U:\calH_{d,M}\to\calH_{d,N}$ such that $U^\ast SU=-i\sigma_y\calC$.
    \item For class $\Sigma\in \Set{\mathrm{BDI},\mathrm{DIII},\mathrm{CI}}$, there exists a hyper-spherically-local $U$ on $\calH_{d,N}$ that commutes with $\Pi$ in \cref{eq:chiral symmetry operator} and $U^\ast\Theta U$ takes the form of $\calC$, $-i\sigma_y\calC$, $\sigma_x\calC$, respectively. 
    \item For class $\mathrm{CII}$, let $M\in 4\NN$. There exists a hyper-spherically-local unitary $U:\calH_{d,M}\to\calH_{d,N}$ that commute with $\Pi$ and 
    \eq{
    U^\ast\Theta U= \begin{bmatrix}
        -i\sigma_y\calC & 0 \\
        0 & -i\sigma_y\calC
    \end{bmatrix}.
    }

\end{enumerate}
\end{lem}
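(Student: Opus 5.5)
The plan is to reduce everything to a fiberwise (pointwise in $x\in\ZZ^d$) statement and then assemble the fiber unitaries into a position-diagonal operator. Since each symmetry $S\in\Set{\Theta,\Xi}$ satisfies \cref{eq:symmetry operators hyper spherically local}, the first step is to show that $S$ in fact commutes with every $\Lambda_{\{z\}}$, $z\in\bS^{d-1}_\ZZ$. These projections lie in the von Neumann algebra generated by $\calX_{d,N}$ (they are spectral projections, see \cref{eq:lambda is spectral measure}), and commutation with the self-adjoint generators $\widehat X_j\otimes\Id_N$ passes to real polynomials and then to real Borel functional calculus, even for an antiunitary $S$. Hence $S$ preserves each ray subspace $\calH_z:=\im\Lambda_{\{z\}}\otimes\CC^N\cong\ell^2(\NN)\otimes\CC^N$ (the origin $\delta_0$ is grouped with the ray of $e_1$, following the convention for $\widehat X\delta_0$). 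On each $\calH_z$, $S$ restricts to an antiunitary with the same square $\pm\Id$, and in the chiral classes also with the prescribed (anti)commutation with $\Pi$, which preserves each $\calH_z$ as well.

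The second step is a normal-form result on each infinite-dimensional separable fiber $\calH_z$, and this is standard linear algebra of antiunitaries. For $S^2=\Id$, the set $\Set{v: Sv=v}$ is a real form; choose a real orthonormal basis $\Set{f_n}$ of it, and it is then an ONB of $\calH_z$ on which $S$ acts as complex conjugation. We map $\delta_{z,n}\otimes e_i$ to these basis vectors via any bijection $\NN\times\{1..N\}\to\NN$, which gives $U_z$ with $U_z^\ast SU_z=\calC$. For $S^2=-\Id$, the form $(u,v)\mapsto\langle u,Sv\rangle$ is a quaternionic structure. Iteratively pick unit $u_n\perp\Set{u_m,Su_m}_{m<n}$, so that $\Set{u_n,Su_n}$ is an ONB in which $S$ acts as $-\ii\sigma_y\calC$ up to sign conventions. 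Identify this basis with the standard basis of $\ell^2(\NN)\otimes\CC^M$, which is where $M$ even is needed, and which also explains why the source and target fibers may differ ($M$ versus $N$). In the chiral classes, $\calH_z=\calH_z^+\oplus\calH_z^-$. If $[\Theta,\Pi]=0$, do the above separately on each chiral block (for CII with $\Theta^2=-\Id$ this yields $M\in4\NN$ after pairing). If $\{\Theta,\Pi\}=0$, then $\Theta$ swaps the blocks. Choose an ONB $\Set{g_n}$ of $\calH_z^+$ and set $h_n:=\Theta g_n\in\calH_z^-$. This gives the form $\sigma_x\calC$ for $\Theta^2=\Id$ (CI) and the form $-\ii\sigma_y\calC$-type for DIII, each after a sign adjustment of the basis on $\calH_z^-$. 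In every case the resulting $U_z$ commutes with $\Pi$.

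The third step sets $U:=\bigoplus_z U_z$. It is unitary, and since $U$ maps $\im\Lambda_{\{z\}}\otimes\CC^{M}$ onto $\im\Lambda_{\{z\}}\otimes\CC^N$, it intertwines $\widehat X_j\otimes\Id_M$ with $\widehat X_j\otimes\Id_N$ exactly, because both act as the scalar $z_j$ on each ray. This is exactly hyper-spherical locality. Consequently $H\mapsto U^\ast HU$ preserves $\calL$, self-adjoint unitarity, bulk non-triviality and the constraints, which gives the unitary equivalence of the spaces $\calAZ^\Sigma_{d,N}$ for different symmetry choices: compose $U$ for one choice with $U'^\ast$ for the other.

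I expect the main obstacle to be the first step, namely transferring commutation from $\widehat X_j$ to the spectral projections for an antilinear $S$. One must check that $S f(\widehat X)=\bar f(\widehat X)S$, which is enough because the relevant $f$ are real indicator functions. A secondary technicality is the bookkeeping that reindexes $\ell^2(\NN)\otimes\CC^N$ as $\ell^2(\NN)\otimes\CC^M$ within each ray. This works precisely because every ray fiber is infinite-dimensional.
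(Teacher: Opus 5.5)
Your proposal is correct and follows essentially the paper's route. The paper writes out only class DIII, and there it likewise uses that $\Theta$ preserves each ray subspace $\im\Lambda_{\{z\}}$, picks an orthonormal basis of the positive-chiral ray space, transports it by $\Theta$ to the negative-chiral one, and sends the standard basis to this adapted basis ray by ray to get a hyper-spherically-local $U$. Your version also covers the remaining classes, justifies ray-invariance for an \emph{antilinear} $S$ (the paper instead cites \cref{lem:hyper spherically local reduced by rays}, which is stated for linear operators), and correctly places $\delta_0$ in the $e_1$-ray, which the paper's decomposition $\calV_+=\oplus_z\im\Lambda^+_{\{z\}}$ omits.
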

\begin{proof}
Let us prove the case for class $\mathrm{DIII}$ and leave the rest for the reader. We claim that there exists a hyper-spherically-local unitary operator $U$ on $\calH_d\otimes \CC^N$ that commutes with $\Pi$ and satisfies $U^\ast\Theta U=-i\sigma_y\calC$.

Let $\calV_{\pm} = \calH_d\otimes \CC^W$ be the positive and negative chiral spaces. Since $\Theta$ anti-commutes with $\Pi$, it follows that $\Theta$ maps anti-unitarily $\calV_\pm$ to $\calV_\mp$. Let $\Lambda^\pm_I$ be as in \cref{eq:lambda I on sphere} by on the positive and negative chiral subspaces.
For each $z\in \bS^{d-1}_\ZZ$, pick an orthonormal basis $\{\vf^+_{j,z}\}_{j\in\NN}$ for $\im\Lambda_{\{z\}}^+$. Since $\calV_+=\oplus_{z\in\bS^{d-1}_\ZZ} \im\Lambda_{\{z\}}^+$, the set $\{\vf_{j,z}\}_{j\in\NN,z\in\bS^{d-1}_\ZZ}$ forms an orthonormal basis for $\calV_+$. Define 
\eq{
\vf^-_{j,z}:=\Theta \vf^+_{j,z}.
}
We have $\Theta \vf^-_{j,z} = -\vf^+_{j,z}$.
Since $\Theta$ is hyper-spherically-local, by \cref{lem:hyper spherically local reduced by rays}, it follows that $\vf^-_{j,z}\in \im\Lambda_{\{z\}}^-$. 
Let $\{\eta_{j,z}^{\pm}\}_{j\in\NN}$ be an enumeration of the standard basis (those of the form $\delta_x\otimes e_i$) in $\im\Lambda_{\{z\}}^\pm$.
Let $U$ be the unitary operator that sends the standard basis $\eta_{j,z}^\pm$ in $\im\Lambda_{\{z\}}^\pm$ to $\vf^{\pm}_{j,z}$ for the respective chirality and points $z$ in $\bS^{d-1}_\ZZ$. By construction $U$ is spherically-hyper-local.
We have $U\Pi=\Pi U$. Indeed, on each invariant subspaces $\im\Lambda_{\{z\}}$ we have
\eq{
U\Pi \left(\sum_j\alpha_j\eta_{j,z}^+ + \sum_j\beta_j\eta_{j,z}^-\right) &= U\left(\sum_j\alpha_j\eta_{j,z}^+ - \sum_j\beta_j\eta_{j,z}^-\right) \\
&= \left(\sum_j\alpha_j\vf_{j,z}^+ - \sum_j\beta_j\vf_{j,z}^-\right) \\
&= \Pi U \left(\sum_j\alpha_j\eta_{j,z}^+ + \sum_j\beta_j\eta_{j,z}^-\right).
}
Let us verify $U^\ast\Theta U=-i\sigma_y\calC$. We have
\eq{
\Theta U \left(\sum_j\alpha_j\eta_{j,z}^+ + \sum_j\beta_j\eta_{j,z}^-\right) & = \Theta \left(\sum_j\alpha_j\vf_{j,z}^+ + \sum_j\beta_j\vf_{j,z}^-\right) \\
&= \left(\sum_j\bar{\alpha}_j\vf_{j,z}^- - \sum_j\bar{\beta}_j\vf_{j,z}^+\right) \\
&= U\left(\sum_j-\bar{\beta}_j\eta_{j,z}^+ + \sum_j\bar{\alpha}_j\eta_{j,z}^-\right) \\
&= U (-i\sigma_y\calC) \left(\sum_j\alpha_j\eta_{j,z}^+ + \sum_j\beta_j\eta_{j,z}^-\right) .
}
\end{proof}

\begin{lem}\label{lem:hyper spherically local reduced by rays}
    $A\in\calB(\calH_d\otimes \CC^N)$ is hyper-spherically-local iff $[A,\Lambda_I]=0$ for all measurable subsets $I\subset \bS^{d-1}$. In particular, it suffices to consider $[A,\Lambda_{\{z\}}]=0$ for all $z\in\bS^{d-1}_\ZZ$.
\end{lem}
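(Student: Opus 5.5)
The plan is to identify hyper-spherical locality with commuting with the spectral projections of the representation $\rho$ from \cref{lem:maximal ideal space algebra generated by Xis}, using \cref{eq:lambda is spectral measure}, i.e.\ $\Lambda_I=E(I)$ for Borel $I\subset\bS^{d-1}$. Here $\Lambda_I$ acts as $\Lambda_I\otimes\Id_N$ and $\widehat{X}_j$ as $\widehat{X}_j\otimes\Id_N$. The extension to the internal factor is harmless, since $\rho\otimes\Id_N$ is again a representation of $C(\bS^{d-1})$ with spectral measure $E\otimes\Id_N$. One subtlety is the origin: $\delta_0$ is assigned direction $e_1$ by convention, whereas $\Lambda_I$ excludes $x=0$. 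I will treat $\Lambda_I$ as including $\delta_0$ exactly when $e_1\in I$, so that it coincides with $E(I)$; I will comment below on why this convention matters.

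For the forward direction, suppose $[A,\widehat{X}_j]=0$ for all $j$. Then $A$ commutes with every polynomial in the $\widehat{X}_j$, hence by norm density with all of $\calX_d=\rho(C(\bS^{d-1}))$. By the standard fact that an operator commuting with a normal operator (or with a representation of a commutative $C^\ast$-algebra) commutes with its Borel functional calculus, $A$ commutes with $E(I)=\Lambda_I$ for every Borel $I$. One can cite \cite[Theorem IX.2.2]{conway2007course}, or argue directly: the set of bounded Borel $f$ with $[A,\rho(f)]=0$ contains $C(\bS^{d-1})$ and is closed under bounded pointwise limits, since these become strong limits after pairing with vectors.

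For the converse, suppose $[A,\Lambda_I]=0$ for all Borel $I$. The spectral approximation \cref{eq:approximate spectral integral with cubes}, applied to $f(x)=x_j$, gives $\widehat{X}_j$ as a norm limit of finite linear combinations $\sum_k f(x_k)E(F_k)$. Since $A$ commutes with each term, it commutes with the limit.

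For the last sentence of the statement, note that $\calH_d=\bigoplus_{z\in\bS^{d-1}_\ZZ}\im\Lambda_{\{z\}}$ because only countably many directions carry lattice points. If $[A,\Lambda_{\{z\}}]=0$ for every $z\in\bS^{d-1}_\ZZ$, then $A$ preserves each summand. For any Borel $I$ we have $\Lambda_I=\sum_{z\in I\cap\bS^{d-1}_\ZZ}\Lambda_{\{z\}}$ as a strongly convergent sum of mutually orthogonal projections, and commutation passes to strong limits, so $[A,\Lambda_I]=0$.

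A more direct route for the converse avoids the spectral theorem altogether: each $\widehat{X}_j$ equals $\sum_z z_j\Lambda_{\{z\}}$, and if $A$ commutes with every $\Lambda_{\{z\}}$ then $A\widehat{X}_j=\widehat{X}_j A$ on each summand. The main obstacle, which is minor, is the bookkeeping at the origin: $\delta_0$ must be placed in the ray of $e_1$, and without this convention the claim fails for $A$ coupling $\delta_0$ to other sites. Beyond that, the only nontrivial input is the Borel-calculus commutation in the forward direction.
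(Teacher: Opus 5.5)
Your proposal is correct and follows essentially the same route as the paper: both rely on $\Lambda_I=E(I)$, obtain the forward direction from the fact that anything commuting with $\rho(C(\bS^{d-1}))$ commutes with its spectral projections (the paper cites Halmos, you cite Conway or give a monotone-class argument), and obtain the converse by approximating $\widehat{X}_j$ in norm by finite combinations of spectral projections. Two of your additions fill gaps the paper leaves implicit: you actually prove the ``in particular'' clause via the ray decomposition, and you observe that $\delta_0$ must be assigned to the ray of $e_1$ for \cref{eq:lambda is spectral measure} to hold, since otherwise an $A$ mixing $\delta_0$ with the positive $e_1$-axis violates the forward direction (it is this specific coupling that fails, not coupling to arbitrary other sites).
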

\begin{proof}
    This follows from \cref{eq:spectral measure on Xd} and \cref{eq:lambda is spectral measure}. Indeed, if $A$ is hyper-spherically-local, then $[A,\rho(f)]=0$ for all $f\in C(\bS^{d-1})$ where $\rho:C(\bS^{d-1})\to \calX_{d,N}$ is the isomorphism \cref{eq:isomorphism continuous functions on sphere to algebra generated by Xis} and $\calX_{d,N}$ is the \Cstar-algebra generated by $\widehat{X}_1\otimes \Id_N,\dots,\widehat{X}_d\otimes \Id_N$. In particular, we have $[A,\rho(\lambda\mapsto\lambda)]=[A,\int\lambda dE(\lambda)]=0$ where we use \cref{eq:spectral measure on Xd}. It follows from \cite[Section 41, Theorem 2]{halmos2017introduction} that $[A,E(I)]=0$ for all measurable subsets $I\subset \bS^{d-1}$. Using \cref{eq:lambda is spectral measure}, it follows that $[A,\Lambda_I]=0$. The converse follows from \cite[Section 37, Theorem 4]{halmos2017introduction}.
\end{proof}


\subsection{Real spherically-local algebra}\label{subsec:real spherically local algebra}
We make a detour to studying the real spherically-local algebra.  For operators $A\in\calB(\calH_d\otimes \CC^N)$, we define $\overline{A}:=\calC A\calC$. The real spherically-local algebra $\calL_{d}^\RR$ is the real \Cstar-algebra defined by
\eq{
\calL_d^\RR := \Set{A\in\calL_{d} | \overline{A}=A}
}

The real symmetry classes are related to real Clifford algebra. The real Clifford algebra $\Cli_{p,q}$ is the graded real-$\ast$-algebra generated by $p$ self-adjoint generators which square to 1, and $q$ anti-self-adjoint generators which square to $-1$ and all generators anti-commute pairwise. The grading is defined by declaring the generators to be odd \cite{AtiyahBottShapiro1964}. We will denote $\operatorname{st}$ the natural grading automorphism on the Clifford algebras.

We unified the description of various symmetry classes in \cref{defn:AZ symmetry classes explicit} into one formula in the context of real shperically-local algebras using Clifford algebra and the space of odd, self-adjoint unitaries:

\begin{defn}\label{defn:odd SAU}
    Let $\calA$ be a $\ZZ_2$-graded, unital \Cstar-algebra with grading $\gamma$. We denote the space of odd, self-adjoint, unitary element as
    \eq{
    \calSU_o(\calA) := \Set{A\in\calSU(\calA) | \gamma(A)=-A}
    }
    where  $\calSU(\calA)$ denotes the self-adjoint unitary elements in $\calA$. The space $\calSU_o(\calA)$ is equipped with the topology induced by the \Cstar-norm. See \cite[Definition 2.1]{roe2004paschke} and \cite[Definition 2.1]{daele1988kI}.
\end{defn}

\begin{table}[ht]
  \centering
  \begin{tabular}{c|c|c|c | c}
    AZ   & $n$ & $p$ & $q$ & $\Cli_{p,q}$ \\ 
    \hline
    AI   & 0 & 1 & 0 & $\RR\oplus \RR$\\ 
    BDI  & 1 & 1 & 1 & $M_2(\RR)$\\ 
    D    & 2 & 0 & 1 & $\CC$\\ 
    DIII & 3 & 0 & 2 & $\HH$\\ 
    AII  & 4 & 0 & 3 & $\HH\oplus \HH$\\ 
    CII  & 5 & 0 & 4 & $M_2(\HH)$\\ 
    C    & 6 & 0 & 5 & $M_4(\CC)$\\ 
    CI   & 7 & 0 & 6 & $M_8(\RR)$\\ 
  \end{tabular}
  \caption{Smallest $p,q$ for each symmetry class that satisfies $n+p-q=1$.}
  \label{tab:p q pair in clifford for symmetry}
\end{table}

\begin{prop}\label{prop:clifford description of AZ classes shperically local}
    We have the topological homeomorphism:
    \eq{
    \calSU_o(\calL_{d}^\RR\otimes \Cli_{p,q}) \cong \calAZ_{d,N}^{\Sigma}
    }
    where $p,q\geq 0$ satisfy $n+p-q=1$ for $n=0,1,2,\dots,7$ corresponding to the enumeration of rows in real classes $\Sigma$ of \cref{tab:p q pair in clifford for symmetry} from top to bottom. 
    The algebra $\calL_{d}^\RR\otimes \Cli_{p,q}$ is graded by $\operatorname{id}\otimes \gamma$ where $\gamma$ is the natural $\ZZ_2$-grading on $\Cli_{p,q}$. In particular, we have
    \eq{
\calSU_o(\calL^\RR_{d}\otimes \Cli_{1,0}) &\cong \calAZ_{d,N}^\azAI \cong \calP(\calL_d^\RR)\cong \calSU(\calL_d^\RR)\\
\calSU_o(\calL^\RR_{d}\otimes \Cli_{1,1}) &\cong \calAZ_{d,N}^\azBDI \cong \calU(\calL_d^\RR)\\
\calSU_o(\calL^\RR_{d}\otimes \Cli_{0,1}) &\cong \calAZ_{d,N}^\azD \cong \Set{P\in\calP(\calL_d) | \overline{P} = P^\perp} \cong \Set{U\in\calSU(\calL_d) | \overline{U}=-U^\ast} \\
\calSU_o(\calL^\RR_{d}\otimes \Cli_{0,2}) &\cong \calAZ_{d,N}^\azDIII\cong \Set{U\in\calU(\calL_d) | \overline{U}=-U^\ast} \\
\calSU_o(\calL^\RR_{d}\otimes \Cli_{0,3}) &\cong \calAZ_{d,N}^\azAII \cong \calP(\calL_d^\RR\otimes \HH) \cong \calSU(\calL_d^\RR\otimes \HH)\\
\calSU_o(\calL^\RR_{d}\otimes \Cli_{0,4}) &\cong \calAZ_{d,N}^\azCII \cong \calU(\calL_d^\RR\otimes \HH) \\
\calSU_o(\calL^\RR_{d}\otimes \Cli_{0,5}) &\cong \calAZ_{d,N}^\azC \cong \Set{P\in\calP(\calL_d) | \calJ P\calJ = -P^\perp} \cong \Set{U\in\calSU(\calL_d) | \calJ U\calJ = U^\ast} \\
\calSU_o(\calL^\RR_{d}\otimes \Cli_{0,6}) &\cong \calAZ_{d,N}^\azCI \cong \Set{U\in\calU(\calL_d) | \calJ U\calJ = U^\ast} 
}
where $\calJ$ is any spherically-hyper-local anti-unitary operator on $\calH_d$ such that $\calJ^2=-\Id$.

\end{prop}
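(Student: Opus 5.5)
The proof reduces, class by class, to a concrete normal form. First I use \cref{lem:symmtery spaces irrespective of symmetry operators} to fix the symmetry operators in a standard form. These are $\calC$ for AI, $-\ii\sigma_y\calC$ for AII, and the block forms for BDI, DIII, CII, CI, with $\Pi$ as in \cref{eq:chiral symmetry operator}. The conjugating unitaries are hyper-spherically-local, so conjugation by them is a homeomorphism of $\calL_{d,N}$ preserving spherical locality. Second, \cref{prop:re-dimerization} (or the abstract identification of \cref{subsec:abstract spherical locality}) lets me absorb the internal fiber $\CC^N$ into $\calL_d$. The needed compatibility with $\calC$ comes from the remark after the proof of \cref{prop:spherically-proper projections equivalent to identity}: the re-dimerization unitaries and the reshuffling partial isometries only permute position basis vectors, hence are real. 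Consequently $\calL_{d,N}$ with conjugation $\calC$ identifies with $M_N(\calL_d^\RR)\otimes_\RR\CC$, and the fixed-point algebras become $\calL_d^\RR$, $\calL_d^\RR\otimes\HH$, and so on.

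Next I verify the middle column of isomorphisms by direct block computation. For AI, $H=\overline{H}$ self-adjoint unitary is the same as $P=(H+\Id)/2\in\calP(\calL_d^\RR)$. For D, $\{H,\calC\}=0$ means $\overline{H}=-H$, i.e.\ $\overline{P}=P^\perp$. For the chiral classes, $\{H,\Pi\}=0$ forces
\[
H=\begin{bmatrix}0&U^\ast\\ U&0\end{bmatrix}
\]
with $U$ unitary. The constraint $[H,\Theta]=0$ then reads $\overline{U}=U$ (BDI), $\overline{U}=-U^\ast$ (DIII, using $\Theta=-\ii\sigma_y\calC$ mixing the chiral blocks), $U\in\calL_d^\RR\otimes\HH$ (CII), and $\calJ U\calJ=U^\ast$ (CI, via $\sigma_x\calC$). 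For AII and C the quaternionic structure $\calJ=-\ii\sigma_y\calC$ gives $\calL_d^\RR\otimes\HH$ and $\calJ P\calJ=-P^\perp$ respectively. Every identification is given by an affine formula or block-matrix extraction, so it is norm-continuous with continuous inverse, and spherical locality passes through because $\widehat{X}_j$ commutes with all symmetry operators.

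For the leftmost column I identify $\calSU_o(\calL_d^\RR\otimes\Cli_{p,q})$ with each space by fixing an explicit real (or quaternionic) matrix representation of $\Cli_{p,q}$ as in \cref{tab:p q pair in clifford for symmetry}. Take $\Cli_{1,0}=\RR\oplus\RR$ with generator $e$ and grading $e\mapsto-e$. An odd element is $a\otimes e$, and it is self-adjoint unitary iff $a\in\calSU(\calL_d^\RR)$. For $\Cli_{1,1}\cong M_2(\RR)$, odd elements are off-diagonal, and self-adjoint unitarity gives $\begin{bmatrix}0&U^\ast\\U&0\end{bmatrix}$ with $U\in\calU(\calL_d^\RR)$. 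The remaining cases follow the standard van Daele/Roe dictionary (\cite[Definition 2.1]{roe2004paschke}, \cite{daele1988kI}). Each additional negative generator is realized as $\ii$, $\calJ$, or a chiral block, and I check that the odd part is exactly the subspace cut out by the stated antilinear constraint, using the Clifford periodicity $\Cli_{p+1,q+1}\cong\Cli_{p,q}\otimes M_2(\RR)$. Throughout, elementary tensors with entries in $\calL_d^\RR$ correspond to operators in $\calL_{d,N}$ via the real structure, and the isomorphism is isometric, so the topologies agree.

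I expect the main obstacle to be bookkeeping in the quaternionic and $\Cli_{0,q}$ cases with $q\ge 3$, namely D, DIII, C, CI. Here one must choose the concrete representation of $\Cli_{0,q}$ as $\CC$, $\HH$, $\HH\oplus\HH$, $M_2(\HH)$, $M_4(\CC)$, $M_8(\RR)$ and trace how oddness together with the embedded complex or quaternionic unit reproduces, for instance, $\overline{U}=-U^\ast$ instead of $\overline{U}=U$. I would handle this uniformly: write $\Cli_{0,q}=\Cli_{0,q-1}\hat\otimes\Cli_{0,1}$ and note that an odd self-adjoint unitary over $\Cli_{0,q}$ is equivalent to an odd self-adjoint unitary over $\Cli_{0,q-1}$ anticommuting with one further odd antilinear or complex-linear generator. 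This adds one symmetry at a time and recovers the rows of \cref{table:AZ constraint} inductively. I would check the sign for each row by testing on hyper-spherically-local diagonal examples, which also shows that each space is non-empty.
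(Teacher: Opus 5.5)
Your plan follows the paper's proof. You normalize the symmetry operators with \cref{lem:symmtery spaces irrespective of symmetry operators}, absorb the fiber by the real re-dimerization of \cref{prop:re-dimerization}, read off the condition on the off-diagonal block $U$, and match it against the odd part of an explicit model of $\Cli_{p,q}$; the paper uses the matrix models of \cref{sec:odd structures in clifford algebra} and writes out only DIII. General $(p,q)$ is then reduced through $\Cli_{p+1,q+1}\cong M_2(\RR)\otimes\Cli_{p,q}$.

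One step needs fixing. For CI with $\Theta=\sigma_x\calC$, the constraint $[H,\Theta]=0$ reads $\overline{U}=U^\ast$, not $\calJ U\calJ=U^\ast$; this is also what the odd part $\bigl[\begin{smallmatrix}0&B\\ \overline{B}&0\end{smallmatrix}\bigr]$ of $\Cli_{0,6}$ gives. You reach the stated form only after the homeomorphism $S\mapsto(\calJ\calC)S$ of $\calU(\calL_d)$.
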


For \cref{prop:clifford description of AZ classes shperically local}, it suffices to show for the homeomorphism for $p,q$ in \cref{tab:p q pair in clifford for symmetry}. Indeed, denote $\operatorname{st}$ the natural grading automorphism on the Clifford algebras, using \cite[Corollary 3.6]{kellendonk2017c}, we have the graded isomorphism $(M_2(\RR)\otimes \Cli_{p,q},\operatorname{id}\otimes \operatorname{st})\cong (\Cli_{p+1,q+1},\operatorname{st})$. Then
\eq{
(\calL_{d}^\RR\otimes \Cli_{p+1,q+1},\operatorname{id}\otimes \operatorname{st}) \cong (\calL_{d}^\RR\otimes M_2(\RR)\otimes \Cli_{p,q},\operatorname{id}\otimes\operatorname{id}\otimes \operatorname{st}) \cong (\calL_{d,2}^\RR\otimes \Cli_{p,q},\operatorname{id}\otimes\operatorname{st})
}
where $\calL_{d,2}^\RR$ is the space of $A\in\calL_{d,2}$ such that $\overline{A}=A$. Using re-dimerization \cref{prop:re-dimerization}, we have the unitary equivalence $\calL_{d,2}^\RR \cong \calL_{d}^\RR$, which leads to the graded isomorphism $(\calL_{d,2}^\RR,\operatorname{id})\cong (\calL_{d}^\RR,\operatorname{id})$. Indeed, this is because the re-dimerization unitary $R:\calH\to\calH_2$ is real, i.e., it satisfies $\overline{R}=R$. Therefore
\eq{
\calSU_o(\calL_d^\RR\otimes \Cli_{p+1,q+1})\cong \calSU_o(\calL_{d,2}^\RR\otimes \Cli_{p,q}) \cong  \calSU_o(\calL_d^\RR\otimes \Cli_{p,q}).
}

\begin{proof}[Proof of \cref{prop:clifford description of AZ classes shperically local}]

The result follows from using the canonical form \cref{lem:symmtery spaces irrespective of symmetry operators} and the representations of real Clifford algebra in \cref{sec:odd structures in clifford algebra}, and the re-dimerization \cref{prop:re-dimerization}. We show for the case of DIII and leave the rest for the reader. Let the internal degrees of freedom $N=2W$ be even, let $\Pi$ be the \cref{eq:chiral symmetry operator}, and let $\Theta$ be any anti-linear operator such that it is hyper-spherically-local, $\Theta^2=-\Id$ and $[\Theta,\Pi]=0$. 
We are interested in the space
\eq{
\calAZ_{d,N}^{\azDIII} = \Set{H\in\calSU(\calL_{d,N}) | \{H,\Pi\}=0,\, [H,\Theta]=0}.
}
Using \cref{lem:symmtery spaces irrespective of symmetry operators}, we can choose $\Theta$ to be 
\eq{
\calC\otimes \begin{bmatrix}
    0 & -\calC_W \\ \calC_W & 0
\end{bmatrix} =: -i\sigma_y\calC
}
with respect to the chiral grading. Using the choice of $\Theta= -i\sigma_y \calC$ and re-dimerization, we have the homeomorphism
\eql{\label{eq:canonical space DIII}
\calAZ_{d,N}^\azDIII \cong \Set{U\in\calU(\calL_d) | U^\ast=-\overline{U}}=:\calU^\azDIII_d .
}
Indeed, if $H$ is self-adjoint, unitary and anti-commutes with $\Pi$, then $H$ is of the form 
\eq{
H=\begin{bmatrix}
    0 & U^\ast \\ U & 0
\end{bmatrix}
}
for $U\in\calU(\calL_{d,W})$. Using $[H,\Theta]=0$, it follows that $U^\ast = -\overline{U}$. Consider the re-dimerization unitary $R:\calH_d\to \calH_{d,W}$ in \cref{prop:re-dimerization}. Since it maps between standard bases, we have $\overline{R}=R$. If $U\in\calU(\calL_{d,N})$ satisfies $U^\ast=-\overline{U}$, then $R^\ast U R\in\calU(\calL_d)$ and
\eq{
(R^\ast U R)^\ast = R^\ast U^\ast R = R^\ast (-\overline{U})R = -\overline{R^\ast U R}.
}
where we use $\overline{R}=R$ in the last equality.
This establishes \cref{eq:canonical space DIII}. The homeomorphism \cref{eq:canonical space DIII} also implies that we have the homeomorphism $\calAZ_{d,N}^\azDIII\cong \calAZ_{d,M}^\azDIII$ for all $M,N\in 2\NN$. On the other hand, using the representations of Clifford algebra in \cref{sec:odd structures in clifford algebra}, we have
    \begin{equation}
    \label{eq:diii SUo}
    \calSU_o(\calL_{d}^\RR\otimes \Cli_{0,2}) = \Set{\begin{bmatrix}
        0 & B \\ -\overline{B} & 0
    \end{bmatrix} \in \calSU(\calL_{d}^\RR\otimes M_2(\CC))}
    \end{equation}
which is homeomorphic to $\Set{B\in \calU(\calL_{d}) | B^\ast=-\overline{B}} = \calU_d^\azDIII$, exactly \cref{eq:canonical space DIII}.

\end{proof}



\subsection{Classification}

In this subsection we complete the classification for the eight real Altland--Zirnbauer classes at the level of path-connected components.
Using the reformulation from \cref{subsec:real spherically local algebra}, each real symmetry space can be identified with a space of odd self-adjoint unitaries in a graded real algebra of the form
$\calL_d^\RR\otimes \Cli_{p,q}$ (\cref{prop:clifford description of AZ classes shperically local}), so the remaining task is to show that the corresponding strong invariants are not merely $K$-theoretic, but in fact \emph{complete} in the sense of $\pi_0$.
Equivalently, we prove that the natural map from path-components of the bulk-non-trivial symmetry spaces to the relevant van Daele $K$-groups is a bijection.
This establishes the real rows of the Kitaev table in the $\pi_0$-sense, and, together with the complex classes handled in \cref{sec:The classification of bulk-non-trivial spherically-local projections and unitaries}, completes the proof of \cref{thm:Kitaev table agrees with path-connected components of non-trivial insulators}.

\begin{thm}\label{thm:classification for real classes}
    For any $\Sigma$ in the eight real Altland--Zirnbauer symmetry classes, the set of path-connected components of $\calAZ_{d,N}^{\Sigma,\NT}$ agrees with the relevant entry within the \nameref{table:Kitaev}.
\end{thm}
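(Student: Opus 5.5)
By \cref{prop:clifford description of AZ classes shperically local}, it suffices to compute $\pi_0$ of the bulk-non-trivial part of $\calSU_o(\calL_d^\RR\otimes\Cli_{p,q})$ for the pairs $(p,q)$ in \cref{tab:p q pair in clifford for symmetry}. We would show that the natural map to the van Daele group
\[
\pi_0\bigl(\calSU_o^{\NT}(\calL_d^\RR\otimes\Cli_{p,q})\bigr)\to DK(\calL_d^\RR\otimes\Cli_{p,q})\cong KO_{n+1}(\calL_d^\RR)
\]
is a bijection. We would then compute the target by real Paschke duality, as in \cref{prop:K-groups calculation of local algebra}:
\[
KO_\ast(\calL_d^\RR)\cong KO^{\ast+1}\bigl(C_0(\RR^{d-1},\RR)\bigr).
\]
This is $KO^{\ast+1}$ of a $(d-1)$-fold suspension of $\RR$, so the real Bott periodicity table reproduces the Kitaev entry $KO_{n-d}(\RR)$ up to the index shift. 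As a consistency check, in class AI with $d=4$ the entry should be $\ZZ$. The real analogue of \cref{prop:index and K-groups}, the real UCT for commutative $C_0(\RR^{d-1})$, would identify the invariants with Dirac pairings valued in $\ZZ$, $2\ZZ$ or $\ZZ_2$. We would not try to recompute this; the real work is lifting it to $\pi_0$.

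The lift copies the complex scheme of \cref{subsec:proofs of complex classification theorems}, carried out inside the real algebra.

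\emph{Step 1 (real proper isometries).} The partial isometries of \cref{prop:spherically-proper projections equivalent to identity} are real, because they only reshuffle the position basis. They therefore implement $P\sim\Id$ in $\calL_d^\RR$. Tensoring with $\Id_{\Cli_{p,q}}$ gives even, grading-preserving isometries. This yields real analogues of \cref{lem:projection MvN amplification} and of the compression maps $V(\cdot)V^\ast$.

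\emph{Step 2 (reduce to unitaries and projections).} Conjugating by a fixed odd self-adjoint unitary reference point, $\calSU_o$ of a graded algebra can be written either as (a) projections or self-adjoint unitaries in a real (or quaternionic) algebra, or as (b) unitaries with a reality/twist constraint; the explicit list is in \cref{prop:clifford description of AZ classes shperically local}. For the unitary-type classes (BDI, DIII, CII, CI) bulk non-triviality is automatic, by the chiral argument. For these we would redo \cref{prop:local unitary reduced by geo proper proj} and \cref{prop:compress unitary homotopy from matrix algebra back to original algebra} with the symmetry imposed throughout:
\begin{itemize}
\item the operator $B$ of \cref{lem:localized centers} is symmetrized, e.g.\ replaced by $(B+\overline{B})/2$, or by $(B-\overline{B}^{\ast})/2$ in DIII, which keeps it small, compact-close and local;
\item the blockwise invertibles on the finite islands $R_k$ are chosen within the symmetry-constrained finite-dimensional groups;
\item polar decomposition and \cref{lem:whitehead between P and P perp} are applied with real partial isometries.
\end{itemize}
For the projection-type classes (AI, D, AII, C) we would then argue as in the proof of \cref{thm:path-connected components of local bulk-non-trivial projections}. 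Equal $K$-classes together with the amplification lemma give $P\sim Q$ and $P^\perp\sim Q^\perp$ through symmetric partial isometries, hence $Q=UPU^\ast$ with $U$ in the appropriate constrained unitary group. The homotopy $U\sim_h\Id$ (or $U\sim_h W$, with $W$ fixing $P$) then comes from the unitary-type class one step down in the Clifford index. \cref{lem:non bulk projection path lifts to bulk path} keeps the path bulk-non-trivial. Surjectivity is obtained by the padding $R=Q\oplus S$ with a real spherically-proper $S$, exactly as before.

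The main obstacle is the injectivity step for the twisted classes (D, DIII, C, CI), where the constraint $\overline{U}=-U^\ast$ or $\calJ U\calJ=U^\ast$ is not a plain real structure. Two points are delicate. First, the blockwise deformation of $V$ to $\Id$ on the islands $R_k$ requires the relevant finite-dimensional symmetric groups to be connected. Some are not (e.g.\ $O(m)$, or antisymmetric unitaries). To handle this, we would first enlarge each island by pairing it with a neighbouring point, chosen spherically-properly, so that the local obstruction, such as a determinant sign or Pfaffian, can be cancelled; this would put the blocks in a connected group like $SO(2m)$. Second, the compression of the stabilized homotopy must use isometries that intertwine the twisting operator. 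For this we would use the real reshuffling isometries tensored with the quaternion/Clifford factor, which commute with $\calC$ and $\calJ$.

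Once both points hold, the bijection with $KO$, and hence with the Kitaev entry, follows from the real index calculation above.
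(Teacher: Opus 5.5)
Your skeleton matches the paper's: Clifford reduction, van Daele groups via real Paschke duality, then pinning, stabilization and compression with real reshuffling isometries, with the chiral classes automatically bulk-non-trivial. There is a minor index slip: with $n+p-q=1$ the target is $\rmDK_n(\calL_d^\RR)$, e.g.\ $KO_0$ (real projections) for AI, not $KO_{n+1}$.

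\textbf{Twisted classes D, DIII, C, CI.} This is where the paper's detailed argument lives, and your proposal has a genuine gap here.

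In these classes $\Id$ is not in the symmetry space. For example $\Id\notin\calU_d^{\azDIII}$, and no $W=PWP+P^\perp$ with $P$ position-diagonal satisfies $W^\ast=-\overline{W}$. So \cref{prop:local unitary reduced by geo proper proj} and \cref{prop:compress unitary homotopy from matrix algebra back to original algebra} cannot be ``redone with the symmetry imposed''.

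The stabilization in $\rmDK_e$ pads with copies of a basepoint $e$ satisfying $e\sim_h-e$. The paper takes the dimer operator $E$ of \cref{defn:diii dimer operator} and pins to $PWP+P^\perp EP^\perp$ with $P$ reducing $E$ (\cref{prop:diii reduced by proper projection}). It then compresses with dimer-to-dimer reshufflings that commute with $E$ (\cref{lem:diii partial isometry for dimers}). Isometries that merely commute with $\calC$ or $\calJ$ do not send $A\oplus E_n$ back to $A$.

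Likewise, left multiplication by blockwise ``symmetry-constrained'' invertibles does not preserve $U^\ast=-\overline{U}$. The paper works on $\calG_d^{\azDIII}=\{A: A\overline{A}=-\Id\}$, on which \emph{every} $V\in\calG(\calL_d)$ acts by $A\mapsto VA\overline{V}^{-1}$ (\cref{lem:diii congruence}). The island blocks then range over the connected group $GL_m(\CC)$, so your determinant/Pfaffian worry does not arise in DIII. Also, antisymmetric unitaries are neither a group nor disconnected.

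What is needed instead is the local relation $-T\overline{T}\delta_{x_k}=\delta_{x_k}$ at the pinned sites. The paper obtains it from the non-collinearity bound and a rank-one correction (\cref{lem:diii non-collinearity}, \cref{lem:diii localized centers}), followed by the nonlinear symmetrization $\Psi$ of \cref{lem:diii symmetrization}. Your linear projection $(B-\overline{B}^{\ast})/2$ gives no such relation. It also spoils the island structure unless rows are decoupled as well as columns.

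\textbf{Projection route for D and C.} As stated, this route fails because it discards the torsion information. Take D in $d=1$, where the entry is $\ZZ_2$.
\begin{itemize}
\item D-projections ($\overline{P}=P^\perp$) are automatically bulk-non-trivial, since conjugation swaps $\Lambda_IP\Lambda_I$ and $\Lambda_IP^\perp\Lambda_I$.
\item Since $\rmK_0(\calL_1)=0$, any two of them are Murray--von Neumann equivalent. If $V^\ast V=P$ and $VV^\ast=Q$, then $U:=V+\overline{V}$ is a real spherically-local unitary with $UPU^\ast=Q$.
\item So real unitaries act transitively, and $P\sim_h Q$ is decided exactly by whether $U$ can be deformed into the stabilizer of $P$.
\item That stabilizer is the group of real unitaries commuting with the complex structure $-\ii(2P-\Id)$. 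Its $\pi_0\cong\ZZ$ maps into $\pi_0(\calU(\calL_1^\RR))\cong\ZZ$ by realification, which is multiplication by $2$.
\end{itemize}
Hence ``$U\sim_h W$ with $W$ fixing $P$, from the unitary class one step down'' cannot be obtained as in the complex odd case. The argument has to use the full $\rmDK$-equality, for instance by pinning the symmetric operators themselves to a dimer reference and compressing the stable homotopy directly, as the paper does for DIII.

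For AI and AII the conjugating groups are BDI and CII, one step \emph{up}, not down. There your argument does carry over, given real and quaternionic versions of \cref{lem:projection MvN amplification}.
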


The proof of \cref{thm:classification for real classes} relies on lifting $K$-theory classes defined on $\calSU_o$ \cref{defn:odd SAU} to the set of path-connected components of the symmetry spaces in \cref{prop:clifford description of AZ classes shperically local}. The $K$-theory builds on top of the $\calSU_o$ leads to the van Daele $K$-theory, which we now briefly describe. 

Let $\calA$ be a $\ZZ_2$-graded, unital, real \Cstar-algebra with $\ZZ_2$-grading $\gamma:\calA\to\calA$. Let $M_n(\calA)$ be the graded matrix algebra of $\calA$ where the $\ZZ_2$-grading on $M_n(\calA)$ is obtained by applying $\gamma$ elementwise. 
Consider the space $\calSU_o(M_n(\calA))$ and the space
$\bigsqcup_{n=1}^\infty \calSU_o(M_n(\calA))$.
There is a natural binary operation $\oplus$ on $\bigsqcup_{n=1}^\infty \calSU_o(M_n(\calA))$ by 
\eq{
    a\oplus b = \begin{bmatrix}
        a & 0 \\ 0 & b
    \end{bmatrix} \in\calSU_o(M_{n+m}(\calA))
}
for $a\in\calSU_o(M_n(\calA))$ and $b\in\calSU_o(M_m(\calA))$.
Suppose $\calSU_o(\calA)$ is nonempty, choose an element $e\in \calSU_o(\calA)$. Let $e_n\in\calSU_o(M_n(\calA))$ be the direct sum of $n$-copies of $e$.
Define a relation $\sim_e$ on $\bigsqcup_{n=1}^\infty \calSU_o(M_n(\calA))$ as follows. For $a\in\calSU_o(M_n(\calA))$ and $b\in\calSU_o(M_m(\calA))$, write $a\sim_e b$ if there exists $j,k\geq 0$ such that $a\oplus e_j \sim_h b\oplus e_k$. 
We define
\eql{\label{eq:Ke group}
\rmDK_e(\calA) := \calF_\infty(\calA) / \sim_e. 
}
Let $[a]_e$ denote the equivalence class containing $a$ in $\bigsqcup_{n=1}^\infty \calSU_o(M_n(\calA))$. Define a binary operation $+$ on $\rmDK_e(\calA)$ by $[a]_e+[b]_e = [a\oplus b]_e$. One can show that $\rmDK_e(\calA)$ equipped with $+$ is a commutative semigroup with a neutral element $0:=[e]_e$. 
Van Daele showed in \cite[Proposition 2.12]{daele1988kI} that if there exists $e\in\calSU_o(\calA)$ such that $e\sim_h -e$ in $\calSU_o(\calA)$, then $\rmDK_e(\calA)$ is a group. 
If such an element $e$ does not exist, van Daele showed that one may augment the algebra $\calA$ to $M_4(\calA)$ with certain $\ZZ_2$-grading and construct $e\in M_4(\calA)$ with the particular property, and define $\rmDK(\calA):=\rmDK_e(M_4(\calA))$. See \cite[Section 3]{daele1988kI}. Moreover, $\rmDK(\calA)$ is isomorphic to $\rmDK_{e}(\calA)$ if $\calA$ has an element $e\in\calSU_o(\calA)$ such that $e\sim_h -e$ in $\calSU_o(\calA)$. For higher van Daele $K$-group, we define
\eq{
\rmDK_n(\calA):= \rmDK(\calA\widehat{\otimes}\Cli_{p,q})
}
where $n+p-q=1$. See also \cite{roe2004paschke} and \cite[Definition 5.5]{kellendonk2017c}

\begin{prop}\label{prop:computation for van daele K group of real spherically local algebra}
    The group $\rmDK_n(\calL_{d}^\RR)$ is isomorphic to the values in \Cref{table:Kitaev}, where $n=0,1,\dots,7$ corresponds to $\Sigma=\mathrm{AI},\mathrm{BDI},\dots,\mathrm{CI}$.
\end{prop}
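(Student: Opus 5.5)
The proof will follow the complex case, Proposition~\ref{prop:K-groups calculation of local algebra}, replacing complex Paschke duality by its real, graded analogue due to Roe \cite{roe2004paschke}. The first step is to identify $\calL_d^\RR$ as the real Paschke dual $\calD_{\rho}(C(\bS^{d-1},\RR))$. Here the representation is the real restriction of $\rho$ from Lemma~\ref{lem:maximal ideal space algebra generated by Xis}. This identification holds because each $\widehat{X}_j$ is real ($\overline{\widehat{X}_j}=\widehat{X}_j$), so an operator $A$ with $\overline A=A$ is spherically-local iff it essentially commutes with the real algebra generated by the $\widehat X_j$. Ampleness (no nonzero compacts in the image) carries over from Lemma~\ref{lem:maximal ideal space algebra generated by Xis}. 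Tensoring with the infinite multiplicity implicit in $\ell^2(\ZZ^d)$, where each rational ray carries infinitely many points, makes the representation ample in the real sense. I will check this using the real version of Voiculescu's theorem, or equivalently through the real analogue of Lemma~\ref{lem:projection MvN amplification}.

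The second step is to invoke Roe's real Paschke duality theorem: for a separable real unital $C^\ast$-algebra $A$ with an ample representation, $\rmDK_n(\calD(A))\cong \rmKKO^{\,?}(A_0,\RR)$. That is, van Daele $K$-theory of the dual algebra is real $K$-homology of the nonunital part $A_0=C_0(\RR^{d-1},\RR)$, with a degree shift I will fix by matching it against the complex computation already carried out. I expect a relation of the form $\rmDK_n(\calL_d^\RR)\cong \rmKO^{\,n-?}(C_0(\RR^{d-1},\RR))$. I will use the real Bott periodicity $\rmKO$ of suspensions to reduce this to $\rmKO_{m}(\RR)$, and then read off the $8$-periodic sequence $\ZZ,\ZZ_2,\ZZ_2,0,\ZZ,0,0,0$ with the appropriate shift in $d$ and $n$. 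Comparing diagonals against the table then confirms, for example, that the entry for AI at $d=4$ is $\ZZ$ and for AII at $d=2$ is $\ZZ_2$. Where only the unreduced $C(\bS^{d-1})$ appears, I will use the split exact sequence $0\to C_0(\RR^{d-1})\to C(\bS^{d-1})\to\RR\to0$ together with $[\Id]=0$, the real analogue of Lemma~\ref{lem:projection MvN amplification}, to discard the extra point summand, exactly as the unitization was handled in the complex case.

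The main obstacle is bookkeeping the degree convention. Van Daele's $\rmDK_n$ with $n+p-q=1$, Roe's grading conventions, and the physics labelling in which $\Sigma$ runs over AI through CI must all be made consistent. I will anchor this by checking two cases directly. The first is $d=1$, class BDI ($n=1$), where $\calL_1^\RR$ unitaries should give $\ZZ$ via the real-valued Fredholm index of $\Lambda U\Lambda+\Lambda^\perp$. The second is $d=1$, class D ($n=2$), which should give $\ZZ_2$. With these two anchors fixed, the rest follows by $(1,1)$-periodicity: raising $d$ by one lowers the $\rmKO$ degree by one, via the suspension in $C_0(\RR^{d-1})$. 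This is precisely the diagonal shift visible in the table.
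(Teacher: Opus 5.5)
Your proposal is correct and follows the same route as the paper: identify $\calL_d^\RR$ with the real Paschke dual of $C(\bS^{d-1},\RR)=\widetilde{C_0(\RR^{d-1},\RR)}$, apply Roe's real Paschke duality \cite[Proposition 4.2]{roe2004paschke}, and reduce by Bott periodicity to $\rmKO_*(\RR)$. The one difference is that the paper reads the degree directly off Roe's result, $\rmDK_n(\calL_d^\RR)\cong\rmKKO_{n-1}(C_0(\RR^{d-1}),\RR)\cong\rmKKO_{n-d}(\RR,\RR)\cong\rmKO_{n-d}(\RR)$. This agrees with your two $d=1$ anchors, so the calibration step is unnecessary; if you do use it, those anchor groups must be computed independently (e.g.\ from the extension of $\calL_1^\RR$ by the compacts) rather than taken from the table.
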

\begin{proof}
    We have
    \eq{
    \rmDK_n(\calL_d^\RR) \cong \rmDK_n(\calD(C(\bS^{d-1},\RR))) \cong \rmDK_n(\calD(\widetilde{C_0(\RR^{d-1}))}) \cong \rmKKO_{n-1}(C_0(\RR^{d-1}),\RR)
    }
    where the first isomorphism uses \cref{eq:dual characterization of local algebra} and the second one identifies the unitization of $C_0(\RR^{d-1},\RR)$ as $C(\bS^{d-1},\RR)$, and the third isomorphism uses \cite[Proposition 4.2]{roe2004paschke} that identified van Daele’s $K$-theory of the dual of an algebra with Kasparov's $K$-homology in the real case. The rest of the calculation is standard
    \eq{
    \rmKKO_{n-1}(C_0(\RR^{d-1}),\RR) &\cong \rmKKO_{n-1-(d-1)}(\RR,\RR)\\ &\cong \rmKO_{n-d}(\RR)\cong \begin{cases}
        \ZZ & n-d\equiv 0,4 \Mod 8  \\
        \ZZ_2 & n-d\equiv 1,2\Mod 8 \\
        0 & n-d\equiv 3,5,6,7 \Mod 8
    \end{cases}
    }
    where the first isomorphism uses \cite[Theorem 2.5.2]{schroder1993k} and the second isomorphism uses \cite[Theorem 2.3.8]{schroder1993k} that relates $KK$-group to the $K$-group of real \Cstar-algebra, and the last isomorphism can be found on \cite[p.~23]{schroder1993k}.
\end{proof}


At this stage one can run the same $\pi_0$-lifting argument separately in each of the eight real symmetry classes.
The overall strategy is the same: starting from the van Daele description of the relevant $K$-group, one shows that the defining stabilization relation can be \emph{compressed} to an actual homotopy in the original symmetry space by a pinning procedure (deforming a given symmetry-constrained operator so that, outside a spherically-proper region, it agrees with a fixed reference element).
The only changes from one class to another are (i) the precise symmetry relation dictated by the corresponding Clifford generators in \cref{prop:clifford description of AZ classes shperically local}, and (ii) the choice of a convenient basepoint element in $\calSU_o$ (in some classes the identity is admissible, while in others one uses a local dimerization as a canonical substitute, cf. \cref{defn:diii dimer operator}).
These variations introduce additional algebraic bookkeeping and some technical nuisance, but no new conceptual input beyond the argument presented below.

Before addressing the genuinely new symmetry constraints in the remaining real rows, it is worth noting that \cref{sec:The classification of bulk-non-trivial spherically-local projections and unitaries} already settles, in essence, half of the real Altland--Zirnbauer classes. Indeed, \cref{prop:clifford description of AZ classes shperically local} identifies the class AI and BDI spaces as the \emph{real} versions of our basic objects (homeomorphic to $\mathcal P(\mathcal L_d^{\mathbb R})$ and $\mathcal U(\mathcal L_d^{\mathbb R})$), while class AII and CII are the corresponding \emph{quaternionic} analogues (homeomorphic to $\mathcal P(\mathcal L_d^{\mathbb R}\otimes\mathbb H)$ and $\mathcal U(\mathcal L_d^{\mathbb R}\otimes\mathbb H)$).
Consequently, for these four rows there is no new conceptual work to do: the pinning-stabilization-compression mechanism from \cref{sec:The classification of bulk-non-trivial spherically-local projections and unitaries} carries over after imposing the appropriate real/quaternionic condition.

On the other hand, the Altland--Zirnbauer table is arranged into adjacent non-chiral/chiral pairs
$
(\mathrm{AI},\mathrm{BDI}), (\mathrm{D},\mathrm{DIII}), (\mathrm{AII},\mathrm{CII}), (\mathrm{C},\mathrm{CI})
$
so that, within each pair, understanding the chiral (unitary) formulation essentially gives the non-chiral (projection) formulation.

For readability, we therefore present the proof in full detail only for class DIII, being one of the non-obvious cases (as opposed to the chiral or non-chiral real or quaternionic cases).
Carrying out the same steps with the appropriate substitutions in the remaining real classes yields \cref{thm:classification for real classes} in complete generality.
In particular, combining \cref{thm:classification for real classes} with the complex-class results of \cref{sec:The classification of bulk-non-trivial spherically-local projections and unitaries} proves \cref{thm:Kitaev table agrees with path-connected components of non-trivial insulators}.


Similar to the proof in the complex case, we will focus on the space of invertible operators rather than the unitaries. To that end, we define
\eq{
\calG_d^{\azDIII} := \Set{A\in\calG(\calL_d) | A^{-1}=-\overline{A}} \supset \calU_d^{\azDIII}.
}
It is clear that $\Id\notin \calU_d^\azDIII$.
Nonetheless, the space $\calU_d^{\azDIII}$ is nonempty. Indeed, we have
\begin{defn}[dimer operator]\label{defn:diii dimer operator}
Consider a nearest-neighbor partition $\{x_k\}_{k\in\NN}\cup\{y_k\}_{k\in\NN}=\ZZ^d$ of the lattice points and define $E\in\calB(\calH_d)$ to be
\eql{\label{eq:diii dimer operator nearest neighbor}
E \delta_{x_k} = \delta_{y_k},\quad E \delta_{y_k} = -\delta_{x_k}
}
and extend linearly to all vectors in $\calH_d$.
\end{defn}

We have
\eq{
E\overline{E} \left(\sum_k\alpha_k\delta_{x_k}+\beta_k\delta_{y_k}\right) = E\calC \left(\sum_{k}\bar{\alpha}_k \delta_{y_k} + \bar{\beta}_k(-\delta_{x_k})\right) = -\left(\sum_k\alpha_k\delta_{x_k}+\beta_k\delta_{y_k}\right)
}
and similarly $\overline{E}E=-\Id$. In particular, $E$ is real, i.e., $\overline{E}=E$. Therefore $E\in\calU_d^{\azDIII}$. We call $E$ the dimer operator, which will be a canonical operator replacing the identity operator. 



\begin{lem}\label{lem:diii congruence}
    If $A\in\calG^\azDIII_d$ and $V\in\calG(\calL_d)$, then $VA\overline{V}^{-1}\in \calG^\azDIII_d$.
\end{lem}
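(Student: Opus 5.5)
We need to show that if $A\in\calG^\azDIII_d$ and $V\in\calG(\calL_d)$, then $B:=VA\overline{V}^{-1}$ lies in $\calG^\azDIII_d$. The plan is a direct algebraic verification in two parts: first, that $B$ is an invertible spherically-local operator; second, that it satisfies $B^{-1}=-\overline{B}$.

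For the first part, $\calL_d$ is a \Cstar-algebra, so it suffices to check $\overline{V}^{-1}\in\calL_d$. The key observation is that complex conjugation $A\mapsto \overline{A}=\calC A\calC$ preserves $\calL_d$. This holds because $\calC$ commutes with each $\widehat{X}_j$, since these are real diagonal operators in the position basis. Consequently
\eq{
[\overline{V},\widehat{X}_j]=\calC[V,\widehat{X}_j]\calC,
}
and conjugation by the anti-unitary $\calC$ maps compact operators to compact operators. Thus $\overline{V}\in\calL_d$. Moreover $\overline{V}$ is invertible with inverse $\overline{V^{-1}}$, because $\overline{V}\,\overline{V^{-1}}=\calC VV^{-1}\calC=\Id$ and likewise in the other order. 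It follows that $\overline{V}^{-1}=\overline{V^{-1}}\in\calL_d$, so $B\in\calG(\calL_d)$.

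For the second part, compute $\overline{B}$ using multiplicativity of the bar map, $\overline{XY}=\overline{X}\,\overline{Y}$, and the involution property $\overline{\overline{X}}=X$:
\eq{
\overline{B}=\overline{V}\,\overline{A}\,\overline{\overline{V}^{-1}}=\overline{V}\,\overline{A}\,V^{-1}.
}
By hypothesis $\overline{A}=-A^{-1}$. Substituting, we get $\overline{B}=-\overline{V}A^{-1}V^{-1}$. On the other hand,
\eq{
B^{-1}=\overline{V}A^{-1}V^{-1},
}
so $B^{-1}=-\overline{B}$, as required.

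There is no real obstacle here. The only point requiring care is confirming that the bar map is multiplicative and antilinear-compatible, which follows directly from $\calC^2=\Id$, and that it preserves both spherical locality and compactness.
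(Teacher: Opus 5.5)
Your proof is correct and follows the same route as the paper: compute $(VA\overline{V}^{-1})^{-1}=\overline{V}A^{-1}V^{-1}$, then use $A^{-1}=-\overline{A}$ together with $\calC^2=\Id$ to identify it with $-\overline{VA\overline{V}^{-1}}$. You also check that $\overline{V}^{-1}=\overline{V^{-1}}\in\calL_d$, since $\calC$ commutes with the real diagonal operators $\widehat{X}_j$; this supplies the membership-in-$\calG(\calL_d)$ step that the paper leaves implicit.
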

\begin{proof}
    We have
    \eq{
    (VA\overline{V}^{-1})^{-1} = \overline{V} A^{-1} V^{-1} = -\overline{V}\calC A \calC V^{-1} = -\calC VA\overline{V}^{-1}\calC.
    }
\end{proof}

\begin{lem}\label{lem:diii symmetrization}
    Let $A\in\calG(\calL_d)$ such that $\sigma(-\overline{A}A)\cap (-\infty,0]=\varnothing$. Then the symmetrization
    \eql{\label{eq:diii symmetrization}
    \Psi(A):= A(-\overline{A}A)^{-1/2}\in \calG^\azDIII_d 
    }
    is well-defined using the holomorphic square root.
    Furthermore, if $A\in\calG^\azDIII_d$, then $\Psi(A)=A$; if $V\in\calG(\calL_d)$, then $\Psi(VA\overline{V}^{-1})=V\Psi(A)\overline{V}^{-1}$.
\end{lem}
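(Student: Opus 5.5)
Set $B:=-\overline{A}A$. The plan has three parts: show $\Psi(A)$ is a well-defined element of $\calG(\calL_d)$, check the DIII relation $\Psi(A)^{-1}=-\overline{\Psi(A)}$, and then prove the two naturality statements.

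\emph{Well-definedness.} Since $\calL_d$ is closed under $A\mapsto\overline{A}$ (conjugation by $\calC$ commutes with each $\widehat{X}_j$ and preserves compactness), we have $B\in\calG(\calL_d)$. The hypothesis $\sigma(B)\cap(-\infty,0]=\varnothing$ lets us define $B^{-1/2}$ by the holomorphic functional calculus with the principal branch of $z^{-1/2}$ on $\CC\setminus(-\infty,0]$. Holomorphic calculus stays inside the unital Banach algebra $\calL_d$, so $B^{-1/2}\in\calG(\calL_d)$ and $\Psi(A)\in\calG(\calL_d)$.

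\emph{The DIII relation.} Two functional-calculus identities do the work.
\begin{enumerate}
\item Conjugation by $\calC$ is a conjugate-linear automorphism, and the principal branch satisfies $\overline{f(\bar z)}=f(z)$. Hence $\overline{g(T)}=g(\overline{T})$ for $g=z^{-1/2}$, and in particular $\overline{B^{-1/2}}=\overline{B}^{-1/2}$. One checks this via the Riesz integral with a contour symmetric under complex conjugation; $\sigma(\overline{B})=\overline{\sigma(B)}$ also avoids $(-\infty,0]$.
\item The intertwining relation $A\,(-\overline{A}A)=(-A\overline{A})\,A$ gives $A\,g(-\overline{A}A)=g(-A\overline{A})\,A$ for any holomorphic $g$, again by the Riesz integral. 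Note $\sigma(-A\overline{A})=\sigma(-\overline{A}A)$ because $A$ is invertible.
\end{enumerate}
With these, $\overline{B}=-A\overline{A}$, so $\overline{\Psi(A)}=\overline{A}\,(-A\overline{A})^{-1/2}$. Then
\[
-\overline{\Psi(A)}\,\Psi(A)=-\overline{A}(-A\overline{A})^{-1/2}A(-\overline{A}A)^{-1/2}=-\overline{A}A(-\overline{A}A)^{-1/2}(-\overline{A}A)^{-1/2}=B\cdot B^{-1}=\Id .
\]
This gives $\Psi(A)^{-1}=-\overline{\Psi(A)}$.

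\emph{Fixed points and naturality.} If $A\in\calG_d^{\azDIII}$, then $B=\Id$, so $B^{-1/2}=\Id$ and $\Psi(A)=A$. For $A'=VA\overline{V}^{-1}$ we have $\overline{A'}=\overline{V}\,\overline{A}V^{-1}$, so $-\overline{A'}A'=\overline{V}B\overline{V}^{-1}$. This is similar to $B$, so its spectrum equals $\sigma(B)$ and $\Psi(A')$ is defined. Holomorphic calculus commutes with similarity, giving $(-\overline{A'}A')^{-1/2}=\overline{V}B^{-1/2}\overline{V}^{-1}$. Therefore
\[
\Psi(A')=VA\overline{V}^{-1}\,\overline{V}B^{-1/2}\overline{V}^{-1}=V\Psi(A)\overline{V}^{-1}.
\]

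\emph{Main obstacle.} The delicate step is the conjugation-compatibility $\overline{g(T)}=g(\overline{T})$ for the anti-linear map. I would verify it directly on the Cauchy integral $\frac{1}{2\pi i}\oint g(z)(z-T)^{-1}\,dz$, taking the contour invariant under $z\mapsto\bar z$. Under this reflection the orientation reverses, and the factor $\overline{1/(2\pi i)}=-1/(2\pi i)$ compensates for it.
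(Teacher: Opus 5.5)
Your proposal is correct and follows the paper's route. The paper likewise sets $B=-\overline{A}A$, uses $\overline{B^{-1/2}}=\overline{B}^{-1/2}=(ABA^{-1})^{-1/2}=AB^{-1/2}A^{-1}$ (your two functional-calculus identities combined), and concludes $-\overline{\Psi(A)}\Psi(A)=-\overline{A}AB^{-1}=\Id$; your Riesz-integral justification of the conjugation step, and your checks of well-definedness in $\calL_d$, the fixed-point property and the naturality property (which the paper states but does not prove), fill in details the paper leaves implicit.
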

\begin{proof}
Let $B=-\overline{A}A$. Using holomorphic functional calculus, we have 
\eq{
\overline{B^{-1/2}}  = (\overline{B})^{-1/2} =
(-A\overline{A})^{-1/2} = (ABA^{-1})^{-1/2} = AB^{-1/2}A^{-1}.
}
Then
\eq{
-\overline{\Psi(A)}\Psi(A) = \overline{AB^{-1/2}}AB^{-1/2} = -\overline{A} (AB^{-1/2}A^{-1}) AB^{-1/2} = -\overline{A} A B^{-1} = \Id.
}
\end{proof}

We will mostly apply symmetrization of \cref{lem:diii symmetrization} on $A\in\calG(\calL_d)$ that is close to $\calG_d^\azDIII$. The assumption holds. Indeed, if $B\in\calG_d^\azDIII$, then $-\overline{B}B=-\Id$, and if $A$ is close to $B$, we can use \cite[Theorem 10.20]{rudin1991functional} to get that $\sigma(-\overline{A}A)$ lies in a neighborhood of $\sigma(-\overline{B}B)=\{1\}$, which avoids the branch cut $(-\infty,0]$ needed to define the holomorphic square root.

The relation $A^{-1}=-\overline{A}$ forces certain uniform non-collinearity of pairs $\{A\delta_x,\delta_x\}$. Here we give a local version.
\begin{lem}\label{lem:diii non-collinearity}
    Let $x\in\ZZ^d$ and suppose $A\in\calB(\calH_d)$ satisfies $-\overline{A}A\delta_x=\delta_x$. Then
    \eql{\label{eq:non-collinearity lower bound}
    \dist(A\delta_x,\CC\delta_x) = \|\Lambda_{\{x\}}^\perp A\delta_x\| \geq \frac{1}{\|A\|}.
    }
    In particular, $A\delta_x$ and $\delta_x$ are linearly independent.
\end{lem}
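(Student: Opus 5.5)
The plan is to turn the hypothesis $-\overline{A}A\delta_x=\delta_x$ into a reverse inequality. Write $\delta_x$ as the image of the vector $A\delta_x$ under the bounded operator $-\overline{A}$, and use that this operator nearly annihilates multiples of $\delta_x$.

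First, the equality in \cref{eq:non-collinearity lower bound}. Since $\Lambda_{\{x\}}$ is the orthogonal projection onto $\CC\delta_x$, the distance from $A\delta_x$ to $\CC\delta_x$ is $\|\Lambda_{\{x\}}^\perp A\delta_x\|$. This is just the Pythagorean decomposition and needs no further argument.

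For the inequality, decompose $A\delta_x=\alpha\delta_x+\eta$, with $\alpha=\ip{\delta_x}{A\delta_x}$ and $\eta=\Lambda_{\{x\}}^\perp A\delta_x$. Applying $-\overline{A}$ gives
\eq{
\delta_x=-\overline{A}A\delta_x=-\alpha\overline{A}\delta_x-\overline{A}\eta .
}
Since $\delta_x$ is real, $\overline{A}\delta_x=\calC A\calC\delta_x=\calC A\delta_x=\bar\alpha\delta_x+\calC\eta$. Here $\calC\eta$ is still orthogonal to $\delta_x$, because $\calC$ preserves the position basis.

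Now take the inner product of the displayed identity with $\delta_x$. This gives
\eq{
1=-|\alpha|^2-\ip{\delta_x}{\overline{A}\eta},
}
so $|\ip{\delta_x}{\overline{A}\eta}|=1+|\alpha|^2\geq 1$. Since $\|\overline{A}\|=\|A\|$ (conjugation by the anti-unitary $\calC$ is isometric), Cauchy--Schwarz yields $1\leq\|A\|\,\|\eta\|$, that is, $\|\eta\|\geq 1/\|A\|$. The one point that needs care is confirming $\ip{\delta_x}{\calC\eta}=0$, which is what makes the $\alpha$-term produce exactly $-|\alpha|^2$. I expect this to be the main, albeit small, obstacle; it follows from $\calC\Lambda_{\{x\}}=\Lambda_{\{x\}}\calC$.

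Linear independence of $A\delta_x$ and $\delta_x$ follows at once, since $\eta\neq0$. Note also that $A\neq0$ is automatic from the hypothesis, so the bound $1/\|A\|$ makes sense.
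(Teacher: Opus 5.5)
Your proof is correct and follows essentially the paper's argument: decompose $A\delta_x=\alpha\delta_x+\eta$, take the $\delta_x$-component of the identity, and conclude $1+|\alpha|^2\le\|A\|\,\|\eta\|$. The only difference is cosmetic. The paper applies $A$ to $\calC A\delta_x$, so it uses $-A\overline{A}\delta_x=\delta_x$; this is the $\calC$-conjugate of the stated hypothesis and is equivalent to it because $\calC\delta_x=\delta_x$. You apply $-\overline{A}$ to $A\delta_x$, so you use the hypothesis exactly as stated.
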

\begin{proof}
 Write $A\delta_x = \alpha\delta_x + w_x$ for $w_x\perp \delta_x$ and $\alpha=\braket{\delta_x,A\delta_x}$. We claim that 
\eq{
\dist(A\delta_x,\CC\delta_x) = \|w_x\|\geq \frac{1}{\|A\|}
}
for all $x\in \ZZ^d$. We have
\eq{
A(\calC A\delta_x) = A\calC (\alpha\delta_x + w_x) = \bar{\alpha}A\delta_x + A\calC w_x.
}
Then
\eq{
\braket{\delta_x,A(\calC A\delta_x)} = \braket{\delta_x,\bar{\alpha}A\delta_x} + \braket{\delta_x,A\calC w_x} = |\alpha|^2 + \braket{\delta_x,A\calC w_x}.
}
On the other hand, using $-A\overline{A}\delta_x=\delta_x$, we have
\eq{
A (\calC A\delta_x) = A\overline{A}\delta_x =-\delta_x. 
}
Thus $\braket{\delta_x,A (\calC A\delta_x)}=-1$ as well. This gives
\eq{
1 + |\alpha|^2 = |\braket{\delta_x,A\calC w_x}|\leq \|A\|\|w_x\|
}
and hence the bound \cref{eq:non-collinearity lower bound}.
\end{proof}

The following proposition is the symmetric version of \cref{lem:localized centers}. We perturb a given $A\in\calG_d^\azDIII$ to a $T\in\calG(\calL_d)$ that has certain cone-separation structure. The tricky point is that $T$ looses the symmetry condition. Nonetheless, we may recover it locally via rank-one perturbation. 

\begin{lem}\label{lem:diii localized centers}
Let $A\in\calG_d^\azDIII$ and $\ve>0$ be small enough (depending on $A$). 
Then there exists a spherically-proper sequence $\{x_k\}_{k\in\NN}$ of points on $\ZZ^d$, and a $T\in\calL_d$ with $\|A-T\|\leq C_{A} \ve$ and $A-T\in\calK(\calH_d)$ such that 
\eql{\label{eq:diii local symmetry}
-T\overline{T} \delta_{x_k} = \delta_{x_k}
}
for all $k\in\NN$, and if we define
\eql{\label{eq:diii interaction range of B}
    R_k := \supp(T \delta_{x_k}) \cup \{x_k\} \cup \{y_k\} \subset \ZZ^d
 }
where $y_k$ is the nearest-neighbor point to $x_k$, then $|R_k|<\infty$ for all $k$; the sets $R_k$ are pairwise disjoint; and satisfies the cone-separation property: for any closed disjoint dyadic cubes $I$ and $J$, there are finitely many $k$ such that $R_k\cap C_I\neq \varnothing$ and $R_k\cap C_J\neq \varnothing$.
\end{lem}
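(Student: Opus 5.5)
Apply \cref{lem:localized centers} to $A$ and then repair the DIII relation $-T\overline{T}\delta_{x_k}=\delta_{x_k}$ by finite-rank corrections supported near each $x_k$.

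First, apply \cref{lem:localized centers} to $A$ with tolerance $\ve$. This gives a spherically-local $B$ with $\|A-B\|\le\ve$ and $A-B\in\calK$, a spherically-proper sequence $\{x_k\}$, and finite, pairwise disjoint sets $R_k^B=\supp(B\delta_{x_k})\cup\{x_k\}$ with the cone-separation property. Two adjustments are needed at this stage.

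The annuli in \cref{lem:contained in annulus} should be chosen so that each $x_k$ lies strictly inside its annulus, together with its nearest neighbour $y_k$. This is easy to arrange: pick $\|x_k\|>t_{k-1}+1$, and enlarge $r_k$ past $\|x_k\|+1$. Adding the single point $y_k$ to $R_k$ preserves finiteness and disjointness. It also preserves cone separation, because $\hat y_k-\hat x_k\to0$: for disjoint closed dyadic cubes $I,J$, only finitely many pairs $\{x_k,y_k\}$ can straddle $C_I$ and $C_J$. I would formalize this exactly as in the final paragraph of the proof of \cref{lem:localized centers}, using slightly fattened cubes.

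Since $A$ is bounded, and $B\delta_{x_k}$ lies within $\ve$ of $A\delta_{x_k}$, the pair $\{B\delta_{x_k},\delta_{x_k}\}$ stays uniformly non-collinear by \cref{lem:diii non-collinearity}, with $\|\Lambda^\perp_{\{x_k\}}B\delta_{x_k}\|\ge 1/\|A\|-\ve$.

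Second, repair the symmetry at $x_k$. Since $-A\overline{A}=\Id$, the vector $B\overline{B}\delta_{x_k}+\delta_{x_k}$ has norm $O(\ve)$. Moreover, $\overline{B}\delta_{x_k}=\calC B\delta_{x_k}$ is supported in $R_k$. I would look for $T=B+\sum_k D_k$, where each $D_k$ is a finite-rank operator of norm $O(\ve)$. Each $D_k$ maps into $\im\Lambda_{R_k}$ and acts only on vectors supported in $R_k$ (for instance, on the span of $\delta_{x_k}$ and $\overline{B}\delta_{x_k}$, or on $\delta_{y_k}$). The goal is to solve $T\overline{T}\delta_{x_k}=-\delta_{x_k}$ exactly.

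Concretely, set $u=B\delta_{x_k}$ and $v=\calC u$, both supported in $R_k$. We need $T v'=-\delta_{x_k}$, where $v'=\overline{T}\delta_{x_k}$. If $D_k$ does not change $T\delta_{x_k}$, then $v'=v$ up to the conjugate correction. So it suffices to modify $T$ on the component of $v$ orthogonal to $\delta_{x_k}$, that is, on $w=\Lambda^\perp_{\{x_k\}}v$. By the non-collinearity bound, $\|w\|\gtrsim 1/\|A\|$. Setting
\[
D_k := \frac{(-\delta_{x_k}-Bv)\,\langle w,\cdot\rangle}{\|w\|^2}
\]
then fixes the relation with $\|D_k\|\le C_A\ve$. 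This $D_k$ vanishes on $\delta_{x_k}$, so $T\delta_{x_k}=B\delta_{x_k}$ and its support is unchanged. Its range lies in $\im\Lambda_{R_k'}$, where $R_k'=R_k\cup\supp(Bv)$. This set is finite, but I would need to ensure that $\supp(Bv)$ stays in the annulus of $x_k$. For that, I would apply \cref{lem:turn off elements} also to the pairs $(\Lambda_{\text{annulus}^c},\Lambda_{R_k})$, choosing the annuli adaptively so that the whole finite island is decoupled, not just $x_k$.

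Third, check the remaining properties. The $D_k$ have mutually orthogonal ranges and orthogonal initial supports $w$ (since the $R_k$ are disjoint). Hence $\sum_k D_k$ converges strongly, with norm at most $\sup_k\|D_k\|\le C_A\ve$. To make $\sum_k D_k$ compact, I would let the tolerances decay along $k$, i.e.\ use $\ve_k\to0$ in the construction of $B$, so that $\|D_k\|\to0$; this makes $A-T$ compact. Spherical locality of $\sum_k D_k$ follows from the cone-separation argument in the proof of \cref{prop:local unitary reduced by geo proper proj}, since each $D_k$ lives on $\im\Lambda_{R_k'}$. The identity $T\delta_{x_k}=B\delta_{x_k}$ means the definition of $R_k$ is unchanged.

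The main obstacle is the second step: confining $\supp(B\overline{B}\delta_{x_k})$, a two-step support, to the annulus while keeping the corrections compact and small. I would handle it by running \cref{lem:contained in annulus} with $A$ replaced by its two-step control, i.e.\ decoupling finite balls from far shells for both $A$ and $\overline{A}$ simultaneously.
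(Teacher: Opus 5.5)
Your correction is exactly the paper's. Since $Bv=B\overline{B}\delta_{x_k}$, the numerator $-\delta_{x_k}-Bv$ equals $-r_{x_k}$, where $r_{x_k}=(B\overline{B}+\Id)\delta_{x_k}$ is the paper's defect vector, and your $w$ is the paper's $\eta_{x_k}$. So $D_k$ is the rank-one $K_{x_k}$, and your check that $T\delta_{x_k}=B\delta_{x_k}$ and $-T\overline{T}\delta_{x_k}=\delta_{x_k}$ is also the paper's.

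\textbf{The gap.} It lies in how you control $\sum_k D_k$. You make three things depend on the ranges $\CC r_{x_k}$ being mutually orthogonal and sitting in cone-separated islands $R_k'$:
\begin{itemize}
\item boundedness, via ``norm at most $\sup_k\|D_k\|$'';
\item compactness;
\item spherical locality.
\end{itemize}
That in turn requires confining $\supp(B\overline{B}\delta_{x_k})$ to the $k$-th annulus, which is the step you flag as the main obstacle and only sketch. As written, $R_k'$ need not even be finite. Indeed $Bv=\sum_{y\in R_k}v_y B\delta_y$, and \cref{lem:localized centers} controls only the column $B\delta_{x_k}$, not $B\delta_y$ for the other $y\in R_k$. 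Without disjoint ranges, $\|D_k\|\to0$ does not by itself make $\sum_k D_k$ converge. Decoupling ``the whole island $R_k$'' is also circular, since $R_k$ is defined through the very $B$ you would be modifying, unless you fix an intermediate shell in advance.

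\textbf{What you are missing.} None of this confinement is needed. The defect operator $B\overline{B}+\Id=B\overline{B}-A\overline{A}$ is compact, because $A-B$ is, and it has norm $O(\ve)$. Hence $\|r_x\|\to0$ as $\|x\|\to\infty$ automatically; you do not need to tune tolerances.
\begin{itemize}
\item The paper passes to a subsequence of $\{x_k\}$ with $\|r_{x_k}\|\le\ve 2^{-k}$. It stays spherically-proper by a diagonal selection over a countable base of cones, and it keeps the same disjoint, cone-separated $R_k$.
\item It then uses only that the $w_k$ are pairwise orthogonal, which holds because they are supported in the disjoint $R_k$. This gives $\|\sum_k D_k\|_{\mathrm{HS}}^2=\sum_k\|r_{x_k}\|^2/\|w_k\|^2\le\frac{4}{3}\|A\|^2\ve^2$.
\item A shortcut that avoids the subsequence: $\sum_k D_k=-(B\overline{B}+\Id)J$, where $J\psi:=\sum_k\|w_k\|^{-2}\langle w_k,\psi\rangle\,\delta_{x_k}$. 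By Bessel's inequality $\|J\|\le\sup_k\|w_k\|^{-1}\le 2\|A\|$, so the sum is compact with norm $O(\ve)$.
\end{itemize}
Compactness then gives $T\in\calL_d$ and $A-T\in\calK$ for free. Every $D_j$ annihilates every $\delta_{x_k}$, so $T\delta_{x_k}=B\delta_{x_k}$ and $R_k$, which is all the lemma constrains, is unchanged. Where the ranges of the corrections live is therefore irrelevant.
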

\begin{proof}
    Let $A\in\calG^\azDIII_d$ and let $\ve >0$ be small to be determined. We use \cref{lem:localized centers} to perturb $A$ by at most $\ve$ to a spherically-local $G$ with $A-G\in\calK(\calH_d)$  and produce a spherically-proper sequence of points $\{x_k\}_{k\in\NN}$ with disjoint finite subsets $R_k:=\supp(G\delta_{x_k})\cup \{x_k\}$ satisfying the cone-separation property.
    In particular, we may enlarge the island $R_k$
    \eq{
    R_k = \supp(G\delta_{x_k}) \cup \{x_k\} \cup \{y_k\}
    }
    by adding a nearest-neighbor point $y_k$ to $x_k$, while keeping spherical properness of $\cup_k\{x_k,y_k\}$ and the properties $R_k$ satisfies. 

In general, $G\notin \calG_d^\azDIII$ and $-G\overline{G}=\Id$ does not hold. Nonetheless, we can recover it locally along the sequence $\{x_k\}_{k\in\NN}$, while keeping the separation structure of $R_k$ and the invertibility and locality of operator. Let $x\in\ZZ^d$.
Consider the defect operator $D:=G\overline{G}+\Id$. Since $A-G$ is compact and $A\overline{A}+\Id=0$, it follows that $D$ is compact.
Consider the defect vector $r_x:= D\delta_x $. 
Since $\delta_x$ converges to zero weakly and $D$ is compact, the sequence $r_x$ converges in norm to zero as $\|x\|\to \infty$. Consider the defect vector along the spherically-proper sequence $\{x_k\}_{k\in\NN}$. Since $r_{x_k}\to 0$, pick a subsequence, still spherically-proper, such that
\eq{
\|r_{x_k}\| \leq \ve 2^{-k}.
}
We can do this without losing spherical properness by a diagonal selection argument over a countable basis $\{I_n\}$ of open sets on $\bS^{d-1}$: enumerate pairs $(n,k)\in \NN^2$, and at stage $(n,k)$ choose a point in the cone $C_{I_n}$ far enough so that $\|r_{x_k}\| \leq \ve 2^{-k}$, and also not previously used. Re-index to a single sequence. The inherited $R_{x_k}$ remain disjoint and retain cone-separation because we only pass to a subsequence. So, after relabeling, assume
\eq{
\sum_{k=1}^\infty \|r_{x_k}\|^2 \leq \frac{\ve^2}{3} .
}
For $x\in\ZZ^d$, let $\eta_x :=\Lambda^\perp_{\{x\}} \overline{G}\delta_x$. Using \cref{lem:diii non-collinearity}, we have 
\eq{
\|\eta_x\|=\dist(\overline{G}\delta_x,\CC\delta_x) = \dist(G\delta_x,\CC\delta_x)\geq \dist(A\delta_x,\CC\delta_x)-\ve \geq \frac{1}{2\|A\|}.
}
where we use $\|A-G\|\leq \ve$ and choose $\ve$ to be smaller than $1/(2\|A\|)$.
Define the rank-one operator 
\eq{
K_x = -\frac{1}{\|\eta_x\|^2} r_x\otimes \eta_x^\ast.
}
Then $K_x\delta_x =0$ since $\eta_x\perp \delta_x$, and
\eq{
K_x \overline{G}\delta_x = -r_x\left\langle\frac{\eta_x}{\|\eta_x\|^2},\overline{G}\delta_x\right\rangle = -r_x\left\langle\frac{\eta_x}{\|\eta_x\|^2},\Lambda_{\{x\}} \overline{G}\delta_x + \eta_x\right\rangle = -r_x.
}
Now set $K:=\sum_{k=1}^\infty K_{x_k}$ along the spherically-proper sequence $\{x_k\}_{k\in\NN}$. The series converges in Hilbert-Schmidt norm. Indeed, we have
\eq{
\norm{\sum_{k=m}^n K_{x_k}}_{\mathrm{HS}} = \sum_{j,k=m}^n \braket{r_{x_j},r_{x_k}} \left\langle\frac{\eta_{x_k}}{\|\eta_{x_k}\|^2},\frac{\eta_{x_j}}{\|\eta_{x_j}\|^2}\right\rangle = \sum_{k=m}^n \frac{\|r_{x_k}\|^2}{\|\eta_{x_k}\|^2}\leq \sum_{k=1}^\infty \frac{\|r_{x_k}\|^2}{\|\eta_{x_k}\|^2} \leq \frac{4}{3}\|A\|^2\ve^2
}
where in the first equality, we have crucially used the fact that $\{\eta_{x_k}\}_{k\in\NN}$ are pairwise orthogonal. Indeed, since $\supp (G\delta_{x_k})\subset R_k $ and hence $\supp (\eta_{x_k})\subset R_k$, this follows from the disjointness of $\{R_k\}_{k\in\NN}$. Then
\eq{
\|K\|\leq \|K\|_\mathrm{HS} \leq \frac{2 \|A\|}{\sqrt{3}}\ve.
}

Let $T:=G+K$. Then $A-T=(A-G)-K$ is compact, and $\|A-T\|\leq C_A\ve$ where $C_A=1+2\|A\|/\sqrt{3}$ if we choose $\ve\leq 1/(2\|A\|)$. We now verify \cref{eq:diii local symmetry}:
\eq{
-T\overline{T}\delta_{x_k} = -T\calC (G+K)\delta_{x_k} = -T\overline{G}\delta_{x_k} = -(G+K)\overline{G}\delta_{x_k} = -G\overline{G}\delta_{x_k} + r_{x_k} = \delta_{x_k}.
}

\end{proof}

Recall that we say a projection $P$ reduces a bounded operator $A$ if $A$ is invariant under $\im P$ and $\im P^\perp$.

\begin{prop}\label{prop:diii reduced by proper projection}
    If $A\in\calG^\azDIII_d$, then there exists a spherically-proper projection $P$ reducing the dimer operator $E$, and a homotopy
    \eq{
    A\sim_h W=PWP+P^\perp EP^\perp
    }
    in $\calG_d^\azDIII$.
\end{prop}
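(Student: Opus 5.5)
We mirror the proof of \cref{prop:local unitary reduced by geo proper proj}, with two changes. First, every deformation must stay inside $\calG_d^\azDIII$. Second, the dimer operator $E$ plays the role of the identity on the complementary region. We proceed in four steps.

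\textbf{Step 1: localized, locally symmetric perturbation.} Apply \cref{lem:diii localized centers} to $A$ with $\ve$ small. This yields $T\in\calL_d$ with $\|A-T\|\le C_A\ve$ and $A-T$ compact. It also yields a spherically-proper sequence $\{x_k\}$ with nearest neighbours $y_k$, and disjoint finite islands $R_k\ni x_k,y_k$ with the cone-separation property, on which $-T\overline{T}\delta_{x_k}=\delta_{x_k}$. Choose the nearest-neighbour partition in \cref{defn:diii dimer operator} so that the pairs $\{x_k,y_k\}$ are among its dimers; this is possible since only countably many disjoint pairs need to be accommodated. Set $P:=\Lambda_{\{x_k,y_k\}_k}$. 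Then $P$ is spherically-proper (thinning the sequence if needed, as in \cref{lem:split spherically-proper projections}) and reduces $E$.

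For $\ve$ small, $T$ is invertible and $\sigma(-\overline{T}T)$ stays near $\{1\}$. The straight line from $A$ to $T$ therefore symmetrizes via \cref{lem:diii symmetrization} to a path $t\mapsto\Psi((1-t)A+tT)$ in $\calG_d^\azDIII$ ending at $\Psi(T)$. We would rather keep $T$'s exact structure on the islands, so we instead plan to work with $T$ and apply $\Psi$ only at the end. The equivariance $\Psi(VA\overline V^{-1})=V\Psi(A)\overline V^{-1}$ lets the congruences below commute with symmetrization.

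\textbf{Step 2: pinning on the islands.} On each $\im\Lambda_{R_k}$, define an invertible $V_k$ with $V_k T\delta_{x_k}=E\delta_{x_k}=\delta_{y_k}$ and $\overline{V_k}\delta_{x_k}=\delta_{x_k}$. This is possible because $T\delta_{x_k}$ and $\delta_{x_k}$ are linearly independent by \cref{lem:diii non-collinearity}, and $V_k$ can be taken real on $\delta_{x_k}$. Set $V:=\bigoplus_k V_k\oplus\Id$. As in the proof of \cref{prop:local unitary reduced by geo proper proj}, cone separation makes $V$ spherically-local, and $V\sim_h\Id$ islandwise in $\calG(\calL_d)$. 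Replace $T$ by the congruence $VT\overline V^{-1}$, which stays in (or near) the symmetric class by \cref{lem:diii congruence}. Now $\delta_{x_k}\mapsto\delta_{y_k}$, and the local symmetry forces $\delta_{y_k}\mapsto -\delta_{x_k}$ modulo controlled terms. Hence on $\im P$ the operator agrees with $E$ in the columns, and $P^\perp(\cdot)P=0$.

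\textbf{Step 3: block triangularization.} Write the result as $PEP+P^\perp WP^\perp+PWP^\perp$ and factor it as in \cref{prop:local unitary reduced by geo proper proj} into a block-diagonal part times $\Id+tN$ with $N$ nilpotent. We must check that the path $t\mapsto$ (diagonal)$(\Id+tN)$ can be symmetrized by $\Psi$ throughout. We expect the main obstacle here: the off-diagonal corner need not be small, so $\sigma(-\overline{X}X)$ may leave the neighbourhood of $1$. Our first attempt would be to show that the symmetry relation forces the corner $PWP^\perp$ to be determined by the other blocks. That would make a symmetric homotopy possible directly, via a congruence $X\mapsto (\Id+tM)X\overline{(\Id+tM)}^{-1}$ with $M=P(\cdot)P^\perp$, which stays in $\calG_d^\azDIII$ by \cref{lem:diii congruence}.

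\textbf{Step 4: swap the blocks.} We now have $PEP+P^\perp SP^\perp$. As in \cref{lem:whitehead between P and P perp}, use a real partial isometry between $P$ and $P^\perp$, given by \cref{prop:spherically-proper projections equivalent to identity} and real by the remark following it, which intertwines $PEP$ with $P^\perp EP^\perp$. Rotate by congruences $R_t\,\cdot\,\overline{R_t}^{-1}$; since $R_t$ is real, \cref{lem:diii congruence} keeps the path in $\calG_d^\azDIII$. This exchanges the blocks and produces $W=PWP+P^\perp EP^\perp$.
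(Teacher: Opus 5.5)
Your outline follows the same route as the paper's proof: perturb with \cref{lem:diii localized centers}, pin islandwise by a congruence $Z:=VT\overline{V}^{-1}$ with $V\sim_h\Id$, symmetrize using the equivariance of $\Psi$, remove the off-diagonal corner, and exchange $P$ and $P^\perp$. However, the decisive Step 3 is left open. As you have ordered symmetrization and pinning, it also cannot be closed.

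\textbf{Symmetrize before Step 3.} The corner-removing congruence you propose stays in $\calG_d^\azDIII$ only if it acts on an element already in $\calG_d^\azDIII$. So $\Psi$ must be applied before Step 3, not at the end: $A\sim_h\Psi(T)\sim_h V\Psi(T)\overline V^{-1}=\Psi(Z)=:S$ in $\calG_d^\azDIII$.

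\textbf{Exactness of the pinning.} You then need $S$ to agree with $E$ on $\im P$. This requires the pinning of Step 2 to be exact, not ``modulo controlled terms''. Since $V^{-1}\delta_{x_k}=\delta_{x_k}=\calC\delta_{x_k}$, you get $Z\delta_{x_k}=\delta_{y_k}$ and $Z\delta_{y_k}=VT\calC T\delta_{x_k}=VT\overline T\delta_{x_k}=-\delta_{x_k}$ exactly, by \cref{eq:diii local symmetry}. Hence $\im P$ is invariant under $-\overline ZZ$, which acts as the identity there. So $(-\overline ZZ)^{-1/2}$ is the identity on $\im P$, and $S|_{\im P}=E|_{\im P}$.

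\textbf{The missing step.} Write $S=\begin{bmatrix}E&X\\0&B\end{bmatrix}$ with respect to $\im P\oplus\im P^\perp$. The relation $S\overline S=-\Id$ gives $E\overline X+X\overline B=0$ and $B\overline B=\overline BB=-\Id$. Your $M$ must be $tX\overline B$:
\[
(\Id+tX\overline B)\,S\,\overline{(\Id+tX\overline B)}^{-1}=\begin{bmatrix}E&(1-2t)X\\0&B\end{bmatrix},
\]
which lies in $\calG_d^\azDIII$ by \cref{lem:diii congruence}. Taking $t\in[0,\tfrac12]$ gives $S\sim_h PEP+P^\perp SP^\perp$. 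Without this explicit choice, your proposal does not produce the homotopy.

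\textbf{Smaller corrections.}
\begin{itemize}
\item You may not re-choose the nearest-neighbour partition defining $E$. $E$ is fixed in the statement, and later different operators are pinned to the same $E$ and basepoint $e$. Countability is also no reason that disjoint neighbouring pairs extend to a perfect nearest-neighbour matching: in $d=1$, the pairs $\{0,1\},\{3,4\}$ strand the site $2$. Instead take $y_k$ to be the $E$-partner of $x_k$, which \cref{lem:diii localized centers} permits.
\item In Step 4, the partial isometry from \cref{prop:spherically-proper projections equivalent to identity} need not commute with $E$, so the rotation would not carry $PEP$ to $P^\perp EP^\perp$. Use \cref{lem:diii partial isometry for dimers}, or, as the paper does, simply take $P^\perp$ as the projection in the statement; it is spherically-proper and reduces $E$.
\item For $V=\bigoplus_kV_k\oplus\Id$ to be bounded with bounded inverse, you need $\sup_k(\|V_k\|+\|V_k^{-1}\|)<\infty$. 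What permits this is the quantitative bound in \cref{lem:diii non-collinearity}, not mere linear independence.
\end{itemize}
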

\begin{proof}
    Let $A\in\calG^\azDIII_d$ and let $\ve >0$ be small to be determined. We use \cref{lem:diii localized centers} to perturb $A$ by $\ve$ to a $T\in\calG(\calL_d)$ and produce a spherically-proper sequence of nearest-neighbor dimers $\{x_k,y_k\}_{k\in\NN}$ with the relation \cref{eq:diii local symmetry} and a sequence of disjoint finite subsets $R_k:=\supp(T\delta_{x_k})\cup \{x_k\}\cup \{y_k\}$ satisfying the cone-separation property. Let $P$ be $\Lambda_{\cup_k\{x_k,y_k\}}$.

    We construct a pinning operator $V\in\calG(\calL_d)$ such that $VT\overline{V}^{-1}$ agrees with the dimer operator $E$ on $\im P$.
    On the finite-dimensional space $\im\Lambda_{R_k}$, define $V_k$ to be any invertible operator that acts as
    \eq{
    V_k\delta_{x_k}= \delta_{x_k} ,\quad   V_k (T\delta_{x_k})= \delta_{y_k}. 
    }
    Define $V:= \oplus_k V_k \oplus \Id_{(\oplus_k\im \Lambda_{R_k})^\perp} \in\calG(\calL_d)$. We now verify that $V T\overline{V}^{-1}$ agrees with $E$ on $\im P$. Indeed, we have
    \eq{
    VT\overline{V}^{-1} \delta_{x_k} = VT\calC V^{-1}\delta_{x_k} = VT\delta_{x_k} = \delta_{y_k}
    }
    and
    \eq{
    VT\overline{V}^{-1} \delta_{y_k}  =VT\calC V^{-1}\delta_{y_k} = VT\calC T\delta_{x_k} = VT\overline{T}\delta_{x_k}=-V\delta_{x_k} = -\delta_{x_k}
    }
    where in the second to last equality we use \cref{eq:diii local symmetry}. Let $V$ be identity on $(\oplus \im \Lambda_{R_k})^\perp$. 
    Using argument exactly the same as in \cref{prop:local unitary reduced by geo proper proj}, the operator $V$ is spherically-local and we can construct a homotopy $V\sim_h\Id$ in $\calG(\calL_d)$.

    Since $\|A-T\|\leq\ve$ can be made arbitrarily small, and $A\in\calG_d^\azDIII$, it follows that $\sigma(-\overline{T}T)$ lies in a neighborhood of $1$ and avoids $(-\infty,0]$. Therefore, we can perform symmetrization \cref{eq:diii symmetrization} on the straight-line homotopy $t\mapsto (1-t)A+tT$ and get $A\sim_h \Psi(T)$ in $\calG_d^\azDIII$.
    Using the previously constructed homotopy $V\sim_h \Id$ in $\calG_d^\azDIII$ and \cref{lem:diii congruence}, we get $\Psi(T)\sim_h V\Psi(T)\overline{V}^{-1}=\Psi(VT\overline{V}^{-1})$ in $\calG_d^\azDIII$, where the last equality follows from \cref{lem:diii symmetrization}.
    Let $S:=\Psi(VT\overline{V}^{-1})$. We argue that $S$ agrees with $E$ on $\im P$. Indeed, let $Z:=VT\overline{V}^{-1}$. Since $Z$ agrees with $E$ on $\im P$, then $-\overline{Z}Z=\Id$ on $\im P$, and so is $(-\overline{Z}Z)^{-1/2}$. Thus $S=\Psi(Z)=Z(-\overline{Z}Z)^{-1/2}$ agrees with $E$ on $\im P$.
    
    With respect to the decomposition $\im P\oplus \im P^\perp$, we may then write
    \eq{
    S = \begin{bmatrix}
        E & X \\ 0 & B
    \end{bmatrix}
    }
    where $E,X,B$, viewed on $\calH_d$, are $PSP,PSP^\perp,P^\perp SP^\perp$. Using $S\overline{S}=-\Id$, we have the relations $E\overline{X}+X\overline{B}=0$ and $B\overline{B}=-\Id_{\im P^\perp}$. Let
    \eq{
    W_t:= \begin{bmatrix}
        \Id & t X\overline{B} \\0 & \Id
    \end{bmatrix}.
    }
    Then $W_t\in\calG(\calL_d)$ and we have 
    \eq{
    W_tS\overline{W}_t^{-1} = \begin{bmatrix}
        E & -tE\overline{X}B + X + tX\overline{B}B \\ 0 & B
    \end{bmatrix}.
    }
    Setting $t=1/2$, the off-diagonal part vanishes
    \eq{
    -\frac{1}{2}E\overline{X}B + X + \frac{1}{2}X\overline{B}B = X + X\overline{B}B = 0.
    }
    Thus we obtain $S\sim_h PEP+P^\perp S P^\perp$. At least point we may conclude the proposition by swapping $P$ and $P^\perp$. Alternatively, to keep the chosen sequence $\{x_k\}_{k\in\NN}$ as support, we may use the rotation $R_t$ in \cref{eq:rotation between P perp Q} to get $PEP+P^\perp S P^\perp\sim_h PSP+P^\perp EP^\perp$ in $\calG_d^\azDIII$. Indeed, the rotation $R_t$ is real $R_t=\overline{R}_t$, and hence the congruence homotopy stays within $\calG_d^\azDIII$ by \cref{lem:diii congruence}.
\end{proof}

\begin{lem}\label{lem:diii partial isometry for dimers}
    Let $E$ be the nearest-neighbor dimer operator in \cref{eq:diii dimer operator nearest neighbor}. Let $P,Q$ be spherically-proper projections that reduce $E$. Then there exists a spherically-local, real partial isometry $V$ that implements $P\sim Q$ and commutes with $E$. The conclusion also holds when we take $Q$ to be $\Id$.
\end{lem}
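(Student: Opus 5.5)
The plan is to mimic the proof of \cref{prop:spherically-proper projections equivalent to identity}, but at the level of dimers instead of individual lattice points. Since $P$ reduces $E$ and $E$ is a signed permutation of the standard basis, I will check that $P=\Lambda_F$ with $F$ a union of whole dimers $\{x_k,y_k\}$. Indeed, $E\delta_{x_k}=\delta_{y_k}$, so $x_k\in F$ iff $y_k\in F$. The same holds for $Q=\Lambda_G$.

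Let $D$ be the set of dimers. Each dimer $\{x_k,y_k\}$ has a ``direction'' $\hat{x}_k$, and since $\|x_k-y_k\|=1$ the two directions $\hat x_k,\hat y_k$ differ by $O(1/\|x_k\|)$. Spherical properness of $F$ then implies that, for every non-empty open $I$, infinitely many dimers in $F$ have $x_k\in C_I$; the same holds for $G$.

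I will then run the greedy enumeration argument from the proof of \cref{prop:spherically-proper projections equivalent to identity} on dimers. Enumerate the dimers of $G$ as $d_1,d_2,\dots$. At step $k$, choose the dimer of $F$ not yet used which has $\hat{x}\in N_k:=\{z\in\bS^{d-1}:\|z-\hat{x}_{d_k}\|<1/k\}$ and minimizes $\|x\|$. The same norm-counting argument shows that this gives a bijection $\phi$ from the dimers of $G$ onto the dimers of $F$, with $\|\hat x_{\phi(d_k)}-\hat x_{d_k}\|\to0$.

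Define $V\delta_{x_{d}}=\delta_{x_{\phi(d)}}$ and $V\delta_{y_{d}}=\delta_{y_{\phi(d)}}$ for $d\in G$, and $V=0$ on $\im Q^\perp$. Then $V$ is real, being a permutation of basis vectors, and satisfies $V^\ast V=Q$, $VV^\ast=P$. It intertwines $E$: $VE\delta_{x_d}=V\delta_{y_d}=\delta_{y_{\phi(d)}}=E\delta_{x_{\phi(d)}}=EV\delta_{x_d}$, and likewise on $y_d$. The literal commutation $VE=EV$ also holds on $\im Q^\perp$, since $E$ preserves $\im Q^\perp$ and $\im P^\perp$ and $V$ vanishes there. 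Hence $EV=VE$ on all of $\calH_d$.

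Spherical locality follows as before. Given disjoint closed $I,J$ with $\dist(I,J)=\delta$, only finitely many dimers $d$ of $G$ meeting $C_I$ have image dimers meeting $C_J$. This is because both $\|\hat x_{\phi(d_k)}-\hat x_{d_k}\|$ and the intra-dimer angular spread tend to zero. So $\Lambda_JV\Lambda_I$ is finite rank.

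The case $Q=\Id$ is the same construction with $G=\ZZ^d$, i.e. all dimers; then $V^\ast V=\Id$. The main point to verify carefully is the surjectivity of the greedy map on dimers. The argument is unchanged: a dimer that is never picked would be a valid candidate at infinitely many steps, namely those $k$ whose target dimer lies in its exact direction, which forces infinitely many dimers of bounded norm.

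For this I need infinitely many dimers of $G$ in each exact rational direction, which fails in general. I will fix it as in the original proof by enumerating the indices $k$ with all of $\ZZ^d$'s directions repeated. Concretely, replace $N_k$ by balls around a dense sequence $z_k\in\bS^{d-1}_\ZZ$ in which each point recurs infinitely often, taken along the $G$-dimers. The remaining check is that each $G$-dimer can be matched at a step whose ball contains its direction, which I expect to be the fiddly bookkeeping step.
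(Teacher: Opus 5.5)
Your construction is the one the paper uses. Its proof is a single sentence: pick a bijection of dimers ``analogously to'' \cref{prop:spherically-proper projections equivalent to identity}, sending $x$-sites to $x$-sites and $y$-sites to $y$-sites. Everything you actually verify is correct: $F$ and $G$ are unions of dimers, $V$ is real with $V^\ast V=Q$ and $VV^\ast=P$, $VE=EV$ holds (including on $\im Q^\perp$), and $\Lambda_JV\Lambda_I$ is finite rank once the matching has angular displacement tending to zero.

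What is missing is the matching itself, which is the real content of the lemma; the paper's one-liner glosses over it too. You correctly observe that the exact-direction step of the norm-counting argument breaks when the domain is $G$ instead of all of $\ZZ^d$. Your repair does not close this gap. If the ball at step $k$ is centred at $z_k$ rather than at the direction of the dimer $d_k$ being matched, then $\phi(d_k)$ is close to $z_k$, not to $d_k$. You then lose $\|\hat x_{\phi(d_k)}-\hat x_{d_k}\|\to0$, which your locality argument needs. Choosing the enumeration of the $G$-dimers so that it tracks $z_k$ with vanishing error is again a bijection problem of the same type. Also, what must be checked is that every $F$-dimer gets picked; each $G$-dimer is matched at its own step by construction.

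Either of two short arguments fills the gap.

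(a) Reduce to $Q=\Id$. Suppose $V_P,V_Q$ are real, spherically-local, commute with $E$, and satisfy $V_P^\ast V_P=V_Q^\ast V_Q=\Id$, $V_PV_P^\ast=P$, $V_QV_Q^\ast=Q$. Then $V:=V_PV_Q^\ast$ does the job, because $E^\ast=-E$ turns $V_QE=EV_Q$ into $V_Q^\ast E=EV_Q^\ast$. When the domain is all dimers, the norm-counting argument works verbatim, with one change. An $F$-dimer should count as a candidate at step $k$ when its direction is within $1/k$ of $\hat x_{d_k}$ \emph{or} of $\hat y_{d_k}$. Every rational ray contains infinitely many lattice sites, hence meets infinitely many dimers. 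It need not contain infinitely many designated $x$-sites, so testing $\hat x_{d_k}$ alone is not quite enough.

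(b) Alternatively, drop the norm minimisation and match back-and-forth. Enumerate the dimers of $G$ and of $F$. At stage $k$, alternately take the first unmatched dimer of $G$ (resp.\ $F$) and pair it with an unused dimer of $F$ (resp.\ $G$) both of whose sites lie within angular distance $1/k$ of it. This is always possible, because each family has infinitely many dimers in every open cone and only finitely many have been used. Every dimer is matched, and for each $\delta>0$ only finitely many pairs have angular discrepancy at least $\delta$. That is exactly what your finite-rank estimate for $\Lambda_JV\Lambda_I$ uses.
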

\begin{proof}
    Let $\{x_k,y_k\}_{k\in\NN}$ and $\{\tilde{x}_k,\tilde{y}_k\}_{k\in\NN}$ be the nearest-neighbor pairs for $P=\Lambda_{\cup_k \{x_k,y_k\}}$ and $Q=\Lambda_{\cup_k\{\tilde{x}_k,\tilde{y}_k\}}$. Analogously to \cref{prop:spherically-proper projections equivalent to identity}, we can pick a bijection $f:\NN\to\NN$ between the sequence of dimerization, such that if we define
    \eq{
    V\delta_{x_k} = \delta_{\tilde{x}_{f(k)}},\quad V\delta_{y_k}=\delta_{\tilde{y}_{f(k)}}
    }
    then $V$ is spherically-local. One readily verifies that $VE=EV$.
\end{proof}

Let us denote $\calG_{d,n}^\azDIII:=\Set{A\in\calG_n(\calL_d) | A^{-1}= -\overline{A}}$. Let $E_n\in\calG_{d,n}^\azDIII$ be direct sum of $n$-copies of $E$ in \cref{eq:diii dimer operator nearest neighbor}. 

\begin{prop}\label{prop:diii compress unitary homotopy from matrix algebra back to original algebra}
    Let $A,B\in\calG_d^\azDIII$. Suppose they take the form $A=PAP+P^\perp EP^\perp$ and $B=PBP+P^\perp EP^\perp$ for some spherically-proper $P$ that reduces $E$. If $A\oplus E_n\sim_h B\oplus E_n$ in $\calG_{d,n+1}^\azDIII$ for some $n$, then $A\sim_h B$ in $\calG_d^\azDIII$.
\end{prop}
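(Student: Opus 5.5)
I will mimic the proof of \cref{prop:compress unitary homotopy from matrix algebra back to original algebra}. The one change is that conjugation by real partial isometries commuting with $E$ now plays the role of unitary conjugation, so that every compressed operator stays in the DIII class. Let $\Gamma_t\in\calG_{d,n+1}^\azDIII$ be the homotopy from $A\oplus E_n$ to $B\oplus E_n$.

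\textbf{Step 1: the compression isometry.} Since $P$ reduces $E$, $P^\perp$ also reduces $E$ and is spherically-proper. Write $P^\perp=\Lambda_{\cup_k\{x_k,y_k\}}$ as a union of dimer pairs. Applying \cref{lem:split spherically-proper projections} to the set of pairs, I split $P^\perp=Q_0+P_1+\dots+P_n$ into $n+1$ spherically-proper projections, each a union of dimer pairs and hence reducing $E$. By \cref{lem:diii partial isometry for dimers} there are real, spherically-local partial isometries $V_0,V_1,\dots,V_n$, each commuting with $E$, such that
\begin{itemize}
\item $V_0^\ast V_0=P^\perp$ and $V_0V_0^\ast=Q_0$;
\item $V_k^\ast V_k=\Id$ and $V_kV_k^\ast=P_k$ for $k=1,\dots,n$.
\end{itemize}
Set $\mathbf V:=\begin{bmatrix} P+V_0 & V_1&\cdots&V_n\end{bmatrix}\in M_{1,n+1}(\calL_d)$. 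Then $\mathbf V$ is real, $\mathbf V\mathbf V^\ast=\Id$ and $\mathbf V^\ast\mathbf V=\Id_{n+1}$.

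\textbf{Step 2: compressing the homotopy.} Define $C_t:=\mathbf V\Gamma_t\mathbf V^\ast$. Since $\mathbf V$ is a real unitary from $\calH_d^{\oplus(n+1)}$ onto $\calH_d$, we have $\overline{\mathbf V}=\mathbf V$ and $\mathbf V^{-1}=\mathbf V^\ast$. Hence $C_t^{-1}=\mathbf V\Gamma_t^{-1}\mathbf V^\ast=-\mathbf V\overline{\Gamma_t}\mathbf V^\ast=-\overline{C_t}$, which is the DIII congruence argument of \cref{lem:diii congruence} carried out with a rectangular unitary. Spherical locality of $C_t$ follows entrywise, and norm continuity in $t$ is immediate. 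So $C_t$ is a path in $\calG_d^\azDIII$.

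\textbf{Step 3: identifying the endpoints.} I would compute $C_0=(P+V_0)A(P+V_0)^\ast+\sum_k V_kEV_k^\ast$.
\begin{itemize}
\item Since $A=PAP+P^\perp EP^\perp$ and $V_0$ commutes with $E$, the first term equals $PAP+V_0EV_0^\ast=PAP+EQ_0$.
\item Since $V_kE=EV_k$, each remaining term equals $EP_k$.
\end{itemize}
Hence $C_0=PAP+E(Q_0+P_1+\dots+P_n)=PAP+P^\perp EP^\perp=A$. The same computation gives $C_1=B$, which finishes the proof.

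The main obstacle I expect is the endpoint identity: it needs every $V_j$ to commute with $E$ and every piece of the splitting to be a union of $E$-dimer pairs. This is why the splitting in Step 1 must be done at the level of pairs, and why the splitting lemma has to be adapted to pairs rather than quoted verbatim. A second point to check is that \cref{lem:diii partial isometry for dimers}, as stated for two spherically-proper projections, also covers $P^\perp\to Q_0$. It does, since both are spherically-proper and reduce $E$.
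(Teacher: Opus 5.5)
Your proposal is correct and follows the paper's proof essentially verbatim. It uses the same splitting $P^\perp=Q_0+P_1+\dots+P_n$, the same real, spherically-local, $E$-commuting partial isometries from the dimer lemma, and the same compression of the homotopy by the real row isometry $[P+V_0,V_1,\dots,V_n]$. Your explicit endpoint computation and your requirement that the splitting be done at the level of $E$-dimer pairs fill in details the paper leaves implicit; the pair-level splitting is what makes each piece reduce $E$, so that the dimer lemma actually applies.
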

\begin{proof}    
    Let $P_0:=P$. Since $P_0$ is spherically-proper, so is $P_0^\perp$. Using \cref{lem:split spherically-proper projections}, we may decompose $P_0^\perp$ into $n+1$ spherically-proper projections
    \eq{
        P_0^\perp = Q_0+ P_1 + \dots + P_{n-1} + P_n.
    }
    Using \cref{lem:diii partial isometry for dimers}, we have $Q_0\sim P_0^\perp$ and $P_k\sim \Id$ for all $k=1,\dots,n$, implemented by real, spherically-local partial isometries $V_0,V_1,\dots,V_n$ such that $P_0^\perp = V_0^\ast V_0$, $Q_0=V_0V_0^\ast$, and  $P_k=V_kV_k^\ast$ and $\Id = V_k^\ast V_k$ for $k=1,\dots,n$. Moreover, $V_kE=EV_k$ for $k=0,\dots,n$. Consider
    \eq{
    V:= [P_0+V_0,V_1,\dots,V_n]\in M_{1,n+1}(\calL_d^\RR).
    }
    Direct computation shows that $V^\ast V=\Id_{n+1}$ and $VV^\ast = \Id$ and $VE_{n+1}=EV$.

    Let $W_t$ be the path for $A\oplus E_{n}\sim_h B\oplus E_n$ in $\calG_{d,n+1}^\azDIII$. Then $VW_t\overline{V}^{-1} = VW_tV^\ast$ gives the path $A\sim_h B$ in $\calG_d^\azDIII$.
\end{proof}

\begin{proof}[Proof of \cref{thm:classification for real classes} for class $\mathrm{DIII}$]
    Using \cref{prop:clifford description of AZ classes shperically local} and \cref{prop:computation for van daele K group of real spherically local algebra}, it suffices to show that the map
    \eql{\label{eq:diii pi0 to K group map}
    \pi_0(\calSU_o(\calL_d^\RR\otimes \Cli_{0,2})) \to \rmDK_e(\calL_d^\RR\otimes \Cli_{0,2})
    }
    which sends path-connected component to its van Daele $K$-group, is a bijection.  
    With specific representation of $\Cli_{0,2}$, the set $\calSU_o(\calL_d^\RR\otimes \Cli_{0,2})$ takes the form in \cref{eq:diii SUo}.
    Here, we choose $e$ to be the element
    \eq{
    e:=\begin{bmatrix}
        0 & E \\ -E & 0
    \end{bmatrix}
    }
    where $E$ is the nearest-neighbor dimer operator. For $\rmDK_e(\calL_d^\RR\otimes \Cli_{0,2})$ to make sense, we need to show that $e\sim_h-e$ in $\calSU_o(\calL_d^\RR\otimes \Cli_{0,2})$. To that end, for each dimer $(x,y)$, with respect to $\CC\delta_x\oplus \CC\delta_y$, we consider the rotation
    \eq{
    \begin{bmatrix}
        \cos t & -\sin t \\ \sin t & \cos t
    \end{bmatrix}
    \begin{bmatrix}
        0 & -1 \\ 1 & 0
    \end{bmatrix}
    \begin{bmatrix}
        \cos t & -\sin t \\ \sin t & \cos t
    \end{bmatrix}
    }
    for $t\in [0,\pi/2]$. Apply the rotation to each dimer, the homotopy stays spherically-local. We get $E\sim_h -E$ in $\calU_d^\azDIII$ and hence, using \cref{eq:diii SUo}, we have $e\sim_h -e$.  
    
    We show that if $U,V\in\calSU_o(\calL_d^\RR\otimes \Cli_{0,2})$ has the same $[U]_e=[V]_e$ van Daele $K$-group, then $U\sim_h V$ in $\calSU_o(\calL_d^\RR\otimes \Cli_{0,2})$. Write
    $
    U=\begin{bmatrix}
        0 & X \\ -\overline{X} & 0
    \end{bmatrix}$ and $ V = \begin{bmatrix}
        0 & Y \\ -\overline{Y} & 0
    \end{bmatrix}
    $
    for $X,Y\in\calU_d^\azDIII$. Using \cref{prop:diii reduced by proper projection}, there exist spherically-proper projections $P,Q$ reducing $E$ such that $X\sim_h A=PAP+P^\perp EP^\perp$ and $Y\sim_h \widetilde{B}=Q\widetilde{B}Q+Q^\perp EQ^\perp$ in $\calG_d^\azDIII$.
    In fact, we may choose $Q$ to be orthogonal to $P$. Indeed, after constructing $P$, we may construct $Q$ within $P^\perp$, where the construction of spherically-proper sequence of points traced back to \cref{lem:contained in annulus}. Using \cref{eq:rotation between P perp Q}, we then have $Y\sim_h Q\widetilde{B}Q+Q^\perp EQ^\perp \sim_h B=PBP+P^\perp EP^\perp$ in $\calG_d^\azDIII$, supported on the same $P$ as in $A$. Using \cref{lem:diii lift invertible to unitary} and \cref{eq:diii polar map}, there exists $S,T\in\calU_d^\azDIII$ such that $A\sim_h S=PSP+P^\perp EP^\perp$ and $B\sim_h T=PTP+P^\perp EP^\perp$ in $\calG_d^\azDIII$.
    Using \cref{lem:diii lift invertible to unitary} again, we get $X\sim_h S$ and $Y\sim_h T$ in $\calU_d^\azDIII$.
    
    Since $[U]_e=[V]_e$, it follows that
    \eq{
    \begin{bmatrix}
        0 & S \\ -\overline{S} & 0
    \end{bmatrix} \oplus \begin{bmatrix}
        0 & E \\ -E & 0
    \end{bmatrix}_n \sim_h \begin{bmatrix}
        0 & T \\ -\overline{T} & 0
    \end{bmatrix} \oplus \begin{bmatrix}
        0 & E \\ -E & 0
    \end{bmatrix}_n
    }
    in $\calSU_o(M_{n+1}(\calL_d^\RR\otimes \Cli_{0,2}))$. Equivalently, under a conjugation by some elementary matrix, we have
    \eq{
    S\oplus E_n \sim_h T\oplus E_n
    }
    in $\calU_{d,n+1}^\azDIII:= \Set{U\in\calU_{n+1}(\calL_d) | U^\ast = -\overline{U}}$. Now we use \cref{prop:diii compress unitary homotopy from matrix algebra back to original algebra} and get $S\sim_h T$ in $\calG_d^\azDIII$, and then we use \cref{lem:diii lift invertible to unitary} to get $S\sim_h T$ in $\calU_d^\azDIII$. This provides the desired homotopy $U\sim_h V$ in $\calSU_o(\calL_d^\RR\otimes \Cli_{0,2})$.
    
    We show that the map \cref{eq:diii pi0 to K group map} is surjective. Let $\xi\in\rmDK_e(\calL_d^\RR\otimes \Cli_{0,2})$. Then by construction there exists $U\in\calSU_o(M_n(\calL_d^\RR\otimes \Cli_{0,2}))$ for some $n$ such that $\xi = [U]_e$. The goal is to find $W\in \calSU_o(\calL_d^\RR\otimes \Cli_{0,2})$ such that $[W]_e=[U]_e$.
    Write $U=\begin{bmatrix} 0 & Z \\ -\overline{Z} & 0 \end{bmatrix}$ for $Z\in \calU_{d,n}^\azDIII$.
    Let $\ZZ^d=\bigcup_{k\in\NN}\{x_k,y_k\}$ be an enumeration of dimers. Let $\NN=I_1\cup I_2\cup \dots\cup I_n$ be a partition of index set in to $n$ disjoint infinite subsets such that that for each $j$, the associated set of lattice sites $F_j:=\bigcup_{k\in I_j}\{x_k,y_k\}$ is spherically-proper. This is possible because every cone contains infinitely many dimers; distribute dimers among $n$ bins so that each bin receives infinitely many dimers in every cone. Let $P_j:=\Lambda_{F_j}$. Using \cref{lem:diii partial isometry for dimers}, there are $V_1,\dots,V_n\in\calL_d^\RR$ such that $V_j^\ast V_j=\Id$ and $V_jV_j^\ast = P_j$ and $V_jE=EV_j$.
    Let $V:=(V_1,\dots,V_n)\in M_{1,n}(\calL_d^\RR)$. Then $VV^\ast = \Id$ and $V^\ast V=\Id_n$ and $VE_n=EV$.
    Consider $VZV^\ast \in\calU_d^\azDIII$. We argue that
    \eq{
    Z\oplus E \sim_h E_n\oplus VZV^\ast
    }
    in $\calU_{d,n+1}^\azDIII$. Indeed, this is achieved by the rotation
    \eq{
    \begin{bmatrix}
        \cos t \Id_n & -\sin t V^\ast \\ \sin t V & \cos t \Id
    \end{bmatrix}
    \begin{bmatrix}
        Z & 0 \\ 0 & E
    \end{bmatrix}
    \begin{bmatrix}
        \cos t \Id_n & \sin t V^\ast \\ -\sin t V & \cos t \Id        
    \end{bmatrix}
    }
    for $t\in[0,\pi/2]$. Let $W=\begin{bmatrix}0 & VZV^\ast \\ - \overline{VZV^\ast} & 0 \end{bmatrix} \in\calSU_o(\calL_d^\RR\otimes \Cli_{0,2})$. Then we have $U\oplus e\sim_h e_n\oplus W$ in $\calSU_o(M_{n+1}(\calL_d^\RR\otimes \Cli_{0,2}))$ and hence $[U]_e=[W]_e$.
\end{proof}

\begin{lem}\label{lem:diii lift invertible to unitary}
    If $U,V\in\calU_d^\azDIII$ and $U\sim_h V$ in $\calG_d^\azDIII$, then they are also homotopic in $\calU_d^\azDIII$. If $G\in\calG_d^\azDIII$, then there exists $U\in\calU_d^\azDIII$ such that $G\sim_h U$ in $\calG_d^\azDIII$.
\end{lem}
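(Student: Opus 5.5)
The natural tool is a symmetry-preserving polar decomposition on $\calG_d^\azDIII$. For $G\in\calG_d^\azDIII$ I set
\[
\Phi(G):=G(G^\ast G)^{-1/2},
\]
which is the polar part of $G$ and lies in $\calU(\calL_d)$, since $\calL_d$ is a \Cstar-algebra closed under continuous functional calculus. The first step is to check that $\Phi(G)\in\calU_d^\azDIII$. From $G^{-1}=-\overline{G}$ we get $\overline{G}=-G^{-1}$, and hence
\[
\overline{G^\ast G}=\overline{G}^\ast\,\overline{G}=(G^{-1})^\ast G^{-1}=(GG^\ast)^{-1}.
\]
Complex conjugation $A\mapsto\overline{A}$ is a real $\ast$-automorphism and so commutes with real continuous functional calculus of positive operators. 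This gives
\[
\overline{(G^\ast G)^{-1/2}}=(GG^\ast)^{1/2}.
\]
Therefore
\[
\overline{\Phi(G)}=-G^{-1}(GG^\ast)^{1/2}=-(G^\ast G)^{1/2}G^{-1},
\]
using the intertwining identity $G^{-1}f(GG^\ast)=f(G^\ast G)G^{-1}$. On the other hand, $\Phi(G)^{-1}=(G^\ast G)^{1/2}G^{-1}$, so $\Phi(G)^{-1}=-\overline{\Phi(G)}$, as required. This is presumably the map the paper calls \cref{eq:diii polar map}.

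Next I would deform $G$ to $\Phi(G)$ through
\[
G_s:=G(G^\ast G)^{-s/2},\qquad s\in[0,1].
\]
The same computation, with $(G^\ast G)^{-s/2}$ in place of $(G^\ast G)^{-1/2}$, gives
\[
G_s^{-1}=(G^\ast G)^{s/2}G^{-1}\quad\text{and}\quad \overline{G_s}=-G^{-1}(GG^\ast)^{s/2}=-(G^\ast G)^{s/2}G^{-1},
\]
so $G_s\in\calG_d^\azDIII$ for every $s$. The path is norm-continuous because $G^\ast G$ has spectrum in a compact subset of $(0,\infty)$. This proves the second assertion of \cref{lem:diii lift invertible to unitary} with $U=\Phi(G)$.

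For the first assertion, let $\gamma_t$ be a path in $\calG_d^\azDIII$ from $U$ to $V$. Then $t\mapsto\Phi(\gamma_t)$ is a path in $\calU_d^\azDIII$. It is continuous because inversion and functional calculus $A\mapsto (A^\ast A)^{-1/2}$ are norm-continuous on invertibles. Its endpoints are $\Phi(U)=U$ and $\Phi(V)=V$, since $U^\ast U=\Id$ for a unitary. Hence $\Phi$ is a retraction of $\calG_d^\azDIII$ onto $\calU_d^\azDIII$, which is exactly the claim.

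The main point to check is the compatibility of conjugation with functional calculus, namely $\overline{f(A)}=f(\overline{A})$ for real continuous $f$ and self-adjoint $A$. This holds by approximating $f$ by real polynomials and noting $\overline{A^n}=\overline{A}^{\,n}$. Everything else is algebraic bookkeeping.
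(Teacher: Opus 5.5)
Your proof is correct and follows the paper's argument: the same symmetry-preserving polar map $G\mapsto G|G|^{-1}$, whose norm-continuity gives the first assertion, and the same path $t\mapsto G|G|^{-t}$ for the second. The differences are minor. You check the symmetry through the intertwining identity $G^{-1}f(GG^\ast)=f(G^\ast G)G^{-1}$, where the paper uses the polar decomposition of $G^{-1}$. You also verify explicitly that each intermediate point $G|G|^{-s}$ stays in $\calG_d^{\azDIII}$, which the paper leaves implicit.
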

\begin{proof}
    The proof is based on the polar decomposition. We show that
    \eql{\label{eq:diii polar map}
    \calG_d^\azDIII \ni A\mapsto A|A|^{-1} \in \calU_d^\azDIII
    }
    maps into $\calU_d^\azDIII$. Let $A\in\calG_d^\azDIII$ and consider its polar decomposition $A=U|A|$. We first show the identity $|A^{-1}|=U|A|^{-1}U^\ast$ which is true regardless of symmetry condition. Indeed, we have
    \eq{
    |A^{-1}|=(AA^\ast)^{-1/2} = (U|A||A|U^\ast)^{-1/2} = U|A|^{-1}U^\ast.
    }
    So the polar decomposition of $A^{-1}$ is
    \eq{
    U^\ast |A^{-1}|=U^\ast( U|A|^{-1}U^\ast) = |A|^{-1}U^\ast = A^{-1}.
    }
    Now
    \eq{
    \calC A|A|^{-1}\calC = \calC A \calC \calC|A|^{-1}\calC = -A^{-1}|A^{-1}|^{-1} = -U^\ast = -(A|A|^{-1})^{-1}
    }
    which shows that $A|A|^{-1}\in\calU_d^\azDIII$.
    The conclusion of the lemma follows from continuity of the map \cref{eq:diii polar map}.

    To prove the other assertion, if $G\in\calG_d^\azDIII$, we consider the path $t\mapsto G|G|^{-t}$ for $t\in[0,1]$. 
\end{proof}

\begin{rem}
    We expect \cref{lem:diii lift invertible to unitary} to be true, as it should follow from the identification \cref{eq:diii SUo} and \cite[Proposition 2.5]{daele1988kI}.
\end{rem}

\appendix
\section{Odd structures in Clifford algebra}\label{sec:odd structures in clifford algebra}

To analyze the odd structure, we first consider $\Cli_{1,0} \cong \RR \oplus \RR$ with elementwise operation. Its $\ZZ_2$ grading is given by $\Cli_{1,0}^0 \cong \Set{(a,a) | a\in \RR}$ and $\Cli_{1,0}^1 \cong \Set{(a,-a) | a\in \RR}$, realized by mapping the generator $E_1 \mapsto (1,-1)$ while the identity maps to $(1,1)$. 

For $\Cli_{1,1} \cong M_2(\RR)$, the grading distinguishes between diagonal and off-diagonal matrices: the even subalgebra $\Cli_{1,1}^0$ consists of diagonal matrices $\text{diag}(a,d)$, while the odd subspace $\Cli_{1,1}^1$ contains off-diagonal matrices.
We verify this via generators 
\eq{
E_1 \mapsto \begin{bmatrix} 0 & 1 \\ 1 & 0 \end{bmatrix},\quad E_2 \mapsto \begin{bmatrix} 0 & -1 \\ 1 & 0 \end{bmatrix}.
}

For $\Cli_{0,1} \cong \CC$, the grading is given by real and imaginary components: $\Cli_{0,1}^0 \cong \RR$ and $\Cli_{0,1}^1 \cong i\RR$, achieved by sending $E_1 \mapsto i$. 

The algebra $\Cli_{0,2} \cong \HH$ can be represented by complex matrices 
\eq{
\Cli_{0,2} \cong \Set{\begin{bmatrix}a & b \\ -\overline{b} & \overline{a}\end{bmatrix}| a,b\in\CC},\quad \Cli_{0,2}^0 \cong \Set{\begin{bmatrix} a & 0 \\ 0 & \bar{a} \end{bmatrix} | a\in\CC},\quad \Cli_{0,2}^1 \cong \Set{\begin{bmatrix} 0 & b \\ -\bar{b} & 0 \end{bmatrix} | b\in\CC}
}
This structure is supported by generators 
\eq{
E_1 \mapsto \begin{bmatrix} 0 & 1 \\ -1 & 0 \end{bmatrix},\quad E_2 \mapsto \begin{bmatrix} 0 & i \\ i & 0 \end{bmatrix}
}
and the grading automorphism is given by
\eq{
\begin{bmatrix}a & b \\ -\overline{b} & \overline{a}\end{bmatrix}\mapsto \begin{bmatrix}a & -b \\ \overline{b} & \overline{a}\end{bmatrix}.
}

We have $\Cli_{0,3} \cong \HH \oplus \HH$ with elementwise operation and with grading $\Cli_{0,3}^0 \cong \Set{(a,a) | a\in \HH}$ and $\Cli_{0,3}^1 \cong \Set{(a,-a) | a\in \HH}$, established by generators $E_1 \mapsto (i,-i)$, $E_2 \mapsto (j,-j)$, and $E_3 \mapsto (k,-k)$.

We have $\Cli_{0,4} \cong M_2(\HH)$ which mirrors the matrix structure of $\Cli_{1,1}$, where $\Cli_{0,4}^0$ comprises diagonal quaternion matrices and $\Cli_{0,4}^1$ comprises off-diagonal ones. 
This is verified by mapping generators to $2\times 2$ matrices with quaternion entries: 
\eq{
E_1 \mapsto \begin{bmatrix} 0 & 1 \\ -1 & 0 \end{bmatrix},\quad E_2 \mapsto \begin{bmatrix} 0 & i \\ i & 0 \end{bmatrix},\quad E_3 \mapsto \begin{bmatrix} 0 & j \\ j & 0 \end{bmatrix},\quad E_4 \mapsto \begin{bmatrix} 0 & k \\ k & 0 \end{bmatrix}
}
and the grading automorphism is inner and given by reversing the sign of off diagonal elements 
\eq{
\operatorname{Ad}_{\sigma_3}:A\mapsto \sigma_3 A \sigma_3^\ast
}
where $A\in M_2(\HH)$ and $\sigma_3$ is the Pauli matrix in the $z$ direction.

Now consider $\Cli_{0,5}$. We claim that $\Cli_{0,5}\cong M_4(\CC)$ and the $\ZZ_2$ grading is given by
    \eql{\label{eq:Cli 0 5 grading}
     \Cli^0_{0,5}\cong \Set{\begin{bmatrix}
        A & B \\ -\overline{B} & \overline{A}
    \end{bmatrix} | A\in M_2(\CC)},\quad
     \Cli^1_{0,5}\cong \Set{\begin{bmatrix}
        A & B \\ \overline{B} & -\overline{A}
    \end{bmatrix} | B\in M_2(\CC)}.
    }
We have $\Cli_{0,5}\cong \Cli_{0,4}\hat{\otimes}\Cli_{0,1}\cong \Cli_{0,4}\otimes \Cli_{0,1}$ where the first isomorphism uses \cite[Theorem 3.10]{karoubi2009k} and the second isomorphism uses \cite[Proposition 14.5.1]{blackadar1998k} with the fact that the grading automorphism on $\Cli_{0,4}$ is inner. Then $\Cli_{0,4}\otimes \Cli_{0,1}\cong M_2(\HH)\otimes \CC \cong M_2(\RR)\otimes \HH\otimes \CC$ where the grading automorphism is $\operatorname{Ad}_{\sigma_3}\otimes \operatorname{id}\otimes \calC$. We have $\HH\otimes \CC\cong M_2(\CC)$ where the isomorphism is given by
\eq{
1\otimes z_0 + i\otimes z_1+j\otimes z_2 + k\otimes z_3 \mapsto \begin{bmatrix}
    z_0 + iz_1 & z_2 + iz_3 \\
    -z_2 + iz_3 & z_0-iz_1
\end{bmatrix}
}
where $z_i\in\CC$. The automorphism $\operatorname{id}\otimes \calC$ applied to $\HH\otimes \CC\cong M_2(\CC)$ gives
\eq{
\begin{bmatrix}
    z_0 + iz_1 & z_2 + iz_3 \\
    -z_2 + iz_3 & z_0-iz_1
\end{bmatrix} = \begin{bmatrix}
    a & b \\ c& d
\end{bmatrix}\xmapsto{\operatorname{id}\otimes \calC} \begin{bmatrix}
    \bar{d} & -\bar{c} \\ -\bar{b} & \bar{a}
\end{bmatrix}
}
since $\calC$ puts complex conjugation on each $z_i$. The automorphism $\operatorname{Ad}_{\sigma_3}\otimes \operatorname{id}\otimes \calC$ applied to $\Cli_{0,5}\cong M_2(\RR)\otimes \HH\otimes \CC \cong M_2(\RR)\otimes M_2(\CC)\cong M_4(\CC)$ then gives
\eq{
\operatorname{Ad}_{\sigma_3}\otimes \operatorname{id}\otimes \calC\begin{bmatrix}
    a_1 & a_2 & b_1 & b_2 \\
    a_3 & a_4 & b_3 & b_4 \\
    c_1 & c_2 & d_1 & d_2 \\
    c_3 & c_4 & d_3 & d_4
\end{bmatrix} &= \operatorname{Ad}_{\sigma_3}\otimes \operatorname{id}\otimes \operatorname{id}
\begin{bmatrix}
    \bar{a}_4 & -\bar{a}_3 & \bar{b}_4 & -\bar{b}_3 \\
    -\bar{a}_2 & \bar{a}_1 & -\bar{b}_2 & \bar{b}_1 \\
    \bar{c}_4 & -\bar{c}_3 & \bar{d}_4 & -\bar{d}_3 \\
    -\bar{c}_2 & \bar{c}_1 & -\bar{d}_2 & \bar{d}_1
\end{bmatrix}\\ &= 
\begin{bmatrix}
    \bar{a}_4 & -\bar{a}_3 & -\bar{b}_4 & \bar{b}_3 \\
    -\bar{a}_2 & \bar{a}_1 & \bar{b}_2 & -\bar{b}_1 \\
    -\bar{c}_4 & \bar{c}_3 & \bar{d}_4 & -\bar{d}_3 \\
    \bar{c}_2 & -\bar{c}_1 & -\bar{d}_2 & \bar{d}_1
\end{bmatrix}.
}
Thus, the even and odd subspaces are
\eq{
\Cli_{0,5}^0&=\Set{\begin{bmatrix}
    a_1 & a_2 & b_1 & b_2 \\
    -\bar{a}_2 & \bar{a}_1 & \bar{b}_2 & -\bar{b}_1 \\
    c_1 & c_2 & d_1 & d_2 \\
    \bar{c}_2 & -\bar{c}_1 & -\bar{d}_2 & \bar{d}_1
\end{bmatrix} | a_i,b_i,c_i,d_i\in\CC} \\
\Cli_{0,5}^1&=\Set{\begin{bmatrix}
    a_1 & a_2 & b_1 & b_2 \\
    \bar{a}_2 & -\bar{a}_1 & -\bar{b}_2 & \bar{b}_1 \\
    c_1 & c_2 & d_1 & d_2 \\
    -\bar{c}_2 & \bar{c}_1 & \bar{d}_2 & -\bar{d}_1
\end{bmatrix} | a_i,b_i,c_i,d_i\in\CC}
}
After applying the basis change $\begin{bmatrix}
         1 & 0 & 0 & 0 \\
         0 & 0 & 0 & 1 \\
         0 & 1 & 0 & 0 \\
         0 & 0 & 1 & 0
    \end{bmatrix}$, which swaps second and third basis, and then second and forth basis, we obtain
\eq{
\Cli_{0,5}^0&=\Set{\begin{bmatrix}
    a_1 & b_2 & a_2 & b_1 \\
    \bar{c}_2 & \bar{d}_1 & -\bar{c}_1 & -\bar{d}_2 \\
    -\bar{a}_2 & -\bar{b}_1 & \bar{a}_1 & \bar{b}_2 \\
    c_1 & d_2 & c_2 & d_1
\end{bmatrix} | a_i,b_i,c_i,d_i\in\CC} \\
\Cli_{0,5}^1&=\Set{\begin{bmatrix}
    a_1 & b_2 & a_2 & b_1 \\
    -\bar{c}_2 & -\bar{d}_1 & \bar{c}_1 & \bar{d}_2 \\
    \bar{a}_2 & \bar{b}_1 & -\bar{a}_1 & -\bar{b}_2 \\
    c_1 & d_2 & c_2 & d_1
\end{bmatrix} | a_i,b_i,c_i,d_i\in\CC}
}
which are exactly \cref{eq:Cli 0 5 grading}.

Consider the algebra $\Cli_{0,6}$. We claim that
\eq{
\Cli_{0,6}\cong \Set{\begin{bmatrix}
    A & B \\ \overline{B} & \overline{A}
\end{bmatrix} | A,B\in M_4(\CC)}
}
and the $\ZZ_2$ grading with respect to the isomorphism is given by
\eql{\label{eq:Cli 0 6 grading}
 \Cli^0_{0,6}\cong \Set{\begin{bmatrix}
    A & 0 \\ 0 & \overline{A}
\end{bmatrix} | A\in M_4(\CC)},\quad
 \Cli^1_{0,6}\cong \Set{\begin{bmatrix}
    0 & B \\ \overline{B} & 0
\end{bmatrix} | B\in M_4(\CC)}.
}
We have $\Cli_{0,6}\cong \Cli_{0,4}\hat{\otimes}\Cli_{0,2}\cong \Cli_{0,4}\otimes \Cli_{0,2}$ where the first isomorphism uses \cite[Theorem 3.10]{karoubi2009k} and the second isomorphism uses \cite[Proposition 14.5.1]{blackadar1998k} with the fact that the grading automorphism on $\Cli_{0,4}$ is inner. 
In particular, the natural grading automorphism on  $\Cli_{0,6}\cong \Cli_{0,4}\otimes \Cli_{0,2}$ is then given by $\mu\otimes \tau$ where $\mu,\tau$ are the automorphisms on $\Cli_{0,4}$ and $\Cli_{0,2}$, respectively, the explicit forms provided in the previous paragraph.
Recall
\eq{
\Cli_{0,2}\cong \HH \cong \Set{\begin{bmatrix}a & b \\ -\overline{b} & \overline{a}\end{bmatrix}| a,b\in\CC}
}
and $\Cli_{0,4}\cong M_2(\HH)\cong M_2(\RR)\otimes \HH.$ Now
\eq{
\HH\otimes \HH &\cong \Set{\begin{bmatrix}
    ac & bc & ad & bd \\
    -\bar{b}c & \bar{a}c & -\bar{b}d & \bar{a}d \\
    -a\bar{d} & -b\bar{d} & a\bar{c} & b\bar{c} \\
    \bar{b}\bar{d} & -\bar{a}\bar{d} & -\bar{b}\bar{c} & \bar{a}\bar{c}
\end{bmatrix} : a,b,c,d\in\CC}\\ &\cong \Set{\begin{bmatrix}
    ac & bc & bd & -ad \\
    -\bar{b}c & \bar{a}c & \bar{a}d & \bar{b}d \\
    \bar{b}\bar{d} & -\bar{a}\bar{d} & \bar{a}\bar{c} & \bar{b}\bar{c} \\
    a\bar{d} & b\bar{d} & -b\bar{c} & a\bar{c}
\end{bmatrix} : a,b,c,d\in\CC}
}
where we applied the basis change $\begin{bmatrix}
    1 & 0 & 0 & 0 \\
    0 & 1 & 0 & 0 \\
    0 & 0 & 0 & 1 \\
    0 & 0 & -1 & 0
\end{bmatrix}$. We thus obtain
\eq{
\HH\otimes \HH \cong \Set{\begin{bmatrix}
    A & B \\ \overline{B} & \overline{A}
\end{bmatrix} | A,B\in M_2(\CC)}.
}
The automorphism on $\HH\otimes \HH$ is $\operatorname{id}\otimes \tau$, where $\tau$ reverses the sign of number $d$ above; or equivalently
\eq{
\begin{bmatrix}
    A & B \\ \overline{B} & \overline{A}
\end{bmatrix} \mapsto \begin{bmatrix}
    A & -B \\ -\overline{B} & \overline{A}
\end{bmatrix} .
}
Now
\eq{
\Cli_{0,6}\cong M_2(\RR) \otimes \Set{\begin{bmatrix}
    A & B \\ \overline{B} & \overline{A}
\end{bmatrix} | A,B\in M_2(\CC)} \cong \Set{\begin{bmatrix}
    A_1 & B_1 & A_2 & B_2 \\
    \overline{B}_1 & \overline{A}_1 & \overline{B}_2 & \overline{A}_2 \\
    A_3 & B_3 & A_4 & B_4 \\
    \overline{B}_3 & \overline{A}_3 & \overline{B}_4 & \overline{A}_4
\end{bmatrix} | A_i,B_i\in M_2(\CC)}.
}
After applying the basis change $\begin{bmatrix}
     I & 0 & 0 & 0 \\
     0 & 0 & 0 & I \\
     0 & I & 0 & 0 \\
     0 & 0 & I & 0
\end{bmatrix}$, which swaps second and third basis, and then second and forth basis, we obtain
\eq{
\Cli_{0,6}\cong 
\Set{\begin{bmatrix}
    A_1 & B_2 & B_1 & A_2 \\
    \overline{B}_3 & \overline{A}_4 & \overline{A}_3 & \overline{B}_4 \\
    \overline{B}_1 & \overline{A}_2 & \overline{A}_1 & \overline{B}_2 \\
    A_3 & B_4 & B_3 & A_4
\end{bmatrix} | A_i,B_i\in M_2(\CC)} \cong \Set{\begin{bmatrix}
    U & V \\ \overline{V} & \overline{U}
\end{bmatrix} | U,V\in M_4(\CC)}.
}
The grading automorphism $\Cli_{0,6}\cong M_2(\RR)\otimes \HH\otimes \HH$ is given by $\mu\otimes \operatorname{id}\otimes{\tau}$ where $\mu$ reverses the sign of off-diagonal entries in $M_2(\RR)$, or the $A_2,B_2,A_3,B_3$ entries above. Therefore
\eq{
\mu\otimes \operatorname{id}\otimes{\tau} \begin{bmatrix}
    A_1 & B_2 & B_1 & A_2 \\
    \overline{B}_3 & \overline{A}_4 & \overline{A}_3 & \overline{B}_4 \\
    \overline{B}_1 & \overline{A}_2 & \overline{A}_1 & \overline{B}_2 \\
    A_3 & B_4 & B_3 & A_4
\end{bmatrix} &=  \operatorname{id}\otimes \operatorname{id}\otimes \tau 
\begin{bmatrix}
    A_1 & -B_2 & B_1 & -A_2 \\
    -\overline{B}_3 & \overline{A}_4 & -\overline{A}_3 & \overline{B}_4 \\
    \overline{B}_1 & -\overline{A}_2 & \overline{A}_1 & -\overline{B}_2 \\
    -A_3 & B_4 & -B_3 & A_4
\end{bmatrix}\\ &= 
\begin{bmatrix}
    A_1 & B_2 & -B_1 & -A_2 \\
    \overline{B}_3 & \overline{A}_4 & -\overline{A}_3 & -\overline{B}_4 \\
    -\overline{B}_1 & -\overline{A}_2 & \overline{A}_1 & \overline{B}_2 \\
    -A_3 & -B_4 & B_3 & A_4
\end{bmatrix},
}
that is
\eq{
\begin{bmatrix}
    U & V \\ \overline{V} & \overline{U}
\end{bmatrix}\mapsto \begin{bmatrix}
    U & -V \\ -\overline{V} & \overline{U}
\end{bmatrix}.
}
It is then clear that the even and odd subspaces are given by \cref{eq:Cli 0 6 grading}.

\begingroup
\let\itshape\upshape
\printbibliography
\endgroup
\end{document}